\providecommand{\U}[1]{\protect\rule{.1in}{.1in}}
\newcommand{\cmmnt}[1]{}
\newtheorem{theorem}{Theorem}
\newtheorem{corollary}{Corollary}
\newtheorem{definition}{Definition}
\newtheorem{lemma}{Lemma}
\newtheorem{proposition}{Proposition}
\newtheorem{remark}{Remark}
\newenvironment{proof}[1][Proof]{\noindent\textbf{#1.} }{\ \rule{0.5em}{0.5em}}
\begin{document}
\preprint{ }
\title[ ]{Unextendible entanglement of quantum channels}
\author{Vishal Singh}
\affiliation{School of Applied and Engineering Physics, Cornell University, Ithaca, New York 14850, USA}
\affiliation{Hearne Institute for Theoretical Physics, Department of Physics and Astronomy,
and Center for Computation and Technology, Louisiana State University, Baton
Rouge, Louisiana 70803, USA}
\author{Mark M.~Wilde}
\affiliation{School of Electrical and Computer Engineering, Cornell University, Ithaca, New York 14850, USA}
\affiliation{Hearne Institute for Theoretical Physics, Department of Physics and Astronomy,
and Center for Computation and Technology, Louisiana State University, Baton
Rouge, Louisiana 70803, USA}

\begin{abstract}
Quantum communication relies on the existence of high quality quantum channels to exchange information. In practice, however, all communication links are affected by noise from the environment. Here we investigate the ability of quantum channels to perform quantum communication tasks by restricting the participants to use only
local operations and one-way classical communication (one-way LOCC) along with the available quantum
channel. In particular, a channel can be used to distill a highly entangled state between two parties, which further enables quantum or private communication. In this work, we invoke the framework of superchannels  to study the distillation of a resourceful quantum state, such as a maximally entangled state or a private state, using multiple instances of a point-to-point quantum channel. We use the idea of $k$-extendibility to obtain a semidefinite relaxation
of the set of one-way LOCC superchannels and define a class of entanglement measures for quantum channels that
decrease monotonically under such superchannels; therefore these measures, dubbed collectively the ``unextendible entanglement of a
channel'', yield upper bounds on several communication-theoretic quantities of interest in the regimes of resource distillation and zero error. We then generalize the formalism of $k$-extendibility to bipartite superchannels, thus obtaining functions that are monotone under two-extendible superchannels. This allows us to analyze probabilistic distillation of ebits or secret key bits from a bipartite state when using a resourceful quantum channel. Moreover, we propose semidefinite programs  to evaluate several of these quantities, providing a computationally feasible
method of comparison between quantum channels for resource distillation.
\end{abstract}

\date{\today}
\startpage{1}
\endpage{10}
\maketitle
\tableofcontents

\section{Introduction}

Quantum communication technologies revolve around the transmission of quantum data between two spatially separated parties. The promise of a future quantum internet~\cite{10.1145/1039111.1039118,Kimble_2008,science.aam9288} relies on our ability to exchange quantum data and generate highly entangled states~\cite{Horodecki_2009} shared across distant locations. However, it is challenging to realize an ideal quantum communication link between two distant parties, and in practice, we only have a noisy channel to transmit quantum data. Thus, it is crucial to understand our ability to perform quantum communication tasks over a noisy channel in order to recognize the advantages of a realizable quantum network over existing classical networks.

Finding limitations on the rate of communication imposed by the underlying noisy channel has been a topic of interest in both classical~\cite{6773024, el_gamal_kim_2011}  and quantum information theory~\cite{hayashi_2017-book,wilde_2017,watrous_2018,Holevo_2019,khatri2020principles}. The quantum capacity of a channel is equal to the largest rate at which qubits can be reliably transmitted over asymptotically many uses of the channel, such that the error vanishes in this limit. The laws of quantum mechanics also allow for unconditionally secure communication~\cite{BB84,PhysRevLett.67.661,Horodecki_2005, K_Horodecki_2009}, unlike classical networks, which often rely on the computational power of the adversary to implement privacy. This has sparked interest in understanding  limitations on private communication over quantum channels.

While the standard  definitions of quantum capacities allow for arbitrarily small errors in communication that vanish in the asymptotic limit of many channel uses, zero-error communication~\cite{1056798} is a special case in which the quantum channel is required to perform a communication task exactly (i.e., without error). While a noisy channel is incapable of doing so on its own, one can use error-correction protocols along with the channel to reduce the error probability in transmitting data, albeit at the cost of a reduced rate of communication. In some cases, one can use such protocols to reduce the error probability to zero.

Any error-correction protocol can be mathematically described using the language of superchannels~\cite{Chiribella_2008,LW15,  Gour_2019}, a linear map that transforms one quantum channel into another. In doing so, one needs to restrict the allowed superchannels to mimic the physical reality of the protocols. A natural restriction is that the two participants, Alice and Bob, can only perform local quantum operations. In our setting, we also allow Alice to send classical data to Bob. This restricts the set of allowed superchannels to the set of one-way LOCC (local operations and classical communication) superchannels, as done, e.g., in~\cite{LW15,Berta_2021,Holdsworth_2023}. Allowing one-way LOCC is motivated not only by its power, as is evident from the teleportation~\cite{PhysRevLett.70.1895} and super-dense coding~\cite{PhysRevLett.69.2881} protocols, but also by the low cost of classical communication nowadays.

In this paper, as a companion to our recent findings in~\cite{SW24}, we use the concept of unextendibility of a quantum channel~\cite{Kaur_2019, Kaur_2021} in order to quantify its capability for quantum communication. The presence of quantum correlations imposes a fundamental restriction on the extendibility of any quantum resource. This property has been studied in the context of the entanglement content of quantum states~\cite{Werner1989, PhysRevLett.88.187904,Doherty_2014}. Similar notions of $k$-extendibility have been explored for quantum channels~\cite{Kaur_2019, Kaur_2021, Berta_2021}. The unextendibility of a resourceful quantum channel arises from its inability to broadcast the same quantum data to multiple parties and is closely related to the no-cloning theorem~\cite{Park1970, Wootters1982}. Unextendibility has also been studied in some resource-theoretic frameworks~\cite{Kaur_2021,WWW19}, and it has been used to obtain tight bounds on information processing quantities such as one-way distillable entanglement~\cite{Bennett_1996, PhysRevA.59.1025, Pan_2001, Rozp_dek_2018} and distillable secret key~\cite{BB84, Horodecki_2005, K_Horodecki_2009}. Extendible channels can be understood as a relaxation of the set of one-way LOCC channels~\cite{Kaur_2021}, which is one of the foundations upon which the present work builds.  

We also define a family of entanglement measures for a quantum channel based on its unextendibility, which we call the unextendible entanglement of quantum channels. The definition of these measures is motivated from the unextendible entanglement for quantum states, previously  defined in~\cite{WWW19}. We use this entanglement measure to give bounds on multiple quantities of interest in quantum information processing tasks in the regimes of zero error and probabilistic distillation. Additionally, several of our bounds can be computed via semidefinite programs. In what follows, we briefly discuss our contributions in more detail.

First, we give an upper bound on the exact one-way distillable key of a quantum channel, which is roughly defined as the maximum rate of distilling exact secret bits using the channel along with one-way LOCC assistance. We investigate one-shot as well as asymptotic protocols for distributing bipartite private quantum states, from which a secret key can be realized. These upper bounds are practically relevant because quantum key distribution~\cite{PhysRevLett.67.661, BB84, 10.1145/382780.382781} is one of the major advantages of quantum networks over classical networks, as this method ensures unconditional private communication between two parties, based on the laws of quantum mechanics. 

Next we give an upper bound on the exact one-way distillable entanglement of a quantum channel, which is roughly defined as the maximum rate of distilling exact Bell states (ebits) using the channel with one-way LOCC assistance. This is a task of utmost importance to establish an ideal quantum network as a large number of quantum communication tasks, including teleportation, super-dense coding, secret key distillation, etc., rely on distant parties sharing highly entangled states. We investigate limitations on entanglement distribution in the presence of a noisy channel, assisted by local operations and one-way classical communication. 

Our formalism allows us to investigate various capacities of a quantum channel as well. We give an upper bound on the zero-error private capacity of a quantum channel assisted by one-way LOCC superchannels, which is defined as the maximum rate at which secret bits can be transmitted exactly over multiple uses of the channel. Zero-error capacities have been studied in classical information theory extensively~\cite{KO98}, and they have been explored to a lesser degree in quantum information theory~\cite{guedes2016quantum} (see also~\cite{CS12,Shirokov_2015,leung2016maximum} and references therein). We also give an upper bound on the zero-error quantum capacity of an arbitrary quantum channel assisted by a one-way LOCC superchannel, which is defined as the maximum rate at which qubits can be transmitted exactly over multiple uses of a quantum channel. We present a brief summary of our results for point-to-point channels in Table~\ref{tab:p2p_ch_results}.

We also extend our formalism to semicausal bipartite quantum channels~\cite{BGNP01}. Semicausal bipartite quantum channels describe quantum operations that allow only one party to send information, quantum or classical, to the other. Such channels can be used to distill bipartite resourceful states, such as ebits or secret keys, from an existing bipartite noisy resource state. We establish the notion of $k$-extendibility of a bipartite superchannel and define an entanglement measure for bipartite quantum channels based on unextendibility. We use the unextendible entanglement of a bipartite channel to investigate its ability to increase the unextendibility of an existing bipartite state, hence, boosting the resource available for quantum communication between two parties. 

\begin{table}[]
    \begin{center}
        \begin{tabular}{|l|c|l|}
        \hline
        Operational Quantity & Upper bound & Reference\\
        \hline\hline
        &&\\[-0.7em]
         Exact distillable key & $\widehat{E}^u_{\min}\!\left(\mathcal{N}_{A\to B}\right)$ & Corollary~\ref{cor:prob_key_distill_2ext} \\
        &&\\[-0.5em]
        Exact distillable entanglement & $\widehat{E}^u_{\min}\!\left(\mathcal{N}_{A\to B}\right)$ & Corollary~\ref{cor:distill_ent_ub_fin}\\
        &&\\[-0.5em]
        Zero-error private capacity & $\widehat{E}^u_{\min}\!\left(\mathcal{N}_{A\to B}\right)$ & Corollary~\ref{cor:0_err_priv_cap_ub}\\
        &&\\[-0.5em]
        Zero-error quantum capacity & $\widehat{E}^u_{\min}\!\left(\mathcal{N}_{A\to B}\right)$ & Corollary~\ref{cor:0_err_cap_ub}\\[0.5em]
        \hline
    \end{tabular}
    \end{center}
    \caption{A list of our results for point-to-point quantum channels. We give upper bounds on several quantities related to quantum communication, as well as private communication, over a quantum channel in terms of the unextendible entanglement of the channel. In each of these scenarios, the quantum channel is denoted by $\mathcal{N}_{A\to B}$, and local operations and forward classical communication from Alice to Bob are allowed for free.}
    \label{tab:p2p_ch_results}
\end{table}

\begin{table*}[]
    \begin{center}
        \begin{tabular}{|l|c|l|}
        \hline
        Operational Quantity & Upper bound & Reference\\
        \hline\hline
        &&\\[-0.7em]
        Probabilistic distillable entanglement  & $\frac{1}{n}\widehat{E}^u\!\left(\rho_{AB}\right)+\widehat{E}^u\!\left(\mathcal{N}_{AB\to A'B'}\right)$ & Proposition~\ref{theo:distill_ent_ch_st_ub} \\
        &&\\[-0.25em]
        Exact distillable entanglement & $\frac{1}{n}\widehat{E}^u_{\min}\!\left(\rho_{AB}\right) + \widehat{E}^u_{\min}\!\left(\mathcal{N}_{AB\to A'B'}\right)$ & Proposition~\ref{theo:distill_ent_ch_st_ub_ex}\\
        &&\\[-0.25em]
        Probabilistic distillable key & $\frac{1}{n}\widehat{E}^u\!\left(\rho_{AB}\right) + \widehat{E}^u\!\left(\mathcal{N}_{AB\to A'B'}\right)$ & Proposition~\ref{theo:distill_key_ch_st_ub}\\
        &&\\[-0.25em]
        Exact distillable key & $\frac{1}{n}\widehat{E}^u_{\min}\!\left(\rho_{AB}\right) + \widehat{E}^u_{\min}\!\left(\mathcal{N}_{AB\to A'B'}\right)$ & Proposition~\ref{theo:distill_key_ch_st_ub_ex}\\[0.5em]
        \hline
    \end{tabular}
    \end{center}
    \caption{A list of our results for bipartite quantum channels. We give upper bounds on the non-asymptotic probabilistic and zero-error distillable entanglement and distillable key of a bipartite quantum state $\rho_{AB}$ and $n$ instances of a bipartite quantum channel $\mathcal{N}_{AB\to A'B'}$, in terms of the unextendible entanglement of the state and channel.}
    \label{tab:bip_ch_results}
\end{table*}

The task of entanglement distillation and secret key distillation from a bipartite quantum state using only local operations and classical communication has been a subject of interest for some time in quantum information theory. Several interesting bounds have been obtained for resources that can be distilled using only local operations and one-way classical communication~\cite{WWW19}. We look at a more general setting in which the two parties involved have access to a bipartite quantum channel that is not necessarily simulable by local operations and one-way classical communication, and we employ the unextendible entanglement of bipartite channels to upper bound the distillable entanglement and distillable secret key of a bipartite channel used in conjunction with a bipartite resource state, along with local operations and classical communication. Table~\ref{tab:bip_ch_results} presents a brief summary of our results for bipartite semicausal channels. 

We consider the example of the erasure channel and the depolarizing channel to demonstrate our results for the point-to-point case. We show that the exact one-way distillable key, exact one-way distillable entanglement, forward-assisted zero-error private capacity, and forward-assisted zero-error quantum capacity of the erasure and the depolarizing channel are all equal to zero. More generally, we show that a quantum channel with a full-rank Choi operator cannot be used for exact entanglement distillation or exact key distillation when employing  one-way LOCC superchannels, and the forward-assisted zero-error quantum capacity as well as the forward-assisted zero-error private capacity of such channels are equal to zero.

To demonstrate our results for bipartite quantum channels, we consider an extension of the erasure channel in which either Bob's system gets erased and Alice retains the state she was trying to send to Bob, or Bob receives the state and Alice receives the erasure symbol. We give an analytical expression for the unextendible entanglement of this channel induced by the Belavkin--Staszewski relative entropy, which is an upper bound on the probabilistic one-way distillable entanglement and probabilistic one-way distillable key of the channel. We also consider a less idealistic setting where Alice receives some classical information indicating if the quantum data she sent to Bob was erased or not. Using a semidefinite program, we compute upper bounds on the probabilistic one-way distillable entanglement and probabilistic one-way distillable key of this channel.

This paper is organized as follows:
\begin{itemize}
    \item Section~\ref{sec:prelimnaries}: Definitions and notations used in the paper along with preliminary information on quantum states, channels, and superchannels.
    \item Section~\ref{sec:k-extendibility}: Background on $k$-extendibility for bipartite states, point-to-point channels, and superchannels.
    \item Section~\ref{sec:unext_ent}: Background on divergences of quantum states and channels, and formal definition of the unextendible entanglement of channels. 
    \item Section~\ref{sec:applications}: Applications of the unextendible entanglement in establishing upper bounds on exact one-way distillable key and exact one-way disillable entanglement of a channel in the one-shot and asymptotic settings, and establishing upper bounds on the forward assisted zero-error quantum capacity and forward assisted zero-error private capacity of channels.
    \item Section~\ref{sec:unext_ent_bip}: Generalizes the notion of $k$-extendibility to bipartite superchannels and defines the unextendible entanglement of bipartite quantum channels.
    \item Section~\ref{sec:applications_bip}: Applications of the unextendible entanglement of bipartite quantum channels in bounding the probabilistic one-way distillable entanglement and probabilistic one-way distillable key of a bipartite state-channel pair.
    \item Section~\ref{sec:numerical_calc}: Semidefinite program to calculate the unextendible entanglement of point-to-point and bipartite channels induced by the $\alpha$-geometric R\'enyi relative entropy. Analytical and numerical calculation of the unextendible entanglement of special channels induced by geometric R\'enyi relative entropies.
\end{itemize}  

\section{Preliminaries}\label{sec:prelimnaries}

In this section we establish some background on the three major elements that we use in the rest of the work: quantum states, channels, and superchannels. 

\subsection{Quantum states and channels}

A quantum state $\rho_A$ is a positive semidefinite, unit-trace operator acting on a Hilbert space $\mathcal{H}_A$. All linear operators acting on the Hilbert space $\mathcal{H}_A$ form the set $\mathcal{L}(A)$, and  all quantum states acting on this Hilbert space form the set $\mathcal{S}(A)$. A bipartite quantum state $\rho_{AB}$ acting on the Hilbert space $\mathcal{H}_{A}\otimes \mathcal{H}_B$ is called separable if it can be written as
\begin{equation}
	\rho_{AB} = \sum_x p(x)\sigma^x_{A}\otimes\tau^x_{B},
 \label{eq:def-sep-state}
\end{equation}
where $\{p(x)\}_x$ is a probability distribution and $\{\sigma^x_{A}\}_x$ and $\{\tau^x_{B}\}_x$ are sets of states. Any quantum state that is not separable is said to be entangled. The $d$-dimensional maximally entangled state vector on the Hilbert space $\mathcal{H}_{A}\otimes\mathcal{H}_B$ is
\begin{equation}
	|\Phi^d\rangle_{AB} \coloneqq \frac{1}{\sqrt{d}}\sum_{i=0}^{d-1} |i\rangle_A|i\rangle_B,
\end{equation}
where $\frac{1}{\sqrt{d}}$ is the normalizing factor and $\{|i\rangle\}_{i=0}^{d-1}$ is an orthonormal basis. The corresponding density operator is written as $\Phi^d_{AB} \equiv |\Phi^d\rangle\!\langle \Phi^d|_{AB}$. 

A quantum channel $\mathcal{N}_{A\to B}$ is a completely positive (CP) and trace preserving (TP) linear map that takes an operator acting on the Hilbert space $\mathcal{H}_A$ as input and outputs an operator acting on the Hilbert space $\mathcal{H}_B$. Let $\Gamma^{\mathcal{N}}_{RB}$ denote the Choi operator of the quantum channel $\mathcal{N}_{A\to B}$, which is defined as follows:
\begin{equation}
	\Gamma^{\mathcal{N}}_{RB} \coloneqq \mathcal{N}_{A\to B}\!\left(\Gamma_{RA}\right),
\end{equation}
where $\Gamma_{RA}\coloneqq d \Phi^d_{RA}$ is the unnormalized maximally entangled operator.

Throughout our paper, we have to consider  extensions of quantum states and channels. We define the set of relevant extensions of a quantum state $\rho$ by $\operatorname{Ext}\!\left(\rho\right)$ and the set of relevant extensions of a quantum channel $\mathcal{N}$ by $\operatorname{Ext}\!\left(\mathcal{N}\right)$. The precise definitions of these sets are given later in~\eqref{eq:extensions_state},~\eqref{eq:ext_set_p2p_ch}, and~\eqref{eq:ext_set_bip_ch}. In the rest of the work, we use the abbreviation $B_S$ for a joint system that contains the isomorphic systems $\{B_i\}_i$ for all values of $i$ in a subset $S$ of all positive integers. For a positive integer $k$, we use the shorthand $[k]$ for the set $\{1,2,\ldots,k\}$, and the shorthand $[k]\setminus i$ for the set $[k]\setminus \{i\}$.

\subsection{Quantum superchannels}

A quantum superchannel $\Theta_{(A\to B)\to (C\to D)}$ is a linear map that transforms a quantum channel $\mathcal{N}_{A\to B}$  to another quantum channel $\mathcal{M}_{C\to D}$. By definition, a superchannel is a completely CPTP preserving map (see Definition~\ref{def:superch} for a formal definition). It can be perceived as a mathematical model for any physical transformation that a quantum channel can undergo, as long as the resulting map is also a quantum channel. Quantum superchannels were introduced in~\cite{Chiribella_2008} and further investigated in~\cite{Gour_2019}, both of which provide a detailed discussion. Below we include a short review on superchannels that is relevant for this work.

\begin{definition}[Superchannel]
\label{def:superch}
Let $\mathcal{T}_{A\to B} : \mathcal{L}(A) \to \mathcal{L}(B)$ be a linear map. Let the space of all such maps be denoted by $\mathbb{L}^{AB}$. 
A linear map $\Theta_{(A\to B)\to (C\to D)}: \mathbb{L}^{AB} \to \mathbb{L}^{CD}$ is a superchannel if 
\begin{enumerate}
    \item It is completely CP preserving; i.e., 
    
      $(\operatorname{id}_{(E)\to(E)}\otimes\Theta_{(A\to B)\to (C\to D)}) (\mathcal{T}_{EA \to EB})$ is a CP map if $\mathcal{T}_{EA \to E'B}$ is a CP map, for all dimensions of system E.
    
    \item It is TP preserving; i.e.,\\
        $\Theta_{(A\to B)\to (C\to D)}(\mathcal{T}_{A\to B})$  is a TP map if $\mathcal{T}_{A\to B}$ is a TP map.
\end{enumerate}
\end{definition}

The fundamental theorem of superchannels states that every superchannel can be decomposed into a pre-processing channel $\mathcal{E}_{C\to MA}$ and a post-processing channel $\mathcal{D}_{MB\to D}$ connected by a memory system $M$~\cite{Chiribella_2008}; i.e., for every superchannel $\Theta_{(A\to B)\to (C\to D)}$, there exist $\mathcal{E}_{C\to MA}$ and $\mathcal{D}_{MB\to D}$ such that
\begin{equation}\label{eq:superchannel_fund_theo}
    \Theta_{(A\to B)\to (C\to D)}(\mathcal{N}_{A\to B}) = \mathcal{D}_{MB\to D}\circ\mathcal{N}_{A\to B}\circ\mathcal{E}_{C\to MA}.
\end{equation}

\begin{figure}
	\centering
	\begin{subfigure}{\linewidth}
		\centering
		\includegraphics[width = \linewidth]{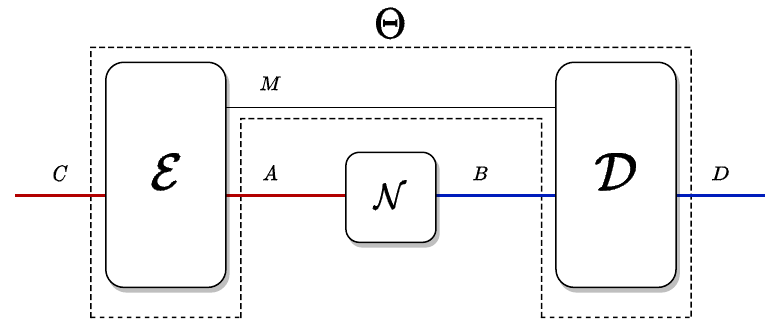}
		\label{fig:basic_superchannel}
	\end{subfigure}
	\begin{subfigure}{\linewidth}
		\centering
		\includegraphics[width = 0.6\linewidth]{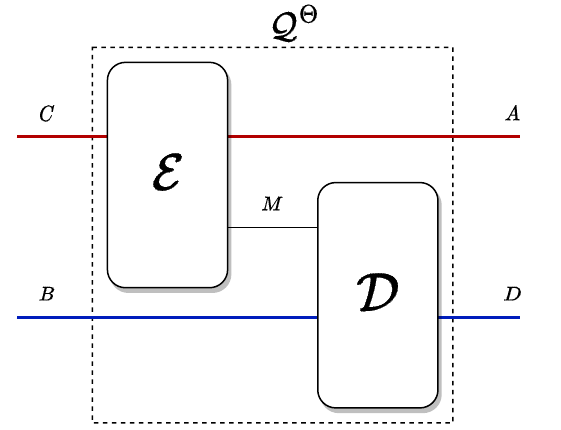}
		\label{fig:unique_CPTP_map}
	\end{subfigure}
	\caption{The figure on the top shows the decomposition of a superchannel $\Theta_{(A\to B)\to (C\to D)}$ into a pre-processing channel $\mathcal{E}_{C\to AM}$ and a post-processing channel $\mathcal{D}_{BM\to D}$ connected by a memory system $M$. The figure on the bottom shows the composition of the unique bipartite channel $\mathcal{Q}^{\Theta}_{CB\to AD}$ associated with a superchannel $\Theta_{(A\to B)\to (C\to D)}$.}
	\label{fig:superchannel_decompose}
\end{figure}

The pre-processing and post-processing channels are not unique to the superchannel. For example, we can introduce an isometric channel  $\mathcal{V}$ and its corresponding reversal map $\mathcal{V}^{\dag}$ acting on the memory system $M$ without changing the superchannel. However, a unique bipartite channel is associated with every superchannel. 

\begin{theorem}[\cite{Gour_2019}]
    A superchannel with the following decomposition
    \begin{equation}
        \Theta_{(A\to B)\to (C\to D)}(\mathcal{N}_{A\to B}) = \mathcal{D}_{MB\to D}\circ\mathcal{N}_{A\to B}\circ\mathcal{E}_{C\to MA},
    \end{equation}
    has a unique bipartite quantum channel associated with it    \begin{equation}\label{eq:superchannel_CPTP_decomp}
        \mathcal{Q}^{\Theta}_{CB\to AD} \coloneqq \mathcal{D}_{MB\to D}\circ\mathcal{E}_{C\to AM}.
    \end{equation}
\end{theorem}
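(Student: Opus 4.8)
\noindent The plan is to prove uniqueness by exhibiting $\mathcal{Q}^{\Theta}_{CB\to AD}$ as a function of the superchannel $\Theta$ alone, with no reference to the particular pre-processing and post-processing channels $\mathcal{E}_{C\to MA}$ and $\mathcal{D}_{MB\to D}$ used to realize it. Since $\Theta$ is a fixed linear map between the fixed spaces $\mathbb{L}^{AB}$ and $\mathbb{L}^{CD}$, any such recovery formula forces every valid decomposition to produce the same bipartite channel, which is exactly the assertion.

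The first step is the elementary observation that $\Theta(\mathcal{N})$ is obtained from $\mathcal{Q}^{\Theta}$ by ``closing the loop'' with $\mathcal{N}_{A\to B}$: routing the output system $A$ of $\mathcal{Q}^{\Theta}$ into the input of $\mathcal{N}$ and feeding the output $B$ of $\mathcal{N}$ back into the input system $B$ of $\mathcal{Q}^{\Theta}$ reconstitutes $\mathcal{D}_{MB\to D}\circ\mathcal{N}_{A\to B}\circ\mathcal{E}_{C\to MA}$, because both $\mathcal{Q}^{\Theta}$ and $\Theta(\mathcal{N})$ are assembled from the same $\mathcal{E}$ and $\mathcal{D}$ joined through the shared memory $M$. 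In terms of Choi operators (with hats denoting reference systems and $\Gamma^{\mathcal{Q}}$ the Choi operator of $\mathcal{Q}^{\Theta}$), this reads, for every $\mathcal{N}_{A\to B}$,
\[
\Gamma^{\Theta(\mathcal{N})}_{\widehat{C}D} = \operatorname{Tr}_{AB}\!\left[\left(\Gamma^{\mathcal{Q}}_{\widehat{C}\,B\,A\,D}\right)^{T_{AB}}\left(I_{\widehat{C}D}\otimes\Gamma^{\mathcal{N}}_{AB}\right)\right],
\]
where the identified systems $A$ and $B$ carry, respectively, the output $A$ of $\mathcal{Q}^{\Theta}$ together with the input of $\mathcal{N}$, and the output $B$ of $\mathcal{N}$ together with the input $B$ of $\mathcal{Q}^{\Theta}$. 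This identity follows by writing the composition $\mathcal{D}\circ\mathcal{N}\circ\mathcal{E}$ in Choi form and contracting the memory $M$ first to form $\Gamma^{\mathcal{Q}}$, then contracting the systems $A,B$ afterward.

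The second step inverts this relation. Its right-hand side is linear in $\Gamma^{\mathcal{N}}_{AB}$, and upon extending $\Theta$ linearly to all of $\mathbb{L}^{AB}$ the operator $\Gamma^{\mathcal{N}}_{AB}$ ranges over an entire basis of $\mathcal{L}(A\otimes B)$, for instance $\{|i\rangle\langle j|_A\otimes|k\rangle\langle l|_B\}$. Evaluating the displayed identity on each basis element isolates a matrix block of $\Gamma^{\mathcal{Q}}$ (the precise indexing being dictated by the transpose $T_{AB}$) as an operator on $\widehat{C}D$, so the full family of outputs $\{\Gamma^{\Theta(\mathcal{N})}_{\widehat{C}D}\}$ reconstructs $\Gamma^{\mathcal{Q}}$ entrywise. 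Since these outputs are fixed by $\Theta$, so is $\Gamma^{\mathcal{Q}}$, and hence so is the channel $\mathcal{Q}^{\Theta}$ it determines; note that only linearity of $\Theta$ together with the decomposition in~\eqref{eq:superchannel_fund_theo} is needed here, while the superchannel axioms serve merely to guarantee that the resulting $\mathcal{Q}^{\Theta}$ is itself CPTP. I expect the main obstacle to be the index-and-transpose bookkeeping of the link product, together with the check that the probing basis elements are admissible after linear extension — equivalently, that the positive semidefinite Choi operators of genuine channels already span $\mathcal{L}(A\otimes B)$, so that positivity poses no obstruction to informational completeness; the remaining manipulations are routine bilinear algebra.
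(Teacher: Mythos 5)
The paper does not prove this statement; it is quoted from~\cite{Gour_2019}. Measured against the standard argument in that reference, your main route is the right one: the link-product identity~\eqref{eq:prop_rule} expresses $\Gamma^{\Theta(\mathcal{N})}$ as a linear functional of $\Gamma^{\mathcal{N}}$ contracted against $\Gamma^{\mathcal{Q}}$, and probing with a basis of $\mathcal{L}(A\otimes B)$ reconstructs $\Gamma^{\mathcal{Q}}$ entrywise, so any two realizations of the same supermap yield the same bipartite channel. That is exactly how uniqueness is obtained in~\cite{Gour_2019}.

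However, your closing parenthetical --- that ``the positive semidefinite Choi operators of genuine channels already span $\mathcal{L}(A\otimes B)$'' --- is false, and it is not a harmless slip. The obstruction is trace preservation, not positivity: Choi operators of CPTP maps obey $\operatorname{Tr}_B[\Gamma^{\mathcal{N}}_{AB}]=I_A$, so their linear span is the proper subspace $\{X_{AB}:\operatorname{Tr}_B[X]\propto I_A\}$, whose orthogonal complement is $\{Z_A\otimes I_B:\operatorname{Tr}[Z_A]=0\}$. Consequently the component of $\Gamma^{\mathcal{Q}}$ along $\{Z_A\otimes I_B\}\otimes\mathcal{L}(C\otimes D)$ with $Z_A$ traceless is invisible to channel probes, and uniqueness genuinely fails if $\Theta$ is known only through its action on channels. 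A concrete instance: take $C$ and $D$ trivial, $\mathcal{E}(1)=\rho_{MA}$, $\mathcal{D}=\operatorname{Tr}_{MB}$; then $\Theta(\mathcal{N})=\operatorname{Tr}[\mathcal{N}(\rho_A)]=1$ for every channel $\mathcal{N}$ regardless of $\rho_A$, yet $\mathcal{Q}_{B\to A}(\cdot)=\operatorname{Tr}[\cdot]\,\rho_A$ differs for different $\rho_A$. Your proof is rescued only by the first half of your own sentence: Definition~\ref{def:superch} specifies $\Theta$ as a linear map on all of $\mathbb{L}^{AB}$ (no ``extension'' is needed --- it is already defined there), and the realization~\eqref{eq:superchannel_fund_theo} is an identity of linear maps on $\mathbb{L}^{AB}$, so the probes do range over a full basis of $\mathcal{L}(A\otimes B)$. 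Delete the ``equivalently'' clause and state explicitly that the decomposition holds as an identity of supermaps on $\mathbb{L}^{AB}$, not merely on the set of channels; with that repair the argument is complete.
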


We define the Choi operator of a superchannel using the unique bipartite quantum channel from~\eqref{eq:superchannel_CPTP_decomp}:
\begin{equation}
    \Gamma^{\Theta}_{A'D'CB} \coloneqq (\operatorname{id}_{A'D'}\otimes\mathcal{Q}^{\theta}_{CB\to AD})(\Gamma_{A'C}\otimes\Gamma_{D'B}),
\end{equation}
where $\Gamma_{A'C} \equiv |\Gamma\rangle\!\langle \Gamma |_{A'C}$ is defined from the unnormalized maximally entangled vector:
\begin{equation}
    \ket{\Gamma}_{A'C} \coloneqq \sum_{i}|i\rangle_{A'}|i\rangle_C.
\end{equation}
It suffices to choose  systems $A'$ and $D'$ to be isomorphic to the systems $A$ and $D$, respectively. The following theorem, established as Theorem~1 of~\cite{Gour_2019}, provides conditions on the Choi operator in order for it to correspond to a legitimate superchannel:
\begin{theorem}[\cite{Gour_2019}]\label{theo:superch_SDP}
    The Choi operator of a superchannel $\Theta_{(A\to B)\to (C\to D)}$ satisfies the following constraints:
   \begin{align}
        \Gamma^{\Theta}_{ADCB} &\ge 0,\\
        \operatorname{Tr}_{AD}[\Gamma_{ADCB}^\Theta] &= I_{CB},\\
        \operatorname{Tr}_D[\Gamma_{ADCB}^\Theta] &= \frac{1}{d_B}\operatorname{Tr}_{BD}[\Gamma_{ADCB}^\Theta]\otimes I_{B}.
    \end{align}
\end{theorem}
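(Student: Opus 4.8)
The plan is to recognize that, up to a relabeling of systems, the operator $\Gamma^\Theta_{ADCB}$ is precisely the Choi operator of the unique bipartite channel $\mathcal{Q}^\Theta_{CB\to AD}$ from \eqref{eq:superchannel_CPTP_decomp}, and then to read off each of the three constraints from a corresponding structural property of $\mathcal{Q}^\Theta$. First I would establish positivity: since $\mathcal{Q}^\Theta_{CB\to AD}$ is a bona fide quantum channel it is completely positive, so applying it to the positive semidefinite operator $\Gamma_{A'C}\otimes\Gamma_{D'B}$ returns a positive semidefinite operator, giving $\Gamma^\Theta_{ADCB}\ge 0$. Next I would obtain the trace constraint $\operatorname{Tr}_{AD}[\Gamma^\Theta_{ADCB}]=I_{CB}$ directly from trace preservation of $\mathcal{Q}^\Theta_{CB\to AD}$, using the standard fact that tracing out the output systems $AD$ of a Choi operator of a TP map returns the identity on the input (reference) systems $CB$.

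The main work, and the step I expect to be the crux, is the third (causality) constraint, since this is exactly the condition distinguishing the bipartite channel of a superchannel from an arbitrary bipartite channel. The key is the semicausal form $\mathcal{Q}^\Theta_{CB\to AD}=\mathcal{D}_{MB\to D}\circ\mathcal{E}_{C\to AM}$: the preprocessing $\mathcal{E}$ generates the output $A$ from the input $C$ alone, so the $A$-marginal cannot depend on the input $B$. Concretely, I would use trace preservation of $\mathcal{D}_{MB\to D}$ to write $\operatorname{Tr}_D\circ\mathcal{Q}^\Theta_{CB\to AD}=\operatorname{Tr}_{MB}\circ\mathcal{E}_{C\to AM}$, and since $\mathcal{E}$ acts only on $C$, this composition factors as $\mathcal{F}_{C\to A}\circ\operatorname{Tr}_B$, where $\mathcal{F}_{C\to A}\coloneqq\operatorname{Tr}_M\circ\mathcal{E}_{C\to AM}$.

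Taking the Choi operator of this factored map, the $\operatorname{Tr}_B$ factor contributes $\operatorname{Tr}_{B'}[\Gamma_{BB'}]=I_B$ while the $\mathcal{F}_{C\to A}$ factor contributes $\Gamma^{\mathcal{F}}_{CA}$, yielding $\operatorname{Tr}_D[\Gamma^\Theta_{ADCB}]=\Gamma^{\mathcal{F}}_{CA}\otimes I_B$. Finally, tracing out $B$ on both sides and using $\operatorname{Tr}[I_B]=d_B$ identifies $\Gamma^{\mathcal{F}}_{CA}=\frac{1}{d_B}\operatorname{Tr}_{BD}[\Gamma^\Theta_{ADCB}]$, and substituting this back reproduces the stated constraint $\operatorname{Tr}_D[\Gamma^\Theta_{ADCB}]=\frac{1}{d_B}\operatorname{Tr}_{BD}[\Gamma^\Theta_{ADCB}]\otimes I_B$.

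I would close by noting that the independence of this argument from the particular choice of $\mathcal{E}$ and $\mathcal{D}$ (which are non-unique) is guaranteed by the uniqueness of $\mathcal{Q}^\Theta_{CB\to AD}$, so the three constraints are genuine properties of $\Theta$. I would also remark that the converse direction—that any Choi operator obeying these three constraints is the Choi operator of some superchannel—follows by reversing the steps, reconstructing $\mathcal{Q}^\Theta_{CB\to AD}$ from a valid Choi operator and then building a pre/post-processing decomposition; it is this converse that promotes the characterization to a semidefinite-programming description, though the statement as given asserts only the necessity direction.
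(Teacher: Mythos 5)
The paper states this theorem without proof, citing~\cite{Gour_2019}, and only remarks afterward that the three constraints correspond to complete CP preservation, TP preservation, and non-signaling. Your derivation is correct and supplies exactly the argument that remark alludes to: positivity from complete positivity of the unique bipartite channel $\mathcal{Q}^{\Theta}_{CB\to AD}$, the trace condition from its trace preservation, and the causality condition from the factorization $\operatorname{Tr}_D\circ\mathcal{Q}^{\Theta}_{CB\to AD}=(\operatorname{Tr}_M\circ\mathcal{E}_{C\to AM})\circ\operatorname{Tr}_B$ implied by the pre-/post-processing decomposition, with the $\frac{1}{d_B}\operatorname{Tr}_{BD}[\cdot]$ normalization recovered by tracing out $B$. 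Your closing observations --- that well-definedness rests on the uniqueness of $\mathcal{Q}^{\Theta}$ rather than on the (non-unique) choice of $\mathcal{E}$ and $\mathcal{D}$, and that only the converse direction is needed to turn this into an SDP characterization --- are also accurate.
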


In the above, the first condition corresponds to the completely CP preserving condition, the second condition corresponds to the TP preserving condition, and the last condition corresponds to the nonsignaling constraint. 

The Choi operator of the input channel $\mathcal{N}_{A\to B}$ and output channel $\mathcal{M}_{C\to D}$ of a superchannel $\Theta_{(A\to B)\to (C\to D)}$ are related through the following propagation rule~\cite{Chiribella_2008, Gour_2019}:
\begin{equation}\label{eq:prop_rule}
    \Gamma^{\mathcal{M}}_{CD} = \operatorname{Tr}_{AB}[T_{AB}(I_{CD}\otimes\Gamma^{\mathcal{N}}_{AB})\Gamma^{\Theta}_{ADCB}],
\end{equation}
where $T_{AB}$ is the partial transpose map acting on systems $A$ and $B$.

\section{\texorpdfstring{$k$}{k}-extendibility}\label{sec:k-extendibility}

In this section, we review the concepts of $k$-extendible states~\cite{Werner1989, PhysRevLett.88.187904, PhysRevA.69.022308} and channels~\cite{Kaur_2019, Kaur_2021}, and we establish the notion of $k$-extendible superchannels. The framework of two-extendible superchannels was employed in~\cite{Berta_2021} and~\cite{Holdsworth_2023} for the purpose of analyzing quantum error correction. We present a more general discussion for arbitrary $k$ in this section. 

Several quantities of interest in quantum information theory are defined in terms of optimizations over the set of separable states, but it is computationally hard to optimize over this set~\cite{Gur03, Gha10}. The framework of $k$-extendibility allows us to approximate the set of separable states in terms of a larger set that contains all separable states. The set of $k$-extendible states is described by semidefinite constraints, and  optimizations over this set are possible using semidefinite programs. The notions of $k$-extendible channels and superchannels are developed to circumnavigate the computational difficulty that arises when optimizing over the set of one-way LOCC channels~\cite{Gur03,Gha10}.

\subsection{\texorpdfstring{$k$}{k}-extendible states}

Let us first recall the definition of a $k$-extendible state~\cite{Werner1989, PhysRevLett.88.187904, PhysRevA.69.022308}. For a positive integer $k\ge 2$, a bipartite  state~$\rho_{AB}$ is $k$-extendible with respect to system $B$ if the following conditions hold:
\begin{enumerate}
    \item There exists an extension state $\omega_{AB_{[k]}}$ such that
    \begin{equation}
        \operatorname{Tr}_{B_{[k]\setminus 1}}[\omega_{AB_{[k]}}] = \rho_{AB},
    \end{equation}
    where each system $B_{i}$ is isomorphic to $B$, for all $i\in [k]$.
    
    \item The extended state is invariant under permutations of the $B$ systems, i.e.,
    \begin{equation}
        \omega_{AB_{[k]}} = W^{\pi}_{B_{[k]}}\omega_{AB_{[k]}}W^{\pi\dagger}_{B_{[k]}} \qquad \forall \pi \in S_k,
    \end{equation}
    where $W^{\pi}_{B_{[k]}}$ is the unitary permutation operator corresponding to the permutation $\pi$ in the symmetric group~$S_k$.
\end{enumerate}

The set of $k$-extendible states is a semidefinite relaxation of the set of separable states. The main idea behind this formulation is that the higher the entanglement content between the systems $A$ and $B$, the lower the number of systems $B_i$, each isomorphic to $B$, that can share correlations in the same way with~$A$. 

The weakest approximation of the set of separable states in this family is the set of two-extendible states. It is straightforward to see that every separable state, written as in~\eqref{eq:def-sep-state}, is two extendible with the extension \mbox{$\omega_{AB_1B_2} = \sum_xp(x)\sigma^x_A\otimes\tau^x_{B_1}\otimes\tau^x_{B_2}$}. However, not all two-extendible states are separable. As an example, sending one share of a maximally entangled state through an erasure channel characterized by erasure probability 1/2 results in the following two-extendible state that is not separable:
\begin{equation}
     \frac{1}{2}\Phi^d_{AB} + \frac{1}{2}\pi_A \otimes|e\rangle\!\langle e|_B,
\end{equation}
where $\pi_A \coloneqq I_{A} / d$ and $|e\rangle\!\langle e|_B$ is the erasure symbol. 

\subsection{\texorpdfstring{$k$}{k}-extendible channels}

For a positive integer $k\ge 2$, a point-to-point quantum channel $\mathcal{N}_{A\to B}$ is $k$-extendible with respect to $B$ if the following conditions hold~\cite{PBHS11}:
\begin{enumerate}
    \item There exists an extension channel $\mathcal{P}_{A\to B_{[k]}}$ such that
    \begin{equation}
        \operatorname{Tr}_{B_{[k]\setminus 1}}\circ\mathcal{P}_{A\to B_{[k]}} = \mathcal{N}_{A\to B}.
    \end{equation}
    \item The extended channel is invariant under permutations of the $B$ systems, i.e.,
    \begin{equation}
        \mathcal{W}^{\pi}_{B_{[k]}}\circ\mathcal{P}_{A\to B_{[k]}} = \mathcal{P}_{A\to B_{[k]}} \qquad \forall \pi \in S_k,
        \label{eq:perm-covariance-broadcast}
    \end{equation}
    where $\mathcal{W}^{\pi}_{B_{[k]}}$ is the unitary channel that permutes the~$B$ systems by the permutation $\pi$ in the symmetric group~$S_k$.
\end{enumerate}

\begin{remark}
    The Choi state of a point-to-point $k$-extendible channel is a $k$-extendible state.
\end{remark}

If a $k$-extendible state is input to a $k$-extendible channel, the output state is also $k$-extendible. This makes the definition of $k$-extendible channels consistent in a resource-theoretic approach; a free operation ($k$-extendible channel) cannot turn a free state ($k$-extendible state) into a resource.

\begin{proposition}[\cite{Kaur_2019}]
    All point-to-point one-way LOCC channels are $k$-extendible for all $ k \ge 2$.
\end{proposition}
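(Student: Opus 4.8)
The plan is to show that any one-way LOCC channel $\mathcal{N}_{A\to B}$ admits an extension channel $\mathcal{P}_{A\to B_{[k]}}$ satisfying the two defining conditions of $k$-extendibility: the marginal (trace-out) condition and the permutation-covariance condition~\eqref{eq:perm-covariance-broadcast}. The key structural fact I would exploit is that a one-way LOCC channel from Alice to Bob has a canonical form in which Alice applies a measurement, sends the classical outcome $x$ to Bob, and Bob applies a channel conditioned on $x$. Concretely, such a channel can be written as
\begin{equation}
    \mathcal{N}_{A\to B}(\rho_A) = \sum_x \mathcal{B}^x_{A'\to B}\!\left(\mathcal{A}^x_{A\to A'}(\rho_A)\right),
\end{equation}
where $\{\mathcal{A}^x_{A\to A'}\}_x$ is a collection of completely positive maps summing to a channel (Alice's instrument) and each $\mathcal{B}^x_{A'\to B}$ is a channel applied by Bob depending on the received classical label $x$.

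The main idea is to build the extension by broadcasting the classical message: since classical information can be freely copied, Bob can prepare $k$ copies of his output, each conditioned on the same classical outcome $x$. First I would define the extension
\begin{equation}
    \mathcal{P}_{A\to B_{[k]}}(\rho_A) \coloneqq \sum_x \left(\bigotimes_{i=1}^k \mathcal{B}^x_{A'_i\to B_i}\right)\!\left(\Delta_{A'\to A'_{[k]}}\!\left(\mathcal{A}^x_{A\to A'}(\rho_A)\right)\right),
\end{equation}
where I still need to specify how the single output system $A'$ of Alice's instrument is distributed to the $k$ copies of Bob's channel. The clean way is for each branch $x$ to simply feed the \emph{same} classical label into $k$ independent copies of Bob's channel acting on freshly prepared inputs; because Bob's side of a one-way LOCC channel is driven only by the classical message, this is well defined. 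I would then verify condition~1 by tracing out $B_{[k]\setminus 1}$: each factor $\mathcal{B}^x$ is trace preserving, so the $k-1$ traced-out copies contribute identity factors, and summing over $x$ recovers $\mathcal{N}_{A\to B}$. Condition~2 is immediate because the construction is manifestly symmetric under permuting the labels $i\in[k]$: each $B_i$ is produced by an identical copy $\mathcal{B}^x_{A'_i\to B_i}$ of the same map, so applying any $\pi\in S_k$ merely relabels identical tensor factors and leaves $\mathcal{P}_{A\to B_{[k]}}$ invariant.

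The step I expect to require the most care is making precise how Alice's post-measurement quantum output is shared among the $k$ copies of Bob while keeping each copy's marginal equal to the original $\mathcal{N}$ and maintaining exact permutation symmetry. If Alice's instrument genuinely outputs a quantum system $A'$ (not merely classical data), one cannot copy it, so I would instead lean on the observation that the \emph{essential} Alice-to-Bob resource in one-way LOCC is the classical channel carrying $x$, together with Bob's local preparation; the honest version of the construction routes only the classical outcome $x$ to all $k$ branches and has each branch apply $\mathcal{B}^x$ to its own locally available input. I would therefore reduce to the standard measure-and-prepare-plus-local-processing normal form of one-way LOCC, apply the cloning of the classical register (which is always permitted), and check that the resulting broadcast channel is permutation covariant and has the correct marginal. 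An alternative, and perhaps cleaner, route is to invoke the Remark above: it suffices to show the Choi operator of $\mathcal{N}_{A\to B}$ is a $k$-extendible state, which follows because the Choi state of a one-way LOCC channel is itself separable across the Alice--Bob cut (or at least one-way LOCC, hence unentangled in the relevant sense), and every separable state is $k$-extendible; this sidesteps the channel-level broadcasting subtlety entirely.
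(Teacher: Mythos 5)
Your proposal is correct and, after your own mid-course correction, arrives at exactly the paper's construction: a point-to-point one-way LOCC channel is measure-and-prepare, $\mathcal{N}_{A\to B}(\rho_A)=\sum_x \operatorname{Tr}\!\left[\Lambda^x_A\rho_A\right]\sigma^x_B$, and the extension $\mathcal{P}_{A\to B_{[k]}}(\rho_A)=\sum_x \operatorname{Tr}\!\left[\Lambda^x_A\rho_A\right]\sigma^x_{B_1}\otimes\cdots\otimes\sigma^x_{B_k}$ obtained by copying the classical label $x$ is manifestly permutation covariant and has the correct marginal. Your opening decomposition with $\mathcal{B}^x_{A'\to B}$ acting on a quantum output $A'$ of Alice's instrument briefly smuggles quantum communication across the cut, but you correctly identify and discard that; your alternative route via the separable Choi state also works, provided you add the small check that the product extension of the Choi state still has identity marginal on $A$ and is therefore the Choi operator of a legitimate broadcast channel.
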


\begin{proof}
    The action of a one-way LOCC channel can be written as
    \begin{equation}
        \mathcal{N}_{A\to B}(\rho_{A}) = \sum_x \operatorname{Tr}\!\left[\Lambda^x_A\rho_{A}\right]\sigma^x_B,
    \end{equation}
    where $\{\Lambda^x\}_x$ is a positive operator-valued measure (POVM) and $\left\{\sigma^x_B\right\}_x$ is a set of quantum states. This is also the same as an entanglement-breaking channel~\cite{HSR03}.

    Consider an extension of this channel that acts as follows:
    \begin{equation}
        \mathcal{P}_{A\to B_{[k]}}\!\left(\rho_{A}\right) \\ \coloneqq  \sum_x\operatorname{Tr}\!\left[\Lambda^x_A\rho_{A}\right]\sigma^x_{B_1}\otimes\cdots \otimes\sigma^x_{B_k}.
    \end{equation}
    This is a valid quantum channel because $x$ encodes classical data that can be copied any number of times. This extension obeys the  permutation covariance conditions. Hence, $\mathcal{N}_{A\to B}$ is a $k$-extendible channel. 
\end{proof}

\subsection{\texorpdfstring{$k$}{k}-extendible superchannels}\label{sec:k_ext_superch}

We further build upon the notion of extendibility and define the set of $k$-extendible superchannels. For a positive integer $k\ge 2$, a superchannel $\Theta_{(A\to B)\to (C\to D)}$, associated with the unique bipartite channel $\mathcal{Q}^{\Theta}_{CB\to AD}$, is $k$-extendible if the following conditions hold.
\begin{enumerate}
    \item There exists an extension superchannel $\Upsilon_{(A\to B_{[k]})\to (C\to D_{[k]})}$, with associated unique quantum channel $\mathcal{Q}^{\Upsilon}_{CB_{[k]}\to AD_{[k]}}$, such that,
    \begin{equation}
        \operatorname{Tr}_{D_{[k]\setminus 1}}\circ\mathcal{Q}^{\Upsilon}_{CB_{[k]}\to AD_{[k]}} = \mathcal{Q}^{\Theta}_{CB\to AD}\circ\operatorname{Tr}_{B_{[k]\setminus 1}}.
    \end{equation}
    \item $\mathcal{Q}^{\Upsilon}_{CB_{[k]}\to AD_{[k]}}$ is covariant with respect to permutations of the input systems $B_{[k]}$ and output systems $D_{[k]}$, i.e.,
    \begin{multline}
        \mathcal{W}^{\pi}_{D_{[k]}}\circ\mathcal{Q}^{\Upsilon}_{CB_{[k]}\to AD_{[k]}} = \mathcal{Q}^{\Upsilon}_{CB_{[k]}\to AD_{[k]}}\circ\mathcal{W}^{\pi}_{B_{[k]}}, \\\forall \pi \in S_k
    \end{multline}
    where $\mathcal{W}^{\pi}_{D_{[k]}}$ and $\mathcal{W}^{\pi}_{B_{[k]}}$ are unitary channels representing the permutation $\pi$.
\end{enumerate}
This definition of $k$-extendible superchannels is consistent with the definition of two-extendible superchannels given in~\cite{Holdsworth_2023} and the notion of k-extendible channels considered in~\cite{Kaur_2019,Kaur_2021}.

Let us consider a specific decomposition of a superchannel $\Theta_{(A\to B)\to (C\to D)}$ in terms of a pre-processing channel $\mathcal{E}^{\Theta}_{C\to AM}$ and a post-processing channel $\mathcal{D}^{\Theta}_{MB\to D}$. Neither of the conditions implying $k$-extendibility of a superchannel involve systems $A,C$, and $M$. Thus, the conditions of $k$-extendibility of a superchannel can be reduced to conditions on the post-processing channel only. 

\begin{proposition}
	The $k$-extendibility conditions for a superchannel are semidefinite constraints on the Choi operators of the superchannel $\Theta_{(A\to B)\to(C\to D)}$ and its extension $\Upsilon_{(A\to B_{[k]})\to (C\to D_{[k]})}$. Along with the conditions in Theorem~\ref{theo:superch_SDP}, the Choi operators $\Gamma^{\Theta}_{ADCB}$ and $\Gamma^{\Upsilon}_{AD_{[k]}CB_{[k]}}$ satisfy the following:
	\begin{align}
		\operatorname{Tr}_{D_{[k]\setminus 1}}\!\left[\Gamma^{\Upsilon}_{CB_{[k]}AD_{[k]}}\right] &= \Gamma^{\Theta}_{CBAD}\otimes\frac{I_{B_{[k]\setminus 1}}}{d^{k-1}_B} , \label{eq:two_ext_supch_non_sig_choi} \\
		W^{\pi}_{D_{[k]}} \Gamma^{\Upsilon} W^{\pi\dagger}_{D_{[k]}} &= W^{\pi\dag}_{B_{[k]}} \Gamma^{\Upsilon}  W^{\pi}_{B_{[k]}} \quad \forall \pi \in S_k.
	\end{align}
\end{proposition}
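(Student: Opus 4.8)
The plan is to invoke the Choi--Jamio{\l}kowski isomorphism to convert the two channel-level conditions defining a $k$-extendible superchannel into equivalent statements about the Choi operators $\Gamma^{\Theta}$ and $\Gamma^{\Upsilon}$. Because the Choi representation is a linear bijection between maps and their Choi operators, two maps coincide if and only if their Choi operators coincide; hence it suffices to compute the Choi operator of each side of each condition and equate them. Since $\Gamma^{\Theta}$ and $\Gamma^{\Upsilon}$ are, up to the fixed relabeling of reference systems, precisely the Choi operators of the associated bipartite channels $\mathcal{Q}^{\Theta}_{CB\to AD}$ and $\mathcal{Q}^{\Upsilon}_{CB_{[k]}\to AD_{[k]}}$, the entire argument reduces to standard manipulations of channel Choi operators.

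The key dictionary I would record first is: (i) tracing out a subset of output systems corresponds to the partial trace of the Choi operator over those same systems; (ii) precomposing a channel with a partial trace over some input systems corresponds to tensoring its Choi operator with the maximally mixed state on the associated reference systems; (iii) postcomposing with a unitary channel $\mathcal{W}^{\pi}$ on the output conjugates the Choi operator by $W^{\pi}$ on those output systems; and (iv) precomposing with a unitary channel $\mathcal{W}^{\pi}$ on the input conjugates the Choi operator by the transpose $(W^{\pi})^{T}$ on the corresponding reference systems, a consequence of the ricochet identity $(\operatorname{id}\otimes M)\ket{\Gamma} = (M^{T}\otimes\operatorname{id})\ket{\Gamma}$ of the maximally entangled vector.

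Applying (i) and (ii) to the marginal condition $\operatorname{Tr}_{D_{[k]\setminus 1}}\circ\mathcal{Q}^{\Upsilon} = \mathcal{Q}^{\Theta}\circ\operatorname{Tr}_{B_{[k]\setminus 1}}$, the left-hand side has Choi operator $\operatorname{Tr}_{D_{[k]\setminus 1}}[\Gamma^{\Upsilon}]$, while the right-hand side has Choi operator $\Gamma^{\Theta}_{CBAD}\otimes(I_{B_{[k]\setminus 1}}/d_{B}^{k-1})$, where the normalization $d_{B}^{k-1}=\dim\mathcal{H}_{B_{[k]\setminus 1}}$ arises from the maximally mixed state on the traced-out input references; equating the two yields the first displayed identity. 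Applying (iii) and (iv) to the covariance condition $\mathcal{W}^{\pi}_{D_{[k]}}\circ\mathcal{Q}^{\Upsilon} = \mathcal{Q}^{\Upsilon}\circ\mathcal{W}^{\pi}_{B_{[k]}}$ gives $W^{\pi}_{D_{[k]}}\Gamma^{\Upsilon}W^{\pi\dagger}_{D_{[k]}}$ on the left and conjugation by $(W^{\pi}_{B_{[k]}})^{T}$ on the right; since permutation operators are real and orthogonal, $(W^{\pi})^{T}=(W^{\pi})^{-1}=W^{\pi\dagger}$, so the right-hand side is $W^{\pi\dagger}_{B_{[k]}}\Gamma^{\Upsilon}W^{\pi}_{B_{[k]}}$, giving the second displayed identity. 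Together with the constraints of Theorem~\ref{theo:superch_SDP} imposed on $\Gamma^{\Upsilon}$ and $\Gamma^{\Theta}$, which enforce that $\Upsilon$ and $\Theta$ are legitimate superchannels, all conditions are linear in the Choi operators and hence semidefinite.

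The step I expect to be the main obstacle is item (iv): correctly tracking the transpose that appears when converting precomposition by an input unitary into a conjugation of the Choi operator, and then verifying that for permutation operators this transpose collapses to the adjoint $W^{\pi\dagger}$. This is exactly what accounts for the asymmetry in the final identity, where the output systems $D_{[k]}$ are conjugated by $W^{\pi}$ while the input references $B_{[k]}$ are conjugated by $W^{\pi\dagger}$; a dagger slip here would produce the wrong covariance relation. Keeping careful track of the normalization factor $d_{B}^{k-1}$ in the first identity is a secondary bookkeeping point.
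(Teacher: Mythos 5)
The paper states this proposition without an accompanying proof, so your argument must stand on its own; the strategy of translating the two defining conditions of a $k$-extendible superchannel into Choi-operator identities via the linearity of the Choi--Jamio\l{}kowski correspondence is clearly the intended one, and your handling of the permutation-covariance condition is correct: the transpose coming from the ricochet identity collapses to $W^{\pi\dagger}$ because permutation matrices are real orthogonal, and this is exactly what produces the asymmetric conjugation in the second displayed identity.

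The problem is item (ii) of your dictionary, which is where the factor $d_B^{-(k-1)}$ enters. All Choi operators in this paper are built from the \emph{unnormalized} maximally entangled operator $|\Gamma\rangle\!\langle\Gamma|$ (see the definition of $\Gamma^{\Theta}_{A'D'CB}$ and the TP condition $\operatorname{Tr}_{AD}[\Gamma^{\Theta}_{ADCB}]=I_{CB}$ in Theorem~\ref{theo:superch_SDP}). Under that convention $\operatorname{Tr}_{B_i}[\Gamma_{B_i'B_i}]=I_{B_i'}$, not the maximally mixed state, so precomposing with $\operatorname{Tr}_{B_{[k]\setminus 1}}$ tensors the Choi operator with the \emph{identity} on the traced-out reference systems, and the first identity should read $\operatorname{Tr}_{D_{[k]\setminus 1}}[\Gamma^{\Upsilon}]=\Gamma^{\Theta}\otimes I_{B_{[k]\setminus 1}}$ with no prefactor. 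A quick consistency check: tracing both sides of~\eqref{eq:two_ext_supch_non_sig_choi} over $AD_1$ gives $I_{CB_{[k]}}$ on the left (by trace preservation of $\Upsilon$) but $I_{CB_{[k]}}/d_B^{k-1}$ on the right. A concrete instance makes the point as well: for the identity superchannel and its product extension, $\Gamma^{\Upsilon}=\Gamma_{CA}\otimes\Gamma_{B_1D_1}\otimes\cdots\otimes\Gamma_{B_kD_k}$, and tracing out $D_{[k]\setminus 1}$ plainly leaves $\Gamma^{\Theta}\otimes I_{B_{[k]\setminus 1}}$. Your derivation lands on the printed equation only because it mixes the normalized-state rule for input partial traces with unnormalized Choi operators everywhere else; the printed prefactor appears to be a slip in the paper itself, and indeed it is silently dropped between~\eqref{eq:using_non_sig} and~\eqref{eq:using_ch_marginal} in the proof of Proposition~\ref{prop:two_ext_supch_ch_marginal}. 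With a consistent convention your argument is otherwise complete.
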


\begin{proposition}\label{prop:two_ext_supch_ch_marginal}
    Let $\Theta_{(A\to B)\to (C\to D)}$ be a $k$-extendible superchannel  with the $k$-extension $\Upsilon_{(A\to B_{[k]})\to (C\to D_{[k]})}$. Then each marginal of the channel obtained by acting with the superchannel $\Upsilon$ on a channel $\mathcal{P}_{A\to B_{[k]}}$ is the same as the channel obtained by acting with the superchannel $\Theta$ on the respective marginal of the channel $\mathcal{P}_{A\to B_{[k]}}$. That is,
  \begin{multline}\label{eq:two_ext_supch_ch_marginal}
        \operatorname{Tr}_{D_{[k]\setminus i}}\circ\left(\Upsilon_{(A\to B_{[k]})\to (C\to D_{[k]})}\!\left(\mathcal{P}_{A\to B_{[k]}}\right)\right) \\= \Theta_{(A\to B)\to (C\to D)}\!\left(\operatorname{Tr}_{B_{[k]\setminus i}}\circ\mathcal{P}_{A\to B_{[k]}}\right) ~\forall i\in [k].
    \end{multline}
\end{proposition}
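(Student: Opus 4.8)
The plan is to work at the level of the unique bipartite channels and reduce the claim to the marginalization condition defining a $k$-extendible superchannel. First I would fix an arbitrary decomposition $\mathcal{Q}^{\Upsilon}_{CB_{[k]}\to AD_{[k]}} = \mathcal{D}^{\Upsilon}_{MB_{[k]}\to D_{[k]}}\circ\mathcal{E}^{\Upsilon}_{C\to MA}$ of the extension's bipartite channel, so that its action unfolds via \eqref{eq:superchannel_fund_theo} as $\Upsilon(\mathcal{P}_{A\to B_{[k]}}) = \mathcal{D}^{\Upsilon}_{MB_{[k]}\to D_{[k]}}\circ\mathcal{P}_{A\to B_{[k]}}\circ\mathcal{E}^{\Upsilon}_{C\to MA}$. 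Writing $\Lambda_{MB_{[k]}\to D_i}\coloneqq\operatorname{Tr}_{D_{[k]\setminus i}}\circ\mathcal{D}^{\Upsilon}_{MB_{[k]}\to D_{[k]}}$, the left-hand side of the proposition becomes $\Lambda\circ(\operatorname{id}_M\otimes\mathcal{P})\circ\mathcal{E}^{\Upsilon}$, i.e.\ the same pre-processing $\mathcal{E}^{\Upsilon}$ sends an $A$-system into $\mathcal{P}$ and an $M$-system into $\Lambda$. Since the target right-hand side is $\Theta(\operatorname{Tr}_{B_{[k]\setminus i}}\circ\mathcal{P})$, which by the fundamental theorem is the bipartite channel $\mathcal{Q}^{\Theta}_{CB\to AD}$ linked with the marginal input channel $\mathcal{N}'\coloneqq\operatorname{Tr}_{B_{[k]\setminus i}}\circ\mathcal{P}$, the whole statement reduces to an identity between the two semicausal ``combs'' $(\operatorname{id}_A\otimes\Lambda)\circ(\mathcal{E}^{\Upsilon}\otimes\operatorname{id}_{B_{[k]}})$ and $\mathcal{Q}^{\Theta}\circ\operatorname{Tr}_{B_{[k]\setminus i}}$, followed by closing the $A\to\mathcal{P}\to B_{[k]}$ loop on both.

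The second step is to promote the first defining condition, which gives the marginal identity only for $i=1$, to arbitrary $i$. To this end I would use permutation covariance with the transposition $\pi=(1\,i)\in S_k$ together with the elementary identities $\operatorname{Tr}_{D_{[k]\setminus 1}}\circ\mathcal{W}^{\pi}_{D_{[k]}} = \operatorname{Tr}_{D_{[k]\setminus i}}$ and $\operatorname{Tr}_{B_{[k]\setminus 1}}\circ\mathcal{W}^{\pi}_{B_{[k]}} = \operatorname{Tr}_{B_{[k]\setminus i}}$, yielding
\begin{equation}
\operatorname{Tr}_{D_{[k]\setminus i}}\circ\mathcal{Q}^{\Upsilon} = \operatorname{Tr}_{D_{[k]\setminus 1}}\circ\mathcal{Q}^{\Upsilon}\circ\mathcal{W}^{\pi}_{B_{[k]}} = \mathcal{Q}^{\Theta}\circ\operatorname{Tr}_{B_{[k]\setminus i}}\qquad\forall i\in[k],
\end{equation}
where the first equality combines $\operatorname{Tr}_{D_{[k]\setminus i}}=\operatorname{Tr}_{D_{[k]\setminus 1}}\circ\mathcal{W}^{\pi}_{D_{[k]}}$ with covariance and the second uses the $i=1$ condition. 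Substituting the chosen decomposition of $\mathcal{Q}^{\Upsilon}$ and noting that $\mathcal{E}^{\Upsilon}$ produces the $A$-system untouched by the trace over the $D$ systems, this reads precisely $(\operatorname{id}_A\otimes\Lambda)\circ(\mathcal{E}^{\Upsilon}\otimes\operatorname{id}_{B_{[k]}}) = \mathcal{Q}^{\Theta}\circ\operatorname{Tr}_{B_{[k]\setminus i}}$, the comb identity needed above, and established \emph{without} assuming that $\Theta$ and $\Upsilon$ share a pre-processing.

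The third step is to close the loop. Both combs are semicausal in that their $A$-output is independent of the $B_{[k]}$-input: on the left $\mathcal{E}^{\Upsilon}$ generates $A$ before $\Lambda$ consumes $B_{[k]}$, and on the right $\mathcal{Q}^{\Theta}$ obeys the nonsignaling constraint of Theorem~\ref{theo:superch_SDP}. Hence I can feed the $A$-output into $\mathcal{P}_{A\to B_{[k]}}$ and route its output back into the $B_{[k]}$-input on both sides without creating a causal cycle, and this link preserves the equality. On the left the closing reproduces $\Lambda\circ(\operatorname{id}_M\otimes\mathcal{P})\circ\mathcal{E}^{\Upsilon}=\operatorname{Tr}_{D_{[k]\setminus i}}\circ\Upsilon(\mathcal{P})$; on the right the intervening $\operatorname{Tr}_{B_{[k]\setminus i}}$ is absorbed into the input channel, turning $\mathcal{Q}^{\Theta}$ linked with $\mathcal{P}$ into $\mathcal{Q}^{\Theta}$ linked with $\operatorname{Tr}_{B_{[k]\setminus i}}\circ\mathcal{P}=\mathcal{N}'$, which is exactly $\Theta(\mathcal{N}')$. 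This gives the proposition. An equivalent route runs the computation entirely with Choi operators via \eqref{eq:prop_rule}, pushing $\operatorname{Tr}_{D_{[k]\setminus i}}$ onto $\Gamma^{\Upsilon}$ and collapsing the identity factor on $B_{[k]\setminus i}$ supplied by the generalized marginal condition.

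The main obstacle is justifying that the link/loop operation is well defined and respects the comb identity, i.e.\ that connecting $A\to\mathcal{P}\to B_{[k]}$ across an equality of bipartite maps again yields an equality of the induced $C\to D_i$ channels. This is the content making the superchannel action a function of its unique bipartite channel alone, and it hinges on the semicausal ordering; verifying it carefully, rather than the index bookkeeping of the second step or the normalization factors appearing in the Choi-operator alternative, is where the real work lies.
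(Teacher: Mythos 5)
Your proposal is correct and follows essentially the same route as the paper: both reduce the claim to the marginal identity $\operatorname{Tr}_{D_{[k]\setminus i}}\circ\mathcal{Q}^{\Upsilon} = \mathcal{Q}^{\Theta}\circ\operatorname{Tr}_{B_{[k]\setminus i}}$ on the unique bipartite channels and then link in $\mathcal{P}_{A\to B_{[k]}}$, which the paper carries out concretely via the Choi-operator propagation rule~\eqref{eq:prop_rule} --- exactly the ``equivalent route'' you mention at the end, and the cleanest way to discharge the worry you flag about the link being well defined, since the linked channel depends only on the Choi operators being equated. Your explicit use of permutation covariance to promote the $i=1$ marginal condition to arbitrary $i$ is a detail the paper silently assumes, and is worth retaining.
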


\begin{proof}
    Let $\Theta_{(A\to B)\to (C\to D)}$ be a $k$-extendible superchannel with the $k$-extension $\Upsilon_{(A\to B_{[k]})\to (C\to D_{[k]})}$. As such,
    \begin{equation}
        \operatorname{Tr}_{D_{[k]\setminus i}}\circ\mathcal{Q}^{\Upsilon}_{CB_{[k]}\to AD_{[k]}} = \mathcal{Q}^{\Theta}_{CB\to AD}\circ\operatorname{Tr}_{B_{[k]\setminus i}} ~\forall i\in [k],
    \end{equation}
    where $\mathcal{Q}^{\Theta}_{CB\to AD}$ and $\mathcal{Q}^{\Upsilon}_{CB_{[k]}\to AD_{[k]}}$ are the unique quantum channels associated with the respective superchannels above. Let $\mathcal{N}_{A\to B}$ be a marginal of the channel $\mathcal{P}_{A\to B_{[k]}}$:
    \begin{equation}
        \operatorname{Tr}_{B_{[k]\setminus i}}\circ\mathcal{P}_{A\to B_{[k]}} = \mathcal{N}_{A\to B} \qquad \forall i\in [k].
    \end{equation}
    
    The Choi operator of the channel obtained by acting with the superchannel $\Theta_{(A\to B)\to (C\to D)}$ on $\mathcal{N}_{A\to B}$ is
    \begin{equation}
        \Gamma^{\Theta\left[\mathcal{N}\right]}_{CD} = \operatorname{Tr}_{AB}\!\left[T_{AB}\!\left(\Gamma^{\mathcal{N}}_{AB}\right)\Gamma^{\Theta}_{CBAD}\right],
    \end{equation}
    where $\Gamma^{\mathcal{N}}_{AB}$ and $\Gamma^{\Theta}_{ADCB}$ are the Choi operators of the channel $\mathcal{N}_{A\to B}$ and the superchannel $\Theta_{(A\to B)\to (C\to D)}$, respectively.
    
    Similarly, the Choi operator of the channel obtained by acting with the superchannel $\Upsilon_{(A\to B_{[k]})\to (C\to D_{[k]})}$ on the channel $\mathcal{P}_{A\to B_{[k]}}$ is,
    \begin{equation}
        \Gamma^{\Upsilon\left[\mathcal{P}\right]}_{CD_{[k]}} = \operatorname{Tr}_{AB_{[k]}}\!\left[T_{AB_{[k]}}\!\left(\Gamma^{\mathcal{P}}_{AB_{[k]}}\right)\Gamma^{\Upsilon}_{CB_{[k]}AD_{[k]}}\right].
    \end{equation}
    This channel has the marginal,
    \begin{align}
        &\operatorname{Tr}_{D_{[k]\setminus i}}\!\left[\Gamma^{\Upsilon\left[\mathcal{P}\right]}_{CD_{[k]}}\right]\notag \\ 
        &= \operatorname{Tr}_{AB_{[k]}D_{[k]\setminus i}}\!\left[T_{AB_{[k]}}\!\left(\Gamma^{\mathcal{P}}_{AB_{[k]}}\right)\Gamma^{\Upsilon}_{CB_{[k]}AD_{[k]}}\right]\\
        &= \operatorname{Tr}_{AB_{[k]}}\!\left[T_{AB_{[k]}}\!\left(\Gamma^{\mathcal{P}}_{A_{[k]}}\right)\operatorname{Tr}_{D_{[k]\setminus i}}\!\left[\Gamma^{\Upsilon}_{CB_{[k]}AD_{[k]}}\right]\right]\\
        &= \frac{1}{d^{k-1}_{B}}\operatorname{Tr}_{AB_{[k]}}\!\left[T_{AB_{[k]}}\!\left(\Gamma^{\mathcal{P}}_{AB_{[k]}}\right)\left(\Gamma^{\Theta}_{CB_iAD_i}\otimes I_{B_{[k]\setminus i}}\right)\right] \label{eq:using_non_sig}\\
        &= \operatorname{Tr}_{AB_i}\!\left[T_{AB_i}\!\left(\Gamma^{\mathcal{N}}_{AB_i}\right)\Gamma^{\Theta}_{CB_iAD_i}\right]\label{eq:using_ch_marginal}\\
        &= \Gamma^{\Theta\left[\mathcal{N}\right]}_{CD_i},
    \end{align}
    where the equality in~\eqref{eq:using_non_sig} follows from the non-signaling condition in~\eqref{eq:two_ext_supch_non_sig_choi} and the equality in~\eqref{eq:using_ch_marginal} follows from the fact that $\mathcal{P}_{A\to B_{[k]}}$ is an extension of the channel $\mathcal{N}_{A\to B}$. Hence, we conclude~\eqref{eq:two_ext_supch_ch_marginal}. 
\end{proof}

\subsubsection{One-way LOCC superchannels}

We briefly review one-way LOCC superchannels to establish their connection with $k$-extendible superchannels. 

A superchannel $\Theta_{(A\to B)\to (C\to D)}$ is called one-way LOCC if it can be implemented using local operations and one-way classical communication only. The action of a one-way LOCC superchannel can be described by
\begin{equation}
    \Theta_{(A\to B)\to (C\to D)}(\mathcal{N}_{A\to B}) = \sum_x \mathcal{D}^x_{B\to D}\circ\mathcal{N}_{A\to B}\circ\mathcal{E}^x_{C\to A},
\end{equation}
where $\{\mathcal{E}^x_{C\to A}\}_x$ is a set of CP maps such that the sum map $\sum_x\mathcal{E}^x_{C\to A}$ is trace preserving and $\{\mathcal{D}^x_{B\to D}\}_x$ is a set of quantum channels. This can be interpreted as the memory system~$M$ being a purely classical system $X$ and the pre-processing and post-processing channels being
\begin{align}
	\mathcal{E}_{C\to AX}\!\left(\rho_C\right) &= \sum_x \mathcal{E}^x_{C\to A}\!\left(\rho_C\right)\otimes |x\rangle\!\langle x|_X,\\
	\mathcal{D}_{BX\to D}\!\left(\sigma_{BX}\right) &= \sum_x \mathcal{D}^x_{B\to D}\!\left(\langle x|\sigma_{BX}|x\rangle_X\right),
\end{align}
such that
\begin{equation}
	\Theta_{(A\to B)\to (C\to D)}(\mathcal{N}_{A\to B}) = \mathcal{D}_{BX\to D}\circ\mathcal{N}_{A\to B}\circ\mathcal{E}_{C\to AX}.
\end{equation}

The unique bipartite channel associated with a one-way LOCC superchannel is of the form
\begin{equation}
    \mathcal{Q}^{\Theta}_{CB\to AD} = \sum_x\mathcal{E}^x_{C\to A}\otimes\mathcal{D}^x_{B\to D}.
\end{equation}

\begin{proposition}
\label{prop:k-ext-contains-1-LOCC}
    All one-way LOCC superchannels are $k$-extendible for all $ k\ge 2$.
\end{proposition}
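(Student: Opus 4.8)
The plan is to mirror the proof that one-way LOCC \emph{channels} are $k$-extendible: I exploit the fact that the classical message $x$ produced by Alice's pre-processing can be copied and broadcast, and then have Bob apply an identical copy of his post-processing channel $\mathcal{D}^x$ to each of the $k$ output branches. Starting from the unique bipartite channel $\mathcal{Q}^{\Theta}_{CB\to AD} = \sum_x\mathcal{E}^x_{C\to A}\otimes\mathcal{D}^x_{B\to D}$ associated with the given one-way LOCC superchannel, I would propose the candidate extension
\begin{equation}
    \mathcal{Q}^{\Upsilon}_{CB_{[k]}\to AD_{[k]}} \coloneqq \sum_x\mathcal{E}^x_{C\to A}\otimes\left(\bigotimes_{i=1}^k\mathcal{D}^x_{B_i\to D_i}\right),
\end{equation}
which is realized by the superchannel $\Upsilon$ that retains the pre-processing $\mathcal{E}^x_{C\to A}$ (tagging a classical register $X$ with the value $x$) and uses the post-processing $\mathcal{D}^{\Upsilon}_{B_{[k]}X\to D_{[k]}}(\sigma) = \sum_x (\bigotimes_{i=1}^k\mathcal{D}^x_{B_i\to D_i})(\langle x|\sigma|x\rangle_X)$.

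Because $\Upsilon$ is manifestly implemented by local operations and a single round of forward classical communication, it is automatically a legitimate superchannel of the stipulated type $(A\to B_{[k]})\to(C\to D_{[k]})$, so no separate verification of the constraints in Theorem~\ref{theo:superch_SDP} is required. It then remains to check the two defining conditions of $k$-extendibility from Section~\ref{sec:k_ext_superch}. For the marginal (nonsignaling) condition, I would trace out $D_{[k]\setminus 1}$ and invoke trace preservation of each $\mathcal{D}^x$, so that $\operatorname{Tr}_{D_i}\circ\mathcal{D}^x_{B_i\to D_i} = \operatorname{Tr}_{B_i}$ for every $i\ge 2$. This collapses the candidate to $\left(\sum_x\mathcal{E}^x_{C\to A}\otimes\mathcal{D}^x_{B_1\to D_1}\right)\circ\operatorname{Tr}_{B_{[k]\setminus 1}} = \mathcal{Q}^{\Theta}_{CB\to AD}\circ\operatorname{Tr}_{B_{[k]\setminus 1}}$, which is exactly the required first condition upon identifying $B_1\equiv B$ and $D_1\equiv D$.

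For the permutation-covariance condition, the key observation is that the \emph{same} channel $\mathcal{D}^x$ appears in every branch, so permuting the output labels is interchangeable with permuting the input labels: for each $\pi\in S_k$ one has $\mathcal{W}^{\pi}_{D_{[k]}}\circ(\bigotimes_{i}\mathcal{D}^x_{B_i\to D_i}) = (\bigotimes_{i}\mathcal{D}^x_{B_i\to D_i})\circ\mathcal{W}^{\pi}_{B_{[k]}}$. Since $\mathcal{E}^x_{C\to A}$ acts on none of the permuted systems, this identity lifts term-by-term in $x$ to $\mathcal{W}^{\pi}_{D_{[k]}}\circ\mathcal{Q}^{\Upsilon} = \mathcal{Q}^{\Upsilon}\circ\mathcal{W}^{\pi}_{B_{[k]}}$, establishing the second condition and completing the proof.

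I do not expect a serious obstacle; the only point demanding care is confirming that broadcasting the classical register and replicating the post-processing does not covertly introduce signaling from Bob to Alice, which would break the associated-channel structure. Writing $\Upsilon$ explicitly in pre-/post-processing form with a purely classical memory $X$ resolves this, since it exhibits $\Upsilon$ as a member of the one-way LOCC class and thereby guarantees all superchannel constraints hold by construction.
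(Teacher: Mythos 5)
Your proposal is correct and follows essentially the same route as the paper: both construct the extension $\Upsilon$ by copying Alice's classical message $x$ and applying an identical copy of Bob's post-processing $\mathcal{D}^x$ to each of the $k$ branches, yielding the associated channel $\sum_x\mathcal{E}^x_{C\to A}\otimes\bigotimes_{i=1}^k\mathcal{D}^x_{B_i\to D_i}$. The only difference is that you explicitly carry out the verification of the marginal and permutation-covariance conditions that the paper dismisses as "straightforward to verify," and your checks are sound.
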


\begin{proof}
    Once again we exploit the fact that $x$ is classical data and can be copied as many times as needed. Hence, we can construct a superchannel $\Upsilon_{(A\to B_{[k]})\to (C\to D_{[k]})}$ that acts on a quantum channel $\mathcal{P}_{A\to B_{[k]}}$ as
    \begin{multline}
        \Upsilon(\mathcal{P}_{A\to B_{[k]}}) \coloneqq \\  \sum_x\left(\mathcal{D}^x_{B_1\to D_1}\otimes\cdots \otimes\mathcal{D}^x_{B_k\to D_k}\right)\circ(\mathcal{P}_{A\to B_{[k]}})\circ\mathcal{E}^x_{C\to A}.
    \end{multline}
    It is straightforward to verify that such an extension meets the necessary conditions for $\Theta_{(A\to B)\to (C\to D)}$ to be a $k$-extendible superchannel.
\end{proof}

\medskip
We note here that the case of $k=2$ in Proposition~\ref{prop:k-ext-contains-1-LOCC} was already considered in~\cite{Berta_2021,Holdsworth_2023}. 

\section{Unextendible entanglement of quantum channels}\label{sec:unext_ent}

With the framework of $k$-extendibility in hand, we can now define entanglement measures for quantum channels. We will further restrict our development to the case when $k = 2$.
We begin by discussing some mathematical background prior to defining the entanglement measures that arise from unextendibility of quantum channels.

\subsection{Generalized divergence of quantum states}

Let $\mathbb{R}$ denote the field of real numbers. A generalized divergence~\cite{5707067} is a functional $\mathbf{D}: \mathcal{S}(A)\times \mathcal{S}(A) \to \mathbb{R}\cup \{+\infty\}$, such that, for arbitrary states $\rho_A,\sigma_A\in \mathcal{S}(A)$ and an arbitrary channel $\mathcal{N}_{A\to B}$, the data-processing inequality holds
\begin{equation}
    \mathbf{D}\!\left(\rho_A\Vert\sigma_A\right) \ge \mathbf{D}\!\left(\mathcal{N}_{A\to B}(\rho_A)\Vert\mathcal{N}_{A\to B}(\sigma_A)\right).
\end{equation}

Some examples of divergences that commonly appear in quantum information theory are the quantum relative entropy~\cite{10.2996/kmj/1138844604}, Petz-R\'enyi relative entropies~\cite{PETZ198657}, sandwiched R\'enyi relative entropies~\cite{M_ller_Lennert_2013, Wilde_2014}, and geometric R\'enyi relative entropies~\cite{Mat13, Fang_2021}.

We are particularly interested in the geometric R\'enyi relative entropies, which are defined for $\alpha \in (0,1)\cup(1,\infty)$  and all states $\omega$ and $\tau$ as
\begin{align}
    \widehat{D}_{\alpha}(\omega\Vert\tau) &\coloneqq \frac{1}{\alpha -1 }\log_2 \widehat{Q}_{\alpha}(\omega\Vert\tau),\\
    \widehat{Q}_{\alpha}(\omega\Vert\tau) &\coloneqq \lim_{\varepsilon \to 0^+}\operatorname{Tr}\!\left[\tau_{\varepsilon}\!\left(\tau_{\varepsilon}^{-\frac{1}{2}}\omega\tau_{\varepsilon}^{-\frac{1}{2}}\right)^{\alpha}\right],\label{eq:geo_rel_quasi_ent}
\end{align}
where $\tau_{\varepsilon}\coloneqq \tau + \varepsilon I$. Note that $\widehat{D}_{1}(\omega\Vert\tau)$ is defined as $\lim_{\alpha \to 1}\widehat{D}_{\alpha}(\omega\Vert\tau)$ so that $\widehat{D}_{\alpha}\!\left(\omega\Vert\tau\right)$ is defined for all $\alpha \in (0,\infty)$. 

\begin{lemma}[Proposition~74 of \cite{Katariya2021}]
    The geometric R\'enyi relative entropy is strongly faithful; i.e., for all quantum states $\omega$ and $\tau$ and $\forall \alpha \in (0,1) \cup (1,\infty)$, $\widehat{D}_{\alpha}(\omega\Vert\tau)\ge 0$, and $\widehat{D}_{\alpha}(\omega\Vert\tau) = 0$ if and only if $\omega = \tau$. 
\end{lemma}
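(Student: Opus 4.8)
The plan is to reduce the operator expression defining $\widehat{Q}_{\alpha}$ to a scalar convexity statement and then invoke Jensen's inequality together with the strict (anti)convexity of the map $t\mapsto t^{\alpha}$. First I would fix $\varepsilon>0$, recall $\tau_{\varepsilon}\coloneqq \tau+\varepsilon I$ (which is invertible), and diagonalize the positive semidefinite operator $X_{\varepsilon}\coloneqq \tau_{\varepsilon}^{-1/2}\omega\tau_{\varepsilon}^{-1/2}=\sum_i \lambda_i |i\rangle\!\langle i|$. Writing $q_i\coloneqq \langle i|\tau_{\varepsilon}|i\rangle\ge 0$ and $\widehat{Q}^{\varepsilon}_{\alpha}(\omega\Vert\tau)\coloneqq \operatorname{Tr}[\tau_{\varepsilon}X_{\varepsilon}^{\alpha}]$, the quasi-entropy collapses to a scalar sum
\begin{equation}
    \widehat{Q}^{\varepsilon}_{\alpha}(\omega\Vert\tau) = \sum_i q_i\lambda_i^{\alpha},
\end{equation}
subject to the two linear constraints $\sum_i q_i=\operatorname{Tr}[\tau_{\varepsilon}]=1+\varepsilon d$ and $\sum_i q_i\lambda_i=\operatorname{Tr}[\tau_{\varepsilon}X_{\varepsilon}]=\operatorname{Tr}[\omega]=1$, where $d=\dim\mathcal{H}$, and $\widehat{Q}_{\alpha}=\lim_{\varepsilon\to 0^+}\widehat{Q}^{\varepsilon}_{\alpha}$.

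For nonnegativity, I would normalize $\tilde{q}_i\coloneqq q_i/(1+\varepsilon d)$ into a probability distribution with mean $\sum_i \tilde{q}_i\lambda_i=1/(1+\varepsilon d)$ and apply Jensen's inequality to $t\mapsto t^{\alpha}$. Since this map is convex for $\alpha>1$ and concave for $\alpha\in(0,1)$, I obtain $\widehat{Q}^{\varepsilon}_{\alpha}\ge (1+\varepsilon d)^{1-\alpha}$ in the former case and $\widehat{Q}^{\varepsilon}_{\alpha}\le (1+\varepsilon d)^{1-\alpha}$ in the latter. Taking $\varepsilon\to 0^+$ yields $\widehat{Q}_{\alpha}(\omega\Vert\tau)\ge 1$ for $\alpha>1$ and $\widehat{Q}_{\alpha}(\omega\Vert\tau)\le 1$ for $\alpha\in(0,1)$; since the prefactor $\tfrac{1}{\alpha-1}$ in $\widehat{D}_{\alpha}=\tfrac{1}{\alpha-1}\log_2\widehat{Q}_{\alpha}$ is positive in the first regime and negative in the second, this gives $\widehat{D}_{\alpha}(\omega\Vert\tau)\ge 0$ in both cases.

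For faithfulness, the implication $\omega=\tau\Rightarrow\widehat{D}_{\alpha}=0$ is a direct computation: $\tau$ and $\tau_{\varepsilon}$ commute, so $X_{\varepsilon}=\tau(\tau+\varepsilon I)^{-1}$ and $\widehat{Q}^{\varepsilon}_{\alpha}=\sum_j (\mu_j+\varepsilon)^{1-\alpha}\mu_j^{\alpha}\to \sum_{j:\mu_j>0}\mu_j=1$, where $\{\mu_j\}$ are the eigenvalues of $\tau$. For the converse I would first treat invertible $\tau$, where no limit is needed: then every $q_i=\langle i|\tau|i\rangle>0$, and the strict convexity/concavity of $t\mapsto t^{\alpha}$ forces the Jensen step to be an equality only when all $\lambda_i$ coincide; the constraints $\sum_i q_i=\sum_i q_i\lambda_i=1$ then pin this common value to $1$, so $X=I$ and hence $\omega=\tau$.

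The main obstacle is extending the converse to rank-deficient $\tau$, where the regularization and the $\varepsilon\to 0^+$ limit interact with the strictness. I would split into two cases. If $\operatorname{supp}(\omega)\not\subseteq\operatorname{supp}(\tau)$, then $\omega\neq\tau$ automatically, and for $\alpha>1$ one checks $\widehat{Q}_{\alpha}=+\infty$ so $\widehat{D}_{\alpha}=+\infty>0$, while for $\alpha\in(0,1)$ a separate estimate shows the limit stays strictly below $1$. If $\operatorname{supp}(\omega)\subseteq\operatorname{supp}(\tau)$, I would argue that the $\varepsilon\to 0^+$ limit reduces to evaluating the invertible-case formula on the subspace $\operatorname{supp}(\tau)$, on which $\tau$ is full rank, so the strict-Jensen conclusion above applies and yields $\omega=\tau$. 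The delicate point is ensuring that the strict inequality valid at each $\varepsilon>0$ is not washed out in the limit; I expect to handle this cleanly by restricting to $\operatorname{supp}(\tau)$ before taking the limit (equivalently, by exploiting monotonicity of $\widehat{Q}^{\varepsilon}_{\alpha}$ in $\varepsilon$) rather than by a naive limit interchange. As a fallback, faithfulness could instead be imported from the ordering $\widehat{D}_{\alpha}\ge\widetilde{D}_{\alpha}$ against the sandwiched R\'enyi relative entropy together with the latter's known faithfulness, but I would prefer the self-contained Jensen argument.
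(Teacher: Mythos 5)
The paper does not prove this lemma at all: it imports it wholesale as Proposition~74 of~\cite{Katariya2021}. Your proposal therefore takes a genuinely different route simply by being a proof. The route itself is sound: reducing $\operatorname{Tr}[\tau_{\varepsilon}X_{\varepsilon}^{\alpha}]$ to the scalar sum $\sum_i q_i\lambda_i^{\alpha}$ with the two moment constraints $\sum_i q_i = 1+\varepsilon d$ and $\sum_i q_i\lambda_i = \operatorname{Tr}[\omega]=1$ is exactly right (the second follows from cyclicity of the trace), the normalized Jensen step gives the two-sided bound $(1+\varepsilon d)^{1-\alpha}$ with the correct inequality direction in each regime of $\alpha$, and the sign of the prefactor $1/(\alpha-1)$ then delivers $\widehat{D}_{\alpha}\ge 0$ in both regimes. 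The strict-Jensen equality analysis for invertible $\tau$ correctly forces $X=I$ and hence $\omega=\tau$, and the forward implication $\omega=\tau\Rightarrow\widehat{D}_{\alpha}=0$ is a clean eigenvalue computation. What this buys over the paper's citation is a self-contained, elementary argument whose only nontrivial input is strict convexity/concavity of $t\mapsto t^{\alpha}$.

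The one place you should not leave as a promissory note is the case $\alpha\in(0,1)$ with $\operatorname{supp}(\omega)\not\subseteq\operatorname{supp}(\tau)$, where you write that ``a separate estimate shows the limit stays strictly below $1$.'' This does follow from your own machinery, and you should say how: the paper records that in this regime $\widehat{Q}_{\alpha}(\omega\Vert\tau)=\operatorname{Tr}[\tau(\tau^{-1/2}\tilde{\omega}\tau^{-1/2})^{\alpha}]$ with $\tilde{\omega}=\omega_{0,0}-\omega_{0,1}\omega_{1,1}^{-1}\omega_{0,1}^{\dagger}$ supported on $\operatorname{supp}(\tau)$, and the mean constraint in your scalar reduction becomes $\sum_i q_i\lambda_i=\operatorname{Tr}[\tilde{\omega}]\le\operatorname{Tr}[\Pi_{\tau}\omega\Pi_{\tau}]<1$, the strict inequality because $\Pi_{\tau}^{\perp}\omega\Pi_{\tau}^{\perp}\neq 0$ whenever the support condition fails. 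Concavity then gives $\widehat{Q}_{\alpha}\le(\operatorname{Tr}[\tilde{\omega}])^{\alpha}<1$, hence $\widehat{D}_{\alpha}>0$, and no limit-interchange issue arises because you work directly with the closed-form limit. With that inserted, and with the support-restricted case handled as you describe (restrict to $\operatorname{supp}(\tau)$ first, then apply the invertible argument), the proof is complete. I would also caution against the sandwiched-entropy fallback: the ordering and faithfulness you would need are standard only for $\alpha\ge 1/2$, so the self-contained Jensen argument is the safer route for small $\alpha$.
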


\begin{lemma}[Proposition~72 of \cite{Katariya2021}]
\label{lem:geo-mono}
    The geometric R\'enyi relative entropy is monotonic in $\alpha$ for all $\alpha > 0 $; i.e., 
    \begin{equation}
        \alpha \geq \beta > 0 \quad \Rightarrow  \quad \widehat{D}_{\alpha}(\omega\Vert\tau) \ge \widehat{D}_{\beta}(\omega\Vert\tau).
    \end{equation}
\end{lemma}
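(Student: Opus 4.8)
The plan is to reduce this operator-level statement to a purely classical R\'enyi-divergence monotonicity, exploiting the special structure of the geometric relative entropy: for fixed states, $\widehat{D}_{\alpha}(\omega\Vert\tau)$ collapses to a \emph{single} classical R\'enyi divergence between two $\alpha$-independent distributions. First I would assume $\tau$ is positive definite, restricting to its support and letting the regularization $\tau_{\varepsilon}$ together with the $\varepsilon\to 0^+$ limit in~\eqref{eq:geo_rel_quasi_ent} justify the general case by continuity. Then $\widehat{Q}_{\alpha}(\omega\Vert\tau) = \operatorname{Tr}[\tau X^{\alpha}]$, where $X \coloneqq \tau^{-1/2}\omega\tau^{-1/2}\geq 0$ is an $\alpha$-independent positive semidefinite operator.

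Next I would diagonalize $X = \sum_j \lambda_j |e_j\rangle\!\langle e_j|$ and set $\mu_j \coloneqq \langle e_j|\tau|e_j\rangle$. Since $\sum_j \mu_j = \operatorname{Tr}[\tau] = 1$ and $\mu_j\geq 0$, the family $\{\mu_j\}_j$ is a probability distribution, and $\widehat{Q}_{\alpha} = \sum_j \mu_j \lambda_j^{\alpha}$. Defining $p_j \coloneqq \lambda_j\mu_j$ and $q_j \coloneqq \mu_j$, which (using $\sum_j \lambda_j\mu_j = \operatorname{Tr}[\tau X] = \operatorname{Tr}[\omega] = 1$) are both $\alpha$-independent probability distributions, I would observe that $\widehat{Q}_{\alpha} = \sum_j p_j^{\alpha} q_j^{1-\alpha}$, so that $\widehat{D}_{\alpha}(\omega\Vert\tau)$ equals the classical R\'enyi divergence $D_{\alpha}(p\Vert q)$. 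The key point is that the eigenbasis of $X$, the eigenvalues $\lambda_j$, and the weights $\mu_j$ do not depend on $\alpha$; this is exactly the feature of the geometric construction that makes the reduction to a classical divergence possible.

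Then I would prove the classical monotonicity by a convexity argument that avoids messy differentiation. Writing $f(\alpha) \coloneqq \ln \sum_j q_j (p_j/q_j)^{\alpha}$, this is the cumulant generating function of $\ln(p/q)$ under $q$, hence convex in $\alpha$ by H\"older's inequality, and it satisfies $f(1) = \ln\sum_j p_j = 0$. Since $\widehat{D}_{\alpha}(\omega\Vert\tau) = \frac{1}{\ln 2}\cdot\frac{f(\alpha) - f(1)}{\alpha - 1}$, the quantity is, up to the positive constant $1/\ln 2$, the slope of the secant of the convex function $f$ anchored at the fixed point $\alpha = 1$. A standard property of convex functions is that such a secant slope is non-decreasing in the free endpoint, which yields $\widehat{D}_{\alpha}(\omega\Vert\tau)\geq \widehat{D}_{\beta}(\omega\Vert\tau)$ for $\alpha\geq\beta > 0$ (and covers $\alpha = 1$ by continuity).

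The main obstacle I anticipate is the support bookkeeping rather than the analysis: justifying the reduction to $\tau > 0$ through the regularization $\tau_{\varepsilon}$, and treating the rank-deficient cases where some $\lambda_j$ vanish or where $\widehat{D}_{\alpha}$ takes the value $+\infty$ (for which the inequality must hold trivially). Once the classical reduction is in place, the convexity and secant-slope steps are routine; the genuine content is recognizing that the geometric structure renders the eigenbasis of $X$ and the weights $\mu_j$ independent of $\alpha$, collapsing the problem to the classical R\'enyi monotonicity.
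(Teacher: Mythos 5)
The paper does not actually prove this lemma---it imports it as Proposition~72 of \cite{Katariya2021}---and your argument is correct and is essentially the standard one behind that result: diagonalizing the $\alpha$-independent operator $X=\tau^{-1/2}\omega\tau^{-1/2}$ collapses $\widehat{Q}_{\alpha}$ to the classical quantity $\sum_j p_j^{\alpha}q_j^{1-\alpha}$ with $p_j=\lambda_j\mu_j$, $q_j=\mu_j$, after which the secant-slope property of the convex function $f(\alpha)=\ln\sum_j q_j(p_j/q_j)^{\alpha}$ anchored at $f(1)=0$ gives monotonicity. The one point worth spelling out in the support bookkeeping is that the regularization $\tau_{\varepsilon}=\tau+\varepsilon I$ is not normalized, so $q^{(\varepsilon)}$ sums to $1+\varepsilon d$ rather than $1$; this is harmless because your argument only needs $f(1)=\ln\sum_j p_j^{(\varepsilon)}=\ln\operatorname{Tr}[\tau_{\varepsilon}X_{\varepsilon}]=\ln\operatorname{Tr}[\omega]=0$, which survives, and the inequality for each fixed $\varepsilon>0$ then passes to the $\varepsilon\to 0^{+}$ limit (including the $+\infty$ cases) since non-strict inequalities are preserved under pointwise limits.
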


The geometric R\'enyi relative quasi-entropy $\widehat{Q}_{\alpha}\!\left(\omega\Vert\tau\right)$ takes the following form for $\alpha \in (0,1)$:
\begin{equation}
   \widehat{Q}_{\alpha}\!\left(\omega\Vert\tau\right) = \operatorname{Tr}\!\left[\tau\left(\tau^{-\frac{1}{2}}\tilde{\omega}\tau^{-\frac{1}{2}}\right)^{\alpha}\right],
\end{equation}
where
\begin{align}
    \tilde{\omega} &\coloneqq \omega_{0,0} - \omega_{0,1}\omega_{1,1}^{-1}\omega_{0,1}^{\dagger},\label{eq:projected_geo_ent_state}\\
    \omega_{0,0} &\coloneqq \Pi_{\tau}\omega\Pi_{\tau}, \\ \omega_{0,1} & \coloneqq \Pi_{\tau}\omega\Pi^{\perp}_{\tau}, \\ \omega_{1,1}  & \coloneqq \Pi^{\perp}_{\tau}\omega\Pi^{\perp}_{\tau}, 
\end{align}
$\Pi_{\tau}$ is the projection onto the support of $\tau$, and $\Pi_{\tau}^{\perp}$ is the projection onto the kernel of $\tau$. All inverses are taken on the support of the respective operators. Note that when \mbox{$\operatorname{supp}\!\left(\omega\right)\subseteq\operatorname{supp}\!\left(\tau\right)$}, the geometric R\'enyi relative quasi-entropy converges to the following quantity:
\begin{equation}\label{eq:geo_rel_quasi_ent_converge}
    \widehat{Q}_{\alpha}\!\left(\omega\Vert\tau\right) = \operatorname{Tr}\!\left[\tau\left(\tau^{-\frac{1}{2}}\omega\tau^{-\frac{1}{2}}\right)^{\alpha}\right].
\end{equation}

As recalled in Lemma~\ref{lem:geo-mono} above, the $\alpha$-geometric R\'enyi relative entropy increases monotonically in $\alpha$, and the smallest quantity in this family of entropies is achieved when $\alpha \to 0$. As such, we find that 
\begin{equation}
    \widehat{D}_{0}(\omega\|\tau) = \lim_{\alpha\to 0}\widehat{D}_{\alpha}\!\left(\omega\Vert\tau\right) =-\log_2 \operatorname{Tr}[\tau \Pi_{\zeta }],
    \label{eq:min-geometric-div}
\end{equation}
where $\zeta \equiv \tau^{-\frac{1}{2}}\tilde{\omega}\tau^{-\frac{1}{2}}$, the operator $\tilde{\omega}$ was defined in~\eqref{eq:projected_geo_ent_state}, and $\Pi_{\zeta }$ is the projection onto the support of $\zeta$.

When $\alpha \ge 1$, the geometric R\'enyi relative quasi-entropy takes the form in~\eqref{eq:geo_rel_quasi_ent_converge} if $\operatorname{supp}\!\left(\omega\right)\subseteq\operatorname{supp}\!\left(\tau\right)$, and it evaluates to $+\infty$ if $\operatorname{supp}\!\left(\omega\right)\not\subseteq \operatorname{supp}\!\left(\tau\right)$.

The geometric R\'enyi relative entropy obeys the data-processing inequality for $\alpha\in(0,2]$; that is, for all states $\rho$ and $\sigma$, every channel $\mathcal{N}$, and all $\alpha\in(0,2]$, the following inequality holds:
\begin{equation}
    \widehat{D}_{\alpha}(\rho\Vert\sigma) \geq \widehat{D}_{\alpha}(\mathcal{N}(\rho)\Vert\mathcal{N}(\sigma)).
\end{equation}
As $\alpha\to 1$, the geometric R\'enyi relative entropy converges to the Belavkin--Staszewski relative entropy~\cite{Belavkin1982}:
\begin{align}
     \widehat{D}(\omega\Vert\tau) & \equiv 
    \widehat{D}_1(\omega\Vert\tau) \\
    & \coloneqq \lim_{\alpha \to 1}\widehat{D}_{\alpha}(\omega\Vert\tau) \\
    & = \operatorname{Tr}\!\left[\omega\log_2\!\left(\omega^{\frac{1}{2}}\tau^{-1}\omega^{\frac{1}{2}}\right)\right],
\end{align}
as shown in~\cite[Proposition~79]{Katariya2021}.

\subsection{Generalized divergence of quantum channels}

The notion of generalized divergence can be extended to quantum channels, and this extension provides a mathematical framework for comparing channels. The generalized divergence between two channels is defined in terms of the generalized divergence between states, as follows~\cite{Cooney_2016,PhysRevA.97.012332}:
\begin{equation}
\label{eq:gen_div_channel_def}
    \mathbf{D}(\mathcal{N}_{A\to B}\Vert\mathcal{M}_{A\to B}) \coloneqq \sup_{\rho_{RA}} \mathbf{D}(\mathcal{N}_{A\to B}(\rho_{RA})\Vert\mathcal{M}_{A\to B}(\rho_{RA})),
\end{equation}
where $A$ and $B$ are systems of arbitrary size and $\rho_{RA}$ is a quantum state with no constraint on the reference system $R$ in general.
It suffices to restrict the optimization in~\eqref{eq:gen_div_channel_def} to pure states and the dimension of system $R$ to be equal to the dimension of~$A$. This follows from purification, the data-processing inequality, and the fact that all purifications of a state are related by an isometry acting on the purifying system (see~\cite[Proposition~4.79]{khatri2020principles}).

A key property of an arbitrary generalized channel divergence is that it contracts under the action of a superchannel~\cite[Eq.~(92)]{Gour_2019}:

\begin{theorem}[\cite{Gour_2019}]\label{theo:gen_div_channel_data_proc}
    For two arbitrary quantum channels $\mathcal{N}_{A\to B}$ and $\mathcal{M}_{A\to B}$, and an arbitrary superchannel $\Theta_{(A\to B)\to (C\to D)}$, the generalized channel divergence obeys the inequality
    \begin{equation}
        \mathbf{D}\!\left(\mathcal{N}\Vert\mathcal{M}\right) \ge \mathbf{D}\!\left(\Theta(\mathcal{N})\Vert\Theta(\mathcal{M})\right),
    \end{equation}
    where $A,B,C,D$ are systems of arbitrary size.
\end{theorem}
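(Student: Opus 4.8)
The plan is to combine the fundamental theorem of superchannels in~\eqref{eq:superchannel_fund_theo} with the data-processing inequality for the underlying generalized divergence of states. Since the channel divergence in~\eqref{eq:gen_div_channel_def} is a supremum over input states, I would fix an arbitrary input state $\rho_{RC}$, bound the state divergence $\mathbf{D}(\Theta(\mathcal{N})(\rho_{RC})\Vert\Theta(\mathcal{M})(\rho_{RC}))$ from above by $\mathbf{D}(\mathcal{N}\Vert\mathcal{M})$, and then take the supremum over $\rho_{RC}$ at the end to conclude.

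First I would invoke the decomposition of $\Theta$ into a pre-processing channel $\mathcal{E}_{C\to MA}$ and a post-processing channel $\mathcal{D}_{MB\to D}$. The key point is that, since $\Theta$ is fixed, the \emph{same} $\mathcal{E}$ and $\mathcal{D}$ serve for both $\mathcal{N}$ and $\mathcal{M}$, as the decomposition is a property of the superchannel alone. Defining $\sigma_{RMA} \coloneqq \mathcal{E}_{C\to MA}(\rho_{RC})$, I would write
\begin{align}
\Theta(\mathcal{N})(\rho_{RC}) &= \mathcal{D}_{MB\to D}(\mathcal{N}_{A\to B}(\sigma_{RMA})),\\
\Theta(\mathcal{M})(\rho_{RC}) &= \mathcal{D}_{MB\to D}(\mathcal{M}_{A\to B}(\sigma_{RMA})),
\end{align}
where in each line $\mathcal{N}_{A\to B}$ (resp.\ $\mathcal{M}_{A\to B}$) acts only on system $A$ while treating $RM$ as an inert reference.

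Next I would apply data processing for the state divergence twice. Applying it to the post-processing channel $\mathcal{D}_{MB\to D}$ yields
\begin{equation}
\mathbf{D}(\Theta(\mathcal{N})(\rho_{RC})\Vert\Theta(\mathcal{M})(\rho_{RC})) \le \mathbf{D}(\mathcal{N}_{A\to B}(\sigma_{RMA})\Vert\mathcal{M}_{A\to B}(\sigma_{RMA})).
\end{equation}
The right-hand side is exactly the state divergence obtained by sending $\sigma_{RMA}$, with combined reference system $RM$, through $\mathcal{N}_{A\to B}$ and $\mathcal{M}_{A\to B}$. Because the definition in~\eqref{eq:gen_div_channel_def} is a supremum over all input states with an arbitrary reference, this quantity is bounded above by $\mathbf{D}(\mathcal{N}\Vert\mathcal{M})$. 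Chaining the two bounds and taking the supremum over $\rho_{RC}$ then completes the argument.

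The main subtlety to get right is the bookkeeping of the reference and memory systems: the channel $\mathcal{N}_{A\to B}$ acts nontrivially only on $A$, so the joint system $RM$ must be recognized as a valid enlarged reference in the definition of the channel divergence. Since that definition imposes no constraint on the size of the reference, grouping $R$ together with the memory system $M$ introduced by the superchannel decomposition is legitimate, and no obstruction arises. Everything else reduces to standard data processing for the state divergence.
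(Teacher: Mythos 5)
Your proof is correct and is precisely the standard argument behind this result; the paper itself gives no proof, deferring to \cite{Gour_2019}, where essentially this same decomposition-plus-data-processing argument appears, and your bookkeeping of the enlarged reference $RM$ is exactly the point that makes it go through. The only cosmetic quibble is that you announce applying data processing ``twice'' while your written argument needs it only once (for the post-processing channel $\mathcal{D}_{MB\to D}$), the pre-processing step being absorbed into the supremum over input states with unconstrained reference.
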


\subsubsection{Geometric R\'enyi relative entropy of channels }

The $\alpha$-geometric R\'enyi relative entropies form a family of generalized divergences between channels. We present a brief review of these quantities and their properties that are relevant for this work. 

The $\alpha$-geometric R\'enyi relative entropy of channels is defined for two arbitrary quantum channels and $\alpha \in (0,1)\cup(1,\infty)$ as
\begin{multline}
    \widehat{D}_{\alpha}(\mathcal{N}_{A\to B}\Vert\mathcal{M}_{A\to B}) \\ \coloneqq \sup_{\rho_{RA}} \widehat{D}_{\alpha}(\mathcal{N}_{A\to B}(\rho_{RA})\Vert\mathcal{M}_{A\to B}(\rho_{RA})) .
\end{multline}
For $\alpha \in (0,1)\cup(1,2]$, the data-processing inequality holds, and so the statement just after~\eqref{eq:gen_div_channel_def} applies, so that it suffices to perform the optimization over pure bipartite states with $R$ isomorphic to $A$.
This quantity is finite for $\alpha > 1$ if and only if $\operatorname{supp}(\Gamma^{\mathcal{N}}_{AB})\subseteq\operatorname{supp}(\Gamma^{\mathcal{M}}_{AB})$~\cite{Fang_2021}. The systems $A$ and $B$ can hold an arbitrary number of subsystems, a property that we shall use in Section~\ref{sec:unext_ent_bip} for bipartite quantum channels.

Note that the $\alpha$-geometric R\'enyi relative entropy of channels converges to the Belavkin--Staszewski relative entropy when $\alpha\to 1$~\cite{Fang_2021,khatri2020principles,DKQSWW23}. Hence, we can remove the discontinuity in $\alpha$ by defining
\begin{equation}\label{eq:geo_ent_ch_conv_Bel_Stas}
    \widehat{D}(\mathcal{N}\Vert\mathcal{M}) \equiv \widehat{D}_1(\mathcal{N}\Vert\mathcal{M}) \coloneqq \lim_{\alpha\to 1}\widehat{D}_{\alpha}\!\left(\mathcal{N}\Vert\mathcal{M}\right).
\end{equation}

\begin{lemma}[\cite{Fang_2021,Katariya2021}]\label{lemma:geo_div}
    The geometric R\'enyi relative entropy for channels satisfies the following properties:
    \begin{enumerate}
        \item It is monotonic in $\alpha$ for $\alpha >0$:
        \begin{equation}\label{eq:geo_div_alpha_monotonic}
            \alpha \geq \beta > 0 \quad  \Rightarrow \quad \widehat{D}_{\alpha}\!\left(\mathcal{N}\Vert\mathcal{M}\right) \ge \widehat{D}_{\beta}\!\left(\mathcal{N}\Vert\mathcal{M}\right)  .
        \end{equation}
        \item It is additive under tensor products of channels for $\alpha \in (0,2]$:
        \begin{equation}\label{eq:geo_div_additive}
            \widehat{D}_{\alpha}(\mathcal{N}^1\otimes\mathcal{N}^2\Vert\mathcal{M}^1\otimes\mathcal{M}^2) =  \widehat{D}_{\alpha}(\mathcal{N}^1\Vert\mathcal{M}^1) +  \widehat{D}_{\alpha}(\mathcal{N}^2\Vert\mathcal{M}^2).    
        \end{equation}
        \item For an arbitrary quantum channel $\mathcal{N}_{A\to B}$, a set $\mathcal{V}$ of completely positive maps   described by semidefinite constraints, and $\alpha = 1+2^{-\ell}$ with $\ell \in \mathbb{N}$, the optimization $\min_{\mathcal{M}\in \mathcal{V}}\widehat{D}_{\alpha}\!\left(\mathcal{N}\Vert\mathcal{M}\right)$ can be computed by a semidefinite program. See Section~\ref{sec:numerical_calc} for the full form of the SDP along with numerical calculations.
    \end{enumerate}
\end{lemma}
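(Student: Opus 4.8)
All three claims are lifted from the corresponding state-level facts, so the strategy throughout is to reduce each channel statement to a property of the underlying quasi-entropy $\widehat{Q}_{\alpha}$ through the variational definition $\widehat{D}_{\alpha}(\mathcal{N}\Vert\mathcal{M}) = \sup_{\rho_{RA}} \widehat{D}_{\alpha}(\mathcal{N}(\rho_{RA})\Vert\mathcal{M}(\rho_{RA}))$, restricting (where convenient) to pure $\rho_{RA}$ with $R\cong A$. Item~1 is then immediate: for a fixed input $\rho_{RA}$ and $\alpha\geq\beta>0$, the state-level monotonicity of Lemma~\ref{lem:geo-mono} gives $\widehat{D}_{\alpha}(\mathcal{N}(\rho_{RA})\Vert\mathcal{M}(\rho_{RA}))\geq\widehat{D}_{\beta}(\mathcal{N}(\rho_{RA})\Vert\mathcal{M}(\rho_{RA}))$, and taking the supremum over $\rho_{RA}$ on both sides (the supremum of a pointwise-larger function is larger) yields \eqref{eq:geo_div_alpha_monotonic}. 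No structural input beyond Lemma~\ref{lem:geo-mono} is needed here.

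For Item~2 I would prove the two inequalities separately. The direction ``$\geq$'' (superadditivity) is the easy one: restrict the supremum defining the left-hand side to product inputs $\rho_{R_1A_1}\otimes\sigma_{R_2A_2}$ and use multiplicativity of the state-level quasi-entropy under tensor products, namely $\widehat{Q}_{\alpha}(\omega_1\otimes\omega_2\Vert\tau_1\otimes\tau_2) = \widehat{Q}_{\alpha}(\omega_1\Vert\tau_1)\,\widehat{Q}_{\alpha}(\omega_2\Vert\tau_2)$, which makes $\widehat{D}_{\alpha}$ additive on product states; optimizing each tensor factor independently then gives the sum as a lower bound. The direction ``$\leq$'' (subadditivity) is the main obstacle, because a generic optimal input $\rho_{RA_1A_2}$ for the joint channel need not factorize, so one cannot simply split the supremum. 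The plan is to exploit the special structure of the geometric R\'enyi divergence as a maximal $f$-divergence: using its operator/semidefinite representation together with state-level data processing (valid precisely for $\alpha\in(0,2]$), one shows that the optimization computing $\widehat{D}_{\alpha}(\mathcal{N}^1\otimes\mathcal{N}^2\Vert\mathcal{M}^1\otimes\mathcal{M}^2)$ admits a product feasible solution assembled from the individual optimizers, giving the reverse inequality by weak duality. This is the mechanism of \cite{Fang_2021}, which I would follow, taking care that the restriction $\alpha\leq 2$ is used exactly where data processing is invoked.

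For Item~3 the key observation is that the power $x^{1+2^{-\ell}}$ is realized by an iterated operator geometric mean: the weighted matrix geometric mean is semidefinite-representable via Schur-complement (matrix-fractional) constraints, and $\ell$ nested such constraints encode the exponent $1+2^{-\ell}$. I would therefore (i) recall the geometric-mean SDP gadget, (ii) build the depth-$\ell$ tower of linear matrix inequalities reproducing $\operatorname{Tr}[\tau(\tau^{-1/2}\omega\tau^{-1/2})^{\alpha}]$ from \eqref{eq:geo_rel_quasi_ent_converge} as a semidefinite objective in the Choi operators, handling the supremum over input states by passing to the Choi picture, and (iii) adjoin the semidefinite constraints describing $\mathcal{V}$ so that $\min_{\mathcal{M}\in\mathcal{V}}\widehat{D}_{\alpha}(\mathcal{N}\Vert\mathcal{M})$ collapses into a single semidefinite program, as in \cite{Fang_2021,Katariya2021}. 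The explicit program is deferred to Section~\ref{sec:numerical_calc}. The principal difficulty across the lemma is the subadditivity half of Item~2; Items~1 and~3 are essentially bookkeeping on top of known state-level representations.
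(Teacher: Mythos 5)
Your proposal is correct in outline, and Items~1 and~3 coincide with what the paper does (Item~1 is the pointwise lift of Lemma~\ref{lem:geo-mono} through the supremum over inputs; Item~3 is simply a citation of~\cite[Lemma~9]{Fang_2021}, with the explicit program deferred to Section~\ref{sec:numerical_calc}). The interesting divergence is in the subadditivity half of Item~2. You propose to go through the maximal-$f$-divergence/semidefinite representation of $\widehat{D}_{\alpha}$ and exhibit a product feasible point assembled from the individual optimizers. The paper instead avoids any representation-theoretic input: it invokes the composition inequality $\widehat{D}_{\alpha}(\mathcal{L}_1\circ\mathcal{L}_2\Vert\mathcal{K}_1\circ\mathcal{K}_2)\leq\widehat{D}_{\alpha}(\mathcal{L}_1\Vert\mathcal{K}_1)+\widehat{D}_{\alpha}(\mathcal{L}_2\Vert\mathcal{K}_2)$ from~\cite[Proposition~47]{Katariya2021}, applies it to the factorization $\mathcal{N}^1\otimes\mathcal{N}^2=(\mathcal{N}^1\otimes\operatorname{id})\circ(\operatorname{id}\otimes\mathcal{N}^2)$, and finishes with the stability of the channel divergence under tensoring with an identity. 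That route is shorter and, importantly, works uniformly for all $\alpha\in(0,1)\cup(1,2]$ in one stroke, with the $\alpha=1$ case then obtained by the limit $\alpha\to1$ using~\eqref{eq:geo_ent_ch_conv_Bel_Stas} (a step you omit and should add, since Item~2 is claimed for all of $(0,2]$). Your route buys a self-contained argument that does not presuppose the chain rule, but you should be careful that the dual/feasible-point mechanism of~\cite{Fang_2021} is stated for $\alpha\in(1,2]$ (and the SDP form only for dyadic $\alpha$); for $\alpha\in(0,1)$ you would need the $\lambda_{\min}$ closed form as in~\eqref{eq:geo_rel_ent_ch_choi_op_(0,1)} and multiplicativity of the weighted matrix geometric mean under tensor products, plus again a separate limiting argument at $\alpha=1$. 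Neither issue is fatal, but as written your plan leaves these cases implicit, whereas the paper's composition-based argument handles them with no extra work.
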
 

\begin{proof}
    The monotonicity of the $\alpha$-geometric R\'enyi relative entropy of channels for all $ \alpha \in (0,2]$ was shown in~\cite[Appendix~H]{Katariya2021}, and the semidefinite program to minimize the quantity $\widehat{D}_{\alpha}\!\left(\mathcal{N}\Vert\mathcal{M}\right)$ over a set $\mathcal{V}$ of completely positive maps described by semidefinite constraints, was given in~\cite[Lemma 9]{Fang_2021} for all $ \alpha \in (1,2]$.
    
    Additivity of $\alpha$-geometric R\'enyi relative entropy for all $ \alpha \in (0,2]$ is a corollary of~\cite[Proposition 47]{Katariya2021}. For completeness, we present a brief proof of additivity here.
    Consider two arbitrary pure states $\psi_{R_1A_1}$ and $\phi_{R_2A_2}$. The following inequality holds for the $\alpha$-geometric R\'enyi relative entropy between two tensor-product channels:
    \begin{align}
        &\widehat{D}_{\alpha}\!\left(\mathcal{N}^1_{A_1\to B_1}\otimes\mathcal{N}^2_{A_2\to B_2}\middle \Vert\mathcal{M}^1_{A_1\to B_1}\otimes\mathcal{M}^2_{A_2\to B_2}\right)\notag\\
        &\ge \widehat{D}_{\alpha}\!\left(\left(\mathcal{N}^1\otimes\mathcal{N}^2\right)\left(\psi\otimes\phi\right)\middle\Vert\left(\mathcal{M}^1\otimes\mathcal{M}^2\right)\left(\psi\otimes\phi\right)\right)\\
        &= \widehat{D}_{\alpha}\!\left(\mathcal{N}^1\left(\psi\right)\middle\Vert\mathcal{M}^1\left({\psi}\right)\right) + \widehat{D}_{\alpha}\!\left(\mathcal{N}^2\left(\phi\right)\middle\Vert\mathcal{M}^2\left({\phi}\right)\right).
    \end{align}
    Since the above inequality holds for all pure states $\psi_{R_1A_1}$ and $\phi_{R_2A_2}$, we can take a supremum over both and conclude that
    \begin{multline}\label{eq:geo_ent_superadditive}
        \widehat{D}_{\alpha}\!\left(\mathcal{N}^1_{A_1\to B_1}\otimes\mathcal{N}^2_{A_2\to B_2}\middle\Vert\mathcal{M}^1_{A_1\to B_1}\otimes\mathcal{M}^2_{A_2\to B_2}\right)\\
        \ge \widehat{D}_{\alpha}\!\left(\mathcal{N}^1_{A_1\to B_1}\middle\Vert\mathcal{M}^1_{A_1\to B_1}\right) + \widehat{D}_{\alpha}\!\left(\mathcal{N}^2_{A_2\to B_2}\middle\Vert\mathcal{M}^2_{A_1\to B_1}\right).
    \end{multline}
    Now consider the following inequality for channel composition given in Proposition~47 of~\cite{Katariya2021},
    \begin{equation}
      \widehat{D}_{\alpha}(\mathcal{L}_1 \circ \mathcal{L}_2 \| \mathcal{K}_1 \circ \mathcal{K}_2) \leq \\
      \widehat{D}_{\alpha}(\mathcal{L}_1  \| \mathcal{K}_1)
      +\widehat{D}_{\alpha}( \mathcal{L}_2 \|  \mathcal{K}_2),
    \end{equation}
    for channels $\mathcal{L}_1$, $\mathcal{L}_2$, $\mathcal{K}_1$, and $\mathcal{K}_2$, and $\alpha \in (0,1)\cup(1,2]$. Setting
    \begin{align}
    \mathcal{L}_1 & = \mathcal{N}^1_{A_1 \to B_1} \otimes \operatorname{id}_{B_2} , \\
    \mathcal{L}_2 & =  \operatorname{id}_{A_1} \otimes \mathcal{N}^2_{A_2 \to B_2}  , \\
    \mathcal{K}_1 & = \mathcal{M}^1_{A_1 \to B_1} \otimes \operatorname{id}_{B_2} , \\
    \mathcal{K}_2 & =  \operatorname{id}_{A_1} \otimes \mathcal{M}^2_{A_2 \to B_2}  ,
    \end{align}
    we find that
    \begin{align}
        & \widehat{D}_{\alpha}\!\left(\mathcal{N}^1_{A_1\to B_1}\otimes\mathcal{N}^2_{A_2\to B_2}\middle\Vert\mathcal{M}^1_{A_1\to B_1}\otimes\mathcal{M}^2_{A_2\to B_2}\right)\notag \\
        & \leq 
        \widehat{D}_{\alpha}\!\left(\mathcal{N}^1_{A_1\to B_1}\otimes \operatorname{id}_{B_2}\middle\Vert\mathcal{M}^1_{A_1\to B_1}\otimes \operatorname{id}_{B_2}\right) + \notag \\
        & \qquad \widehat{D}_{\alpha}\!\left(\operatorname{id}_{A_1} \otimes\mathcal{N}^2_{A_2\to B_2}\middle\Vert\operatorname{id}_{A_1} \otimes\mathcal{M}^2_{A_1\to B_1}\right) \\
        & = 
        \widehat{D}_{\alpha}\!\left(\mathcal{N}^1_{A_1\to B_1}\middle\Vert\mathcal{M}^1_{A_1\to B_1}\right) + \widehat{D}_{\alpha}\!\left(\mathcal{N}^2_{A_2\to B_2}\middle\Vert\mathcal{M}^2_{A_1\to B_1}\right),
    \end{align}
    where the last equality follows from the stability property of the channel divergence (i.e., it is not changed by tensoring the original channel with an arbitrary identity channel).

    Finally, we can take the limit $\alpha\to 1$ on both sides and use the definition of $\widehat{D}_1\!\left(\cdot\Vert\cdot\right)$ in~\eqref{eq:geo_ent_ch_conv_Bel_Stas} to conclude that the $\alpha$-geometric R\'enyi relative entropy of channels is additive under tensor products for all $\alpha \in (0,2]$.
\end{proof}

\subsection{Generalized unextendible entanglement of quantum states}

The generalized unextendible entanglement of a bipartite  state has been defined in~\cite{WWW19}. We include a short discussion on the topic for necessary development.

\begin{definition}[\cite{WWW19}]
\label{def:unext-ent}
    The generalized unextendible entanglement of a bipartite state $\rho_{AB}$, induced by a generalized divergence $\mathbf{D}$ between states, is defined as
        \begin{multline}
        \label{eq:gen_unext_ent_states}
        \mathbf{E}^u(\rho_{AB}) \coloneqq  \inf_{\rho_{AB_1B_2}\in \mathcal{S}\left(AB_1B_2\right)} \frac{1}{2}\Big\{\mathbf{D}\!\left(\rho_{AB}\Vert\operatorname{Tr}_{B_1}\!\left[\rho_{AB_1B_2}\right]\right)\\ \colon \operatorname{Tr}_{B_2}\!\left[\rho_{AB_1B_2}\right] = \rho_{AB}\Big\},
    \end{multline}    
    where the optimization is over every state $\rho_{AB_1B_2}$ that is an extension of the state $\rho_{AB}$. We also adopt the following alternative notations sometimes because they can be helpful to make the bipartition $A|B$ clear:
    \begin{equation}
        \mathbf{E}^u(A;B)_{\rho} \equiv \mathbf{E}^u(\rho_{A:B}) \equiv \mathbf{E}^u(\rho_{AB}).
    \end{equation}
\end{definition}

Let us define the following set of extensions of a bipartite state~$\rho_{AB}$:
\begin{equation}\label{eq:extensions_state}
    \operatorname{Ext}\!\left(\rho_{AB}\right) \coloneqq \left\{\rho_{AB_1B_2}: \operatorname{Tr}_{B_2}\!\left[\rho_{AB_1B_2}\right] = \rho_{AB}\right\},
\end{equation}
where $B_1$ and $B_2$ are isomorphic to $B$. This allows us to write the generalized unextendible entanglement of  $\rho_{AB}$, induced by the generalized divergence $\mathbf{D}$, as
\begin{equation}
    \mathbf{E}^u(\rho_{AB}) = \inf_{\rho_{AB_1B_2}\in \operatorname{Ext}\left(\rho_{AB}\right)}\frac{1}{2}\mathbf{D}\!\left(\rho_{AB}\Vert\operatorname{Tr}_{B_2}\left[\rho_{AB_1B_2}\right]\right).
\end{equation}

The generalized unextendible entanglement provides a framework for quantifying the unextendibility of a bipartite state $\rho_{AB}$ with respect to the system $B$. A different measure for unextendibility was considered in~\cite{Kaur_2019, Kaur_2021} where the divergence was measured from the fixed set of two-extendible states. However, in Definition~\ref{def:unext-ent}, the divergence is measured by means of a set of states that depend on the input state itself. Although both measures are equal to the minimal possible value of $\mathbf{D}$ when $\rho_{AB}$ is two-extendible, they are not equal in general.

Let us look specifically at the unextendible entanglement induced by the $\alpha$-geometric R\'enyi relative entropy, $\widehat{E}^u_{\alpha}\!\left(\rho_{AB}\right)$, for $ \alpha \in (0,2]$. This quantity is called the $\alpha$-geometric unextendible entanglement in~\cite{WWW19}. 

Note that the underlying divergence of $\alpha$-geometric unextendible entanglement had an intrinsic discontinuity at $\alpha = 1$, which was removed by defining the quantity $\widehat{D}\left(\cdot\Vert\cdot\right)$ as
\begin{equation}
    \widehat{D}\!\left(\cdot\Vert\cdot\right) \equiv \widehat{D}_1\!\left(\cdot\Vert\cdot\right) \coloneqq \lim_{\alpha\to 1} \widehat{D}_{\alpha}\!\left(\cdot\Vert\cdot\right).
\end{equation}
By definition, $\widehat{E}^u_{\alpha}$ is defined for $\alpha = 1$ as the unextendible entanglement induced by $\widehat{D}_1\!\left(\cdot\Vert\cdot\right)$; however, it is necessary to check if the function $\widehat{E}^u_{\alpha}$ is continuous at $\alpha = 1$. We denote the unextendible entanglement induced by $\widehat{D}_1\!\left(\cdot\Vert\cdot\right)$ as $\widehat{E}^u$; that is,
\begin{equation}\label{eq:Bel_Stas_induce_unext_ent_st}
    \widehat{E}^u\!\left(\rho_{AB}\right) \coloneqq \frac{1}{2}\inf_{\sigma\in \operatorname{Ext}\left(\rho\right)}\lim_{\alpha\to 1}\widehat{D}_{\alpha}\!\left(\rho\Vert\operatorname{Tr}_{B_1}\!\left[\sigma_{AB_1B_2}\right]\right).
\end{equation}

\begin{proposition}\label{prop:geo_unext_ent_st_conv_Bel_Stas}
    The $\alpha$-geometric unextendible entanglement of a state, in the limit $\alpha\to 1$, converges to the unextendible entanglement of states induced by the Belavkin--Staszewski relative entropy; i.e.,
    \begin{equation}\label{eq:geo_unext_ent_st_conv_Bel_Stas}
        \lim_{\alpha\to 1}\widehat{E}^u_{\alpha}\!\left(\rho_{AB}\right) = \widehat{E}^u\!\left(\rho_{AB}\right).
    \end{equation}
\end{proposition}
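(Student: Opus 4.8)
The plan is to observe that the two quantities differ only in the order of the infimum over extensions and the limit $\alpha\to 1$. Writing $f_\sigma(\alpha)\coloneqq\widehat{D}_\alpha(\rho_{AB}\Vert\operatorname{Tr}_{B_1}[\sigma_{AB_1B_2}])$ for $\sigma\in\operatorname{Ext}(\rho_{AB})$, we have $\widehat{E}^u_\alpha(\rho_{AB})=\tfrac12\inf_\sigma f_\sigma(\alpha)$, whereas by~\eqref{eq:Bel_Stas_induce_unext_ent_st} we have $\widehat{E}^u(\rho_{AB})=\tfrac12\inf_\sigma\lim_{\alpha\to 1}f_\sigma(\alpha)$. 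Thus the proposition is exactly the interchange $\lim_{\alpha\to 1}\inf_\sigma f_\sigma(\alpha)=\inf_\sigma\lim_{\alpha\to 1}f_\sigma(\alpha)$. The two ingredients I would use are the monotonicity of $\widehat{D}_\alpha$ in $\alpha$ (Lemma~\ref{lem:geo-mono}) and the compactness of $\operatorname{Ext}(\rho_{AB})$, the latter holding since it is the closed, bounded set of tripartite states with a fixed marginal.

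First I would dispatch the easy inequality $\lim_{\alpha\to 1^-}\widehat{E}^u_\alpha(\rho_{AB})\le\widehat{E}^u(\rho_{AB})$. For each fixed $\sigma$ and each $\alpha<1$, monotonicity gives $f_\sigma(\alpha)\le f_\sigma(1)=\lim_{\beta\to 1}f_\sigma(\beta)$; taking the infimum over $\sigma$ on both sides shows $\widehat{E}^u_\alpha(\rho_{AB})\le\widehat{E}^u(\rho_{AB})$ for every $\alpha<1$. Since $\alpha\mapsto\inf_\sigma f_\sigma(\alpha)$ is an infimum of nondecreasing functions, it is itself nondecreasing, so the left limit exists and is bounded above by $\widehat{E}^u(\rho_{AB})$.

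The hard part is the reverse inequality $\lim_{\alpha\to 1^-}\widehat{E}^u_\alpha(\rho_{AB})\ge\widehat{E}^u(\rho_{AB})$, and here monotonicity alone does not suffice, since the inf--lim interchange can fail for monotone families; the real content is to prevent the minimizing extension from escaping as $\alpha\uparrow 1$, which is where compactness enters. I would fix a sequence $\alpha_n\uparrow 1$, choose (using lower semicontinuity of the geometric R\'enyi divergence to guarantee attainment over the compact set $\operatorname{Ext}(\rho_{AB})$) minimizers $\sigma_n$ with $f_{\sigma_n}(\alpha_n)=\inf_\sigma f_\sigma(\alpha_n)$, and pass to a subsequence with $\sigma_n\to\sigma^\star\in\operatorname{Ext}(\rho_{AB})$. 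For any fixed $\beta<1$ and all $n$ large enough that $\alpha_n\ge\beta$, monotonicity gives $f_{\sigma_n}(\alpha_n)\ge f_{\sigma_n}(\beta)$, and lower semicontinuity of $\sigma\mapsto f_\sigma(\beta)$ yields $\liminf_n f_{\sigma_n}(\alpha_n)\ge f_{\sigma^\star}(\beta)$. Letting $\beta\uparrow 1$ and invoking the defining limit in~\eqref{eq:Bel_Stas_induce_unext_ent_st} gives $\liminf_n f_{\sigma_n}(\alpha_n)\ge\lim_{\beta\to 1}f_{\sigma^\star}(\beta)\ge 2\widehat{E}^u(\rho_{AB})$. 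Because $\alpha\mapsto\widehat{E}^u_\alpha$ is monotone, its left limit equals the subsequential limit, so $\lim_{\alpha\to 1^-}\widehat{E}^u_\alpha(\rho_{AB})\ge\widehat{E}^u(\rho_{AB})$, completing the one-sided statement.

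The main obstacle I anticipate is precisely this hard direction: establishing lower semicontinuity of $\sigma\mapsto\widehat{D}_\beta(\rho_{AB}\Vert\operatorname{Tr}_{B_1}[\sigma])$ for $\beta<1$ near the support boundary, and ensuring the compactness argument is airtight so that the limiting extension $\sigma^\star$ genuinely controls the $\liminf$. To obtain the two-sided limit claimed, I would finish by noting that for $\alpha>1$ monotonicity already forces $\widehat{E}^u_\alpha(\rho_{AB})\ge\widehat{E}^u(\rho_{AB})$, while a near-optimal extension whose relevant marginal contains the support of $\rho_{AB}$ (so that its Belavkin--Staszewski value is finite) furnishes the matching upper bound as $\alpha\downarrow 1$ by right-continuity of $\widehat{D}_\alpha$ at $\alpha=1$; in the degenerate case where no extension yields finite value, both sides equal $+\infty$ and the identity holds trivially.
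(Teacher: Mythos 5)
Your proof is correct, and it identifies the right crux: the statement is exactly an interchange of $\lim_{\alpha\to1}$ with $\inf_{\sigma\in\operatorname{Ext}(\rho_{AB})}$, the direction $\alpha\to1^{+}$ being a trivial swap of two infima and the direction $\alpha\to1^{-}$ being a genuine $\sup$--$\inf$ interchange. The only difference from the paper's argument (Appendix~\ref{app:geo_unext_ent_st_conv_Bel_Stas}) is how that hard interchange is discharged: the paper invokes the Mosonyi--Hiai minimax theorem~\cite[Corollary~A.2]{MH11} as a black box, whose hypotheses are precisely lower semicontinuity of $\widehat{D}_\alpha(\rho\Vert\cdot)$~\cite[Lemma~A.3]{Fawzi2021}, monotonicity in $\alpha$, and compactness of the domain of the infimum; you instead reprove that lemma by hand via a convergent subsequence of (near-)minimizers $\sigma_n\to\sigma^\star$ and the chain $\liminf_n f_{\sigma_n}(\alpha_n)\ge\liminf_n f_{\sigma_n}(\beta)\ge f_{\sigma^\star}(\beta)$ followed by $\beta\uparrow1$. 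The two routes are logically equivalent and rest on identical technical inputs; yours is more self-contained, while the paper's is shorter and reuses the same citation in Propositions~\ref{prop:limit-zero-min-geo} and~\ref{prop:geo_unext_ent_converge_Bel_Stas}. Two small remarks: you could avoid the attainment step entirely by taking $\sigma_n$ to be $1/n$-approximate minimizers, which sidesteps any worry about whether the infimum is achieved; and the degenerate infinite-value case you flag at the end never actually occurs, since the product extension $\rho_{AB_1}\otimes\pi_{B_2}$ always has $\operatorname{Tr}_{B_1}$-marginal $\rho_A\otimes\pi_{B}$, whose support contains that of $\rho_{AB}$, so $\widehat{E}^u(\rho_{AB})$ is always finite.
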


\begin{proof}
    See Appendix~\ref{app:geo_unext_ent_st_conv_Bel_Stas}.
\end{proof}

\medskip
The $\alpha$-geometric unextendible entanglement of states increases monotonically in $\alpha$, which is a consequence of the $\alpha$-geometric R\'enyi relative entropy being monotonic in $\alpha$. The smallest quantity in this family of unextendibility measures is achieved in the limit $\alpha\to 0$. We define this quantity as the  min-geometric unextendible entanglement:
\begin{equation}\label{eq:min_geo_unext_ent_st_def}
    \widehat{E}^u_{\min}\!\left(\rho_{AB}\right) \coloneqq \lim_{\alpha\to 0^+}\widehat{E}^u_{\alpha}\!\left(\rho_{AB}\right).
\end{equation}

\begin{proposition}
    The min-geometric unextendible entanglement of a quantum state is the unextendibe entanglement induced by the $\alpha$-geometric R\'enyi relative entropy when $\alpha\to 0$; that is,
    \begin{equation}
        \widehat{E}^u_{\min}\!\left(\rho_{AB}\right) = \frac{1}{2}\inf_{\sigma\in\operatorname{Ext}\left(\rho\right)}\lim_{\alpha\to 0^+}\widehat{D}_{\alpha}\!\left(\rho\middle 
 \Vert\operatorname{Tr}_{B_1\left[\sigma_{AB_1B_2}\right]}\right).
    \end{equation}
\end{proposition}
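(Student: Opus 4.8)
The plan is to reduce the asserted identity to an interchange of two infima, a step justified entirely by the monotonicity of $\widehat{D}_\alpha$ in $\alpha$. Introduce the shorthand $F(\alpha)\coloneqq 2\widehat{E}^u_\alpha(\rho_{AB})=\inf_{\sigma\in\operatorname{Ext}(\rho)}\widehat{D}_\alpha(\rho\Vert\operatorname{Tr}_{B_1}[\sigma_{AB_1B_2}])$, so that, by the definition in~\eqref{eq:min_geo_unext_ent_st_def}, the left-hand side of the claim equals $\tfrac{1}{2}\lim_{\alpha\to 0^+}F(\alpha)$, whereas the right-hand side is $\tfrac{1}{2}\inf_\sigma\lim_{\alpha\to 0^+}\widehat{D}_\alpha(\rho\Vert\operatorname{Tr}_{B_1}[\sigma_{AB_1B_2}])$. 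It therefore suffices to prove that $\lim_{\alpha\to 0^+}F(\alpha)=\inf_\sigma\lim_{\alpha\to 0^+}\widehat{D}_\alpha(\rho\Vert\operatorname{Tr}_{B_1}[\sigma_{AB_1B_2}])$.

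First I would invoke Lemma~\ref{lem:geo-mono}, which asserts that $\alpha\mapsto\widehat{D}_\alpha(\omega\Vert\tau)$ is nondecreasing for every fixed pair of states. Since the second argument $\operatorname{Tr}_{B_1}[\sigma_{AB_1B_2}]$ carries no $\alpha$-dependence, the map $\alpha\mapsto\widehat{D}_\alpha(\rho\Vert\operatorname{Tr}_{B_1}[\sigma_{AB_1B_2}])$ is nondecreasing for each fixed $\sigma$, and hence its $\alpha\to 0^+$ limit equals its infimum over $\alpha>0$, because a monotone family converges to its infimum. Taking an infimum over $\sigma$ preserves pointwise inequalities, so $F$ inherits the same monotonicity in $\alpha$; consequently $\lim_{\alpha\to 0^+}F(\alpha)=\inf_{\alpha>0}F(\alpha)$, which is exactly $2\widehat{E}^u_{\min}(\rho_{AB})$.

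The final step is the unconditional commutativity of nested infima:
\begin{align}
\inf_{\alpha>0}F(\alpha)
&= \inf_{\alpha>0}\inf_{\sigma}\widehat{D}_\alpha\!\left(\rho\Vert\operatorname{Tr}_{B_1}[\sigma_{AB_1B_2}]\right)\notag\\
&= \inf_{\sigma}\inf_{\alpha>0}\widehat{D}_\alpha\!\left(\rho\Vert\operatorname{Tr}_{B_1}[\sigma_{AB_1B_2}]\right)\notag\\
&= \inf_{\sigma}\lim_{\alpha\to 0^+}\widehat{D}_\alpha\!\left(\rho\Vert\operatorname{Tr}_{B_1}[\sigma_{AB_1B_2}]\right),
\end{align}
where the middle equality holds because an iterated infimum is simply the infimum over the product of the index sets, and the last equality reapplies the monotone-limit identity established above. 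Dividing by $2$ yields the proposition.

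I expect the only genuine subtlety to be the recognition that a direct interchange of $\lim_{\alpha\to 0^+}$ with $\inf_\sigma$ is \emph{not} valid in general and that the necessary justification is supplied precisely by monotonicity in $\alpha$: it is what allows both limits to be rewritten as infima, after which their interchange is automatic. Two minor points merit a line of care: verifying that $\operatorname{Tr}_{B_1}[\sigma_{AB_1B_2}]$ is genuinely $\alpha$-independent so that Lemma~\ref{lem:geo-mono} applies verbatim, and noting that for $\alpha\in(0,1)$ the quantity $\widehat{D}_\alpha$ is finite, so no $+\infty$ values intrude as $\alpha\to 0^+$ and all manipulations remain within the reals.
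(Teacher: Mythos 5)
Your proposal is correct and follows essentially the same route as the paper: both arguments use the monotonicity of $\widehat{D}_{\alpha}$ in $\alpha$ to rewrite each $\alpha\to 0^{+}$ limit as an infimum over $\alpha$, and then conclude by the unconditional interchange of the two nested infima. The only cosmetic difference is that the paper takes the inner infimum over $\alpha\in(0,1)$ rather than all $\alpha>0$, which is immaterial for a monotone family.
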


\begin{proof}
    Due to the monotonicity of the $\alpha$-geometric unextendible entanglement in $\alpha$, we can write
    \begin{align}
        &\widehat{E}^u_{\min}\!\left(\rho_{AB}\right)\notag\\ 
        &= \lim_{\alpha\to 0^+}\widehat{E}^u_{\alpha}\!\left(\rho_{AB}\right)\\
        &= \inf_{\alpha\in (0,1)}\frac{1}{2}\inf_{\sigma\in\operatorname{Ext}\left(\rho\right)}\widehat{D}_{\alpha}\!\left(\rho\Vert\operatorname{Tr}_{B_1}\!\left[\sigma_{AB_1B_2}\right]\right)\\
        &= \frac{1}{2}\inf_{\sigma\in\operatorname{Ext}\left(\rho\right)}\inf_{\alpha\in (0,1)}\widehat{D}_{\alpha}\!\left(\rho\Vert\operatorname{Tr}_{B_1}\!\left[\sigma_{AB_1B_2}\right]\right)\\
        &= \frac{1}{2}\inf_{\sigma\in\operatorname{Ext}\left(\rho\right)}\lim_{\alpha\to 0^+}\widehat{D}_{\alpha}\!\left(\rho\Vert\operatorname{Tr}_{B_1}\!\left[\sigma_{AB_1B_2}\right]\right),
    \end{align}
    where the final equality follows from the monotonicity of the $\alpha$-geometric R\'enyi relative entropy.
\end{proof}

\begin{theorem}[\cite{WWW19}]\label{theo:props_from_WWW19}
    The $\alpha$-geometric unextendible entanglement of states for all $ \alpha \in (0,2]$ satisfies the following properties:
    \begin{enumerate}
        \item It is monotonic under a selective two-extendible bipartite operation $\left\{\mathcal{E}^y_{AB\to A'B'}\right\}_y$:
        \begin{equation}
            \widehat{E}^u_{\alpha}\!\left(\rho_{AB}\right) \ge \sum_{y:p(y) > 0} p(y)\widehat{E}^u_{\alpha}\!\left(\omega^y_{A'B'}\right),
        \end{equation}
        where
        \begin{align}
            p(y) &\coloneqq \operatorname{Tr}\!\left[\mathcal{E}^y_{AB\to A'B'}\!\left(\rho_{AB}\right)\right] , \\
            \omega^y_{A'B'} &\coloneqq \frac{1}{p(y)}\mathcal{E}^y_{AB\to A'B'}\!\left(\rho_{AB}\right).
        \end{align}
        \item It obeys subadditivity for states $\rho_{A_1B_1}$ and $\sigma_{A_2B_2}$:
        \begin{equation}\label{eq:alpha_geo_unext_ent_subadditive}
            \widehat{E}^u_{\alpha}\!\left(\rho_{A_1B_1}\otimes\sigma_{A_2B_2}\right) \le \widehat{E}^u_{\alpha}\!\left(\rho_{A_1B_1}\right) + \widehat{E}^u_{\alpha}\!\left(\sigma_{A_2B_2}\right).
        \end{equation}
        \item Let $\Phi^d_{AB}$ be a maximally entangled state of Schmidt rank~$d$. Then
        \begin{equation}\label{eq:geo_unext_ent_edit_logd}
            \widehat{E}^u_{\alpha}\!\left(\Phi^d_{AB}\right) = \log_2 d.
        \end{equation}
    \end{enumerate}
\end{theorem}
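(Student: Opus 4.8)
The measure to analyze is, for an optimal (or near-optimal) extension $\sigma_{AB_1B_2}$ with $\operatorname{Tr}_{B_2}[\sigma_{AB_1B_2}] = \rho_{AB_1}$,
$\widehat{E}^u_\alpha(\rho_{AB}) = \tfrac12\widehat{D}_\alpha(\rho_{AB_1}\Vert\operatorname{Tr}_{B_1}[\sigma_{AB_1B_2}])$, and I would prove the three items separately, treating the normalization first (easiest), then subadditivity, and finally monotonicity (the hard part).

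For the normalization $\widehat{E}^u_\alpha(\Phi^d_{AB}) = \log_2 d$, the plan is to combine the rigidity of pure-state extensions with a twirl. When $\rho_{AB} = \Phi^d_{AB}$ is pure, the constraint $\operatorname{Tr}_{B_2}[\sigma_{AB_1B_2}] = \Phi^d_{AB_1}$ forces every extension to factorize as $\sigma_{AB_1B_2} = \Phi^d_{AB_1}\otimes\tau_{B_2}$, whence $\operatorname{Tr}_{B_1}[\sigma_{AB_1B_2}] = \pi_A\otimes\tau_{B_2}$ and the optimization collapses to $\widehat{E}^u_\alpha(\Phi^d) = \tfrac12\inf_\tau\widehat{D}_\alpha(\Phi^d_{AB}\Vert\pi_A\otimes\tau_B)$. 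Choosing $\tau = \pi_B$ gives the upper bound: since $\operatorname{supp}(\Phi^d)\subseteq\operatorname{supp}(\pi_{AB})$ and $\pi_{AB}^{-1/2} = d\,I$, formula~\eqref{eq:geo_rel_quasi_ent_converge} yields $\widehat{Q}_\alpha(\Phi^d\Vert\pi_{AB}) = d^{2(\alpha-1)}$, so $\widehat{D}_\alpha(\Phi^d\Vert\pi_{AB}) = 2\log_2 d$ and $\widehat{E}^u_\alpha(\Phi^d)\le\log_2 d$. For the matching lower bound I would apply data processing under the twirl channel $\mathcal{T}(\cdot) = \int dU\,(U\otimes\overline{U})(\cdot)(U\otimes\overline{U})^\dagger$, which fixes $\Phi^d$ and sends $\pi_A\otimes\tau$ to the isotropic state with overlap $\langle\Phi^d|\pi_A\otimes\tau|\Phi^d\rangle = 1/d^2$, namely $\pi_{AB}$ itself; data processing (valid for $\alpha\in(0,2]$) then gives $\widehat{D}_\alpha(\Phi^d\Vert\pi_A\otimes\tau)\ge 2\log_2 d$ for every $\tau$.

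For subadditivity the plan is to tensor the individual extensions. Given near-optimal extensions $\sigma_{A_1B_1^{(1)}B_1^{(2)}}$ of $\rho_{A_1B_1}$ and $\theta_{A_2B_2^{(1)}B_2^{(2)}}$ of $\sigma_{A_2B_2}$, I would feed $\sigma\otimes\theta$ as a feasible extension of $\rho_{A_1B_1}\otimes\sigma_{A_2B_2}$, grouping $B_1^{(1)}B_2^{(1)}$ as the retained copy of $B_1B_2$ and $B_1^{(2)}B_2^{(2)}$ as the discarded one. The marginal constraint holds by construction and both second marginals tensorize, so additivity of $\widehat{D}_\alpha$ under tensor products (Lemma~\ref{lemma:geo_div}, item~2, for $\alpha\in(0,2]$) splits the divergence, and taking the infimum over the two independent extensions gives $\widehat{E}^u_\alpha(\rho\otimes\sigma)\le\widehat{E}^u_\alpha(\rho)+\widehat{E}^u_\alpha(\sigma)$.

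The monotonicity under a selective two-extendible operation $\{\mathcal{E}^y_{AB\to A'B'}\}_y$ is where the real work lies, and my plan is to push the optimal extension of the input through a symmetric extension of the instrument. Such an operation admits an extended instrument $\{\widetilde{\mathcal{E}}^y_{AB_1B_2\to A'B_1'B_2'}\}_y$ that is covariant under swapping $(B_1,B_2)$ with $(B_1',B_2')$ and obeys the marginal-recovery relation $\operatorname{Tr}_{B_2'}\circ\widetilde{\mathcal{E}}^y = \mathcal{E}^y\circ\operatorname{Tr}_{B_2}$. Writing $\tilde\rho_{AB_2}\coloneqq\operatorname{Tr}_{B_1}[\sigma_{AB_1B_2}]$ for the second marginal of the optimal extension and setting $\hat\omega^y\coloneqq\widetilde{\mathcal{E}}^y(\sigma_{AB_1B_2})$, marginal recovery gives $\operatorname{Tr}_{B_2'}[\hat\omega^y] = \mathcal{E}^y(\rho) = p(y)\omega^y$, so $\hat\omega^y/p(y)$ is a legitimate extension of $\omega^y$, while swap-covariance combined with marginal recovery gives $\operatorname{Tr}_{B_1'}[\hat\omega^y] = \mathcal{E}^y(\tilde\rho)$ (the two normalizations matching since $\hat\omega^y/p(y)$ is a genuine state). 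Thus $\widehat{E}^u_\alpha(\omega^y)\le\tfrac12\widehat{D}_\alpha(\omega^y_\rho\Vert\omega^y_{\tilde\rho})$ with $\omega^y_\rho\coloneqq\mathcal{E}^y(\rho)/p(y)$ and $\omega^y_{\tilde\rho}\coloneqq\mathcal{E}^y(\tilde\rho)/p(y)$; data processing of the aggregate $\widehat{D}_\alpha(\rho\Vert\tilde\rho) = 2\widehat{E}^u_\alpha(\rho)$ under the flagged channel $\rho\mapsto\sum_y\mathcal{E}^y(\rho)\otimes|y\rangle\!\langle y|_Y$, together with the direct-sum identity $\widehat{Q}_\alpha(\oplus_y p(y)\omega^y_\rho\Vert\oplus_y p(y)\omega^y_{\tilde\rho}) = \sum_y p(y)\widehat{Q}_\alpha(\omega^y_\rho\Vert\omega^y_{\tilde\rho})$, reduces everything to the individual quasi-entropies $Q_y\coloneqq\widehat{Q}_\alpha(\omega^y_\rho\Vert\omega^y_{\tilde\rho})$. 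The main obstacle is the concluding convexity step: since $\widehat{D}_\alpha = \tfrac{1}{\alpha-1}\log_2\widehat{Q}_\alpha$, one must compare $\tfrac{1}{\alpha-1}\log_2\sum_y p(y)Q_y$ with $\tfrac{1}{\alpha-1}\sum_y p(y)\log_2 Q_y$. For $\alpha\in(1,2]$ the positive prefactor together with concavity of the logarithm (Jensen) closes the argument at once, and $\alpha = 1$ then follows by continuity in $\alpha$ (Proposition~\ref{prop:geo_unext_ent_st_conv_Bel_Stas}); for $\alpha\in(0,1)$ the prefactor reverses the direction supplied by Jensen, so this route fails and one must instead invoke the finer operator-convexity properties of the geometric R\'enyi quasi-entropy established in~\cite{WWW19}. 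I expect this $\alpha\in(0,1)$ regime to be the genuinely delicate part of the proof.
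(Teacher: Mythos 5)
This theorem is imported from~\cite{WWW19}; the paper you are annotating gives no proof of it, so there is nothing in-text to compare against line by line. Judged on its own terms, your reconstruction of items~2 and~3 is correct and complete in outline. For item~3, the rigidity argument (a state with a pure marginal on $AB_1$ must factorize) and the computation $\widehat{Q}_{\alpha}(\Phi^d\Vert\pi_{AB})=d^{2(\alpha-1)}$ are right, and your twirling lower bound is a clean alternative to the route the source actually takes, which is the explicit minimization $\inf_{\sigma}\widehat{D}_{\alpha}(\psi_{RB}\Vert\psi_R\otimes\sigma_B)=\inf_{\sigma}\log_2\operatorname{Tr}[\sigma^{-1}]$ reproduced in this paper's Appendix~C for the identity channel. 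For item~2, tensoring near-optimal extensions and invoking additivity of $\widehat{D}_{\alpha}$ under tensor products is exactly the standard argument (it mirrors the paper's own Proposition~\ref{prop:unext_ent_subadditive} for channels).

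The genuine gap is the one you yourself flag in item~1: your chain closes only for $\alpha\in(1,2]$ (with $\alpha=1$ by the continuity of Proposition~\ref{prop:geo_unext_ent_st_conv_Bel_Stas}), because the step from $\tfrac{1}{\alpha-1}\log_2\sum_y p(y)Q_y$ to $\sum_y p(y)\tfrac{1}{\alpha-1}\log_2 Q_y$ genuinely reverses when the prefactor is negative; Jensen gives $\log_2\sum_y p(y)Q_y\ge\sum_y p(y)\log_2 Q_y$, which is the wrong direction for $\alpha<1$. This is not a removable technicality of your write-up: the paper's own Proposition~\ref{prop:geo_unext_direct_sum} states the analogous direct-sum inequality~\eqref{eq:geo_unext_direct_sum} only for $\alpha\in(1,2]$ and proves it by precisely this Jensen step, which corroborates that the average (selective) form cannot be pushed below $\alpha=1$ by this route. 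So for $\alpha\in(0,1)$ your proposal does not prove item~1, and you correctly defer to whatever additional structure~\cite{WWW19} exploits there; as written, you should either restrict item~1 to $\alpha\in[1,2]$ or supply the missing argument rather than gesturing at ``finer operator-convexity properties.'' Everything upstream of that step --- the existence of the symmetric extension of the instrument, the identity $\operatorname{Tr}_{B_1'}\circ\widetilde{\mathcal{E}}^y=\mathcal{E}^y\circ\operatorname{Tr}_{B_1}$ from swap covariance plus marginal recovery, the matching normalizations, and the flagged-channel data-processing combined with the direct-sum identity for $\widehat{Q}_{\alpha}$ --- is sound.
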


\begin{proposition}[Direct--sum]\label{prop:geo_unext_direct_sum}
    Let $ \rho_{XX'AB}$ denote the following classical--classical--quantum state:
    \begin{equation}
        \rho_{XX'AB} \coloneqq  \sum_{x}p(x)|x\rangle\!\langle x|_{X}\otimes|x\rangle\!\langle x|_{X'}\otimes \rho^x_{AB}.
    \end{equation}
    For $\alpha \in (1,2]$, the $\alpha$-geometric unextendible entanglement  obeys the following inequalities:
    \begin{equation}
        \label{eq:geo_unext_direct_sum}
        \widehat{E}^u_{\alpha}(\rho_{XA:B}) \ge \sum_{x} p(x)\widehat{E}^u_{\alpha}(\rho^x_{AB}).
    \end{equation}
    For $\alpha \in (0,2]$, it obeys the following:
    \begin{equation}
        \widehat{E}^u_{\alpha}(\rho_{XA:B})   =
        \widehat{E}^u_{\alpha}(\rho_{XA:X'B}) \geq \widehat{E}^u_{\alpha}(\rho_{A:X'B}) .
        \label{eq:geo_others_direct_sum_alpha}
    \end{equation}   
    The following equality holds for the Belavkin--Staszewski unextendible entanglement:
    \begin{equation}
    \label{eq:BS-unext-ent-avg-identity}
    \widehat{E}^u(\rho_{XA:B}) = \sum_{x} p(x)\widehat{E}^u(\rho^x_{AB}).
    \end{equation}
\end{proposition}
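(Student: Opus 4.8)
The plan is to reduce all three displays to one structural fact: the geometric R\'enyi relative quasi-entropy is additive over orthogonal blocks. Concretely, if $\omega=\sum_x\omega^x$ and $\tau=\sum_x\tau^x$ are supported on mutually orthogonal subspaces labelled by $x$, then $\widehat{Q}_\alpha(\omega\Vert\tau)=\sum_x\widehat{Q}_\alpha(\omega^x\Vert\tau^x)$, which follows from the closed form in \eqref{eq:geo_rel_quasi_ent_converge} because $\tau^{-1/2}\omega\tau^{-1/2}$ and its $\alpha$-power are themselves block diagonal; I would combine this with the scaling identity $\widehat{Q}_\alpha(c\omega\Vert c\tau)=c\,\widehat{Q}_\alpha(\omega\Vert\tau)$ for $c>0$. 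For the Belavkin--Staszewski divergence the corresponding statement is the \emph{exact} additivity $\widehat{D}(\sum_x\omega^x\Vert\sum_x\tau^x)=\sum_x\widehat{D}(\omega^x\Vert\tau^x)$, read directly off $\widehat{D}(\omega\Vert\tau)=\operatorname{Tr}[\omega\log_2(\omega^{1/2}\tau^{-1}\omega^{1/2})]$. The second ingredient is a reduction to $X$-classical extensions: since $\rho_{XAB}$ is already diagonal in $X$, the completely dephasing channel $\Delta_X$ maps any extension to another valid extension and, by the data-processing inequality (valid for $\alpha\le 2$), cannot raise the objective, so it suffices to optimize over extensions of the form $\rho_{XAB_1B_2}=\sum_x p(x)\,|x\rangle\!\langle x|_X\otimes\rho^x_{AB_1B_2}$ with each $\rho^x_{AB_1B_2}\in\operatorname{Ext}(\rho^x_{AB})$; crucially the $X$-marginal is pinned to $\rho_X$, so the blocks carry weight exactly $p(x)$.

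For \eqref{eq:geo_unext_direct_sum} I would take such an $X$-classical extension, set $\sigma^x_{AB}:=\operatorname{Tr}_{B_1}[\rho^x_{AB_1B_2}]$, and apply block-additivity to obtain $\widehat{Q}_\alpha(\rho_{XAB}\Vert\operatorname{Tr}_{B_1}[\rho_{XAB_1B_2}])=\sum_x p(x)\,\widehat{Q}_\alpha(\rho^x_{AB}\Vert\sigma^x_{AB})$. Passing through $\tfrac{1}{\alpha-1}\log_2(\cdot)$ and invoking concavity of the logarithm (Jensen, with $\tfrac{1}{\alpha-1}>0$ since $\alpha>1$) gives $\widehat{D}_\alpha(\rho_{XAB}\Vert\operatorname{Tr}_{B_1}[\rho_{XAB_1B_2}])\ge\sum_x p(x)\,\widehat{D}_\alpha(\rho^x_{AB}\Vert\sigma^x_{AB})$; including the factor $\tfrac12$, recognizing each $\sigma^x_{AB}$ as arising from a genuine extension of $\rho^x_{AB}$, and taking the infimum yields $\widehat{E}^u_\alpha(\rho_{XA:B})\ge\sum_x p(x)\,\widehat{E}^u_\alpha(\rho^x_{AB})$. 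The Belavkin--Staszewski identity \eqref{eq:BS-unext-ent-avg-identity} is the equality case of the same argument: exact additivity replaces the Jensen step, giving $\ge$ with no loss, while the reverse bound follows by evaluating the objective on the block-diagonal extension $\sum_x p(x)\,|x\rangle\!\langle x|_X\otimes\rho^x_{AB_1B_2}$ built from near-optimal extensions of the individual $\rho^x_{AB}$, whose objective equals $\sum_x p(x)\,\widehat{D}(\rho^x_{AB}\Vert\sigma^x_{AB})$ exactly.

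For \eqref{eq:geo_others_direct_sum_alpha}, the two inequalities $\widehat{E}^u_\alpha(\rho_{XA:X'B})\ge\widehat{E}^u_\alpha(\rho_{XA:B})$ and $\widehat{E}^u_\alpha(\rho_{XA:X'B})\ge\widehat{E}^u_\alpha(\rho_{A:X'B})$ I would obtain immediately from monotonicity (Theorem~\ref{theo:props_from_WWW19} with a single deterministic outcome), since tracing out $X'$ is a local channel on Bob's side and tracing out $X$ is a local channel on Alice's side, neither of which can increase $\widehat{E}^u_\alpha$. The substantive claim is the reverse inequality $\widehat{E}^u_\alpha(\rho_{XA:X'B})\le\widehat{E}^u_\alpha(\rho_{XA:B})$. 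Here I would again restrict to an $X$-classical extension $\rho_{XAB_1B_2}=\sum_x p(x)\,|x\rangle\!\langle x|_X\otimes\rho^x_{AB_1B_2}$ of $\rho_{XAB}$ and promote it to the extension $\tau:=\sum_x p(x)\,|x\rangle\!\langle x|_X\otimes|x\rangle\!\langle x|_{X'_1}\otimes|x\rangle\!\langle x|_{X'_2}\otimes\rho^x_{AB_1B_2}$ of $\rho_{XA:X'B}$, i.e.\ the classical label $x$ is simply copied onto both extra registers $X'_1,X'_2$. One verifies $\operatorname{Tr}_{X'_2B_2}[\tau]=\rho_{XX'AB}$, so $\tau\in\operatorname{Ext}(\rho_{XA:X'B})$, while $\operatorname{Tr}_{X'_1B_1}[\tau]$ equals $\operatorname{Tr}_{B_1}[\rho_{XAB_1B_2}]$ with a perfectly correlated register $X'$ adjoined. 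Because this extra register only relabels the orthogonal blocks—block $x$ still carries weight $p(x)$ and the same pair $(\rho^x_{AB},\sigma^x_{AB})$—block-additivity forces the two objective values to coincide, and an infimum gives the equality in \eqref{eq:geo_others_direct_sum_alpha}.

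The hard part will be the bookkeeping behind the reduction to $X$-classical extensions and the attendant support and normalization checks. I would need to confirm that $\Delta_X$ both preserves membership in $\operatorname{Ext}(\cdot)$ and, by data processing, does not raise the objective; that the preserved $X$-marginal forces each block to have trace exactly $p(x)$, so the scaling identity produces clean weights $p(x)$ rather than spurious $p(x)^\alpha$ factors; and that block-additivity of $\widehat{Q}_\alpha$ persists for $\alpha\in(0,1)$, where the projected form in \eqref{eq:projected_geo_ent_state} must be used and any true support violation sends the relevant divergence to $+\infty$, rendering the corresponding inequality trivial. The directionality of the Jensen step is precisely what confines the averaging inequality \eqref{eq:geo_unext_direct_sum} to $\alpha>1$, whereas the purely structural identity \eqref{eq:geo_others_direct_sum_alpha}, needing only equality of quasi-entropies, survives for all $\alpha\in(0,2]$.
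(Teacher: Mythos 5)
Your proposal is correct, and while its core engine (block decomposition of $\widehat{Q}_{\alpha}$, Jensen for $\alpha>1$, exact additivity for the Belavkin--Staszewski case) matches the paper's, you deviate in three places worth noting. First, you justify restricting to $X$-classical extensions via the dephasing channel $\Delta_X$ and data processing; the paper simply asserts that ``a general extension of $\rho_{XAB}$ is of the form'' $\sum_x p(x)|x\rangle\!\langle x|_X\otimes\rho^x_{AB_1B_2}$, which is not literally true (only the $B_2$-marginal of the off-diagonal $X$-blocks must vanish, not the blocks themselves), so your reduction actually makes the argument airtight where the paper glosses over it. Second, for the equality $\widehat{E}^u_{\alpha}(\rho_{XA:B})=\widehat{E}^u_{\alpha}(\rho_{XA:X'B})$ the paper argues operationally: copying $X$ into $X'$ and sending $X'$ to Bob is a one-way LOCC channel, as is discarding $X'$, and monotonicity in both directions gives the equality; you instead construct the explicit extension $\sum_x p(x)|x\rangle\!\langle x|_X\otimes|x\rangle\!\langle x|_{X'_1}\otimes|x\rangle\!\langle x|_{X'_2}\otimes\rho^x_{AB_1B_2}$ and invoke block-additivity to show the objective values coincide. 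Your route is more self-contained (it needs only the structure of the divergence, not the two-extendibility of classical-communication channels), while the paper's is shorter and reuses machinery already established. Third, for \eqref{eq:BS-unext-ent-avg-identity} the paper obtains the $\geq$ direction by taking the limit $\alpha\to 1^+$ of \eqref{eq:geo_unext_direct_sum} via Proposition~\ref{prop:geo_unext_ent_st_conv_Bel_Stas}, whereas you get it directly from exact additivity of $\widehat{D}$ over orthogonal blocks, avoiding the limit exchange entirely; the $\leq$ directions are essentially identical. All of your support, normalization, and scaling checks (trace of each block pinned to $p(x)$, validity of block-additivity in the projected form for $\alpha\in(0,1)$, divergence blowing up to $+\infty$ under support violation) are sound.
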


\begin{proof}
    Let us first prove the inequality in~\eqref{eq:geo_unext_direct_sum}. A general extension of the state $\rho_{XAB}$ is of the form,
    \begin{equation}\label{eq:CQ_state_ext}
        \rho_{XAB_1B_2} = \sum_x p(x)|x\rangle\!\langle x|_X\otimes\rho^x_{AB_1B_2},
    \end{equation}
    where $\rho^x_{AB_1B_2}$ is an arbitrary extension of $\rho^x_{AB}$. Due to the direct-sum property of the $\alpha$-geometric R\'enyi relative quasi-entropy defined in~\eqref{eq:geo_rel_quasi_ent},
    \begin{align}
        &\log_2\widehat{Q}_{\alpha}\!\left(\rho_{XAB}\Vert\operatorname{Tr}_{B_1}[\rho_{XAB_1B_2}]\right) \notag \\
        &= \log_2\sum_x p(x)\widehat{Q}_{\alpha}\!\left(\rho^x_{AB}\Vert\operatorname{Tr}_{B_1}[\rho^x_{AB_1B_2}]\right)\\
        &\ge \sum_x p(x)\log_2\widehat{Q}_{\alpha}\!\left(\rho^x_{AB}\Vert \operatorname{Tr}_{B_1}\!\left[\rho^x_{AB_1B_2}\right]\right), \label{eq:geo_unext_direct_sum_im}
    \end{align}
    where the inequality follows from  concavity of the logarithm. Furthermore, for $\alpha > 1$, we can divide both sides by $\alpha-1$ while preserving the inequality sign, implying that
    \begin{align}
        &\widehat{D}_{\alpha}\!\left(\rho_{XAB}\Vert\operatorname{Tr}_{B_1}\!\left[\rho_{XAB_1B_2}\right]\right)\notag \\
        &=\frac{1}{\alpha-1} \log_2\widehat{Q}_{\alpha}\!\left(\rho_{XAB}\Vert\operatorname{Tr}_{B_1}[\rho_{XAB_1B_2}]\right)\\
        &\ge \frac{1}{\alpha-1}\sum_x p(x)\log_2\widehat{Q}_{\alpha}\!\left(\rho^x_{AB}\Vert \operatorname{Tr}_{B_1}\!\left[\rho^x_{AB_1B_2}\right]\right)\\
        &= \sum_x p(x)\widehat{D}_{\alpha}\!\left(\rho^x_{AB}\Vert \operatorname{Tr}_{B_1}\!\left[\rho^x_{AB_1B_2}\right]\right)\\
        &\ge 2\sum_x p(x)\widehat{E}^u_{\alpha}(\rho^x_{AB}).
    \end{align}
    Since the above inequality holds for every extension $\rho_{XAB_1B_2}$, we can take the infimum over all such extensions and conclude~\eqref{eq:geo_unext_direct_sum}.

    Let us now prove the statements in~\eqref{eq:geo_others_direct_sum_alpha}.  Observe that Alice can copy the contents of the classical register $X$ to $X'$ and send $X'$ to Bob, and this action is a one-way LOCC channel. By invoking the fact that the $\alpha$-geometric unextendible entanglement does not increase under one-way LOCC for $\alpha \in (0,2]$ (see~\cite[Remark~3]{WWW19}), we conclude that
    \begin{equation}
        \widehat{E}^u_{\alpha}(\rho_{XA:B})   \geq
        \widehat{E}^u_{\alpha}(\rho_{XA:X'B}).
    \end{equation}
    Since performing a partial trace over the register $X'$ in Bob's possession is also a one-way LOCC channel, we conclude the opposite inequality:
     \begin{equation}
       \widehat{E}^u_{\alpha}(\rho_{XA:X'B})    \geq
        \widehat{E}^u_{\alpha}(\rho_{XA:B}),
    \end{equation}
    which, together with the inequality above, implies the equality in~\eqref{eq:geo_others_direct_sum_alpha}. Finally, discarding the register $X$ in Alice's possession is also a one-way LOCC, which implies the inequality in~\eqref{eq:geo_others_direct_sum_alpha}. 

    We finally note that the inequality
    \begin{equation}
    \widehat{E}^u(\rho_{XA:B}) \geq \sum_{x} p(x)\widehat{E}^u(\rho^x_{AB})
    \label{eq:BS-unext-avg-ineq}
    \end{equation}
    holds because~\eqref{eq:geo_unext_direct_sum} holds for all $\alpha > 1$, and thus we can invoke Proposition~\ref{prop:geo_unext_ent_st_conv_Bel_Stas} and take the limit as $\alpha \to 1$. The opposite inequality is a consequence of the following reasoning. Let $\rho_{AB_1B_2}^x$ be an arbitrary extension of $\rho_{AB}^x$ (i.e., $\operatorname{Tr}_{B_2}[\rho_{AB_1B_2}^x] = \rho_{AB}^x$). Consider that
    \begin{align}
        \sum_x p(x) \widehat{D}(\rho_{AB_1}^x \Vert \rho_{AB_2}^x) & = \widehat{D}(\rho_{XAB_1} \Vert \rho_{XAB_2}) \\
        & \geq 2 \widehat{E}^u(\rho_{XA:B}),
    \end{align}
    where we invoked the direct-sum property of the Belavkin--Staszewski relative entropy (see~\cite[Eq.~(4.7.62)]{khatri2020principles}).
    Since the inequality holds for every extension of $\rho_{AB}^x$, we conclude that
    \begin{equation}
    \sum_{x} p(x)\widehat{E}^u(\rho^x_{AB}) \geq 
    \widehat{E}^u(\rho_{XA:B}) .
    \end{equation}
    Combining the above inequality with the inequality in~\eqref{eq:BS-unext-avg-ineq}, we conclude~\eqref{eq:BS-unext-ent-avg-identity}.
\end{proof}

\subsection{Generalized unextendible entanglement of point-to-point quantum channels}

We are now in a position to define the unextendible entanglement of a point-to-point quantum channel. For a channel divergence $\mathbf{D}$, the unextendible entanglement of a quantum channel $\mathcal{N}_{A\to B}$ is defined as
\begin{multline}
    \mathbf{E}^u\!\left(\mathcal{N}_{A\to B}\right) \coloneqq \inf_{\mathcal{P}_{A\to B_1B_2}}\frac{1}{2}\big\{\mathbf{D}\!\left(\mathcal{N}_{A\to B}\middle \Vert\operatorname{Tr}_{B_1}\circ\mathcal{P}_{A\to B_1B_2}\right)\\: \operatorname{Tr}_{B_2}\circ\mathcal{P}_{A\to B_1B_2} = \mathcal{N}_{A\to B}\big\},
\end{multline}
where the infimum is taken over every extension channel $\mathcal{P}_{A\to B_1B_2}$.
Let us define the following set of extensions of the quantum channel $\mathcal{N}_{A\to B}$:
\begin{equation}\label{eq:ext_set_p2p_ch}
    \operatorname{Ext}\!\left(\mathcal{N}_{A\to B}\right) \coloneqq \left\{\mathcal{P}_{A\to B_1B_2} \in \text{CP} : \operatorname{Tr}_{B_2}\circ\mathcal{P}_{A\to B_1B_2} = \mathcal{N}_{A\to B}\right\},
\end{equation}
where $B_1$ and $B_2$ are isomorphic to $B$ and CP denotes the set of completely positive maps. The unextendible entanglement of a point-to-point quantum channel $\mathcal{N}_{A\to B}$, induced by a channel divergence $\mathbf{D}$, can thus be written as
\begin{multline}
    \mathbf{E}^u\!\left(\mathcal{N}_{A\to B}\right) \\= \inf_{\mathcal{P}_{A\to B_1B_2}\in \operatorname{Ext}\left(\mathcal{N}\right)}\frac{1}{2}\mathbf{D}\!\left(\mathcal{N}_{A\to B}\middle\Vert\operatorname{Tr}_{B_2}\circ\mathcal{P}_{A\to B_1B_2}\right).
\end{multline}
This definition is motivated from the definition of unextendible entanglement of bipartite states in~\eqref{eq:gen_unext_ent_states}. The unextendible entanglement of a quantum channel $\mathcal{N}_{A\to B}$ between Alice and Bob quantifies the distinguishability of the two marginals of a quantum broadcast channel $\mathcal{P}_{A\to B_1B_2}$ such that one of the marginals is the channel of interest.

The no-broadcasting theorem~\cite{BCFJS96}, a generalization of the no-cloning theorem to mixed states, implies that there cannot exist a quantum channel $\mathcal{P}_{A\to B_1B_2}$ that can perfectly broadcast an arbitrary quantum state, in the sense that, for such a purported perfect broadcast channel, the marginal states on the output systems $B_1$ and $B_2$ are the same as the input quantum state on system $A$. As such, there cannot exist a quantum channel $\mathcal{P}_{A\to B_1B_2}$ such that each of its marginals is the identity channel. However, a quantum broadcast channel can have identical marginals if the marginals are noisy channels, for example, a trivial channel that replaces the input with a fixed quantum state. The unextendible entanglement of a quantum channel thus can be understood as a measure of entanglement of the quantum channel arising from the limitations imposed by the non-broadcastability of quantum information.

The unextendible entanglement induced by a divergence~$\mathbf{D}$ achieves its minimum for two-extendible channels. If the underlying divergence is strongly faithful, then the unextendible entanglement is equal to zero if and only if the channel is two-extendible. 

\begin{theorem}[Monotonicity]\label{theo:two_ext_monotonic_p2p}
    The generalized unextendible entanglement of a channel does not increase under the action of a two-extendible superchannel. That is, for an arbitrary quantum channel $\mathcal{N}_{A\to B}$ and a two-extendible superchannel~$\Theta_{(A\to B)\to (C\to D)}$,    \begin{equation}\label{eq:gen_unext_ent_monotonic_superch}
         \mathbf{E}^u\!\left(\mathcal{N}_{A\to B}\right) \ge  \mathbf{E}^u\!\left(\Theta\left(\mathcal{N}_{A\to B}\right)\right).
    \end{equation}
\end{theorem}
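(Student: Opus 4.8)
The plan is to transport an arbitrary extension of $\mathcal{N}_{A\to B}$ through the extension superchannel $\Upsilon$ guaranteed by two-extendibility of $\Theta$, to verify that the result is a legitimate extension of $\Theta(\mathcal{N})$, and then to control the relevant divergence by data processing under $\Theta$. First I would fix an arbitrary extension $\mathcal{P}_{A\to B_1B_2}\in\operatorname{Ext}(\mathcal{N})$, so that $\operatorname{Tr}_{B_2}\circ\mathcal{P}_{A\to B_1B_2}=\mathcal{N}_{A\to B}$. Since $\mathcal{N}_{A\to B}$ is trace preserving, the map $\mathcal{P}_{A\to B_1B_2}$ is itself a channel, and hence so is its other marginal $\operatorname{Tr}_{B_1}\circ\mathcal{P}_{A\to B_1B_2}$; this is the object that plays the role of the ``second argument'' of the divergence in the definition of $\mathbf{E}^u(\mathcal{N}_{A\to B})$.

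Let $\Upsilon_{(A\to B_{[2]})\to(C\to D_{[2]})}$ be the two-extension witnessing two-extendibility of $\Theta$, and define the candidate channel $\mathcal{P}' \coloneqq \Upsilon(\mathcal{P}_{A\to B_1B_2})$, a map from $C$ to $D_1D_2$ that is CPTP because $\Upsilon$ is a superchannel. The next step is to identify the two output marginals of $\mathcal{P}'$ using Proposition~\ref{prop:two_ext_supch_ch_marginal} with $k=2$. Taking $i=1$ there yields
\begin{equation}
    \operatorname{Tr}_{D_2}\circ\Upsilon(\mathcal{P}_{A\to B_1B_2}) = \Theta\!\left(\operatorname{Tr}_{B_2}\circ\mathcal{P}_{A\to B_1B_2}\right) = \Theta(\mathcal{N}_{A\to B}),
\end{equation}
so $\mathcal{P}'\in\operatorname{Ext}(\Theta(\mathcal{N}))$ is a genuine extension of the output channel. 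Taking $i=2$ instead yields the other marginal,
\begin{equation}
    \operatorname{Tr}_{D_1}\circ\Upsilon(\mathcal{P}_{A\to B_1B_2}) = \Theta\!\left(\operatorname{Tr}_{B_1}\circ\mathcal{P}_{A\to B_1B_2}\right).
\end{equation}

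With these identities in hand, I would bound the infimum defining $\mathbf{E}^u(\Theta(\mathcal{N}))$ from above using the feasible point $\mathcal{P}'$, substitute the marginal identity, and then apply the data-processing inequality for the generalized channel divergence under superchannels (Theorem~\ref{theo:gen_div_channel_data_proc}) to the two channels $\mathcal{N}_{A\to B}$ and $\operatorname{Tr}_{B_1}\circ\mathcal{P}_{A\to B_1B_2}$:
\begin{align}
    \mathbf{E}^u\!\left(\Theta(\mathcal{N})\right) &\le \tfrac12\,\mathbf{D}\!\left(\Theta(\mathcal{N})\,\middle\Vert\,\operatorname{Tr}_{D_1}\circ\mathcal{P}'\right) \notag \\
    &= \tfrac12\,\mathbf{D}\!\left(\Theta(\mathcal{N})\,\middle\Vert\,\Theta(\operatorname{Tr}_{B_1}\circ\mathcal{P})\right) \notag \\
    &\le \tfrac12\,\mathbf{D}\!\left(\mathcal{N}\,\middle\Vert\,\operatorname{Tr}_{B_1}\circ\mathcal{P}\right).
\end{align}
Because this chain holds for every $\mathcal{P}\in\operatorname{Ext}(\mathcal{N})$, taking the infimum over all such extensions on the right-hand side gives $\mathbf{E}^u(\Theta(\mathcal{N}))\le\mathbf{E}^u(\mathcal{N})$, which is the claim.

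The main obstacle is bookkeeping rather than conceptual: one must keep straight which output marginal is constrained to equal $\mathcal{N}$ (respectively $\Theta(\mathcal{N})$) and which is the argument appearing on the right of the divergence, and confirm that the two index choices ($i=1$ versus $i=2$) in Proposition~\ref{prop:two_ext_supch_ch_marginal} send the two marginals exactly where they are needed. The only genuine checks are that $\operatorname{Tr}_{B_1}\circ\mathcal{P}_{A\to B_1B_2}$ is an honest CPTP map, so that Theorem~\ref{theo:gen_div_channel_data_proc} applies to it, and that $\Upsilon(\mathcal{P})$ is a genuine channel extension of $\Theta(\mathcal{N})$; both are immediate from the fact that $\mathcal{N}_{A\to B}$ is trace preserving and that $\Upsilon$ is a superchannel. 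I expect no difficulty beyond this, since both the marginal propagation (Proposition~\ref{prop:two_ext_supch_ch_marginal}) and the contractivity under superchannels (Theorem~\ref{theo:gen_div_channel_data_proc}) are already available.
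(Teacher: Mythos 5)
Your proposal is correct and follows essentially the same route as the paper's proof: push an arbitrary extension $\mathcal{P}$ of $\mathcal{N}$ through the two-extension $\Upsilon$, identify the marginals of $\Upsilon(\mathcal{P})$ via Proposition~\ref{prop:two_ext_supch_ch_marginal}, note that $\Upsilon(\mathcal{P})$ is a valid extension of $\Theta(\mathcal{N})$, apply the data-processing inequality of Theorem~\ref{theo:gen_div_channel_data_proc}, and take the infimum. The only difference is cosmetic: you read the chain of inequalities from $\mathbf{E}^u(\Theta(\mathcal{N}))$ upward, whereas the paper reads it from $\mathbf{D}(\mathcal{N}\Vert\operatorname{Tr}_{B_1}\circ\mathcal{P})$ downward.
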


\begin{proof}
      To begin with, consider a two-extendible superchannel $\Theta_{(A\to B)\to (C\to D)}$ with the two-extension $\Upsilon_{(A\to B_1B_2)\to (C\to D_1D_2)}$.
      
      Let $\mathcal{P}_{A\to B_1B_2}$ be an extension of the channel $\mathcal{N}_{A\to B}$, implying $\mathcal{N}_{A \to B} = \operatorname{Tr}_{B_2}\circ\mathcal{P}_{A\to B_1B_2}$. The generalized divergence between the two marginals of the channel $\mathcal{P}_{A\to B_1B_2}$ satisfies
     \begin{align}
         &\mathbf{D}\!\left(\mathcal{N}_{A \to B}\Vert\operatorname{Tr}_{B_1}\circ\mathcal{P}_{A \to B_1B_2}\right)\notag \\
          &\ge \mathbf{D}\!\left(\Theta\left(\mathcal{N}_{A \to B}\right)\Vert\Theta\left(\operatorname{Tr}_{B_1}\circ\mathcal{P}_{A \to B_1B_2}\right)\right)\\
          &= \mathbf{D}\!\left(\operatorname{Tr}_{D_2}\circ\Upsilon\left(\mathcal{P}_{A\to B_1B_2}\right)\Vert \operatorname{Tr}_{D_1}\circ\Upsilon\left(\mathcal{P}_{A\to B_1B_2}\right) \right)\\
          &\ge 2\mathbf{E}^u\!\left(\Theta\left(\mathcal{N}_{A\to B}\right)\right).\label{eq:ch_div_ge_unext_ent_superch}
     \end{align}
The first inequality follows from the contraction of a generalized channel divergence under superchannels (Theorem~\ref{theo:gen_div_channel_data_proc}). The equality follows from the nonsignaling property of two-extendible superchannels (Proposition~\ref{prop:two_ext_supch_ch_marginal}), and the last inequality follows from the fact that $\Upsilon\left(\mathcal{P}_{A\to B_1B_2}\right)$ is a valid extension of the channel $\Theta\left(\mathcal{N}_{A\to B}\right)$. 

Since~\eqref{eq:ch_div_ge_unext_ent_superch} holds true for every extension $\mathcal{P}_{A\to B_1B_2}$ of the channel $\mathcal{N}_{A\to B}$, we can take the infimum over all such channels and conclude~\eqref{eq:gen_unext_ent_monotonic_superch}.
\end{proof}

\begin{remark}
    Since all one-way LOCC channels are two-extendible, the generalized unextendible entanglement of a channel does not increase under the action of a one-way LOCC superchannel.
\end{remark}

A superchannel $\Theta$ can also convert a point-to-point quantum channel $\mathcal{N}_{A\to B}$ to a bipartite quantum channel $\mathcal{M}_{C\to C'D}$ that is nonsignaling from $D$ to $C$. We have not yet defined the unextendible entanglement of such channels (see Section~\ref{sec:unext_ent_bip} for unextendible entanglement of general bipartite channels); however, we can still compare the unextendible entanglement of any bipartite state that can be established by a channel of the form $\mathcal{M}_{C\to C'D}$ with the unextendible entanglement of point-to-point channels. 

\begin{theorem}\label{theo:unext_ent_state_le_unext_ent_ch_gen}
    The unextendible entanglement of a quantum state $\sigma_{RC'D}$, with respect to the partition $RC':D$, that can be established between two parties using a point-to-point quantum channel $\mathcal{N}_{A\to B}$ and a two-extendible superchannel $\Theta_{(A\to B)\to (C\to C'D)}$ is no greater than the unextendible entanglement of the quantum channel $\mathcal{N}_{A\to B}$; i.e., 
    \begin{equation}\label{eq:unext_ent_state_le_unext_ent_ch_gen}
        \sup_{\rho_{RC}}\mathbf{E}^u\!\left(\sigma_{RC':D}\right) \le  \mathbf{E}^u\!\left(\mathcal{N}_{A\to B}\right), 
    \end{equation}
    where 
    \begin{equation}
        \sigma_{RC'D} \coloneqq \left(\Theta_{(A\to B)\to (C\to C'D)}\!\left(\mathcal{N}_{A\to B}\right)\right)\left(\rho_{RC}\right),
    \end{equation}
    and $\rho_{RC}$ is a quantum state.
\end{theorem}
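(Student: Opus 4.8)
The plan is to mirror the proof of Theorem~\ref{theo:two_ext_monotonic_p2p}, but with the output channel's action on a fixed input state folded in, so that a channel-level extension is converted into a state-level extension of $\sigma_{RC'D}$ with respect to Bob's system $D$. Throughout, the reference $R$ and Alice's output $C'$ jointly play the role of the ``$A$'' system in the bipartition $RC':D$, while $D$ is the system to be extended; keeping $C'$ single (not part of the broadcast structure) is the conceptual key.

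First I would fix an arbitrary input state $\rho_{RC}$ and an arbitrary extension $\mathcal{P}_{A\to B_1B_2}\in\operatorname{Ext}(\mathcal{N})$, so that $\operatorname{Tr}_{B_2}\circ\mathcal{P}=\mathcal{N}$ and $\mathcal{P}$ is CPTP. Let $\Upsilon$ be the two-extension of $\Theta$ guaranteed by the definition of a two-extendible superchannel; its output channel $\Upsilon(\mathcal{P})$ carries two copies $D_1,D_2$ of Bob's system while $C'$ remains single. Applying it to $\rho_{RC}$ yields $\tilde{\sigma}_{RC'D_1D_2}\coloneqq(\Upsilon(\mathcal{P}))(\rho_{RC})$. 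The marginal property of two-extendible superchannels (Proposition~\ref{prop:two_ext_supch_ch_marginal}) gives $\operatorname{Tr}_{D_2}\circ\Upsilon(\mathcal{P})=\Theta(\operatorname{Tr}_{B_2}\circ\mathcal{P})=\Theta(\mathcal{N})$ and $\operatorname{Tr}_{D_1}\circ\Upsilon(\mathcal{P})=\Theta(\operatorname{Tr}_{B_1}\circ\mathcal{P})$. In particular $\operatorname{Tr}_{D_2}[\tilde{\sigma}_{RC'D_1D_2}]=\sigma_{RC'D}$, so $\tilde{\sigma}_{RC'D_1D_2}$ is a legitimate element of $\operatorname{Ext}(\sigma_{RC'D})$ with $RC'$ as the ``$A$'' system.

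Using this extension in Definition~\ref{def:unext-ent} and then applying data processing, I would chain
\begin{align}
2\,\mathbf{E}^u(\sigma_{RC':D})
&\le \mathbf{D}\!\left(\sigma_{RC'D}\,\middle\Vert\,\operatorname{Tr}_{D_1}[\tilde{\sigma}_{RC'D_1D_2}]\right) \notag\\
&= \mathbf{D}\!\left((\Theta(\mathcal{N}))(\rho_{RC})\,\middle\Vert\,(\Theta(\operatorname{Tr}_{B_1}\!\circ\mathcal{P}))(\rho_{RC})\right) \notag\\
&\le \mathbf{D}\!\left(\Theta(\mathcal{N})\,\middle\Vert\,\Theta(\operatorname{Tr}_{B_1}\!\circ\mathcal{P})\right) \notag\\
&\le \mathbf{D}\!\left(\mathcal{N}\,\middle\Vert\,\operatorname{Tr}_{B_1}\!\circ\mathcal{P}\right),
\end{align}
where the first line is $\mathbf{E}^u$ evaluated at our particular extension, the equality uses the two marginal identities above (after identifying $D_1\cong D_2\cong D$), the third line is the definition of the generalized channel divergence as a supremum over inputs (both $\Theta(\mathcal{N})$ and $\Theta(\operatorname{Tr}_{B_1}\circ\mathcal{P})$ being channels $C\to C'D$), and the last line is the contraction of the channel divergence under $\Theta$ (Theorem~\ref{theo:gen_div_channel_data_proc}). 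Taking the infimum over $\mathcal{P}\in\operatorname{Ext}(\mathcal{N})$ turns the right-hand side into $2\,\mathbf{E}^u(\mathcal{N}_{A\to B})$, and since the bound holds for every $\rho_{RC}$, taking the supremum over $\rho_{RC}$ yields~\eqref{eq:unext_ent_state_le_unext_ent_ch_gen}.

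The main obstacle I anticipate is the correct bookkeeping of $C'$: Alice's local output must be kept single and grouped with $R$ on the ``$A$'' side, so that only $D$ is duplicated by $\Upsilon$ and the resulting $\tilde{\sigma}$ is an extension \emph{with respect to $D$} alone. The crucial point that closes the argument is verifying, via Proposition~\ref{prop:two_ext_supch_ch_marginal}, that the ``other marginal'' $\operatorname{Tr}_{D_1}[\tilde{\sigma}]$ coincides exactly with the output of the once-traced channel $\Theta(\operatorname{Tr}_{B_1}\circ\mathcal{P})$ on the same input $\rho_{RC}$; this is what lets me pass from a state divergence to a channel divergence and then invoke the superchannel contraction. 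A secondary check is that $\operatorname{Tr}_{B_1}\circ\mathcal{P}$ is genuinely a channel---it is CPTP because $\mathcal{P}$, being an extension of $\mathcal{N}$, is itself CPTP---so that every channel divergence appearing above is well defined.
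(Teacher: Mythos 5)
Your proof is correct and follows essentially the same route as the paper: both use the two-extension $\Upsilon$ and Proposition~\ref{prop:two_ext_supch_ch_marginal} to turn $\Upsilon(\mathcal{P})(\rho_{RC})$ into an extension of $\sigma_{RC'D}$ with respect to $D$, then pass to the channel divergence and invoke Theorem~\ref{theo:gen_div_channel_data_proc}. The only (immaterial) difference is that you bound the state divergence by the channel divergence pointwise before taking the infimum over $\mathcal{P}$, which lets you skip the explicit max-min inequality the paper uses to swap $\sup_{\rho}$ and $\inf_{\mathcal{P}}$.
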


\begin{proof}
    Let $\Theta_{(A\to B)\to (C\to C'D)}$ be a two-extendible superchannel, and let $\Upsilon_{(A\to B_1B_2)\to (C\to C'D_1D_2)}$ be a two-extension of it. Consider the following state:
    \begin{equation}
        \sigma_{RC'D} \coloneqq \left(\Theta_{(A\to B)\to (C\to C'D)}\!\left(\mathcal{N}_{A\to B}\right)\right)\left(\rho_{RC}\right),
    \end{equation}
    where $\mathcal{N}_{A\to B}$ is a point-to-point channel. 
    Let $\mathcal{P}_{A\to B_1B_2}$ be an arbitrary extension of the point-to-point channel $\mathcal{N}_{A\to B}$. Proposition~\ref{prop:two_ext_supch_ch_marginal} implies the following equality:
    \begin{align}
        & \operatorname{Tr}_{D_2}\!\left[\left(\Upsilon\left(\mathcal{P}_{A\to B_1B_2}\right)\right)\left(\rho_{RC}\right)\right] \notag \\
        &= \left(\Theta\left(\operatorname{Tr}_{B_2}\circ\mathcal{P}_{A\to B_1B_2}\right)\right)\left(\rho_{RC}\right)\\
        &= \left(\Theta\left(\mathcal{N}_{A\to B}\right)\right)\left(\rho_{RC}\right)\\
        &= \sigma_{RC'D}.
    \end{align}
    This implies that $\Upsilon\left(\mathcal{P}_{A\to B_1B_2}\right)\left(\rho_{RC}\right)$ is an extension of $\sigma_{RC'D}$ with respect to the partition $RC':D$. By definition, the unextendible entanglement of the state $\sigma_{RC'D}$ satisfies the following inequality:
    \begin{align}
        &\mathbf{E}^u\!\left(\sigma_{RC':D}\right) \notag\\
        &\le \inf_{\mathcal{P}\in \operatorname{Ext}\left(\mathcal{N}\right)}\frac{1}{2}\mathbf{D}\!\left(\sigma_{RC'D}\middle \Vert\operatorname{Tr}_{D_1}\!\left[\Upsilon\left(\mathcal{P}\right)\left(\rho_{RC}\right)\right]\right)\\
        &= \inf_{\mathcal{P}\in \operatorname{Ext}\left(\mathcal{N}\right)}\frac{1}{2}\mathbf{D}\!\left(\sigma_{RC'D} \middle \Vert\left[\Theta\left(\operatorname{Tr}_{B_1}\circ\mathcal{P}\right)\left(\rho_{RC}\right)\right]\right).
    \end{align}
    Taking a supremum over every input state $\rho_{RC}$, we arrive at the following relations:
    \begin{align}
        &\sup_{\rho_{RC}}\mathbf{E}^u\!\left(\sigma_{RC':D}\right) \notag\\
        &\le \sup_{\rho_{RC}}\inf_{\mathcal{P}\in \operatorname{Ext}\left(\mathcal{N}\right)}\frac{1}{2}\mathbf{D}\!\left(\sigma \middle \Vert\left(\Theta\left(\operatorname{Tr}_{B_1}\circ\mathcal{P}_{A\to B_1B_2}\right)\right)\left(\rho_{RC}\right)\right)\notag\\
        &= \sup_{\rho_{RC}}\inf_{\mathcal{P}\in \operatorname{Ext}\left(\mathcal{N}\right)} \frac{1}{2}\mathbf{D}\!\left(\left(\Theta\left(\mathcal{N}\right)\right)\left(\rho\right) \middle \Vert\left(\Theta\left(\operatorname{Tr}_{B_1}\circ\mathcal{P}\right)\right)\left(\rho\right)\right) \notag\\
        &\le \inf_{\mathcal{P}\in \operatorname{Ext}\left(\mathcal{N}\right)}\sup_{\rho_{RC}} \frac{1}{2}\mathbf{D}\!\left(\left(\Theta\left(\mathcal{N}\right)\right)\left(\rho\right) \middle \Vert\left(\Theta\left(\operatorname{Tr}_{B_1}\circ\mathcal{P}\right)\right)\left(\rho\right)\right)\notag\\
        &= \inf_{\mathcal{P}\in\operatorname{Ext}\left(\mathcal{N}\right)}\frac{1}{2}\mathbf{D}\!\left(\Theta\left(\mathcal{N}\right)\middle \Vert\Theta\left(\operatorname{Tr}_{B_1}\circ\mathcal{P}\right)\right)\notag\\
        &\le \inf_{\mathcal{P}\in\operatorname{Ext}\left(\mathcal{N}\right)} \frac{1}{2}\mathbf{D}\!\left(\mathcal{N}_{A\to B}\middle \Vert\operatorname{Tr}_{B_1}\circ\mathcal{P}_{A\to B_1B_2}\right)\notag\\
        &=\mathbf{E}^u\!\left(\mathcal{N}_{A\to B}\right),
    \end{align}
    where the second inequality follows from the max-min inequality, the second equality follows from the definition of generalized divergence of channels, the penultimate inequality follows from the data-processing inequality in Theorem~\ref{theo:gen_div_channel_data_proc}, and the final equality follows from the definition of the unextendible entanglement of  $\mathcal{N}_{A\to B}$. 
\end{proof}

\subsubsection{\texorpdfstring{$\alpha$}{Alpha}-geometric unextendible entanglement of quantum channels}

In this section, we consider the unextendible entanglement induced by the  geometric R\'enyi relative entropy:
\begin{multline}\label{eq:geo_unext_ent_ch_def}
    \widehat{E}^u_{\alpha}\!\left(\mathcal{N}_{A\to B}\right) \coloneqq \inf_{\mathcal{P}_{A\to B_1B_2}}\Big\{\widehat{D}_{\alpha}\!\left(\mathcal{N}_{A\to B}\Vert\operatorname{Tr}_{B_1}\circ\mathcal{P}_{A\to B_1B_2}\right)\\: \operatorname{Tr}_{B_2}\circ\mathcal{P}_{A\to B_1B_2} = \mathcal{N}_{A\to B}\Big \} \quad \forall \alpha \in (0,2],
\end{multline}
where the infimum is taken over every channel $\mathcal{P}_{A\to B_1B_2}$.
We shall refer to this quantity as the $\alpha$-geometric unextendible entanglement of channels following the convention used for unextendible entanglement of states. We list some important properties of the $\alpha$-geometric unextendible entanglement of quantum channels.
\begin{enumerate}
    \item \textbf{Subadditivity:} The $\alpha$-geometric unextendible entanglement of a channel is subadditive with respect to tensor products of channels, for all $\alpha \in (0,2]$ (Proposition~\ref{prop:unext_ent_subadditive}).
    \item \textbf{Monotonicity in $\alpha$:} The $\alpha$-geometric unextendible entanglement of a channel increases monotonically with increasing $\alpha$ (Proposition~\ref{prop:geo_unext_ent_monotonic}).
    \item \textbf{Computable via SDP:} For a positive integer $\ell$, and $\alpha = 1+2^{-\ell}$, the $\alpha$-geometric unextendible entanglement of a quantum channel can be computed using a semidefinite program (Proposition~\ref{prop:SDP_for_geo_unext_ent}).
\end{enumerate}

The $\alpha$-geometric unextendible entanglement of quantum channels is subadditive under tensor products of channels, due to the additive property of the underlying divergence (see Lemma~\ref{lemma:geo_div}).

\begin{proposition}[Subadditivity]\label{prop:unext_ent_subadditive}
For every two quantum channels $\mathcal{N}^{1}_{A\to B}$ and $\mathcal{N}^{2}_{A\to B'}$, the $\alpha$-geometric unextendible entanglement is subadditive for all $\alpha \in (0,2]$:
    \begin{equation}\label{eq:geo_unext_ent_dec_tensor}
        \widehat{E}^u_\alpha(\mathcal{N}^{1}_{A\to B}\otimes\mathcal{N}^{2}_{A'\to B'}) \le \widehat{E}^u_\alpha(\mathcal{N}^{1}_{A\to B}) + \widehat{E}^u_\alpha(\mathcal{N}^{2}_{A'\to B'}).
    \end{equation}
\end{proposition}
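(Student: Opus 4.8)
The plan is to exploit the fact that $\widehat{E}^u_\alpha$ is defined as an infimum over channel extensions, combined with the additivity of the $\alpha$-geometric R\'enyi relative entropy of channels under tensor products recorded in~\eqref{eq:geo_div_additive}. Since the quantity is an infimum, it suffices to exhibit a single feasible extension of the product channel whose associated channel divergence equals the sum of the two individual divergences; the infimum then delivers the desired upper bound.

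First I would fix arbitrary extensions $\mathcal{P}^1_{A\to B_1B_2}\in\operatorname{Ext}(\mathcal{N}^1)$ and $\mathcal{P}^2_{A'\to B'_1B'_2}\in\operatorname{Ext}(\mathcal{N}^2)$, so that $\operatorname{Tr}_{B_2}\circ\mathcal{P}^1=\mathcal{N}^1$ and $\operatorname{Tr}_{B'_2}\circ\mathcal{P}^2=\mathcal{N}^2$. I would then form the product map $\mathcal{P}^1\otimes\mathcal{P}^2$ from $AA'$ to $B_1B_2B'_1B'_2$ and regroup the four output systems into two copies of the combined output $BB'$ by declaring $(BB')_1\coloneqq B_1B'_1$ and $(BB')_2\coloneqq B_2B'_2$. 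The key verification is that, read with respect to the partition $(BB')_1:(BB')_2$, this map is a genuine extension of $\mathcal{N}^1\otimes\mathcal{N}^2$, since tracing out the second copy factorizes as
\begin{equation}
\operatorname{Tr}_{B_2B'_2}\circ(\mathcal{P}^1\otimes\mathcal{P}^2) = (\operatorname{Tr}_{B_2}\circ\mathcal{P}^1)\otimes(\operatorname{Tr}_{B'_2}\circ\mathcal{P}^2) = \mathcal{N}^1\otimes\mathcal{N}^2,
\end{equation}
hence $\mathcal{P}^1\otimes\mathcal{P}^2\in\operatorname{Ext}(\mathcal{N}^1\otimes\mathcal{N}^2)$.

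Next, using this particular extension as a feasible point in the infimum defining $\widehat{E}^u_\alpha(\mathcal{N}^1\otimes\mathcal{N}^2)$ in~\eqref{eq:geo_unext_ent_ch_def}, and noting that the complementary marginal also factorizes, I would write
\begin{align}
\widehat{E}^u_\alpha(\mathcal{N}^1\otimes\mathcal{N}^2)
&\le \widehat{D}_\alpha\!\left(\mathcal{N}^1\otimes\mathcal{N}^2 \middle\Vert \operatorname{Tr}_{B_1B'_1}\circ(\mathcal{P}^1\otimes\mathcal{P}^2)\right) \notag\\
&= \widehat{D}_\alpha\!\left(\mathcal{N}^1\otimes\mathcal{N}^2 \middle\Vert (\operatorname{Tr}_{B_1}\circ\mathcal{P}^1)\otimes(\operatorname{Tr}_{B'_1}\circ\mathcal{P}^2)\right).
\end{align}
Invoking additivity of the $\alpha$-geometric R\'enyi relative entropy of channels from~\eqref{eq:geo_div_additive}, valid for $\alpha\in(0,2]$, the right-hand side splits as $\widehat{D}_\alpha(\mathcal{N}^1\Vert\operatorname{Tr}_{B_1}\circ\mathcal{P}^1) + \widehat{D}_\alpha(\mathcal{N}^2\Vert\operatorname{Tr}_{B'_1}\circ\mathcal{P}^2)$.

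Finally, since $\mathcal{P}^1$ and $\mathcal{P}^2$ were arbitrary, I would take the infimum over $\mathcal{P}^1\in\operatorname{Ext}(\mathcal{N}^1)$ and $\mathcal{P}^2\in\operatorname{Ext}(\mathcal{N}^2)$ independently on the right-hand side, reproducing $\widehat{E}^u_\alpha(\mathcal{N}^1)+\widehat{E}^u_\alpha(\mathcal{N}^2)$ and establishing~\eqref{eq:geo_unext_ent_dec_tensor}. I expect the only genuine subtlety to be the bookkeeping in the regrouping step, namely reading the product extension against the correct bipartition of the doubled combined output; once that is settled, additivity of the channel divergence does all the remaining work. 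For $\alpha>1$ I would also remark that if either individual measure equals $+\infty$ the bound is trivial, while if both are finite the requisite support containment $\operatorname{supp}(\Gamma^{\mathcal{N}})\subseteq\operatorname{supp}(\Gamma^{\mathcal{M}})$ is inherited by the tensor product, so that additivity applies without obstruction.
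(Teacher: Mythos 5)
Your proof is correct and follows essentially the same route as the paper: form the tensor product of two arbitrary extensions, observe it is a feasible extension of the product channel with respect to the regrouped bipartition, invoke the additivity of the $\alpha$-geometric R\'enyi channel divergence from~\eqref{eq:geo_div_additive}, and take infima. The only cosmetic difference is the normalization factor of $\tfrac{1}{2}$ (which appears uniformly on both sides and so does not affect the argument), and your closing remark on the $\alpha>1$ support condition is a harmless addition not present in the paper's proof.
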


\begin{proof}
    Let $\mathcal{P}^{1}_{A\to B_1B_2}$ and $\mathcal{P}^{2}_{A'\to B'_1B'_2}$ be arbitrary extensions of the channels $\mathcal{N}^{1}_{A\to B}$ and $\mathcal{N}^{2}_{A'\to B'}$. As such, 
    \begin{align}
        \mathcal{N}^{1}_{A\to B} &= \operatorname{Tr}_{B_2}\circ\mathcal{P}^{1}_{A\to B_1B_2} , \\
        \mathcal{N}^{2}_{A'\to B'} &= \operatorname{Tr}_{B'_2}\circ\mathcal{P}^{2}_{A'\to B'_1B'_2}.
    \end{align}
    The tensor-product channel $\mathcal{P}^{1}_{A\to B_1B_2}\otimes\mathcal{P}^{2}_{A'\to B'_1B'_2}$ is an extension of the channel $\mathcal{N}^{1}_{A\to B}\otimes\mathcal{N}^{2}_{A'\to B'}$ because
    \begin{equation}
        \mathcal{N}^{1}_{A\to B}\otimes\mathcal{N}^{2}_{A'\to B'} = \operatorname{Tr}_{B_2B'_2}\circ\left(\mathcal{P}^{1}_{A\to B_1B_2}\otimes\mathcal{P}^{2}_{A'\to B'_1B'_2}\right).
    \end{equation}
    Note that in this case we are interested in the extendibility of the joint system $BB'$ with respect to the joint system $AA'$. Then consider that
    \begin{align}
        &\widehat{E}^u_{\alpha}\!\left(\mathcal{N}^{1}\otimes\mathcal{N}^{2}\right)\\
        &\le \frac{1}{2}\widehat{D}_{\alpha}\!\left(\mathcal{N}^{1}\otimes\mathcal{N}^{2}\Vert\operatorname{Tr}_{B_1B'_1}\circ\left(\mathcal{P}^{1}\otimes\mathcal{P}^{2}\right)\right)\\
        &= \frac{1}{2}\widehat{D}_{\alpha}\!\left(\mathcal{N}^{1}\Vert\operatorname{Tr}_{B_1}\circ\mathcal{P}^{1}\right) + \frac{1}{2}\widehat{D}_{\alpha}\!\left(\mathcal{N}^{2}\Vert\operatorname{Tr}_{B'_1}\circ\mathcal{P}^{2}\right),\label{eq:geo_unext_ent_tensor_ge_geo_rel_ent_sum}
    \end{align}
    where the last equality follows from the additive property of $\alpha$-geometric R\'enyi relative entropy of channels in~\eqref{eq:geo_div_additive}. Since~\eqref{eq:geo_unext_ent_tensor_ge_geo_rel_ent_sum} holds for arbitrary extensions $\mathcal{P}^{1}_{A\to B_1B_2}$ and $\mathcal{P}^{2}_{A'\to B'_1B'_2}$, we can take the infimum over both and conclude~\eqref{eq:geo_unext_ent_dec_tensor}.
\end{proof}

\begin{proposition}\label{prop:geo_unext_ent_monotonic}
For all $\alpha \in (0,2]$,
    the $\alpha$-geometric unextendible entanglement is monotonic in $\alpha$, i.e.,
    \begin{equation}
        2 \ge \alpha \ge \beta > 0 \quad \Rightarrow  \quad \widehat{E}^u_{\alpha}\!\left(\mathcal{N}_{A\to B}\right) \ge \widehat{E}^u_{\beta}\!\left(\mathcal{N}_{A\to B}\right). 
    \end{equation}
\end{proposition}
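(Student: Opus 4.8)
The plan is to inherit the monotonicity directly from the corresponding property of the underlying channel divergence, rather than to analyze the unextendible entanglement from scratch. The key structural observation is that the feasible set appearing in the definition~\eqref{eq:geo_unext_ent_ch_def} of $\widehat{E}^u_\alpha(\mathcal{N}_{A\to B})$ --- namely the collection $\operatorname{Ext}(\mathcal{N})$ of extension channels $\mathcal{P}_{A\to B_1B_2}$ satisfying $\operatorname{Tr}_{B_2}\circ\mathcal{P}_{A\to B_1B_2}=\mathcal{N}_{A\to B}$ --- does \emph{not} depend on $\alpha$. Only the objective functional $\widehat{D}_\alpha(\mathcal{N}_{A\to B}\Vert\operatorname{Tr}_{B_1}\circ\mathcal{P}_{A\to B_1B_2})$ carries the dependence on $\alpha$, and this functional is already known to be monotone in $\alpha$ at the level of channels.

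Concretely, first I would fix an arbitrary extension $\mathcal{P}_{A\to B_1B_2}\in\operatorname{Ext}(\mathcal{N})$, so that $\operatorname{Tr}_{B_1}\circ\mathcal{P}_{A\to B_1B_2}$ is a fixed completely positive map. Invoking the monotonicity in $\alpha$ of the geometric R\'enyi relative entropy of channels (item~1 of Lemma~\ref{lemma:geo_div}), I obtain, for $2\ge\alpha\ge\beta>0$,
\begin{equation}
    \widehat{D}_\alpha\!\left(\mathcal{N}_{A\to B}\middle\Vert\operatorname{Tr}_{B_1}\circ\mathcal{P}_{A\to B_1B_2}\right)\ge\widehat{D}_\beta\!\left(\mathcal{N}_{A\to B}\middle\Vert\operatorname{Tr}_{B_1}\circ\mathcal{P}_{A\to B_1B_2}\right).
\end{equation}

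Next, since this pointwise inequality holds for \emph{every} fixed $\mathcal{P}_{A\to B_1B_2}\in\operatorname{Ext}(\mathcal{N})$, and since the infimum of a functional that dominates another dominates the infimum of the latter over a common domain, I would take the infimum over $\operatorname{Ext}(\mathcal{N})$ on both sides. Using the $\alpha$-independence of the feasible set, this directly yields
\begin{equation}
    \widehat{E}^u_\alpha(\mathcal{N}_{A\to B})\ge\widehat{E}^u_\beta(\mathcal{N}_{A\to B}),
\end{equation}
which is the claimed monotonicity.

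I do not expect any genuine obstacle here: the argument reduces to pointwise monotonicity of $\widehat{D}_\alpha$ combined with the elementary fact that taking infima over a fixed domain preserves inequalities. The only points requiring care are bookkeeping ones --- ensuring one stays within the range $\alpha\in(0,2]$, the regime in which $\widehat{D}_\alpha$ obeys data processing and hence the monotonicity from Lemma~\ref{lemma:geo_div} applies, and noting that the positive multiplicative constant $\tfrac12$ in the definition of the measure is immaterial to the direction of the inequality.
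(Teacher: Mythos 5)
Your proposal is correct and follows essentially the same route as the paper: fix an arbitrary extension $\mathcal{P}_{A\to B_1B_2}\in\operatorname{Ext}(\mathcal{N})$, apply the $\alpha$-monotonicity of the geometric R\'enyi channel divergence from Lemma~\ref{lemma:geo_div} pointwise, and then take the infimum over the $\alpha$-independent feasible set. No gaps.
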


\begin{proof}
    This is a direct consequence of the monotonicity of geometric R\'enyi relative entropy of channels in $\alpha$, which itself is a direct consequence of the $\alpha$-monotonicity of the geometric R\'enyi relative entropy for states~\cite[Eq.~(6.16)]{Katariya2021}. Let $\mathcal{P}_{A\to B_1B_2}$ be an arbitrary extension of $\mathcal{N}_{A\to B}$. Then
    \begin{align}
        & \frac{1}{2}\widehat{D}_{\alpha}\!\left(\mathcal{N}_{A\to B}\Vert\operatorname{Tr}_{B_1}\circ\mathcal{P}_{A\to B_1B_2}\right) \notag \\
        &\ge \frac{1}{2}\widehat{D}_{\beta}\!\left(\mathcal{N}_{A\to B}\Vert\operatorname{Tr}_{B_1}\circ\mathcal{P}_{A\to B_1B_2}\right)\\
        &\ge \widehat{E}^u_{\beta}\!\left(\mathcal{N}_{A\to B}\right),
    \end{align}
    where the first inequality follows from the monotonicity of geometric R\'enyi relative entropy in $\alpha$ and the last inequality follows from the definition of $\beta$-geometric R\'enyi unextendible entanglement of channels. Since the inequality holds for every extension, we conclude the desired claim by taking the infimum over all such extensions. 
\end{proof}

\begin{remark}
\label{rem:def-min-unext}
    The smallest quantity in the family of $\alpha$-geometric unextendible entanglement is achieved in the limit $\alpha\to 0^+$, as implied by Proposition \ref{prop:geo_unext_ent_monotonic}. We call this quantity the \mbox{min-geometric} unextendible entanglement and define it as follows:
    \begin{equation}\label{eq:min_unext_ent_ch_def}
        \widehat{E}^u_{\operatorname{min}}\!\left(\mathcal{N}_{A\to B}\right) \coloneqq \lim_{\alpha\to 0^+}\widehat{E}^u_{\alpha}\!\left(\mathcal{N}_{A\to B}\right).
    \end{equation}
\end{remark}

\begin{proposition}
\label{prop:limit-zero-min-geo}
    The min-geometric unextendible entanglement of a channel is the unextendible entanglement induced by the $\alpha$-geometric R\'enyi relative entropy as $\alpha \to 0$:
    \begin{equation}
         \widehat{E}^u_{\operatorname{min}}\!\left(\mathcal{N}_{A\to B}\right)
         = \inf_{\mathcal{P}\in \operatorname{Ext}\left(\mathcal{N}\right)}\sup_{\psi_{RA}} \widehat{D}_{0}\!\left(\mathcal{N}\!\left(\psi\right)\Vert\operatorname{Tr}_{B_1}\!\left[\mathcal{P}\!\left(\psi\right)\right]\right).
    \end{equation}
\end{proposition}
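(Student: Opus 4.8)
The plan is to follow the same template as the preceding state-level result, pushing the limit $\alpha\to 0^+$ through the nested optimizations defining $\widehat{E}^u_{\alpha}$, the only new feature being the supremum over input states that is buried inside the channel divergence. First I would invoke the monotonicity of $\widehat{E}^u_{\alpha}$ in $\alpha$ (Proposition~\ref{prop:geo_unext_ent_monotonic}) to turn the defining limit into an infimum, $\widehat{E}^u_{\min}(\mathcal{N}) = \inf_{\alpha\in(0,1)}\widehat{E}^u_{\alpha}(\mathcal{N})$. Since every such $\alpha$ lies in the data-processing range, I would then expand each term using the definition of the channel divergence over pure states with $R\cong A$, writing
\begin{equation}
    \widehat{E}^u_{\min}(\mathcal{N}) = \inf_{\alpha\in(0,1)}\ \inf_{\mathcal{P}\in\operatorname{Ext}(\mathcal{N})}\ \sup_{\psi_{RA}}\ \widehat{D}_{\alpha}\!\left(\mathcal{N}(\psi)\,\middle\Vert\,\operatorname{Tr}_{B_1}\!\left[\mathcal{P}(\psi)\right]\right).
\end{equation}
The two infima commute freely, so after exchanging them the problem reduces, for each fixed extension $\mathcal{P}$, to establishing that $\inf_{\alpha}\sup_{\psi}\widehat{D}_{\alpha} = \sup_{\psi}\inf_{\alpha}\widehat{D}_{\alpha}$; once this interchange is in hand, monotonicity of $\widehat{D}_{\alpha}$ in $\alpha$ (Lemma~\ref{lem:geo-mono}) identifies the inner infimum with $\widehat{D}_{0}$ through~\eqref{eq:min-geometric-div}, yielding the claimed expression.

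The interchange of $\inf_{\alpha}$ and $\sup_{\psi}$ is the main obstacle. One direction is free: the max--min inequality gives $\sup_{\psi}\inf_{\alpha}\widehat{D}_{\alpha}\le \inf_{\alpha}\sup_{\psi}\widehat{D}_{\alpha}$, which translates into the ``$\ge$'' half of the proposition (equivalently, from $\widehat{D}_{\alpha}\ge\widehat{D}_{0}$ pointwise one reads off $\widehat{E}^u_{\min}(\mathcal{N})\ge \inf_{\mathcal{P}}\sup_{\psi}\widehat{D}_{0}$ immediately). For the reverse bound I would run a compactness argument of Dini type: the family $\psi\mapsto\widehat{D}_{\alpha}(\mathcal{N}(\psi)\Vert\operatorname{Tr}_{B_1}[\mathcal{P}(\psi)])$ is monotonically nonincreasing as $\alpha\downarrow 0$, and the set of pure input states $\psi_{RA}$ with $R\cong A$ is compact. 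Fixing a sequence $\alpha_{n}\downarrow 0$ and near-optimizers $\psi_{n}$ for the corresponding suprema, passing to a convergent subsequence $\psi_{n}\to\psi^{\star}$, and combining monotonicity in $\alpha$ with upper semicontinuity of each $\widehat{D}_{\alpha_m}$ in the state, I would obtain $\widehat{D}_{\alpha_m}(\mathcal{N}(\psi^{\star})\Vert\operatorname{Tr}_{B_1}[\mathcal{P}(\psi^{\star})])\ge \inf_{\alpha}\sup_{\psi}\widehat{D}_{\alpha}$ for every $m$, and hence, letting $m\to\infty$, $\widehat{D}_{0}(\mathcal{N}(\psi^{\star})\Vert\cdots)\ge \inf_{\alpha}\sup_{\psi}\widehat{D}_{\alpha}$, which is the required reverse inequality.

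The step demanding the most care, and the one I expect to be the genuine crux, is verifying the upper semicontinuity in $\psi$ of the map $\psi\mapsto\widehat{D}_{\alpha}(\mathcal{N}(\psi)\Vert\operatorname{Tr}_{B_1}[\mathcal{P}(\psi)])$ for $\alpha\in(0,1)$. The difficulty is that the geometric R\'enyi quasi-entropy in~\eqref{eq:geo_rel_quasi_ent} is built from projections onto the support of its second argument, and this support can change rank as $\psi$ varies, a potential source of discontinuity. I would address this by working in the data-processing regime $\alpha\in(0,1)$, where $\widehat{Q}_{\alpha}$ stays finite and is controlled by the continuous map $x\mapsto x^{\alpha}$, and by checking that, at the exceptional states where the support of the second argument drops, the Schur-complement expression $\tilde{\omega}$ in~\eqref{eq:projected_geo_ent_state} can only force the divergence to jump upward, so that the map is upper semicontinuous. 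With this semicontinuity in place, the Dini-type argument closes the reverse inequality and completes the identification of $\widehat{E}^u_{\min}(\mathcal{N})$ with the stated $\widehat{D}_{0}$-optimization.
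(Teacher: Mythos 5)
Your proposal follows the same skeleton as the paper's proof: use Proposition~\ref{prop:geo_unext_ent_monotonic} to rewrite the defining limit as an infimum over $\alpha$, commute the two infima, reduce the claim to interchanging $\inf_{\alpha}$ with $\sup_{\psi_{RA}}$ for each fixed extension $\mathcal{P}$, and then identify the inner infimum over $\alpha$ with $\widehat{D}_{0}$ via Lemma~\ref{lem:geo-mono} and \eqref{eq:min-geometric-div}. Where you differ is only in how the interchange is justified: the paper cites the Mosonyi--Hiai minimax theorem together with a semicontinuity statement for $\widehat{D}_{\alpha}$ taken from the literature, whereas you unpack that theorem into its underlying compactness argument (your Dini-type subsequence extraction is essentially the standard proof of that minimax lemma). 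Your observation that the max--min inequality already yields the ``$\geq$'' half for free, so that only the reverse inequality carries content, matches the logic of the paper.

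The gap is exactly where you flag it, and the heuristic you offer does not close it. Your subsequence argument needs $\psi \mapsto \widehat{D}_{\alpha}\!\left(\mathcal{N}(\psi)\Vert\operatorname{Tr}_{B_1}\!\left[\mathcal{P}(\psi)\right]\right)$ to be \emph{upper} semicontinuous for each fixed $\alpha \in (0,1)$: the compact variable carries a supremum and the net in $\alpha$ is decreasing as $\alpha \downarrow 0$, so after negation the minimax lemma demands lower semicontinuity of $-\widehat{D}_{\alpha}$. But the natural semicontinuity of the geometric R\'enyi relative entropy goes the other way: by \eqref{eq:geo_rel_quasi_ent}, $\widehat{D}_{\alpha}(\omega\Vert\tau)$ for $\alpha\in(0,1)$ arises as a monotone limit over $\tau_{\varepsilon} = \tau + \varepsilon I$ as $\varepsilon \to 0^{+}$, and since $\widehat{D}_{\alpha}(\omega\Vert\cdot)$ is antitone in its second argument this exhibits $\widehat{D}_{\alpha}$ as a supremum of functions continuous in $(\omega,\tau)$ --- which gives lower semicontinuity for free and says nothing about upper semicontinuity. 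Your proposed fix (``the Schur complement can only force the divergence to jump upward'') is precisely the assertion of upper semicontinuity at the exceptional states, restated rather than proved; the rank-dropping behaviour of $\Pi_{\tau}$ and $\tilde{\omega}$ in \eqref{eq:projected_geo_ent_state} is exactly where such a claim could fail and needs an actual argument or a citation. The paper discharges this step by invoking the Mosonyi--Hiai corollary with the cited semicontinuity of $\widehat{D}_{\alpha}$; if you want a self-contained proof along your lines, the missing lemma you must supply is the upper semicontinuity (or outright continuity on the relevant domain) of $\widehat{D}_{\alpha}$ in the input state for $\alpha \in (0,1)$. Until that is established, the reverse inequality --- and hence the proposition --- is not proved.
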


\begin{proof}
    Invoking the definition of min-geometric unextendible entanglement,
    \begin{align}
        &\widehat{E}^u_{\operatorname{min}}\!\left(\mathcal{N}_{A\to B}\right)\notag\\
        &= \lim_{\alpha\to 0^+}\widehat{E}^u_{\alpha}\!\left(\mathcal{N}_{A\to B}\right)\\
        &= \inf_{\alpha \in (0,2]} \inf_{\mathcal{P}\in \operatorname{Ext}\left(\mathcal{N}\right)}\sup_{\psi_{RA}} \widehat{D}_{\alpha}\!\left(\mathcal{N}\!\left(\psi\right)\Vert\operatorname{Tr}_{B_1}\!\left[\mathcal{P}\!\left(\psi\right)\right]\right)\\
        &= \inf_{\mathcal{P}\in \operatorname{Ext}\left(\mathcal{N}\right)}\inf_{\alpha \in (0,2]}\sup_{\psi_{RA}} \widehat{D}_{\alpha}\!\left(\mathcal{N}\!\left(\psi\right)\Vert\operatorname{Tr}_{B_1}\!\left[\mathcal{P}\!\left(\psi\right)\right]\right),
    \end{align}
    where the second equality follows from the monotonicity of $\alpha$-geometric unextendible entanglement with $\alpha$ and the third equality is arrived at by exchanging the infimums.
    
    The $\alpha$-geometric R\'enyi relative entropy $\widehat{D}_{\alpha}\!\left(\rho\Vert\sigma\right)$ is lower semi-continuous in $\left(\rho,\sigma\right)$~\cite[Lemma~A.3]{Fawzi2021}, and increases monotonically in $\alpha$  in the range $(0,2]$. Hence, we can employ the Mosonyi--Hiai minimax theorem from~\cite[Corollary~A.2]{MH11} (up to a minus sign on the outside therein) and establish that
    \begin{align}
        &\inf_{\mathcal{P}\in \operatorname{Ext}\left(\mathcal{N}\right)}\inf_{\alpha \in (0,2]}\sup_{\psi_{RA}} \widehat{D}_{\alpha}\!\left(\mathcal{N}\!\left(\psi\right)\Vert\operatorname{Tr}_{B_1}\!\left[\mathcal{P}\!\left(\psi\right)\right]\right)\notag \\
        &= \inf_{\mathcal{P}\in \operatorname{Ext}\left(\mathcal{N}\right)}\sup_{\psi_{RA}}\inf_{\alpha \in (0,2]} \widehat{D}_{\alpha}\!\left(\mathcal{N}\!\left(\psi\right)\Vert\operatorname{Tr}_{B_1}\!\left[\mathcal{P}\!\left(\psi\right)\right]\right)\\
        &= \inf_{\mathcal{P}\in \operatorname{Ext}\left(\mathcal{N}\right)}\sup_{\psi_{RA}}\lim_{\alpha \to 0^+} \widehat{D}_{\alpha}\!\left(\mathcal{N}\!\left(\psi\right)\Vert\operatorname{Tr}_{B_1}\!\left[\mathcal{P}\!\left(\psi\right)\right]\right) \\
        &= \inf_{\mathcal{P}\in \operatorname{Ext}\left(\mathcal{N}\right)}\sup_{\psi_{RA}} \widehat{D}_{0}\!\left(\mathcal{N}\!\left(\psi\right)\Vert\operatorname{Tr}_{B_1}\!\left[\mathcal{P}\!\left(\psi\right)\right]\right) .
    \end{align}
    This concludes the proof.
\end{proof}

\begin{proposition}\label{prop:geo_unext_ent_converge_Bel_Stas}
     The $\alpha$-geometric unextendible entanglement of quantum channels converges to the unextendible entanglement induced by the Belavkin--Staszewski relative entropy as $\alpha \to 1$:
     \begin{align}\label{eq:geo_unext_ent_converge_Bel_Stas}
         & \lim_{\alpha \to 1} \widehat{E}^u_{\alpha}\!\left(\mathcal{N}_{A\to B}\right) \notag \\
         & = \widehat{E}^u\!\left(\mathcal{N}_{A\to B}\right) \\
         & \coloneqq \inf_{\mathcal{P}\in \operatorname{Ext}\left(\mathcal{N}\right)} \widehat{D}\!\left(\mathcal{N}_{A\to B}\Vert \operatorname{Tr}_{B_1}\circ\mathcal{P}_{A\to B_1B_2}\right) .
         \label{eq:BS-rel-ent-induces-unext-ent}
     \end{align}
\end{proposition}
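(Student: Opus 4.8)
The plan is to reduce the claim to an interchange of a limit in $\alpha$ with the infimum over extensions, handling the two one-sided limits separately and squeezing them against $\widehat{E}^u(\mathcal{N})$ using monotonicity in $\alpha$. Throughout, write $g(\alpha,\mathcal{P}) \coloneqq \widehat{D}_{\alpha}\!\left(\mathcal{N}_{A\to B}\Vert\operatorname{Tr}_{B_1}\circ\mathcal{P}_{A\to B_1B_2}\right)$, so that $\widehat{E}^u_{\alpha}(\mathcal{N}) = \inf_{\mathcal{P}\in\operatorname{Ext}(\mathcal{N})} g(\alpha,\mathcal{P})$. Since $g(\alpha,\mathcal{P})$ is nondecreasing in $\alpha$ for each fixed $\mathcal{P}$ (Lemma~\ref{lemma:geo_div}), taking the infimum over $\mathcal{P}$ preserves the ordering, giving $\widehat{E}^u_{\alpha}(\mathcal{N}) \le \widehat{E}^u(\mathcal{N}) \le \widehat{E}^u_{\beta}(\mathcal{N})$ whenever $\alpha < 1 < \beta$ (this is also Proposition~\ref{prop:geo_unext_ent_monotonic}). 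Hence both one-sided limits exist and obey $\lim_{\alpha\to1^-}\widehat{E}^u_{\alpha}(\mathcal{N}) \le \widehat{E}^u(\mathcal{N}) \le \lim_{\beta\to1^+}\widehat{E}^u_{\beta}(\mathcal{N})$, and it remains to prove the two reverse inequalities.

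The right-hand limit is the easy direction. For any fixed $\mathcal{P}\in\operatorname{Ext}(\mathcal{N})$ and any $\beta > 1$, the definition of the infimum gives $\widehat{E}^u_{\beta}(\mathcal{N}) \le g(\beta,\mathcal{P})$; letting $\beta\to1^+$ and invoking the convergence of the geometric R\'enyi channel divergence to the Belavkin--Staszewski divergence in~\eqref{eq:geo_ent_ch_conv_Bel_Stas}, namely $\lim_{\beta\to1^+} g(\beta,\mathcal{P}) = \widehat{D}\!\left(\mathcal{N}\Vert\operatorname{Tr}_{B_1}\circ\mathcal{P}\right)$, yields $\lim_{\beta\to1^+}\widehat{E}^u_{\beta}(\mathcal{N}) \le \widehat{D}\!\left(\mathcal{N}\Vert\operatorname{Tr}_{B_1}\circ\mathcal{P}\right)$. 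Taking the infimum over $\mathcal{P}$ then gives $\lim_{\beta\to1^+}\widehat{E}^u_{\beta}(\mathcal{N}) \le \widehat{E}^u(\mathcal{N})$, so the right limit equals $\widehat{E}^u(\mathcal{N})$.

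The left-hand limit is the substantive step, requiring the interchange
\begin{equation}
\lim_{\alpha\to1^-}\widehat{E}^u_{\alpha}(\mathcal{N}) = \sup_{\alpha\in(0,1)}\inf_{\mathcal{P}\in\operatorname{Ext}(\mathcal{N})} g(\alpha,\mathcal{P}) = \inf_{\mathcal{P}\in\operatorname{Ext}(\mathcal{N})}\sup_{\alpha\in(0,1)} g(\alpha,\mathcal{P}).
\end{equation}
I would justify the second equality with the Mosonyi--Hiai minimax theorem~\cite[Corollary~A.2]{MH11}, exactly as in the proof of the state analogue (Proposition~\ref{prop:geo_unext_ent_st_conv_Bel_Stas}). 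The hypotheses to check are: (i) $\operatorname{Ext}(\mathcal{N})$ is compact, which in the Choi picture follows because it is the intersection of the positive semidefinite cone with the affine constraint $\operatorname{Tr}_{B_2}[\Gamma^{\mathcal{P}}] = \Gamma^{\mathcal{N}}$, a closed and bounded set; (ii) for each fixed $\alpha$, the map $\mathcal{P}\mapsto g(\alpha,\mathcal{P})$ is lower semicontinuous, which follows from the lower semicontinuity of $\widehat{D}_{\alpha}(\cdot\Vert\cdot)$ in its arguments~\cite[Lemma~A.3]{Fawzi2021} and the fact that $g(\alpha,\mathcal{P})$ is a supremum over input states of such lower semicontinuous functions; and (iii) the family $\{g(\alpha,\cdot)\}_{\alpha\in(0,1)}$ is monotone in $\alpha$ (Lemma~\ref{lemma:geo_div}). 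Granting the swap, $\sup_{\alpha\in(0,1)} g(\alpha,\mathcal{P}) = \lim_{\alpha\to1^-} g(\alpha,\mathcal{P}) = \widehat{D}\!\left(\mathcal{N}\Vert\operatorname{Tr}_{B_1}\circ\mathcal{P}\right)$ again by~\eqref{eq:geo_ent_ch_conv_Bel_Stas}, and the remaining infimum over $\mathcal{P}$ identifies the left limit with $\widehat{E}^u(\mathcal{N})$.

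Combining the two one-sided limits shows the two-sided limit exists and equals $\widehat{E}^u(\mathcal{N})$, which is~\eqref{eq:geo_unext_ent_converge_Bel_Stas}. The main obstacle is the minimax interchange in the left-hand limit: the delicate points are verifying lower semicontinuity of the geometric R\'enyi channel divergence as a function of the \emph{extension} $\mathcal{P}$ (rather than merely of its state argument) and confirming that the Mosonyi--Hiai hypotheses genuinely apply to the compact extension set, whereas the monotonicity bookkeeping and the one-sided bounds are routine.
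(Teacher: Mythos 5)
Your proposal is correct and follows essentially the same route as the paper's Appendix~\ref{app:geo_unext_ent_converg_Bel_Stas}: monotonicity in $\alpha$ converts the one-sided limits into an infimum over $(1,2]$ (where the two infima commute trivially) and a supremum over $(0,1)$ (where the Mosonyi--Hiai minimax theorem justifies the swap with the infimum over extensions), followed in both cases by the convergence of $\widehat{D}_{\alpha}$ of channels to the Belavkin--Staszewski channel divergence. The only cosmetic difference is that you derive lower semicontinuity of $\mathcal{P}\mapsto\widehat{D}_{\alpha}(\mathcal{N}\Vert\operatorname{Tr}_{B_1}\circ\mathcal{P})$ as a supremum of lower semicontinuous functions and explicitly note compactness of $\operatorname{Ext}(\mathcal{N})$, whereas the paper cites this semicontinuity directly from the literature.
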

\begin{proof}
    See Appendix~\ref{app:geo_unext_ent_converg_Bel_Stas}.
\end{proof}

\begin{remark}
    Using a semidefinite program, we can compute the $\alpha$-geometric unextendible entanglement of a channel for $\alpha = 1+2^{-\ell}$, where $\ell$ is a positive integer. This quantity closely approximates the unextendible entanglement of the channel induced by the Belavkin--Staszewski relative entropy for large enough values of $\ell$. However, it remains open to determine a semidefinite program to compute the $\alpha$-geometric unextendible entanglement for $\alpha \in (0,1)$, which makes it difficult to estimate the min-geometric unextendible entanglement of a channel computationally. 
\end{remark}

\section{Applications}\label{sec:applications}

In this section, we discuss some applications of the unextendible entanglement of a point-to-point quantum channel in establishing upper bounds on some operational quantities of interest, including a channel's exact one-way distillable key (Section~\ref{sec:prob_distill_key}), its 
 probabilistic distillable entanglement (Section~\ref{sec:prob_distill_ent}), its zero-error private capacity (Section~\ref{sec:zero_err_private}), and its zero-error quantum capacity (Section~\ref{sec:zero_error_capacity}).

 \subsection{Exact one-way distillable key of a channel}\label{sec:prob_distill_key}

Quantum key distribution (QKD)~\cite{PhysRevLett.67.661, BB84, 10.1145/382780.382781} is the process of distributing information-theoretic secret keys between two parties, for the purpose of conducting private communication after the keys are established. The key distribution task establishes a maximally classically-correlated state between Alice and Bob, and it ensures secrecy by forcing any eavesdropper's system to be completely uncorrelated with this state. As such, the joint state between Alice, Bob, and an eavesdropper holding system $E$, after an ideal key distillation protocol, can be mathematically described as follows:
\begin{equation}
    \tau_{ABE} \coloneqq \frac{1}{K}\sum_{k=1}^{K}|k\rangle\!\langle k|_A\otimes|k\rangle\!\langle k|_B\otimes\sigma_{E}. 
\end{equation}
This is called an ideal tripartite key state. Such a tripartite key state is said to hold $\log_2 K$ secret bits.

In~\cite{Horodecki_2005, Horodecki_2009}, it was shown that the task of distilling tripartite secret keys is equivalent to establishing bipartite private states between Alice and Bob, which yield a secret key between Alice and Bob upon measurement. A quantum state $\rho_{ABA'B'}$ is called a bipartite private state~\cite{Horodecki_2005, K_Horodecki_2009} if Alice and Bob can extract a maximally classically-correlated state by applying local measurements, such that the resulting state is in product form with any purifying system of $\rho_{ABA'B'}$:  
\begin{multline}
    \left(\mathcal{M}_A\otimes\mathcal{M}_B\otimes \operatorname{Tr}_{A'B'}\right)\left(\psi^{\rho}_{ABA'B'E}\right) \\= \frac{1}{K}\sum_{k=1}^K|k\rangle\!\langle k|_A\otimes|k\rangle\!\langle k|_B\otimes \sigma_E,
    \label{eq:bi-priv-state-to-tri-key-state}
\end{multline}    
where $\psi^{\rho}_{ABA'B'E}$ is a purification of the state $\rho_{ABA'B'}$, $\mathcal{M}\!\left(\cdot\right) \coloneqq  \sum_k|k\rangle\!\langle k|\left(\cdot\right)|k\rangle\!\langle k|$ is a projective measurement channel, and $\sigma_E$ is an arbitrary state of the purifying system $E$.  Here $A$ and $B$ are the key systems, and $A'$ and $B'$ are the shield systems (see Figure~\ref{fig:QKD_protocol}). 

While a maximally entangled state is an example of a bipartite private state, it was shown in~\cite{Horodecki_2005,Horodecki_2009} that there exist private states that hold a finite number of secret bits but have vanishing distillable entanglement. Therefore, the task of secret-key distillation is distinct from entanglement distillation, and it is not easily understood using the resource theory of entanglement.

We consider the task of establishing a bipartite private state $\gamma^K_{CDC'D'}$ between Alice and Bob, capable of generating a secret key of  $\log_2 K$ bits, using multiple instances of a quantum channel $\mathcal{N}_{A\to B}$, an arbitrary state $\psi_{C\hat{C}}$, and a one-way LOCC superchannel $\Theta_{(A^n\to B^n)\to \left(\hat{C}\to C'DD'\right)}$. The result of the protocol acting on an input state $\psi_{C\hat{C}}$ is specified mathematically as follows:
\begin{equation}
    \Theta_{\left(A^n\to B^n\right)\to \left(\hat{C}\to C'DD'\right)}\!\left(\mathcal{N}^{\otimes n}_{A\to B}\right)\left(\psi_{C\hat{C}}\right) = \gamma^K_{CDC'D'},
\end{equation}
where the pre-processing and post-processing channels in $\Theta$ are taken to be isometric channels, with $C'$ and $D'$ as the corresponding purifying systems (see Figure~\ref{fig:QKD_protocol}).

\begin{figure}
	\centering
	\includegraphics[width = \linewidth]{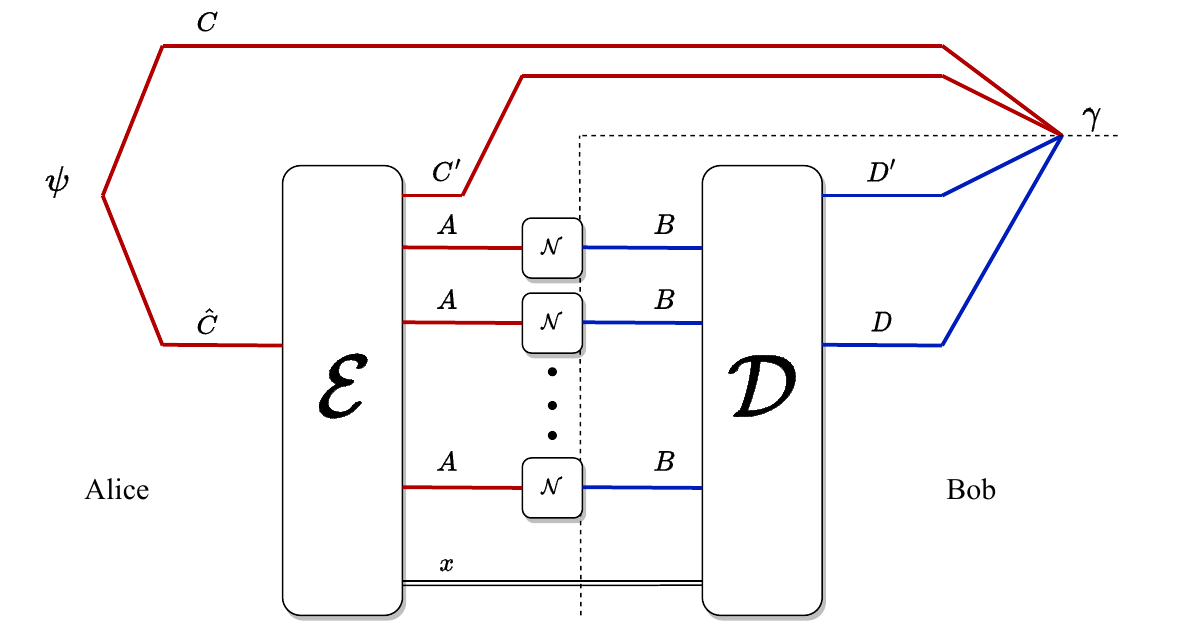}
	\caption{Diagrammatic representation of a protocol generating a bipartite private state $\gamma_{CDC'D'}$ between Alice and Bob through multiple uses of a quantum channel $\mathcal{N}_{A\to B}$. The protocol is enacted through the local operations $\mathcal{E}$ and $\mathcal{D}$ and one-way classical communication from Alice to Bob. Systems $C$ and $D$ form the key while $C'$ and $D'$ are the shield systems.}
	\label{fig:QKD_protocol}
\end{figure}

\subsubsection{One-shot exact one-way distillable key}

Let us first define the one-shot exact distillable key of a quantum channel; that is, we are considering the one-shot setting in which a single instance of the quantum channel $\mathcal{N}_{A\to B}$ is used. Let us define the set of all possible bipartite private states, holding $\log_2K$ secret bits, as follows:
\begin{equation}
    \mathcal{K} \coloneqq \left\{\gamma^K_{CDC'D'}: K \in \mathbb{N}\right\}, 
\end{equation}
where $\gamma^K_{CDC'D'}$ is a private state holding $\log_2 K$ secret bits shared between two parties possessing systems $CC'$ and $DD'$, respectively, with $C$ and $D$ key systems and $C'$ and $D'$ shield systems.

\begin{definition}
\label{def:1-shot-exact-dist-key}
    The one-shot exact distillable key of a quantum channel $\mathcal{N}_{A \to B}$ is the  number of secret bits that can be distilled by a single use of the channel assisted by a one-way LOCC superchannel $\Theta_{(A\to B)\to (\hat{C}\to C'DD')}$:
\begin{multline}\label{eq:dist_key_1shot_defn}
    K^{(1)}_{\operatorname{1WL}}\!\left(\mathcal{N}_{A\to B}\right) \coloneqq\\
    \sup_{\substack{\psi_{C\hat{C}}\in \mathcal{S}(C\hat{C}), \Theta \in \operatorname{1WL},\\ \gamma^K_{CDC'D'} \in \mathcal{K}}}\left\{\begin{array}{c}
    	\log_2 K:\\
    	\Theta\left(\mathcal{N}\right)\left(\psi_{C\hat{C}}\right) = \gamma^K_{CDC'D'}
    \end{array} \right\}.
\end{multline}
\end{definition}

Definition~\ref{def:1-shot-exact-dist-key} is motivated by the fact that once a bipartite private state with $\log_2 K$ bits of secrecy is established between two parties, a secret key of $\log_2 K$ bits can be distilled by only local operations.

We can also define the one-shot exact distillable key in a similar way when the channel is assisted by two-extendible superchannels:
\begin{multline}\label{eq:dist_key_2_ext_1shot_defn}
    K^{(1)}_{\operatorname{2-EXT}}\!\left(\mathcal{N}_{A\to B}\right) \coloneqq\\
    \sup_{\substack{\psi_{C\hat{C}}\in \mathcal{S}(C\hat{C}), \Theta \in \operatorname{2-EXT},\\ \gamma^K_{CDC'D'} \in \mathcal{K}}}\left\{\begin{array}{c}
    	\log_2 K:\\
    	\Theta\left(\mathcal{N}\right)\left(\psi_{C\hat{C}}\right) = \gamma^K_{CDC'D'}
    \end{array} \right\}.
\end{multline}
Since all one-way LOCC superchannels are two-extendible,
\begin{equation}
    	K^{(1)}_{\operatorname{2-EXT}}\!\left(\mathcal{N}_{A\to B}\right) \ge K^{(1)}_{\operatorname{1WL}}\!\left(\mathcal{N}_{A\to B}\right).
\end{equation}

\begin{proposition}\label{prop:distill_key_ub_fin}
    The number of secret bits that can be distilled by using $n$ instances of a quantum channel $\mathcal{N}_{A\to B}$, assisted by one-way LOCC or two-extendible superchannels, is bounded from above by the min-geometric unextendible entanglement of the channel $\mathcal{N}_{A\to B}$ (as defined in~\eqref{eq:min_unext_ent_ch_def}). That is,
    \begin{equation}\label{eq:distill_key_ub_fin}
        \frac{1}{n}K^{(1)}_{\operatorname{1WL}}\!\left(\mathcal{N}^{\otimes n}_{A\to B}\right) \le \frac{1}{n}K^{(1)}_{\operatorname{2-EXT}}\!\left(\mathcal{N}^{\otimes n}_{A\to B}\right) \le \widehat{E}^u_{\min}\!\left(\mathcal{N}_{A\to B}\right).
    \end{equation}
\end{proposition}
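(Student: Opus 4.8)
The plan is to prove the two inequalities separately. The left inequality is immediate from Proposition~\ref{prop:k-ext-contains-1-LOCC}: since every one-way LOCC superchannel is two-extendible, every protocol admissible in the definition of $K^{(1)}_{\operatorname{1WL}}$ (Definition~\ref{def:1-shot-exact-dist-key}) is also admissible in the definition of $K^{(1)}_{\operatorname{2-EXT}}$, so the supremum defining the latter runs over a larger feasible set with the same objective $\log_2 K$ and hence dominates; dividing by $n$ gives the first inequality.

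For the right inequality, I fix an arbitrary feasible protocol for $K^{(1)}_{\operatorname{2-EXT}}(\mathcal{N}^{\otimes n})$, given by an input state $\psi_{C\hat{C}}$, a two-extendible superchannel $\Theta_{(A^n\to B^n)\to(\hat{C}\to C'DD')}$, and a private state $\gamma^K_{CDC'D'}$ with $\Theta(\mathcal{N}^{\otimes n})(\psi_{C\hat{C}}) = \gamma^K_{CDC'D'}$. Viewing $\mathcal{N}^{\otimes n}_{A^n\to B^n}$ as a single point-to-point channel, I identify the retained reference $C$, Alice's shield $C'$, and Bob's composite system $DD'$ with the systems $R$, $C'$, and $D$ of Theorem~\ref{theo:unext_ent_state_le_unext_ent_ch_gen}, so that the bipartition $CC':DD'$ of $\gamma^K$ matches the partition $RC':D$ there. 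Choosing the underlying generalized divergence to be the $\alpha\to 0$ geometric R\'enyi relative entropy $\widehat{D}_0$ --- which inherits data processing as a monotone limit of the $\widehat{D}_\alpha$, $\alpha\in(0,2]$, so that Theorem~\ref{theo:gen_div_channel_data_proc} still applies, and which by Proposition~\ref{prop:limit-zero-min-geo} induces exactly $\widehat{E}^u_{\min}$ --- Theorem~\ref{theo:unext_ent_state_le_unext_ent_ch_gen} bounds the state's unextendible entanglement by the channel's, and iterating Proposition~\ref{prop:unext_ent_subadditive} (transferred to the $\alpha\to 0^+$ limit by the monotonicity in $\alpha$ of Proposition~\ref{prop:geo_unext_ent_monotonic}) bounds the channel's by that of $n$ single copies:
\begin{equation}
    \widehat{E}^u_{\min}\!\left(\gamma^K_{CC':DD'}\right) \le \widehat{E}^u_{\min}\!\left(\mathcal{N}^{\otimes n}_{A^n\to B^n}\right) \le n\,\widehat{E}^u_{\min}\!\left(\mathcal{N}_{A\to B}\right).
\end{equation}

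The crux is the lower bound $\widehat{E}^u_{\min}(\gamma^K_{CC':DD'}) \ge \log_2 K$ for a private state, the analogue for private states of the identity $\widehat{E}^u_\alpha(\Phi^d)=\log_2 d$ in Theorem~\ref{theo:props_from_WWW19}. I would prove it through the privacy test: associated with $\gamma^K$ is a projector $\Pi^\gamma = U(\Phi^K_{CD}\otimes I_{C'D'})U^\dagger$, built from the twisting unitary $U$, with $\operatorname{Tr}[\Pi^\gamma\gamma^K]=1$. For any extension $\omega\in\operatorname{Ext}(\gamma^K)$, processing both $\gamma^K$ and $\operatorname{Tr}_{B_1}[\omega]$ through the two-outcome measurement $\{\Pi^\gamma,I-\Pi^\gamma\}$ and invoking data processing for $\widehat{D}_0$ reduces the divergence to the classical quantity $-\log_2\operatorname{Tr}[\Pi^\gamma\operatorname{Tr}_{B_1}[\omega]]$; the constraint $\operatorname{Tr}_{B_2}[\omega]=\gamma^K$ together with the twisted form of $\Pi^\gamma$ then forces this overlap to decay like $K^{-2}$, so $\widehat{D}_0 \ge 2\log_2 K$ and, after the factor $\tfrac12$ in the definition of $\widehat{E}^u_{\min}$ and the infimum over extensions, the claimed bound follows. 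Chaining this with the display gives $\log_2 K \le n\,\widehat{E}^u_{\min}(\mathcal{N})$; dividing by $n$ and taking the supremum over all feasible protocols yields $\tfrac1n K^{(1)}_{\operatorname{2-EXT}}(\mathcal{N}^{\otimes n}) \le \widehat{E}^u_{\min}(\mathcal{N})$. The main obstacle is exactly this overlap estimate: the comparison state is only a partial trace of an otherwise unconstrained extension of $\gamma^K$ rather than a separable or two-extendible state, so the familiar privacy-test bound against free states does not apply verbatim, and one must extract the $K^{-2}$ decay from the extendibility constraint and the structure of $\Pi^\gamma$ directly, in the spirit of the state-level analysis of~\cite{WWW19}.
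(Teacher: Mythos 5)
Your overall architecture is the same as the paper's: bound $\log_2 K$ by the unextendible entanglement of the output private state, lift that to the channel $\mathcal{N}^{\otimes n}$ via Theorem~\ref{theo:unext_ent_state_le_unext_ent_ch_gen} (which already subsumes the monotonicity of Theorem~\ref{theo:two_ext_monotonic_p2p}), single-letterize with Proposition~\ref{prop:unext_ent_subadditive}, and pass to $\alpha\to 0^+$ via Proposition~\ref{prop:limit-zero-min-geo}. Your choice to run the whole argument directly at the level of $\widehat{D}_0$ rather than at fixed $\alpha\in(0,2]$ with a limit at the end is a benign reorganization: $\widehat{D}_0$ is an infimum of divergences each satisfying data processing, hence satisfies data processing itself, and limits preserve the inequalities you need. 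The first inequality in~\eqref{eq:distill_key_ub_fin} is handled identically in both treatments.

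The genuine gap is at the step you yourself flag as the crux. The paper does not prove $\widehat{E}^u_{\alpha}\bigl(\gamma^K_{CC':DD'}\bigr)\ge\log_2 K$; it invokes \cite[Corollary 22]{WWW19}, which is precisely the statement that the $\alpha$-geometric unextendible entanglement of a bipartite private state is at least the number of secret bits it holds. You instead attempt a from-scratch privacy-test argument, and the reduction via the binary measurement $\{\Pi^{\gamma}, I-\Pi^{\gamma}\}$ to $-\log_2\operatorname{Tr}\bigl[\Pi^{\gamma}\operatorname{Tr}_{B_1}[\omega]\bigr]$ is fine (for commuting outcome distributions $(1,0)$ and $(q,1-q)$ one indeed gets $\widehat{D}_{\alpha}=-\log_2 q$). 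But the decisive estimate $\operatorname{Tr}\bigl[\Pi^{\gamma}\operatorname{Tr}_{B_1}[\omega]\bigr]\le K^{-2}$ for an arbitrary extension $\omega$ satisfying only $\operatorname{Tr}_{B_2}[\omega]=\gamma^K$ is asserted, not proven --- and as you correctly observe, the comparison state is not separable or two-extendible, so the standard privacy-test bound against free states does not apply. (The target $K^{-2}$ is at least the right one: for $\gamma^K=\Phi^K$ purity forces $\omega=\Phi^K_{AB_1}\otimes\rho_{B_2}$ and the overlap is exactly $K^{-2}$, consistent with the factor $\tfrac12$ in Definition~\ref{def:unext-ent} and~\eqref{eq:geo_unext_ent_edit_logd}.) Extracting this decay from the twisted structure of $\Pi^{\gamma}$ and the marginal constraint is exactly the nontrivial content of the cited state-level result; until that is either proven or explicitly cited, your proof of the second inequality is incomplete at its load-bearing step.
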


\begin{proof}
    Let $\Theta_{(A^n\to B^n)\to (\hat{C}\to DC'D')}$ be a two-extendible superchannel that transforms $n$ instances of quantum channel $\mathcal{N}_{A\to B}$ into a channel that acts on an input state $\psi_{C\hat{C}}$ to yield a private state $\gamma^K_{CDC'D'}$ holding $\log_2 K$ bits of secrecy with probability $p$; i.e.,
    \begin{multline}
        \Theta_{(A^n\to B^n)\to (\hat{C}\to DC'D')}\!\left(\mathcal{N}^{\otimes n}_{A\to B}\right)\left(\psi_{C\hat{C}}\right) = \gamma^K_{CDC'D'}.
    \end{multline}
    The $\alpha$-geometric unextendible entanglement of a channel can be bounded from below as follows: using Theorem~\ref{theo:unext_ent_state_le_unext_ent_ch_gen}, monotonicity of  and subadditivity of $\alpha$-geometric unextendible entanglement under tensor products from Proposition~\ref{prop:unext_ent_subadditive},
    \begin{align}
        n\widehat{E}^u_{\alpha}\!\left(\mathcal{N}\right) & \ge \widehat{E}^u_{\alpha}\!\left(\mathcal{N}^{\otimes n}\right)\\
        & \ge \widehat{E}^u_{\alpha}\!\left(\Theta\left(\mathcal{N}^{\otimes n}\right)\right)\\
        &\ge \widehat{E}^u_{\alpha}\!\left(\Theta\left(\mathcal{N}^{\otimes n} \right)\left(\psi_{C\hat{C}}\right)\right)\\
        &=  \widehat{E}^u_{\alpha}\!\left(\gamma^K_{CDC'D'}\right)\\
        &\ge \log_2 K,
    \end{align}
    where the first inequality follows from the subadditivity of $\alpha$-geometric unextendible entanglement (Proposition~\ref{prop:unext_ent_subadditive}), the second inequality follows from the monotonicity of unextendible entanglement of a channel under the action of a two-extendible superchannel (Theorem~\ref{theo:two_ext_monotonic_p2p}), the third inequality follows from Theorem~\ref{theo:unext_ent_state_le_unext_ent_ch_gen}, and the last inequality follows from~\cite[Corollary 22]{WWW19}. 
    
    We can get the tightest possible upper bound with this technique by taking the limit $\alpha\to 0$ and applying Proposition~\ref{prop:limit-zero-min-geo}, arriving at the following inequality:
    \begin{equation}
        \widehat{E}^u_{\min}\!\left(\mathcal{N}_{A\to B}\right) \ge \frac{1}{n}\log_2 K.
    \end{equation}
    Since this inequality holds for every two-extendible superchannel $\Theta$, input state $\psi_{C\hat{C}}$, and secret-key dimension $K$, we conclude~\eqref{eq:distill_key_ub_fin}.
\end{proof}

\medskip

\subsubsection{Asymptotic exact one-way distillable key}

Now let us now consider the asymptotic setting in which an arbitrarily large number of independent uses of a channel are allowed along with a restricted set of superchannels to establish a secret key between two parties. We begin by defining the asymptotic exact one-way distillable key of a quantum channel.

\begin{definition}
	The asymptotic exact one-way distillable key of a channel $\mathcal{N}_{A\to B}$ is the maximum achievable rate at which secret bits can be distilled by an arbitrarily large number of channel uses in parallel, along with one-way LOCC superchannels:
	\begin{equation}
		K_{\operatorname{1WL}}\!\left(\mathcal{N}_{A\to B}\right) \coloneqq \liminf_{n\to\infty}\frac{1}{n} K^{(1)}_{\operatorname{1WL}}\!\left(\mathcal{N}^{\otimes n}_{A\to B}\right).
	\end{equation}
\end{definition}

We can define the asymptotic two-extendible exact distillable key of the channel by relaxing the constraint on the superchannels to be two-extendible:
\begin{equation}
	K_{\operatorname{2-EXT}}\!\left(\mathcal{N}_{A\to B}\right) \coloneqq \liminf_{n\to\infty }\frac{1}{n} K^{(1)}_{\operatorname{2-EXT}}\!\left(\mathcal{N}^{\otimes n}_{A\to B}\right).
\end{equation}
The definitions imply the following inequalities:
\begin{equation}
	K_{\operatorname{2-EXT}}\!\left(\mathcal{N}_{A\to B}\right) \ge K^{(1)}_{\operatorname{2-EXT}}\!\left(\mathcal{N}_{A\to B}\right) \ge K^{(1)}_{\operatorname{1WL}}\!\left(\mathcal{N}_{A\to B}\right),
\end{equation}
and
\begin{equation}
	K_{\operatorname{2-EXT}}\!\left(\mathcal{N}_{A\to B}\right) \ge K_{\operatorname{1WL}}\!\left(\mathcal{N}_{A\to B}\right) \ge K^{(1)}_{\operatorname{1WL}}\!\left(\mathcal{N}_{A\to B}\right).
\end{equation}

\begin{corollary}\label{cor:prob_key_distill_2ext}
The asymptotic exact distillable key of a channel $\mathcal{N}_{A\to B}$, assisted by one-way LOCC or two-extendible superchannels, is bounded from above by the min-geometric unextendible entanglement of the channel:
\begin{equation}
    \widehat{E}^u_{\min}\!\left(\mathcal{N}_{A\to B}\right) \ge K_{\operatorname{2-EXT}}\!\left(\mathcal{N}_{A\to B}\right) \ge K_{\operatorname{1WL}}\!\left(\mathcal{N}_{A\to B}\right).
\end{equation}
\end{corollary}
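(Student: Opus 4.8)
The plan is to observe that this corollary is essentially an immediate consequence of the one-shot bound already established in Proposition~\ref{prop:distill_key_ub_fin}, together with the inclusion of one-way LOCC superchannels in the set of two-extendible superchannels. The crucial structural feature is that the upper bound $\widehat{E}^u_{\min}(\mathcal{N}_{A\to B})$ appearing in Proposition~\ref{prop:distill_key_ub_fin} is \emph{single-letter}: it does not depend on $n$. This single-letter character is precisely what the subadditivity of the $\alpha$-geometric unextendible entanglement (Proposition~\ref{prop:unext_ent_subadditive}) buys us, since it is what allowed the factor $n$ to be pulled out of $\widehat{E}^u_\alpha(\mathcal{N}^{\otimes n})$ in that proof. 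Because the right-hand side is already regularized, taking an asymptotic limit costs nothing.

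First I would dispatch the second inequality, $K_{\operatorname{2-EXT}}(\mathcal{N}_{A\to B}) \ge K_{\operatorname{1WL}}(\mathcal{N}_{A\to B})$. By Proposition~\ref{prop:k-ext-contains-1-LOCC}, every one-way LOCC superchannel is two-extendible, so for each fixed $n$ the supremum defining $K^{(1)}_{\operatorname{2-EXT}}(\mathcal{N}^{\otimes n}_{A\to B})$ is taken over a larger feasible set than that defining $K^{(1)}_{\operatorname{1WL}}(\mathcal{N}^{\otimes n}_{A\to B})$; hence $\tfrac{1}{n}K^{(1)}_{\operatorname{2-EXT}}(\mathcal{N}^{\otimes n}_{A\to B}) \ge \tfrac{1}{n}K^{(1)}_{\operatorname{1WL}}(\mathcal{N}^{\otimes n}_{A\to B})$ for every $n$. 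Applying $\liminf_{n\to\infty}$ to both sides and using monotonicity of the liminf then yields the claimed ordering of the asymptotic quantities.

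Next I would establish the first inequality, $\widehat{E}^u_{\min}(\mathcal{N}_{A\to B}) \ge K_{\operatorname{2-EXT}}(\mathcal{N}_{A\to B})$. Proposition~\ref{prop:distill_key_ub_fin} gives, for every positive integer $n$,
\begin{equation}
    \frac{1}{n}K^{(1)}_{\operatorname{2-EXT}}\!\left(\mathcal{N}^{\otimes n}_{A\to B}\right) \le \widehat{E}^u_{\min}\!\left(\mathcal{N}_{A\to B}\right).
\end{equation}
Since the right-hand side is a constant independent of $n$, it is an upper bound on the whole sequence $\{\tfrac{1}{n}K^{(1)}_{\operatorname{2-EXT}}(\mathcal{N}^{\otimes n}_{A\to B})\}_n$, and therefore also an upper bound on its $\liminf$. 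Thus $\widehat{E}^u_{\min}(\mathcal{N}_{A\to B}) \ge \liminf_{n\to\infty}\tfrac{1}{n}K^{(1)}_{\operatorname{2-EXT}}(\mathcal{N}^{\otimes n}_{A\to B}) = K_{\operatorname{2-EXT}}(\mathcal{N}_{A\to B})$, which is the desired bound. Chaining the two inequalities completes the argument.

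There is no real obstacle to overcome here, since all of the genuine work — invoking Theorem~\ref{theo:unext_ent_state_le_unext_ent_ch_gen}, the monotonicity under two-extendible superchannels from Theorem~\ref{theo:two_ext_monotonic_p2p}, and the subadditivity from Proposition~\ref{prop:unext_ent_subadditive}, followed by the $\alpha\to 0$ limit via Proposition~\ref{prop:limit-zero-min-geo} — was already carried out in proving Proposition~\ref{prop:distill_key_ub_fin}. The only point worth emphasizing is the conceptual one: had the one-shot bound instead grown with $n$, a regularization argument would be required, but because subadditivity makes the bound additive and hence single-letter, the asymptotic statement follows purely from taking a limit against a fixed constant.
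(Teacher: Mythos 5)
Your proposal is correct and follows exactly the paper's own argument: the paper proves this corollary by noting that Proposition~\ref{prop:distill_key_ub_fin} holds for every $n\in\mathbb{N}$ and that the bound $\widehat{E}^u_{\min}(\mathcal{N}_{A\to B})$ is independent of $n$, so the $\liminf$ inherits it. Your extra remarks on the set inclusion $\operatorname{1WL}\subseteq\operatorname{2-EXT}$ and the single-letter character of the bound are accurate elaborations of the same reasoning.
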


\begin{proof}
    The proof follows from the fact that Proposition~\ref{prop:distill_key_ub_fin} is true for all values of $n\in \mathbb{N}$.
\end{proof}

\subsection{Exact one-way distillable entanglement of a channel}\label{sec:prob_distill_ent}

All current quantum network models rely on distant parties holding one or more shares of a maximally entangled bipartite state (also known as an ebit). It is crucial to analyse the ability of a quantum channel to distribute entanglement between its participants such that a quantum network can be sustained by employing several of these channels.

\begin{figure}
	\centering
	\includegraphics[width = \linewidth]{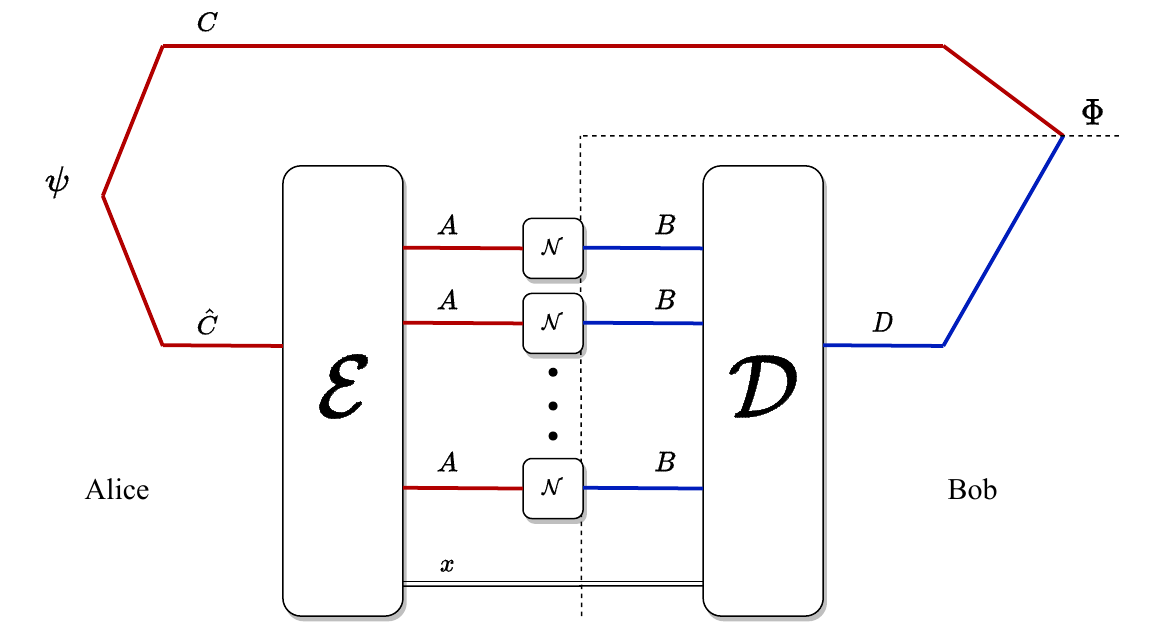}
	\caption{Diagrammatic representation of the protocol establishing a $d$-dimensional maximally entangled state $\Phi^d_{CD}$ between Alice and Bob through multiple uses of a quantum channel $\mathcal{N}_{A\to B}$ in parallel. The protocol is enacted through the local operations $\mathcal{E}$ and $\mathcal{D}$ and one-way classical communication from Alice to Bob. }
	\label{fig:Ent_distill_p2p}
\end{figure}

We consider the task of entanglement distillation where Alice prepares ebits locally and sends shares of the ebits to Bob using a noisy quantum channel $\mathcal{N}_{A\to B}$ assisted by local operations and one-way classical communication from Alice to Bob. For a more general setting, we allow Alice to prepare any arbitrary bipartite state instead of ebits (see Figure~\ref{fig:Ent_distill_p2p}).

\subsubsection{One-shot exact one-way distillable entanglement}

Let us first consider the one-shot case where Alice and Bob use a point-to-point quantum channel assisted by one-way LOCC superchannels to distill a maximally entangled state. The ability of a channel to distill a maximally entangled state between the two parties can be quantified by its one-shot exact one-way distillable entanglement, as defined below. 

\begin{definition}
    The one-shot exact one-way distillable entanglement of a channel $\mathcal{N}_{A\to B}$ is the maximum number of ebits that can be established between two parties using the channel $\mathcal{N}_{A\to B}$ and a one-way LOCC superchannel $\Theta_{(A\to B)\to (\hat{C}\to D)}$:
	\begin{multline}
 E^{(1)}_{D,\operatorname{1WL}}\!\left(\mathcal{N}_{A\to B}\right) \coloneqq \\ 	\sup_{\substack{\psi_{C\hat{C}}\in \mathcal{S}(C\hat{C}),\\ \Theta \in \operatorname{1WL}}}\left\{\begin{array}{c}
			\log_2 d :\\
			\Theta\left(\mathcal{N}\right)\left(\psi_{C\hat{C}}\right) = \Phi^d_{CD}
		\end{array} \right\}.
	\end{multline}
\end{definition}

We can relax the set of allowed superchannels to be the set of two-extendible superchannels and define the one-shot two-extendible exact distillable entanglement of a channel:
\begin{multline}\label{eq:one_shot_dist_ent_two_ext}
	E^{(1)}_{D,\operatorname{2-EXT}}\!\left(\mathcal{N}_{A\to B}\right) \coloneqq \\ 	\sup_{\substack{\psi_{C\hat{C}}\in \mathcal{S}(C\hat{C}),\\ \Theta \in \operatorname{2-EXT}}}\left\{\begin{array}{c}
			\log_2 d :\\
			\Theta\left(\mathcal{N}\right)\left(\psi_{C\hat{C}}\right) = \Phi^d_{CD}
		\end{array} \right\}.
\end{multline}

An ebit shared between Alice and Bob can always be used to distill one bit of secret key between them. Hence, the number of secret bits that can be distilled from a channel is no less than the number of ebits that can be distilled from the channel, which leads to the following inequalities that hold for all $n\in \mathbb{N}$:
\begin{equation}
    \frac{1}{n}E^{(1)}_{D,\operatorname{1WL}}\!\left(\mathcal{N}^{\otimes n}_{A\to B}\right) \le
    \frac{1}{n}K^{(1)}_{\operatorname{1WL}}\!\left(\mathcal{N}^{\otimes n}_{A\to B}\right)\le \widehat{E}^u_{\min}\!\left(\mathcal{N}_{A\to B}\right),
\end{equation}
and 
\begin{equation}\label{eq:2_ext_dist_ent_1shot_le_min_unext_ent}
    \frac{1}{n}E^{(1)}_{D,\operatorname{2-EXT}}\!\left(\mathcal{N}^{\otimes n}_{A\to B}\right)  \le \frac{1}{n}K^{(1)}_{\operatorname{2-EXT}}\!\left(\mathcal{N}^{\otimes n}_{A\to B}\right) \le \widehat{E}^u_{\min}\!\left(\mathcal{N}_{A\to B}\right).
\end{equation}

\subsubsection{Asymptotic exact one-way distillable entanglement of a channel}
In general, we expect the exact one-way distillable key of the channel to be strictly larger than the exact one-way distillable entanglement of the channel because there exist private states that hold a finite number of exact secret bits but have no exact distillable entanglement~\cite{Horodecki_2005}. Thus, the min-geometric unextendible entanglement of a channel is a tighter bound on the one-shot exact one-way distillable key of the channel than it is on the one-shot exact one-way distillable entanglement of the channel.

The asymptotic exact one-way distillable entanglement of a channel is defined as follows:
\begin{equation}
	E_{D,\operatorname{1WL}}\!\left(\mathcal{N}_{A\to B}\right) \coloneqq \liminf_{n\to \infty} \frac{1}{n}E^{(1)}_{D,\operatorname{1WL}}\!\left(\mathcal{N}^{\otimes n}_{A\to B}\right),
\end{equation}
which is the maximum achievable rate at which ebits can be distilled with an arbitrarily large number of uses of the channel $\mathcal{N}_{A\to B}$, assisted by a one-way LOCC superchannel. Similarly, the asymptotic generalization of the one-shot quantity in~\eqref{eq:one_shot_dist_ent_two_ext} is the following:
\begin{equation}\label{eq:dist_ent_two_ext}
	E_{D,\operatorname{2-EXT}}\!\left(\mathcal{N}_{A\to B}\right) \coloneqq \liminf_{n\to \infty} \frac{1}{n}E^{(1)}_{D,\operatorname{2-EXT}}\!\left(\mathcal{N}^{\otimes n}_{A\to B}\right),
\end{equation}
which is the asymptotic two-extendible exact distillable entanglement of the channel.
For every given channel $\mathcal{N}_{A\to B}$, the following inequalities are consequences of their definitions:
\begin{equation}
	E_{D,\operatorname{2-EXT}}\!\left(\mathcal{N}_{A\to B}\right) \ge E_{D,\operatorname{1WL}}\!\left(\mathcal{N}_{A\to B}\right) \ge E^{(1)}_{D,\operatorname{1WL}}\!\left(\mathcal{N}_{A\to B}\right).
\end{equation}
This leads us to identify the min-geometric unextendible entanglement as an upper bound on the exact one-way distillable entanglement of a channel as we state formally in Corollary~\ref{cor:distill_ent_ub_fin}.
\begin{corollary}\label{cor:distill_ent_ub_fin}
    The exact one-way distillable entanglement of a channel is bounded from above by the min-geometric unextendible entanglement of the channel. That is, the following inequality holds for every channel $\mathcal{N}_{A\to B}$:
    \begin{equation}
        E_{D,\operatorname{1WL}}\!\left(\mathcal{N}_{A\to B}\right) \le \widehat{E}^u_{\min}\!\left(\mathcal{N}_{A\to B}\right). 
    \end{equation}
\end{corollary}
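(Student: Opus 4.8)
The plan is to deduce this asymptotic statement as an immediate consequence of the one-shot bound already established in the displayed chain of inequalities immediately preceding the corollary. Recall that for every $n \in \mathbb{N}$ we have
\begin{equation}
    \frac{1}{n}E^{(1)}_{D,\operatorname{1WL}}\!\left(\mathcal{N}^{\otimes n}_{A\to B}\right) \le \frac{1}{n}K^{(1)}_{\operatorname{1WL}}\!\left(\mathcal{N}^{\otimes n}_{A\to B}\right) \le \widehat{E}^u_{\min}\!\left(\mathcal{N}_{A\to B}\right),
\end{equation}
where the first inequality holds because every distilled ebit can be converted by local operations into a secret bit, and the second is precisely Proposition~\ref{prop:distill_key_ub_fin}. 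The crucial observation is that the right-hand side $\widehat{E}^u_{\min}\!\left(\mathcal{N}_{A\to B}\right)$ does not depend on $n$.

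The single substantive step is then to take $\liminf_{n\to\infty}$ on both sides. Since the upper bound is a constant in $n$, we obtain
\begin{equation}
    E_{D,\operatorname{1WL}}\!\left(\mathcal{N}_{A\to B}\right) = \liminf_{n\to \infty} \frac{1}{n}E^{(1)}_{D,\operatorname{1WL}}\!\left(\mathcal{N}^{\otimes n}_{A\to B}\right) \le \widehat{E}^u_{\min}\!\left(\mathcal{N}_{A\to B}\right),
\end{equation}
which is the claimed inequality. This is exactly the structure used to prove Corollary~\ref{cor:prob_key_distill_2ext}, where the asymptotic bound is inherited directly from the $n$-independent one-shot bound.

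If one prefers a self-contained argument rather than invoking the preceding display, the alternative route is to replay the reasoning of Proposition~\ref{prop:distill_key_ub_fin} with the target state being a maximally entangled state $\Phi^d_{CD}$ in place of a private state: subadditivity (Proposition~\ref{prop:unext_ent_subadditive}), monotonicity under two-extendible superchannels (Theorem~\ref{theo:two_ext_monotonic_p2p}), and the state–channel comparison (Theorem~\ref{theo:unext_ent_state_le_unext_ent_ch_gen}) yield $n\,\widehat{E}^u_{\alpha}(\mathcal{N}) \ge \widehat{E}^u_{\alpha}(\Phi^d_{CD}) = \log_2 d$, using $\widehat{E}^u_{\alpha}(\Phi^d_{CD}) = \log_2 d$ from Theorem~\ref{theo:props_from_WWW19}; taking $\alpha \to 0^+$ via Proposition~\ref{prop:limit-zero-min-geo} and then $\liminf_{n\to\infty}$ gives the same conclusion. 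There is no real obstacle here: the analytic and operational content is entirely contained in the earlier propositions, and this corollary amounts to passing to the asymptotic limit of an $n$-uniform bound.
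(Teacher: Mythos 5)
Your proof is correct and follows essentially the same route as the paper: both arguments simply pass the $n$-independent one-shot bound from Proposition~\ref{prop:distill_key_ub_fin} (via the ebit-to-secret-bit reduction) to the asymptotic limit, the only cosmetic difference being that the paper detours through $E_{D,\operatorname{2-EXT}}$ while you work directly with the one-way LOCC chain. No gaps.
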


\begin{proof}
    The inequalities in~\eqref{eq:2_ext_dist_ent_1shot_le_min_unext_ent} hold for all $n \in \mathbb{N}$. The exact one-way distillable entanglement of a channel $\mathcal{N}_{A\to B}$ can be bounded as follows:
\begin{align}
    E_{D,\operatorname{1WL}}\!\left(\mathcal{N}_{A\to B}\right) &\le E_{D,\operatorname{2-EXT}}\!\left(\mathcal{N}_{A\to B}\right)\\
    &= \liminf_{n\to \infty} \frac{1}{n}E_{D,\operatorname{2-EXT}}\!\left(\mathcal{N}^{\otimes n}_{A\to B}\right)\\
    &\le \widehat{E}^u_{\min}\!\left(\mathcal{N}_{A\to B}\right),
\end{align}
where the first inequality follows from the fact that all one-way LOCC sueprchannels are two-extendible, the equality follows from the definition of the asymptotic two-extendible exact distillable entanglement of a channel given in~\eqref{eq:dist_ent_two_ext}, and the final inequality follows from~\eqref{eq:2_ext_dist_ent_1shot_le_min_unext_ent}.
\end{proof}

\subsection{Zero-error private capacity assisted by one-way LOCC}\label{sec:zero_err_private}

The private capacity of a channel is defined as the largest rate at which private bits can be sent through the channel. 

Consider a general protocol for private communication:
\begin{enumerate}
    \item Alice encodes her classical message $|m\rangle\!\langle m|_{A_0}$ into a quantum state by means of  an encoder $\mathcal{E}_{A_0\to A}$.
    \begin{equation}
        \omega_A^m \coloneqq \mathcal{E}_{A_0\to A}\!\left(|m\rangle\!\langle m|_{A_0}\right).
    \end{equation}
    
    \item Alice sends the encoded data to Bob through the channel $\mathcal{N}_{A\to B}$, which is extended by an isometric channel $\mathcal{U}^{\mathcal{N}}_{A \to BE}$.
    
    \item Bob decodes the state he received by applying a POVM $\left\{\Lambda^{\hat{m}}_B\right\}_{\hat{m}}$. The final state of the overall $BE$ system is,
    \begin{align}
        & \rho_{B_0 E}^m \notag \\
        &= \mathcal{D}_{B\to B_0}\!\left(\mathcal{U}^{\mathcal{N}}_{A \to BE}\!\left(\omega^m_A\right)\right)\\ &=\sum_{\hat{m}}\operatorname{Tr}_B\!\left[\Lambda^{\hat{m}}_B\left(\mathcal{U}^{\mathcal{N}}_{A \to BE}\circ\mathcal{E}_{A_0\to A}\!\left(|m\rangle\!\langle m|_{A_0}\right)\right)\right]|\hat{m}\rangle\!\langle \hat{m}|_{B_0}.
    \end{align}
\end{enumerate}
If the communication is successful and private, there exists a state $\sigma_E$ of the environment system $E$, such that for every message $m$, the final state has the form
\begin{equation}
    \rho^m_{B_0 E} = |m\rangle\!\langle m|_{B_0} \otimes \sigma_E.
    \label{eq:privacy_cond_states}
\end{equation}

The action of the encoder $\mathcal{E}_{A_0\to A}$ and the decoder $\mathcal{D}_{B\to B_0}$ realizes a superchannel $\Theta_{(A\to B)\to (A_0\to B_0)}$ constructed by local operations. Let $\mathcal{P}_{A\to BE}$ be an arbitrary extension of the quantum channel $\mathcal{N}_{A\to B}$, where the system $E$ can be accessed by an eavesdropper. Note that this extension is different from the extension of channels that we have discussed in the rest of this work, as the system $E$ is not required to be isomorphic to system $B$. 

This can be understood in a purified setting where the eavesdropper can access the environment system $E$ output from the isometric extension $\mathcal{U}^{\mathcal{N}}_{A\to BE}$~\cite{Devetak2005} of the quantum channel $\mathcal{N}_{A\to B}$. However, we assume Alice's and Bob's laboratories to be secure in this setting and the eavesdropper cannot access any output systems from the encoding or decoding channel used in the protocol. Let $\Upsilon_{(A\to BE)\to (A_0\to B_0E_0)}$ be an arbitrary extension of the superchannel $\Theta$ that includes any post-processing that the eavesdropper can perform on their system $E$. The privacy condition in~\eqref{eq:privacy_cond_states} implies that for every extended superchannel $\Upsilon$, and isometric extension $\mathcal{U}^{\mathcal{N}}_{A\to BE}$, there exists a quantum state $\sigma$ such that,
\begin{equation}
    \Upsilon_{(A\to BE)\to (A_0\to B_0E_0)}\!\left(\mathcal{U}^{\mathcal{N}}_{A\to BE}\right) = \overline{\Delta}_{A_0\to B_0}\otimes \mathcal{A}^{\sigma}_{E_0},
\end{equation}
where $\overline{\Delta}_{A_0\to B_0}$ is the completely dephasing channel,
\begin{equation}
    \overline{\Delta}_{A_0\to B_0}\!\left(\rho_{A_0}\right) = \sum_k \langle k|\rho_{A_0}|k\rangle|k\rangle\!\langle k|_{B_0},
\end{equation}
and $\mathcal{A}^{\sigma}_{E_0}$ denotes an appending channel that prepares the state $\sigma$ on system $E_0$.

We assume that the eavesdropper only intends to learn about the information being sent from Alice to Bob, and not distort the data. Hence, we can assume that the channel established between Alice, Bob, and the eavesdropper is nonsignaling from the eavesdropper to Bob; i.e.,
\begin{equation}\label{eq:supch_ext_private_comm_cond}
    \operatorname{Tr}_{E_0}\circ\left(\Upsilon\left(\mathcal{U}^{\mathcal{N}}_{A\to BE}\right)\right) = \Theta\left(\operatorname{Tr}_{E}\circ\mathcal{U}^{\mathcal{N}}_{A\to BE}\right) = \Theta\left(\mathcal{N}\right).
\end{equation}
Let $\operatorname{Ext}_P\!\left(\Theta\right)$ denote the set of all such extensions of the superchannel $\Theta$.

Now consider a superchannel $\Theta_{(A^n\to B^n)\to (A_0\to B_0)}$, composed of local operations by Alice and Bob, that acts on $n$ instances of a quantum channel $\mathcal{N}_{A\to B}$ to communicate $\log_2 d$ bits of private data from Alice to Bob. An extended superchannel $\Upsilon_{(A\to BE)\to (A_0\to B_0E_0)}$ acts on some isometric extension $\mathcal{U}^{\mathcal{N}}_{A\to BE}$ of the channel $\mathcal{N}_{A\to B}$ such that it lies in the set $\operatorname{Ext}_P$. We define the one-shot zero-error private capacity of a quantum channel $\mathcal{N}_{A\to B}$ as follows:
\begin{multline}\label{eq:private_cap_lo_def}
    P^{\left(1\right)}_{0,\operatorname{LO}}\!\left(\mathcal{N}_{A\to B}\right) \\ \coloneqq \sup_{\substack{d \in \mathbb{N},\\ \Theta\in \operatorname{LO}}}\left\{
    \begin{array}{c}
        \log_2 d : \\
        \forall~ \Upsilon \in \operatorname{Ext}_P\!\left(\Theta\right) \ \exists \sigma \in \mathcal{S}(E_0) \\
        \Upsilon\!\left(\left(\mathcal{U}^{\mathcal{N}}_{A\to BE}\right)\right) = \overline{\Delta}_{A_0\to B_0}\otimes \mathcal{A}^{\sigma}_{E_0}
    \end{array}
    \right\},
\end{multline}
where $\overline{\Delta}_{A_0\to B_0}$ is a $d$-dimensional dephasing channel and $\mathcal{U}^{\mathcal{N}}_{A\to BE}$ is an arbitrary isometric extension of the channel~$\mathcal{N}_{A\to B}$.

Another quantity of interest is the zero-error private capacity of a channel assisted by one-way LOCC superchannels, where Alice can send classical data of arbitrary size to Bob along with the encoded quantum state. This classical data is naturally not private and can be copied by the eavesdropper. However, the nonsignaling condition from~\eqref{eq:supch_ext_private_comm_cond} still holds. 
We define the one-shot zero-error private capacity assisted by one-way LOCC superchannels as follows:
\begin{multline}\label{eq:private_cap_1W_def}
    P^{\left(1\right)}_{0,\operatorname{1WL}}\!\left(\mathcal{N}_{A\to B}\right) \\ \coloneqq \sup_{\substack{d\in \mathbb{N},\\ \Theta\in \operatorname{1WL}}}\left\{
    \begin{array}{c}
        \log_2 d : \\
        \forall~ \Upsilon \in \operatorname{Ext}_P\!\left(\Theta\right) \ \exists \sigma \in \mathcal{S}(E_0) \\
        \Upsilon\!\left(\left(\mathcal{U}^{\mathcal{N}}_{A\to BE}\right)\right) = \overline{\Delta}_{A_0\to B_0}\otimes \mathcal{A}^{\sigma}_{E_0}
    \end{array}
    \right\},
\end{multline}
where the allowed superchannels are one-way LOCC superchannels instead of superchannels that can be constructed by only local operations.

We can further relax the set of superchannels to be two-extendible. The eavesdropper in this case also has access to a memory system that cannot necessarily be  copied owing to its quantum nature. We define the one-shot zero-error private capacity assisted by two-extendible superchannels as follows:
\begin{multline}\label{eq:private_cap_2_ext_def}
    P^{\left(1\right)}_{0,\operatorname{2-EXT}}\!\left(\mathcal{N}_{A\to B}\right) \\ \coloneqq \sup_{\substack{d\in \mathbb{N},\\ \Theta\in \operatorname{2-EXT}}}\left\{
    \begin{array}{c}
        \log_2 d : \\
        \forall~ \Upsilon \in \operatorname{Ext}_P\!\left(\Theta\right) \ \exists \sigma \in \mathcal{S}(E_0) \\
        \Upsilon\!\left(\left(\mathcal{U}^{\mathcal{N}}_{A\to BE}\right)\right) = \overline{\Delta}_{A_0\to B_0}\otimes \mathcal{A}^{\sigma}_{E_0}
    \end{array}
    \right\}.
\end{multline}
Since the three sets of superchannels in consideration follow the hierarchy,
\begin{equation}
    \operatorname{LO} \subseteq \operatorname{1WL} \subseteq \operatorname{2-EXT},
\end{equation}
the respective private capacities obey the following inequalities:
\begin{equation}
    P^{\left(1\right)}_{0,\operatorname{LO}}\!\left(\mathcal{N}_{A\to B}\right) \le P^{\left(1\right)}_{0,\operatorname{1WL}}\!\left(\mathcal{N}_{A\to B}\right) \le P^{\left(1\right)}_{0,\operatorname{2-EXT}}\!\left(\mathcal{N}_{A\to B}\right). 
\end{equation}

Once again, we can define asymptotic versions of these quantities. The zero-error private capacity of a quantum channel $\mathcal{N}_{A\to B}$ is defined as
\begin{equation}
    P_{0,\operatorname{LO}}\!\left(\mathcal{N}_{A\to B}\right) \coloneqq \liminf_{n\to \infty}\frac{1}{n}P^{(1)}_{0,\operatorname{LO}}\!\left(\mathcal{N}_{A\to B}^{\otimes n}\right),
\end{equation}
the forward-assisted zero-error private capacity of the channel as
\begin{equation}
    P_{0,\operatorname{1WL}}\!\left(\mathcal{N}_{A\to B}\right) \coloneqq \liminf_{n\to \infty}\frac{1}{n}P^{(1)}_{0,\operatorname{1WL}}\!\left(\mathcal{N}^{\otimes n}_{A\to B}\right),
\end{equation}
and the zero-error private capacity of the channel assisted by two-extendible superchannels as
\begin{equation}
    P_{0,\operatorname{2-EXT}}\!\left(\mathcal{N}_{A\to B}\right) \coloneqq \liminf_{n\to \infty}\frac{1}{n}P^{(1)}_{0,\operatorname{2-EXT}}\!\left(\mathcal{N}^{\otimes n}_{A\to B}\right).
\end{equation}
Since Alice and Bob can always choose a superchannel $\Theta$ that acts independently on all instances of the channel $\mathcal{N}_{A\to B}$, we have the following inequalities:
\begin{align}
    P^{(1)}_{0,\operatorname{LO}}\!\left(\mathcal{N}_{A\to B}\right) &\le P_{0,\operatorname{LO}}\!\left(\mathcal{N}_{A\to B}\right),\\
    P^{(1)}_{0,\operatorname{1WL}}\!\left(\mathcal{N}_{A\to B}\right) &\le P_{0,\operatorname{1WL}}\!\left(\mathcal{N}_{A\to B}\right),\\
    P^{(1)}_{0,\operatorname{2-EXT}}\!\left(\mathcal{N}_{A\to B}\right) &\le P_{0,\operatorname{2-EXT}}\!\left(\mathcal{N}_{A\to B}\right).
\end{align}

\subsubsection{Private capacity and secret key distillation}

The zero-error private capacity of a channel is closely related to its exact distillable key. A secret key of $m$ bits can be established by sending $m$ bits of private data from Alice to Bob. Thus, the number of private bits that can be transmitted using a quantum channel $\mathcal{N}_{A\to B}$ and any superchannel $\Theta$ is no larger than the number of bits of a secret key that can be established using the same channel $\mathcal{N}_{A\to B}$ and the superchannel $\Theta$. 

In addition, if Alice has the ability to send public classical data to Bob without using the channel $\mathcal{N}_{A\to B}$, the secret key distillation protocol can be converted into a private channel. Alice and Bob can use the one-time-pad scheme to send $m$ bits of private data from Alice to Bob through the public classical channel and consume an $m$-bit secret key generated from the channel $\mathcal{N}_{A\to B}$.  

The above arguments lead to the following conclusion: the one-shot zero-error private capacity of a channel assisted by one-way LOCC superchannels is equal to the one-shot exact one-way distillable key of the channel; that is,
\begin{equation}\label{eq:0_priv_cap_le_ex_distill_key_1W}
    P^{(1)}_{0,\operatorname{1WL}}\!\left(\mathcal{N}_{A\to B}\right) = K^{(1)}_{\operatorname{1WL}}\!\left(\mathcal{N}_{A\to B}\right),
\end{equation}
where $K^{(1)}_{\operatorname{1WL}}\!\left(\cdot\right)$ was defined in~\eqref{eq:dist_key_1shot_defn}. Since two-extendible superchannels also allow Alice to send an arbitrary number of classical bits to Bob, the one-shot zero-error private capacity of the channel is equal to the following quantity:
\begin{equation}\label{eq:0_priv_cap_le_ex_distill_key_2_ext}
    P^{(1)}_{0,\operatorname{2-EXT}}\!\left(\mathcal{N}_{A\to B}\right) = K^{(1)}_{\operatorname{2-EXT}}\!\left(\mathcal{N}_{A\to B}\right),
\end{equation}
where $K^{(1)}_{\operatorname{2-EXT}}\!\left(\cdot\right)$ was defined in~\eqref{eq:dist_key_2_ext_1shot_defn}. Combining the equality in~\eqref{eq:0_priv_cap_le_ex_distill_key_2_ext} and the inequality in Proposition~\ref{prop:distill_key_ub_fin}, we conclude the following proposition:

\begin{proposition}\label{prop:priv_bits_le_min_geo_unext_ent}
    Given $n$ instances of a quantum channel $\mathcal{N}_{A\to B}$, assisted by one-way LOCC or two-extendible superchannels, the rate at which private bits can be transmitted without error is bounded from above by the min-geometric unextendible entanglement of the channel $\mathcal{N}_{A\to B}$; that is, for all $n\in \mathbb{N}$,
    \begin{align}
        \frac{1}{n}P^{(1)}_{0,\operatorname{1WL}}\!\left(\mathcal{N}^{\otimes n}_{A\to B}\right) &\le \frac{1}{n}P^{(1)}_{0,\operatorname{2-EXT}}\!\left(\mathcal{N}^{\otimes n}_{A\to B}\right)\notag\\
        &\le \widehat{E}^u_{\min}\!\left(\mathcal{N}_{A\to B}\right).
    \end{align}
\end{proposition}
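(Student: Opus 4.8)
The plan is to derive both inequalities by directly combining the operational equivalence between the one-shot zero-error private capacity and the one-shot exact one-way distillable key with the bound already established in Proposition~\ref{prop:distill_key_ub_fin}. Essentially no new technical machinery is needed; the content reduces to bookkeeping over the already-proven results.

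First I would dispatch the left inequality, which is the simplest. The one-shot zero-error private capacities in~\eqref{eq:private_cap_1W_def} and~\eqref{eq:private_cap_2_ext_def} are suprema over the same optimization data (namely $d$ and a superchannel $\Theta$, subject to the identical privacy and nonsignaling conditions), differing only in the admissible class of superchannels. Since $\operatorname{1WL}\subseteq\operatorname{2-EXT}$, the feasible set defining $P^{(1)}_{0,\operatorname{2-EXT}}$ contains the feasible set defining $P^{(1)}_{0,\operatorname{1WL}}$, so the supremum can only increase; applying this observation to the channel $\mathcal{N}^{\otimes n}_{A\to B}$ and dividing by $n$ yields the first inequality.

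For the right inequality, I would invoke the operational equivalence in~\eqref{eq:0_priv_cap_le_ex_distill_key_2_ext}. Because that equivalence was argued for an arbitrary channel, it applies verbatim to the tensor-power channel $\mathcal{N}^{\otimes n}_{A\to B}$, giving $P^{(1)}_{0,\operatorname{2-EXT}}(\mathcal{N}^{\otimes n}_{A\to B}) = K^{(1)}_{\operatorname{2-EXT}}(\mathcal{N}^{\otimes n}_{A\to B})$. Dividing by $n$ and then applying the second bound in Proposition~\ref{prop:distill_key_ub_fin}, namely $\tfrac{1}{n}K^{(1)}_{\operatorname{2-EXT}}(\mathcal{N}^{\otimes n}_{A\to B}) \le \widehat{E}^u_{\min}(\mathcal{N}_{A\to B})$, completes the chain.

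There is no genuine obstacle here, since the heavy lifting resides in Proposition~\ref{prop:distill_key_ub_fin}, which itself draws on monotonicity of the unextendible entanglement under two-extendible superchannels (Theorem~\ref{theo:two_ext_monotonic_p2p}), its subadditivity (Proposition~\ref{prop:unext_ent_subadditive}), and the state-versus-channel comparison (Theorem~\ref{theo:unext_ent_state_le_unext_ent_ch_gen}). The only point deserving a moment of care is confirming that the operational equivalence~\eqref{eq:0_priv_cap_le_ex_distill_key_2_ext} is legitimately applicable to $\mathcal{N}^{\otimes n}_{A\to B}$ rather than only to a single channel use; this is immediate because that equivalence was established for a generic channel, and treating $\mathcal{N}^{\otimes n}$ as that channel introduces no new considerations.
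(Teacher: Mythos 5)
Your proposal is correct and coincides with the paper's own argument: the paper likewise obtains the result by combining the operational equivalence in~\eqref{eq:0_priv_cap_le_ex_distill_key_2_ext} (applied to $\mathcal{N}^{\otimes n}_{A\to B}$) with the bound of Proposition~\ref{prop:distill_key_ub_fin}, with the left inequality following from the containment $\operatorname{1WL}\subseteq\operatorname{2-EXT}$. No gaps.
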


Since Proposition~\ref{prop:priv_bits_le_min_geo_unext_ent} holds for all values of $n\in \mathbb{N}$, the min-geometric unextendible entanglement of a channel is an upper bound on the zero-error private capacity of the channel, as we state formally in Corollary~\ref{cor:0_err_priv_cap_ub} below.

\begin{corollary}\label{cor:0_err_priv_cap_ub}
    The zero-error private capacity of a channel $\mathcal{N}_{A\to B}$, assisted by one-way LOCC or two-extendible superchannels, is bounded from above by the min-geometric unextendible entanglement of the channel $\mathcal{N}_{A\to B}$; i.e.,
    \begin{equation}\label{eq:zero_error_priv_cap_ub_min_geo}
        P_{0,\operatorname{1WL}}\!\left(\mathcal{N}_{A\to B}\right) \le P_{0,\operatorname{2-EXT}}\!\left(\mathcal{N}_{A\to B}\right) \le \widehat{E}^u_{\operatorname{min}}\!\left(\mathcal{N}_{A\to B}\right).
    \end{equation}
\end{corollary}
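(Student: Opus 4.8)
The plan is to obtain this corollary as a direct asymptotic consequence of the one-shot bound already established in Proposition~\ref{prop:priv_bits_le_min_geo_unext_ent}. That proposition supplies, for every $n \in \mathbb{N}$, the chain of inequalities
\begin{equation*}
    \frac{1}{n}P^{(1)}_{0,\operatorname{1WL}}\!\left(\mathcal{N}^{\otimes n}_{A\to B}\right) \le \frac{1}{n}P^{(1)}_{0,\operatorname{2-EXT}}\!\left(\mathcal{N}^{\otimes n}_{A\to B}\right) \le \widehat{E}^u_{\min}\!\left(\mathcal{N}_{A\to B}\right),
\end{equation*}
so the substantive work is already done upstream; what remains is only to pass to the asymptotic limit.

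First I would recall the definitions of the asymptotic quantities, namely that $P_{0,\operatorname{1WL}}(\mathcal{N}_{A\to B})$ and $P_{0,\operatorname{2-EXT}}(\mathcal{N}_{A\to B})$ are precisely the $\liminf_{n\to\infty}$ of the corresponding normalized one-shot capacities. I would then take $\liminf_{n\to\infty}$ through the displayed chain. The left inequality between the two normalized one-shot capacities holds term by term and is therefore preserved under the $\liminf$, yielding $P_{0,\operatorname{1WL}}(\mathcal{N}_{A\to B}) \le P_{0,\operatorname{2-EXT}}(\mathcal{N}_{A\to B})$.

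For the second inequality, the key observation is that the upper bound $\widehat{E}^u_{\min}(\mathcal{N}_{A\to B})$ does not depend on $n$. Hence its $\liminf$ over $n$ equals itself, and since each term of the sequence $\tfrac{1}{n}P^{(1)}_{0,\operatorname{2-EXT}}(\mathcal{N}^{\otimes n}_{A\to B})$ is bounded above by this constant, so is its $\liminf$. This gives $P_{0,\operatorname{2-EXT}}(\mathcal{N}_{A\to B}) \le \widehat{E}^u_{\min}(\mathcal{N}_{A\to B})$, and combining the two bounds yields~\eqref{eq:zero_error_priv_cap_ub_min_geo}.

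I do not anticipate any genuine obstacle here: the statement is a limiting restatement of the one-shot proposition, and the only point requiring care is the elementary fact that taking a $\liminf$ preserves a weak inequality against a constant upper bound. All the real content—the monotonicity of the unextendible entanglement under two-extendible superchannels (Theorem~\ref{theo:two_ext_monotonic_p2p}), its subadditivity under tensor products (Proposition~\ref{prop:unext_ent_subadditive}), and the identification of the one-shot private capacity with the exact one-way distillable key—was already consumed in deriving Proposition~\ref{prop:priv_bits_le_min_geo_unext_ent}.
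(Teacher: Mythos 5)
Your proposal is correct and matches the paper's own argument: the paper likewise derives the corollary by observing that the one-shot bound of Proposition~\ref{prop:priv_bits_le_min_geo_unext_ent} holds for every $n\in\mathbb{N}$ and then passing to the $\liminf$ in the definitions of the asymptotic capacities. Your explicit remark that the $n$-independent upper bound survives the $\liminf$ is exactly the (elementary) step the paper leaves implicit.
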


\subsection{Zero-error quantum capacity assisted by one-way LOCC}\label{sec:zero_error_capacity}

The zero-error quantum capacity of a channel is the maximum rate at which the channel can transmit quantum information with zero error over an arbitrarily large number of channel uses~\cite{Shirokov_2015} with the assistance of local operations. The notion of zero-error quantum capacity of a channel can be extended to the case where the channel is assisted by one-way LOCC superchannels, which has significance not only from a theoretical perspective but also from a practical viewpoint, given existing state-of-the-art classical networks.

Here we look at the zero-error quantum capacity of a channel assisted by one-way LOCC superchannels. Using one instance of a quantum channel $\mathcal{N}_{A\to B}$ and an arbitrary one-way ideal classical channel, the following channel can be simulated between the two parties:
\begin{equation}
    \Theta\left(\mathcal{N}_{A\to B}\right) \coloneqq \sum_x \mathcal{D}^x_{B \to D}\circ\mathcal{N}_{A\to B}\circ \mathcal{E}^x_{C\to A},
\end{equation}
where $\{\mathcal{E}^x_{C\to A}\}_x$ is a set of completely-positive maps whose sum is trace preserving and $\{\mathcal{D}^x_{B \to D}\}_x$ is a set of quantum channels.

The one-shot zero-error quantum capacity of a channel $\mathcal{N}_{A\to B}$ assisted by one-way LOCC superchannels is defined as follows: 
\begin{multline}
    Q^{(1)}_{0,\operatorname{1WL}}(\mathcal{N}_{A\to B})\\ \coloneqq \sup_{\substack{d \in \mathbb{N},\\\Theta\in \operatorname{1WL}}}\left\{\log_2d: \Theta\left(\mathcal{N}\right) = \operatorname{id}^d_{C\to D}\right\},    
\end{multline}
where $\operatorname{id}^d_{C\to D}$ is the $d$-dimensional identity channel. We can relax the set of allowed superchannels to be the set of two-extendible superchannels, and we can define the one-shot zero-error quantum capacity of a quantum channel assisted by two-extendible superchannels as follows:
\begin{multline}\label{eq:1shot_q_cap_2_ext_defn}
    Q^{(1)}_{0,\operatorname{2-EXT}}(\mathcal{N}_{A\to B}) \\ \coloneqq \sup_{\substack{d \in \mathbb{N},\\ \Theta \in \operatorname{2-EXT}}}\left\{\log_2 d :  \Theta\left(\mathcal{N}_{A\to B}\right) = \operatorname{id}^d_{C\to D}\right\}.
\end{multline}
Note that the following inequality holds because all one-way LOCC superchannels are two-extendible:
\begin{equation}\label{eq:1shot_q_cap_1W_le_2_ext}
    Q^{(1)}_{0,\operatorname{1WL}}(\mathcal{N}_{A\to B}) \leq Q^{(1)}_{0,\operatorname{2-EXT}}(\mathcal{N}_{A\to B}).
\end{equation}

The zero-error quantum capacity of a channel assisted by one-way LOCC superchannels can be defined in terms of the one-shot zero-error quantum capacity of the channel assisted by one-way LOCC superchannels as follows:
\begin{equation}\label{eq:1WL_0_capacity}
    Q_{0,\operatorname{1WL}}\!\left(\mathcal{N}_{A\to B}\right) \coloneqq \liminf_{n\to \infty}\frac{1}{n}Q^{(1)}_{0,\operatorname{1WL}}\!\left(\mathcal{N}^{\otimes n}_{A\to B}\right).
\end{equation}
Similarly, the zero-error quantum capacity of the channel assisted by two-extendible superchannels can be defined as follows:
\begin{equation}\label{eq:2_ext_0_capacity}
    Q_{0,\operatorname{2-EXT}}\!\left(\mathcal{N}_{A\to B}\right) \coloneqq \liminf_{n\to \infty}\frac{1}{n}Q^{(1)}_{0,\operatorname{2-EXT}}\!\left(\mathcal{N}^{\otimes n}_{A\to B}\right).
\end{equation}

Alice and Bob can always choose a quantum superchannel $\Theta$ that acts separately on each instance of the quantum channel $\mathcal{N}_{A\to B}$ as a strategy to simulate an identity channel using multiple instances of the channel $\mathcal{N}_{A\to B}$ and one-way LOCC or two-extendible superchannels. Hence, we have the following inequalities
\begin{align}
    Q^{(1)}_{0,\operatorname{1WL}}\!\left(\mathcal{N}_{A\to B}\right) &\le Q_{0,\operatorname{1WL}}\!\left(\mathcal{N}_{A\to B}\right),\\
    Q^{(1)}_{0,\operatorname{2-EXT}}\!\left(\mathcal{N}_{A\to B}\right) &\le Q_{0,\operatorname{2-EXT}}\!\left(\mathcal{N}_{A\to B}\right).
\end{align}

The zero-error quantum capacity of a channel cannot be larger than the zero-error private capacity of the channel in the one-shot or asymptotic setting. This is because a $d$-dimensional ideal quantum channel can be used to send $d$-dimensional private data from Alice to Bob as follows: Alice can encode her classical data in a pure $d$-dimensional quantum state and send it to Bob such that it is protected from any eavesdropper by the no-cloning theorem. Bob can then perform a measurement on the received quantum state in a predetermined basis and deterministically decode $d$ bits of classical data encoded by Alice in the quantum state. Since we only need local operations to transform an ideal quantum channel to an ideal private channel, the following inequalities hold:
\begin{align}
    Q^{(1)}_{0,\operatorname{1WL}}\!\left(\mathcal{N}_{A\to B}\right) & \le P^{(1)}_{0,\operatorname{1WL}}\!\left(\mathcal{N}_{A\to B}\right),\\
    Q^{(1)}_{0,\operatorname{2-EXT}}\!\left(\mathcal{N}_{A\to B}\right) & \le P^{(1)}_{0,\operatorname{2-EXT}}\!\left(\mathcal{N}_{A\to B}\right),\label{eq:1shot_q_cap_2_ext_le_priv_cap}
\end{align}
where $P^{(1)}_{0,\operatorname{1WL}}\!\left(\mathcal{N}_{A\to B}\right)$ is the one-shot zero-error private capacity of a channel assisted by one-way LOCC superchannels, as defined in~\eqref{eq:private_cap_1W_def}, and $P^{(1)}_{0,\operatorname{2-EXT}}\!\left(\mathcal{N}_{A\to B}\right)$ is the one-shot zero-error private capacity of a channel assisted by two-extendible superchannels, as defined in~\eqref{eq:private_cap_2_ext_def}. 

\begin{proposition}
\label{prop:one-shot-zero-error-bound-min-unext}
    Consider an arbitrary zero-error protocol for quantum communication over a channel $\mathcal{N}_{A \to B}$ assisted by one-way LOCC or two-extendible superchannels, with $n$ the number of channel uses. Then the following upper bound holds for all $n\in \mathbb{N}$:
    \begin{equation}
        \frac{1}{n}Q^{(1)}_{0,\operatorname{1WL}}(\mathcal{N}_{A\to B}^{\otimes n}) \le \frac{1}{n}Q^{(1)}_{0,\operatorname{2-EXT}}(\mathcal{N}_{A\to B}^{\otimes n}) \le \widehat{E}^u_{\operatorname{min}}\!\left(\mathcal{N}_{A\to B}\right),
    \end{equation}
    where $\widehat{E}^u_{\operatorname{min}}\!\left(\mathcal{N}_{A\to B}\right)$ is defined in~\eqref{eq:min_unext_ent_ch_def}.
\end{proposition}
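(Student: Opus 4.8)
The plan is to obtain the bound by chaining together three inequalities already established in this section, each applied to the $n$-fold tensor-product channel $\mathcal{N}^{\otimes n}_{A\to B}$. No new analytic machinery is needed; the work amounts to verifying that the hierarchy of capacities and the existing bound on the private capacity combine correctly.

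First I would settle the left inequality. Since every one-way LOCC superchannel is two-extendible (Proposition~\ref{prop:k-ext-contains-1-LOCC}), the inequality in~\eqref{eq:1shot_q_cap_1W_le_2_ext}, applied with $\mathcal{N}^{\otimes n}_{A\to B}$ in place of $\mathcal{N}_{A\to B}$, immediately yields $Q^{(1)}_{0,\operatorname{1WL}}(\mathcal{N}^{\otimes n}_{A\to B}) \le Q^{(1)}_{0,\operatorname{2-EXT}}(\mathcal{N}^{\otimes n}_{A\to B})$, and dividing by $n$ gives the first inequality of the proposition.

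For the right inequality, I would exploit the relationship between zero-error quantum and private capacities. As argued just before~\eqref{eq:1shot_q_cap_2_ext_le_priv_cap}, a $d$-dimensional ideal quantum channel can be converted into an ideal private channel carrying $\log_2 d$ private bits using only local operations, by a no-cloning argument; hence~\eqref{eq:1shot_q_cap_2_ext_le_priv_cap}, applied to $\mathcal{N}^{\otimes n}_{A\to B}$, gives $Q^{(1)}_{0,\operatorname{2-EXT}}(\mathcal{N}^{\otimes n}_{A\to B}) \le P^{(1)}_{0,\operatorname{2-EXT}}(\mathcal{N}^{\otimes n}_{A\to B})$. Finally, Proposition~\ref{prop:priv_bits_le_min_geo_unext_ent} states precisely that $\frac{1}{n}P^{(1)}_{0,\operatorname{2-EXT}}(\mathcal{N}^{\otimes n}_{A\to B}) \le \widehat{E}^u_{\operatorname{min}}(\mathcal{N}_{A\to B})$, so combining the last two relations closes the chain.

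Since all of the substantive work has already been carried out earlier, there is no genuine obstacle here: the one-shot private-capacity bound of Proposition~\ref{prop:priv_bits_le_min_geo_unext_ent} ultimately rests on the monotonicity of the unextendible entanglement under two-extendible superchannels (Theorem~\ref{theo:two_ext_monotonic_p2p}) and on Theorem~\ref{theo:unext_ent_state_le_unext_ent_ch_gen}, together with the subadditivity of the $\alpha$-geometric unextendible entanglement and its $\alpha\to 0^+$ limit. The only point requiring care is to ensure that each cited inequality is invoked for the tensor-power channel $\mathcal{N}^{\otimes n}_{A\to B}$ rather than for a single copy, so that dividing uniformly by $n$ reproduces the stated normalized bound for every $n\in\mathbb{N}$.
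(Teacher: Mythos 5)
Your proposal is correct and follows exactly the paper's own argument: combining~\eqref{eq:1shot_q_cap_1W_le_2_ext},~\eqref{eq:1shot_q_cap_2_ext_le_priv_cap}, and Proposition~\ref{prop:priv_bits_le_min_geo_unext_ent}, each applied to the tensor-power channel $\mathcal{N}^{\otimes n}_{A\to B}$. No further comment is needed.
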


\begin{proof}
    The statement of the proposition follows simply by combining~\eqref{eq:1shot_q_cap_1W_le_2_ext},~\eqref{eq:1shot_q_cap_2_ext_le_priv_cap}, and Proposition~\ref{prop:priv_bits_le_min_geo_unext_ent}.
\end{proof}

\medskip

Since Proposition~\ref{prop:one-shot-zero-error-bound-min-unext} holds for all $n\in \mathbb{N}$, we conclude the following:

\begin{corollary}\label{cor:0_err_cap_ub}
The zero-error quantum capacity of a quantum channel $\mathcal{N}_{A\to B}$, assisted by one-way LOCC or two-extendible superchannels, is bounded from above by the min-geometric unextendible entanglement of $\mathcal{N}_{A\to B}$:
    \begin{equation}\label{eq:two_ext_0_capacity_le_unext_ent_ch}
        Q_{0,\operatorname{1WL}}(\mathcal{N}) \leq 
        Q_{0,\operatorname{2-EXT}}(\mathcal{N}) \le \widehat{E}^u_{\operatorname{min}}\!\left(\mathcal{N}\right).
    \end{equation}
\end{corollary}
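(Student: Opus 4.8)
The plan is to reduce the corollary entirely to the finite-$n$ bound already established in Proposition~\ref{prop:one-shot-zero-error-bound-min-unext}, since the asymptotic capacities are by definition the $\liminf_{n\to\infty}$ of the corresponding normalized one-shot quantities. All of the genuine content — tracing the chain from zero-error quantum communication through the zero-error private capacity down to the min-geometric unextendible entanglement — is already packaged inside that proposition; the corollary merely promotes the uniform bound to the asymptotic regime.

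First I would establish the left inequality $Q_{0,\operatorname{1WL}}(\mathcal{N}) \le Q_{0,\operatorname{2-EXT}}(\mathcal{N})$. Because every one-way LOCC superchannel is two-extendible (Proposition~\ref{prop:k-ext-contains-1-LOCC}), the feasible set appearing in the one-shot two-extendible optimization~\eqref{eq:1shot_q_cap_2_ext_defn} contains the feasible set for the one-way LOCC capacity, so
\begin{equation}
    Q^{(1)}_{0,\operatorname{1WL}}(\mathcal{N}^{\otimes n}) \le Q^{(1)}_{0,\operatorname{2-EXT}}(\mathcal{N}^{\otimes n}) \qquad \forall n \in \mathbb{N}.
\end{equation}
Dividing both sides by $n$ and taking $\liminf_{n\to\infty}$ preserves the inequality and yields the claim directly from the definitions in~\eqref{eq:1WL_0_capacity} and~\eqref{eq:2_ext_0_capacity}.

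For the right inequality, Proposition~\ref{prop:one-shot-zero-error-bound-min-unext} supplies, for every $n \in \mathbb{N}$,
\begin{equation}
    \frac{1}{n}Q^{(1)}_{0,\operatorname{2-EXT}}(\mathcal{N}^{\otimes n}) \le \widehat{E}^u_{\operatorname{min}}(\mathcal{N}).
\end{equation}
Since the right-hand side is a constant independent of $n$, taking $\liminf_{n\to\infty}$ of the left-hand side leaves the bound intact, giving $Q_{0,\operatorname{2-EXT}}(\mathcal{N}) \le \widehat{E}^u_{\operatorname{min}}(\mathcal{N})$. Chaining the two inequalities concludes~\eqref{eq:two_ext_0_capacity_le_unext_ent_ch}.

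There is no real obstacle at this stage: the only point to verify is that the monotone operations (normalization by $n$ and $\liminf$) respect a bound that is \emph{uniform} in $n$, which is immediate because $\widehat{E}^u_{\operatorname{min}}(\mathcal{N})$ does not depend on $n$. The difficulty of the result lies one layer earlier, in the proof of Proposition~\ref{prop:one-shot-zero-error-bound-min-unext}, which itself rests on the comparison between zero-error quantum and private capacities and, ultimately, on the monotonicity of the unextendible entanglement under two-extendible superchannels (Theorem~\ref{theo:two_ext_monotonic_p2p}).
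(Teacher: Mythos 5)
Your proposal is correct and matches the paper's own primary justification, which likewise observes that Proposition~\ref{prop:one-shot-zero-error-bound-min-unext} holds uniformly in $n$ and passes to the $\liminf$. (The paper's Appendix~\ref{app:0_err_cap_ub_proof} also gives an alternate, more self-contained route via computing $\widehat{E}^u_{\alpha}(\operatorname{id}^d)=\log_2 d$ and applying monotonicity and subadditivity directly, but your reduction to the one-shot bound is exactly the argument given in the main text.)
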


\begin{proof}
    This is justified as mentioned above. See Appendix~\ref{app:0_err_cap_ub_proof} for an alternate proof.
\end{proof}

\begin{remark}
    It is known that forward classical assistance does not increase the zero-error quantum capacity of a quantum channel~\cite{Bennett_1996, BKN00} (see also~\cite[Lemmas~9.6-9.8]{khatri2020principles}). Thus, the quantum capacity of a quantum channel assisted by one-way LOCC superchannels is the same as the quantum capacity of the channel assisted by local operations.
\end{remark}

\begin{remark}
    The zero-error capacities of quantum channels are known to exhibit superactivation~\cite{CCHS10, Shirokov_2015}. That is, there exist quantum channels, say $\mathcal{N}^1$ and $\mathcal{N}^2$, such that the zero-error capacity of each channel individually is equal to zero, but the zero-error capacity of the tensor-product channel $\mathcal{N}^1\otimes \mathcal{N}^2$ is strictly positive. 

    Note that the subadditivity of the min-geometric unextendible entanglement implies that \mbox{$\widehat{E}^u_{\min}\!\left(\mathcal{N}^1\otimes\mathcal{N}^2\right) \le \widehat{E}^u_{\min}\!\left(\mathcal{N}^1\right) + \widehat{E}^u_{\min}\!\left(\mathcal{N}^2\right)$}, which is an upper bound on the zero-error quantum capacity and the zero-error private capacity of the tensor-product channel $\mathcal{N}^1\otimes\mathcal{N}^2$. Therefore, if a pair of channels is expected to exhibit superactivation of zero-error private capacity or zero-error quantum capacity, then at least one of the channels should have a strictly positive min-geometric unextendible entanglement.
\end{remark}

\section{Unextendible entanglement of bipartite quantum channels}\label{sec:unext_ent_bip}

In this section, we extend our developments on unextendibility to bipartite quantum channels. Bipartite quantum channels are generalizations of point-to-point channels in the sense that every point-to-point channel can be simulated using some bipartite channel by ignoring the input of Bob and the output of Alice; i.e., for every point-to-point quantum channel $\mathcal{N}_{A\to B'}$, there exists a bipartite quantum channel $\mathcal{M}_{AB\to A'B'}$ such that,
\begin{equation}
	\mathcal{N}_{A\to B'} = \operatorname{Tr}_{A'}\circ\mathcal{M}_{AB\to A'B'}\circ\operatorname{Tr}_{B}.
\end{equation}

Before discussing extensions of bipartite quantum channels, we should first establish what we mean by the marginal of a channel in the multipartite case. A quantum channel $\mathcal{N}_{AB_i\to A'B'_i}$ is a marginal of the channel $\mathcal{P}_{AB_{[k]}\to A'B'_{[k]}}$ if the following condition holds~\cite{Kaur_2019}:
\begin{equation}
    \operatorname{Tr}_{B'_{[k]\setminus i}}\circ\mathcal{P}_{AB_{[k]}\to A'B'_{[k]}} = \mathcal{N}_{AB_i\to A'B'_i}\otimes\operatorname{Tr}_{B_{[k]\setminus i}}.
\end{equation}
The Choi operators of the two channels are related as
\begin{equation}
     \operatorname{Tr}_{B'_{[k]\setminus i}}\!\left[\Gamma^{\mathcal{P}}_{AB_{[k]}A'B'_{[k]}}\right] = 
     \Gamma^{\mathcal{N}}_{AB_iA'B'_i}\otimes I_{B_{[k]\setminus i}}.
\end{equation}

Unlike the broadcast channels $\mathcal{P}_{A\to B_{[k]}}$ considered in previous sections of this work, not all multipartite channels have a well defined marginal. In fact, a quantum channel $\mathcal{P}_{AB_{[k]}\to A'B'_{[k]}}$ has a well defined marginal $\mathcal{N}_{AB_i\to A'B'_i}$ if and only if it does not allow systems $B_{[k]\setminus i}$ to send any data, quantum or classical, to systems $A'$ and $B'_i$. Moreover, if there exists a marginal channel $\mathcal{N}_{AB_i\to A'B'_i}$ for every $i \in [k]$, the quantum channel $\mathcal{P}_{AB_{[k]}\to A'B'_{[k]}}$ is non-signaling from $B_i$ to $A'$, for every $i\in [k]$.

\begin{figure}
    \centering
    \includegraphics[width = \linewidth]{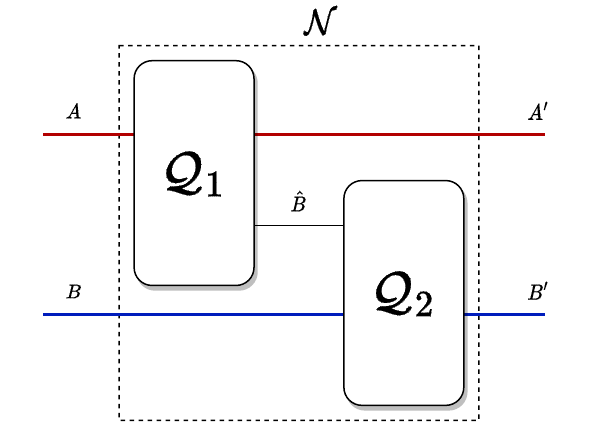}
    \caption{Decomposition of a semicausal channel between Alice and Bob that is nonsignaling from Bob to Alice.}
    \label{fig:semicausal_channel}
\end{figure}

We restrict our discussion on unextendibility of bipartite quantum channels to semicausal quantum channels between Alice and Bob~\cite{BGNP01} that are non-signaling from Bob to Alice. It has been shown that all semicausal channels are semi-localizable~\cite{Eggeling_2002}. As such, these quantum channels are of the following form (see Figure~\ref{fig:semicausal_channel}):
\begin{equation}
    \mathcal{N}_{AB\to A'B'} = \mathcal{Q}^2_{B\hat{B}\to B'}\circ\mathcal{Q}^1_{A\to A'\hat{B}},
\end{equation}
where $\mathcal{Q}^2_{B\hat{B}\to B'}$  and $\mathcal{Q}^1_{A\to A'\hat{B}}$ are quantum channels.
This ensures that there always exists an extension of such a channel that is of the form $\mathcal{P}_{AB_{[k]}\to A'B'_{[k]}}$, such that it has a well defined marginal $\mathcal{N}^i_{AB_i\to A'B'_i}$ for every $i\in [k]$. The marginal $\mathcal{N}^i_{AB_i\to A'B'_i}$ can be uniquely obtained from the channel $\mathcal{P}_{AB_{[k]}\to A'B'_{[k]}}$ using the following relation:
\begin{equation}
    \operatorname{Tr}_{B'_{[k]\setminus i}}\circ\mathcal{P}_{AB_{[k]}\to A'B'_{[k]}}\circ\mathcal{A}_{B_{[k]\setminus i}} = \mathcal{N}^i_{AB_i\to A'B'_i}, 
\end{equation}
where $\mathcal{A}_{B_{[k]\setminus i}}$ is a quantum channel that appends an arbitrary quantum state on the systems $B_{[k]\setminus i}$.

\subsection{Bipartite \texorpdfstring{$k$}{k}-extendible quantum channels}

Let us now define $k$-extendibility of bipartite quantum channels and superchannels. Multiple definitions of $k$-extendibility have been proposed for bipartite quantum channels~\cite{Kaur_2019, Berta_2021}. In this work, we use the notion of bipartite $k$-extendible channels presented in~\cite{Kaur_2019, Kaur_2021}.

\begin{definition}[Bipartite $k$-extendible channel]
A quantum channel $\mathcal{N}_{AB\to A'B'}$ is $k$-extendible if there exists a channel $\mathcal{P}_{AB_{[k]}\to A'B'_{[k]}}$ such that the following conditions hold.
\begin{enumerate}
    \item The extended channel is covariant under permutations of the $B$ systems,
    \begin{multline}\label{eq:perm_cov_bipartite}
        \mathcal{W}^{\pi}_{B'_{[k]}}\circ \mathcal{P}_{AB_{[k]}\to A'B'_{[k]}} \\= \mathcal{P}_{AB_{[k]}\to A'B'_{[k]}} \circ \mathcal{W}^{\pi}_{B_{[k]}} ~\forall \pi \in \mathcal{S}_k,
    \end{multline}
    where $\mathcal{W}^{\pi}$ is a permutation unitary  channel as defined previously, just after~\eqref{eq:perm-covariance-broadcast}. 
    \item The channel $\mathcal{N}_{AB\to A'B'}$ is a marginal of $\mathcal{P}_{AB_{[k]}\to A'B'_{[k]}}$:
    \begin{equation}\label{eq:no_signal_bipartite}
        \operatorname{Tr}_{B'_{[k]\setminus 1}}\circ \mathcal{P}_{AB_{[k]}\to A'B'_{[k]}} = \mathcal{N}_{AB \to A'B'} \otimes \operatorname{Tr}_{B_{[k]\setminus 1}}.
    \end{equation}
\end{enumerate}
\end{definition}

As a consequence of part 2.~of the above definition, it follows that the systems $B_2,B_3,\ldots,B_k$ cannot send any information to the systems $A$ and $B_1$.

The conditions in~\eqref{eq:perm_cov_bipartite} and~\eqref{eq:no_signal_bipartite} can be written as semidefinite constraints on the Choi operator of the quantum channel $\mathcal{P}_{AB_{[k]}\to A'B'_{[k]}}$ as follows~\cite[Eqs.~(24)--(27)]{Kaur_2021}:
\begin{align}
    \left(\mathcal{W}^{\pi}_{B_{[k]}}\otimes\mathcal{W}^{\pi}_{B'_{[k]}}\right)\Gamma^{\mathcal{P}} &= \Gamma^{\mathcal{P}} \label{eq:perm_cov_bip_choi} \quad \forall \pi \in \mathcal{S}_k,\\
    \operatorname{Tr}_{B'_{[k]\setminus 1}}\!\left[\Gamma^{\mathcal{P}}\right] &= \Gamma^{\mathcal{N}}_{AB_1A'B'_1}\otimes I_{B_{[k]\setminus 1}},\label{eq:no_signal_bip_choi}
\end{align}
where $\Gamma^{\mathcal{N}}_{ABA'B'}$ and $\Gamma^{\mathcal{P}} \equiv \Gamma^{\mathcal{P}}_{AB_{[k]}A'B'_{[k]}}$ are the Choi operators of the quantum channels $\mathcal{N}_{AB\to A'B'}$ and $\mathcal{P}_{AB_{[k]}\to A'B'_{[k]}}$, respectively. It is straightforward to verify that this definition of $k$-extendibility is consistent with the definition of $k$-extendibility of point-to-point channels.

It is also worth noting that the unique quantum channel corresponding to a point-to-point superchannel is a bipartite semi-localizable quantum channel. Since the definitions for bipartite $k$-extendible channels and point-to-point $k$-extendible superchannels are the same, we can assert that a point-to-point superchannel is $k$-extendible if and only if the unique bipartite quantum channel associated with it is a $k$-extendible channel.  

The set of bipartite $k$-extendible quantum channels is a relaxation of the set of bipartite one-way LOCC channels. This can be seen by considering a general bipartite one-way LOCC channel,
\begin{equation}
	\mathcal{N}_{AB\to A'B'} = \sum_x \mathcal{E}^x_{A\to A'}\otimes\mathcal{D}^x_{B\to B'},
\end{equation}
where $\left\{\mathcal{E}^x_{A\to A'}\right\}_x$ is a set of completely positive, trace non-increasing maps such that the sum map $\sum_x \mathcal{E}^x_{A\to A'}$ is a quantum channel and $\left\{\mathcal{D}^x_{A\to A'}\right\}_x$ is a set of quantum channels. This is understood to be a one-way LOCC channel as Alice applies the quantum instrument $\left\{\mathcal{E}^x_{A\to A'}\right\}_x$ on her system and sends the classical outcome $x$ to Bob, who then applies the quantum channel $\mathcal{D}^x_{B\to B'}$ on his system based on the classical data he received. We can construct an extension of this channel as follows:
\begin{equation}
	\mathcal{P}_{AB_{[k]}\to A'B'_{[k]}} \coloneqq \sum_x \mathcal{E}^x_{A\to A'}\otimes\mathcal{D}^x_{B_1\to B'_1}\otimes\cdots\otimes\mathcal{D}^x_{B_k\to B'_k},
\end{equation}
because $x$ is classical data that can be copied an arbitrary number of times. Such an extension obeys the permutation covariance and non-signaling conditions stated in~\eqref{eq:perm_cov_bipartite} and~\eqref{eq:no_signal_bipartite}. Thus, all one-way LOCC bipartite quantum channels are $k$-extendible for all $ k\ge 2$. 

\subsection{Bipartite \texorpdfstring{$k$}{k}-extendible superchannels}

In this section, we discuss the extendibility of bipartite superchannels, and in order to do so, we first establish the notion of marginal superchannels. Let $Q^{\Theta}_{CDA'B'\to C'D'AB}$ and $Q^{\Upsilon}_{CD_{[k]}A'B'_{[k]}\to C'D'_{[k]}AB_{[k]}}$ be the unique quantum channels corresponding to the superchannels $\Theta_{(AB\to A'B')\to (CD\to C'D')}$ and $\Upsilon_{(AB_{[k]}\to A'B'_{[k]})\to (CD_{[k]}\to C'D'_{[k]})}$, respectively. The superchannel $\Theta$ is said to be a marginal of the superchannel $\Upsilon$ if and only if the quantum channel $Q^{\Theta}_{CDA'B'\to C'D'AB}$ is a marginal of the channel $Q^{\Upsilon}_{CD_{[k]}A'B'_{[k]}\to C'D'_{[k]}AB_{[k]}}$; that is,
\begin{equation}
    \operatorname{Tr}_{D'_{[k]\setminus i}B_{[k]\setminus i}}\circ\mathcal{Q}^{\Upsilon} = \mathcal{Q}^{\Theta}\otimes \operatorname{Tr}_{D_{[k]\setminus i}B'_{[k]\setminus i}}.
\end{equation}

All bipartite superchannels do not have a well defined marginal. Similar to the case of bipartite channels, we restrict our discussion to such bipartite superchannels $\Theta_{(AB\to A'B')\to (CD\to C'D')}$ for which the associated quantum channel $\mathcal{Q}^{\Theta}_{CDA'B'\to C'D'AB}$ is semicausal and does not allow any data to be transmitted from the joint system $DB'$ to the joint system $C'A$. Now that we have defined the marginal of a bipartite superchannel, let us define $k$-extendible superchannels. 

\begin{definition}[Bipartite $k$-extendible superchannels]
	A bipartite superchannel $\Theta_{(AB\to A'B')\to (CD\to C'D')}$ is $k$-extendible if there exists a superchannel $\Upsilon_{(AB_{[k]}\to A'B'_{[k]})\to (CD_{[k]}\to C'D'_{[k]})}$ such that the following conditions hold for the unique quantum channels, $\mathcal{Q}^{\Theta}_{CDA'B'\to C'D'AB}$ and $\mathcal{Q}^{\Upsilon}_{CD_{[k]}A'B'_{[k]}\to C'D'_{[k]}AB_{[k]}}$, of the superchannels $\Theta$ and $\Upsilon$, respectively.
	\begin{enumerate}
		\item Permutation covariance:
		\begin{multline}\label{eq:two_ext_bip_supch_perm_cov}
			\mathcal{Q}^{\Upsilon}\circ\left(\mathcal{W}^{\pi}_{D_{[k]}}\otimes\mathcal{W}^{\pi}_{B'_{[k]}}\right) \\= \left(\mathcal{W}^{\pi}_{D'_{[k]}}\otimes\mathcal{W}^{\pi}_{B_{[k]}}\right)\circ \mathcal{Q}^{\Upsilon} \quad \forall \pi \in S_k.
		\end{multline}
		\item Non-signaling:
        \begin{equation}\label{eq:k_ext_bip_supch_non_sig}
            \operatorname{Tr}_{D'_{[k]\setminus 1}}\circ\mathcal{Q}^{\Upsilon} = \operatorname{Tr}_{D'_{[k]\setminus 1}}\circ\mathcal{Q}^{\Upsilon}\circ\mathcal{R}_{B'_{[k]\setminus 1}},
        \end{equation}
        where $\mathcal{R}$ is a quantum channel that traces out the input and replaces with an arbitrary quantum state.
        \item Marginality:
        \begin{equation}
			\operatorname{Tr}_{B_{[k]\setminus 1}D'_{[k]\setminus 1}}\circ \mathcal{Q}^{\Upsilon} 
			 = \mathcal{Q}^{\Theta}_{CDA'B'\to C'D'AB}\otimes\operatorname{Tr}_{B'_{[k]\setminus 1}D_{[k]\setminus 1}}.\label{eq:k_ext_bip_supch_marg_cond}
		\end{equation}
	\end{enumerate} 
\end{definition}

The conditions in~\eqref{eq:two_ext_bip_supch_perm_cov},~\eqref{eq:k_ext_bip_supch_non_sig}, and~\eqref{eq:k_ext_bip_supch_marg_cond} can be written as semidefinite constraints on the Choi operators of the superchannels $\Theta$ and $\Upsilon$ as
\begin{equation}
    \left(\mathcal{W}^{\pi}_{D'_{[k]}}\otimes \mathcal{W}^{\pi}_{B_{[k]}}\otimes \mathcal{W}^{\pi}_{D_{[k]}}\otimes \mathcal{W}^{\pi}_{B'_{[k]}}\right)\left(\Gamma^{\Upsilon}\right) = \Gamma^{\Upsilon} \quad \forall \pi \in S_k,\label{eq:perm_cov_bip_supch_choi}
\end{equation}
\vspace{-.5cm}
\begin{align}
    \operatorname{Tr}_{D'_{[k]\setminus 1}}\!\left[\Gamma^{\Upsilon}\right] &= \operatorname{Tr}_{D'_{[k]\setminus 1}B'_{[k]\setminus 1}}\!\left[\Gamma^{\Upsilon}\right]\otimes \frac{I_{B'_{[k]\setminus 1}}}{\left|B'\right|^{k-1}},\label{eq:k_ext_bip_supch_non_sig_choi}\\
    \operatorname{Tr}_{B_{[k]\setminus 1}D'_{[k]\setminus 1}}\!\left[\Gamma^{\Upsilon}\right] &= \Gamma^{\Theta}\otimes I_{B'_{[k]\setminus 1}D_{[k]\setminus 1}},\label{eq:k_ext_bip_supch_marg_choi}
\end{align}
where $\left|B'\right|$ is the dimension of the system $B'$ and $\Gamma^{\Theta}$ and $\Gamma^{\Upsilon}$ are the Choi operators of the quantum channels $\mathcal{Q}^{\Theta}$ and $\mathcal{Q}^{\Upsilon}$, respectively.

\begin{proposition}\label{prop:bip_such_ch_marg_theo}
    Let $\Theta_{(AB\to A'B')\to (CD\to C'D')}$ be a $k$-extendible superchannel with the $k$-extension $\Upsilon_{(AB_{[k]}\to A'B'_{[k]})\to (CD_{[k]}\to C'D'_{[k]})}$. Then each marginal of the channel obtained by acting with the superchannel $\Upsilon$ on a channel $\mathcal{P}_{AB_{[k]}\to A'B'_{[k]}}$ is the same as the channel obtained by acting with the superchannel $\Theta$ on the respective marginal of the channel $\mathcal{P}_{AB_{[k]}\to A'B'_{[k]}}$. That is,
    \begin{multline}\label{eq:bip_supch_ch_marg_theo}
        \operatorname{Tr}_{D'_{[k]\setminus i}}\circ\left(\Upsilon\left(\mathcal{P}\right)\right) = \\
         \Theta\left(\operatorname{Tr}_{B'_{[k]\setminus i}}\circ\mathcal{P}\circ\mathcal{A}_{B_{[k]\setminus i}}\right)\otimes\operatorname{Tr}_{D_{[k]\setminus i}} \quad \forall i\in [k].
    \end{multline}
\end{proposition}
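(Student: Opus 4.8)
The plan is to mirror the proof of Proposition~\ref{prop:two_ext_supch_ch_marginal}, with the single point-to-point non-signaling identity replaced by the combined use of the non-signaling and marginality conditions available in the bipartite setting. First I would observe that, by permutation covariance~\eqref{eq:perm_cov_bip_supch_choi}, the non-signaling condition~\eqref{eq:k_ext_bip_supch_non_sig_choi} and the marginality condition~\eqref{eq:k_ext_bip_supch_marg_choi}, which are stated for the index $1$, hold equally with $1$ replaced by any $i\in[k]$. Fixing such an $i$, I would pass to Choi operators and use the (bipartite form of the) propagation rule~\eqref{eq:prop_rule} to write
\begin{equation}
    \operatorname{Tr}_{D'_{[k]\setminus i}}\!\left[\Gamma^{\Upsilon[\mathcal{P}]}\right] = \operatorname{Tr}_{AB_{[k]}A'B'_{[k]}}\!\left[T\!\left(\Gamma^{\mathcal{P}}\right)\operatorname{Tr}_{D'_{[k]\setminus i}}\!\left[\Gamma^{\Upsilon}\right]\right],
\end{equation}
so that the partial trace over the discarded output copies acts only on $\Gamma^{\Upsilon}$, and $T$ denotes the partial transpose over the linked systems.

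The core of the argument is to simplify $\operatorname{Tr}_{D'_{[k]\setminus i}}[\Gamma^{\Upsilon}]$ inside this contraction. Applying the non-signaling condition~\eqref{eq:k_ext_bip_supch_non_sig_choi} replaces it by $\operatorname{Tr}_{D'_{[k]\setminus i}B'_{[k]\setminus i}}[\Gamma^{\Upsilon}]\otimes I_{B'_{[k]\setminus i}}/|B'|^{k-1}$; the maximally mixed factor on $B'_{[k]\setminus i}$, once contracted against $\Gamma^{\mathcal{P}}$, amounts to tracing out the outputs $B'_{[k]\setminus i}$ of $\mathcal{P}$. I would then invoke the marginal relation for the bipartite channel $\mathcal{P}$, namely $\operatorname{Tr}_{B'_{[k]\setminus i}}[\Gamma^{\mathcal{P}}] = \Gamma^{\mathcal{N}^i}_{AB_iA'B'_i}\otimes I_{B_{[k]\setminus i}}$, where $\mathcal{N}^i = \operatorname{Tr}_{B'_{[k]\setminus i}}\circ\mathcal{P}\circ\mathcal{A}_{B_{[k]\setminus i}}$. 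This simultaneously produces the channel $\mathcal{N}^i$ appearing on the right-hand side and, crucially, an identity factor $I_{B_{[k]\setminus i}}$ on the remaining input copies. That identity contracts against the $B_{[k]\setminus i}$ part of $\operatorname{Tr}_{D'_{[k]\setminus i}B'_{[k]\setminus i}}[\Gamma^{\Upsilon}]$ as a partial trace, and the marginality condition~\eqref{eq:k_ext_bip_supch_marg_choi} gives $\operatorname{Tr}_{B_{[k]\setminus i}D'_{[k]\setminus i}B'_{[k]\setminus i}}[\Gamma^{\Upsilon}] = |B'|^{k-1}\,\Gamma^{\Theta}\otimes I_{D_{[k]\setminus i}}$. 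Collecting terms, the two powers of $|B'|^{k-1}$ cancel and I am left with $\operatorname{Tr}_{AB_iA'B'_i}[T(\Gamma^{\mathcal{N}^i})\Gamma^{\Theta}]\otimes I_{D_{[k]\setminus i}}$, which is precisely the Choi operator of $\Theta(\mathcal{N}^i)\otimes\operatorname{Tr}_{D_{[k]\setminus i}}$, yielding~\eqref{eq:bip_supch_ch_marg_theo}.

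The step I expect to be the main obstacle is reconciling the two reduction conditions, since the non-signaling condition~\eqref{eq:k_ext_bip_supch_non_sig_choi} removes the output copies $B'_{[k]\setminus i}$ whereas the marginality condition~\eqref{eq:k_ext_bip_supch_marg_choi} removes the input copies $B_{[k]\setminus i}$; a priori the operator $\operatorname{Tr}_{D'_{[k]\setminus i}B'_{[k]\setminus i}}[\Gamma^{\Upsilon}]$ need not be proportional to the identity on $B_{[k]\setminus i}$, so it cannot be reduced to $\Gamma^{\Theta}$ on its own. The resolution, which I would emphasize, is that this operator is contracted only against the identity $I_{B_{[k]\setminus i}}$ coming from the semicausality of $\mathcal{P}$ (its non-signaling from $B_{[k]\setminus i}$ to $A'B'_i$); hence only the partial trace $\operatorname{Tr}_{B_{[k]\setminus i}}[\cdot]$ enters, and that is exactly what the marginality condition controls. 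This is also the structural reason the appending channel $\mathcal{A}_{B_{[k]\setminus i}}$, rather than a fixed state, appears on the right-hand side: for a semicausal $\mathcal{P}$ the marginal is independent of the appended state. The remaining work is routine bookkeeping of the transpose map and the dimensional factors, which presents no conceptual difficulty.
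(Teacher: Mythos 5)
Your proposal is correct and follows essentially the same route as the paper's proof in Appendix~\ref{app:bip_supch_ch_marg_theo_proof}: pass to Choi operators via the propagation rule, apply the non-signaling condition~\eqref{eq:k_ext_bip_supch_non_sig_choi} to strip the $D'_{[k]\setminus i}$ outputs, use the marginal relation of $\mathcal{P}$ to produce $\Gamma^{\mathcal{N}^i}\otimes I_{B_{[k]\setminus i}}$, and then invoke the marginality condition~\eqref{eq:k_ext_bip_supch_marg_choi} so that the two factors of $\left|B'\right|^{k-1}$ cancel. Your remark on why the identity factor $I_{B_{[k]\setminus i}}$ from the semicausality of $\mathcal{P}$ is exactly what makes the marginality condition applicable matches the structure of the paper's chain of equalities.
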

\begin{proof}
    See Appendix~\ref{app:bip_supch_ch_marg_theo_proof}.    
\end{proof}

\begin{proposition}\label{prop:bip_supch_preserve_ext}
    The channel formed by applying a $k$-extendible superchannel on a $k$-extendible quantum channel is also a $k$-extendible quantum channel.
\end{proposition}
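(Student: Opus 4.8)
The plan is to take the channel $\mathcal{M}_{CD\to C'D'} \coloneqq \Theta(\mathcal{N}_{AB\to A'B'})$ produced by a $k$-extendible superchannel $\Theta$ acting on a $k$-extendible channel $\mathcal{N}$, and to exhibit an explicit $k$-extension of it. Let $\Upsilon$ be a $k$-extension of $\Theta$ and let $\mathcal{P}_{AB_{[k]}\to A'B'_{[k]}}$ be a $k$-extension of $\mathcal{N}$. My candidate for the $k$-extension of $\mathcal{M}$ is the channel $\Upsilon(\mathcal{P})_{CD_{[k]}\to C'D'_{[k]}}$, which is automatically a legitimate quantum channel because $\Upsilon$ is a superchannel and $\mathcal{P}$ is a channel. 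It then remains to verify the two defining conditions of a bipartite $k$-extension for $\mathcal{M}$: first, marginality, namely $\operatorname{Tr}_{D'_{[k]\setminus 1}}\circ\Upsilon(\mathcal{P}) = \mathcal{M}\otimes\operatorname{Tr}_{D_{[k]\setminus 1}}$; and second, permutation covariance of $\Upsilon(\mathcal{P})$ with respect to the $D$ and $D'$ systems.

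The marginality condition follows almost immediately from Proposition~\ref{prop:bip_such_ch_marg_theo}. Taking $i=1$ there gives $\operatorname{Tr}_{D'_{[k]\setminus 1}}\circ\Upsilon(\mathcal{P}) = \Theta(\operatorname{Tr}_{B'_{[k]\setminus 1}}\circ\mathcal{P}\circ\mathcal{A}_{B_{[k]\setminus 1}})\otimes\operatorname{Tr}_{D_{[k]\setminus 1}}$. Since $\mathcal{P}$ is a $k$-extension of $\mathcal{N}$, the marginal condition~\eqref{eq:no_signal_bipartite}, together with $\operatorname{Tr}_{B_{[k]\setminus 1}}\circ\mathcal{A}_{B_{[k]\setminus 1}} = \operatorname{id}$, yields $\operatorname{Tr}_{B'_{[k]\setminus 1}}\circ\mathcal{P}\circ\mathcal{A}_{B_{[k]\setminus 1}} = \mathcal{N}$, so the right-hand side collapses to $\Theta(\mathcal{N})\otimes\operatorname{Tr}_{D_{[k]\setminus 1}} = \mathcal{M}\otimes\operatorname{Tr}_{D_{[k]\setminus 1}}$, as required.

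For the permutation covariance, I would work with the unique bipartite channel $\mathcal{Q}^{\Upsilon}$ and use that $\Upsilon(\mathcal{P})$ is obtained by contracting $\mathcal{Q}^{\Upsilon}$ with $\mathcal{P}$ over the systems $AB_{[k]}$ and $A'B'_{[k]}$. Since the $D,D'$ systems are not contracted, the operation $\mathcal{W}^{\pi}_{D'_{[k]}}\circ(\cdot)\circ\mathcal{W}^{\pi^{-1}}_{D_{[k]}}$ applied to $\Upsilon(\mathcal{P})$ acts only on $\mathcal{Q}^{\Upsilon}$. The covariance relation~\eqref{eq:two_ext_bip_supch_perm_cov} can be rearranged to convert this $(D,D')$-permutation of $\mathcal{Q}^{\Upsilon}$ into a $(B,B')$-permutation, namely $\mathcal{W}^{\pi}_{D'_{[k]}}\circ\mathcal{Q}^{\Upsilon}\circ\mathcal{W}^{\pi^{-1}}_{D_{[k]}} = \mathcal{W}^{\pi^{-1}}_{B_{[k]}}\circ\mathcal{Q}^{\Upsilon}\circ\mathcal{W}^{\pi}_{B'_{[k]}}$. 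Because $B_{[k]}$ and $B'_{[k]}$ are exactly the systems contracted against $\mathcal{P}$, these permutations can be transferred onto $\mathcal{P}$, producing $\mathcal{W}^{\pi}_{B'_{[k]}}\circ\mathcal{P}\circ\mathcal{W}^{\pi^{-1}}_{B_{[k]}}$, which equals $\mathcal{P}$ by the covariance~\eqref{eq:perm_cov_bipartite} of the channel extension. This returns $\Upsilon(\mathcal{P})$ unchanged, establishing that $\mathcal{W}^{\pi}_{D'_{[k]}}\circ\Upsilon(\mathcal{P}) = \Upsilon(\mathcal{P})\circ\mathcal{W}^{\pi}_{D_{[k]}}$.

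I expect the main obstacle to be the careful bookkeeping of the permutation operators as they are pushed through the contraction: each time a permutation is moved from the output of one map to the input of the adjacent map it is inverted, and one must check that the permutations being rearranged act on disjoint systems so the intermediate commutations are legitimate. Equivalently, one can phrase the whole argument at the level of Choi operators via the propagation rule (the bipartite analogue of~\eqref{eq:prop_rule}), in which case the obstacle becomes verifying that the partial transpose commutes with the relevant permutation unitaries and that the trace over the contracted systems correctly converts a left action of $\mathcal{W}^{\pi^{-1}}$ into a right action of $\mathcal{W}^{\pi}$.
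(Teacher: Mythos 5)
Your proposal is correct and follows essentially the same route as the paper's Appendix~E proof: both take $\Upsilon(\mathcal{P})$ as the candidate $k$-extension of $\Theta(\mathcal{N})$, verify marginality via the non-signaling/marginal conditions (you cleanly delegate this to Proposition~\ref{prop:bip_such_ch_marg_theo}, which the paper instead recomputes inline at the Choi-operator level), and establish permutation covariance by using the covariance of $\mathcal{Q}^{\Upsilon}$ to trade the $(D_{[k]},D'_{[k]})$ permutation for a $(B_{[k]},B'_{[k]})$ permutation that is then absorbed by the covariance of $\mathcal{P}$. The paper carries out that last step with Choi operators, $(W^{\pi})^{T}=W^{\pi\dagger}$, and cyclicity of trace, which is exactly the formalization of the wire-pushing bookkeeping you anticipate as the main obstacle.
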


\begin{proof}
    See Appendix~\ref{app:bip_supch_preserve_ext}.
\end{proof}
    
\begin{proposition}\label{prop:1WL_supch_k_ext}
	All bipartite superchannels that can be realized by local operations and one-way classical communication are $k$-extendible for all $ k\ge 2$.
\end{proposition}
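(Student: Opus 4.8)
The plan is to reproduce the strategy used in Proposition~\ref{prop:k-ext-contains-1-LOCC} and in the bipartite one-way LOCC channel discussion above: exploit the fact that the classical message exchanged in a one-way LOCC protocol can be copied arbitrarily many times, and use these copies to drive $k$ identical post-processings on Bob's side. First I would fix the direction of classical communication to be Alice $\to$ Bob, which is the one consistent with the semicausal class under consideration (no signaling from $DB'$ to $C'A$), and write down the general form of the unique quantum channel associated with such a superchannel. Grouping Alice's systems $C,A',C',A$ and Bob's systems $D,B',D',B$, it factorizes as
\[
\mathcal{Q}^{\Theta}_{CDA'B'\to C'D'AB} = \sum_x \mathcal{A}^x_{CA'\to C'A}\otimes\mathcal{B}^x_{DB'\to D'B},
\]
where $\{\mathcal{A}^x\}_x$ is Alice's combined quantum instrument (CP maps summing to a trace-preserving map, with classical outcome $x$) and $\{\mathcal{B}^x\}_x$ is Bob's set of channels conditioned on $x$, the only link between the two being the classical register carrying $x$. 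Since Alice's output $C'A$ depends only on her inputs and on $x$, this form is automatically semicausal and nonsignaling from $DB'$ to $C'A$, as required.

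Next I would define the candidate extension by copying $x$ and replicating Bob's factor $k$ times:
\[
\mathcal{Q}^{\Upsilon} \coloneqq \sum_x \mathcal{A}^x_{CA'\to C'A}\otimes\mathcal{B}^x_{D_1B'_1\to D'_1B_1}\otimes\cdots\otimes\mathcal{B}^x_{D_kB'_k\to D'_kB_k},
\]
and then verify the three defining conditions of a bipartite $k$-extendible superchannel. Permutation covariance~\eqref{eq:two_ext_bip_supch_perm_cov} is immediate because the $k$ Bob factors are identical, so permuting the inputs $D_{[k]}B'_{[k]}$ and the outputs $D'_{[k]}B_{[k]}$ by the same $\pi$ merely reorders the tensor factors and leaves the sum invariant. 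Marginality~\eqref{eq:k_ext_bip_supch_marg_cond} follows because tracing out $B_{[k]\setminus 1}D'_{[k]\setminus 1}$ collapses each factor $\mathcal{B}^x_{D_iB'_i\to D'_iB_i}$ with $i\neq 1$ to $\operatorname{Tr}_{D_iB'_i}$ (using that each $\mathcal{B}^x$ is trace preserving), leaving exactly $\mathcal{Q}^{\Theta}\otimes\operatorname{Tr}_{D_{[k]\setminus 1}B'_{[k]\setminus 1}}$.

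The step I expect to require the most care is the non-signaling condition~\eqref{eq:k_ext_bip_supch_non_sig}. After tracing out $D'_{[k]\setminus 1}$, each factor with $i\neq 1$ reduces to $\operatorname{Tr}_{D'_i}\circ\mathcal{B}^x_{D_iB'_i\to D'_iB_i}$, and one must argue this is independent of the input $B'_i$. The key observation is that $B_i$ is the system Bob \emph{feeds into} the underlying channel, produced by his pre-processing from $D_i$ alone, whereas $B'_i$ is the channel \emph{output} that only enters his post-processing, whose result is $D'_i$. Hence, once $D'_i$ is discarded, the surviving map depends only on $D_i$, i.e.\ it equals (Bob's pre-processing)$\,\circ\operatorname{Tr}_{B'_i}$. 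This semicausal structure, inherited from the pre/post decomposition of every superchannel, is exactly what makes replacing the $B'_{[k]\setminus 1}$ inputs by an arbitrary state via $\mathcal{R}_{B'_{[k]\setminus 1}}$ immaterial, yielding~\eqref{eq:k_ext_bip_supch_non_sig}. Having verified all three conditions, I would conclude that every one-way LOCC bipartite superchannel $\Theta$ admits such an extension $\Upsilon$ and is therefore $k$-extendible for all $k\ge 2$.
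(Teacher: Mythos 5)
Your proof is correct and follows essentially the same route as the paper's: exploit the copyability of the classical message and replicate Bob's local maps $k$ times to build the extension. The only cosmetic difference is that you work directly with the unique bipartite channel $\mathcal{Q}^{\Theta}$ (collapsing the pre- and post-processing messages into a single classical index) and explicitly verify the permutation-covariance, marginality, and non-signaling conditions that the paper's appendix leaves as ``straightforward to verify.''
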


\begin{proof}
    See Appendix~\ref{app:1WL_supch_k_ext}.
\end{proof}

\subsection{Unextendible entanglement of bipartite quantum channels}

With the notion of $k$-extendibility established for bipartite quantum channels and superchannels, we can now look for a measure to quantify the unextendibility of a bipartite quantum channel. Let us first define the following set of extensions of a bipartite channel $\mathcal{N}_{AB\to A'B'}$:
\begin{equation}\label{eq:ext_set_bip_ch}
    \operatorname{Ext}\!\left(\mathcal{N}\right) \coloneqq \left\{
    \begin{array}{c}
        \mathcal{P}_{AB_1B_2\to A'B'_1B'_2}:\\
        \mathcal{N}_{AB\to A'B'} = \operatorname{Tr}_{B'_2}\circ\mathcal{P}_{AB_1B_2\to A'B'_1B'_2}\circ\mathcal{A}_{B_2},\\
        \operatorname{Tr}_{B'_1}\circ\mathcal{P} = \operatorname{Tr}_{B'_1}\circ\mathcal{P}\circ\mathcal{R}_{B_1}
    \end{array}
    \right\},
\end{equation}
where $\mathcal{A}$ is a  channel that appends an arbitrary state and $\mathcal{R}$ is a  channel that traces out the input and replaces it with an arbitrary  state. The generalized unextendible entanglement of a bipartite  channel is defined as follows.
\begin{definition}
The generalized unextendible entanglement of a bipartite quantum channel is defined for a generalized divergence between quantum channels, $\mathbf{D}$, as follows:
\begin{multline}\label{eq:gen_unext_ent_bip_def}
    \mathbf{E}^u\!\left(\mathcal{N}_{AB\to A'B'}\right) 
    \\= \frac{1}{2}\inf_{\mathcal{P}_{AB_1B_2\to A'B'_1B'_2} \in \operatorname{Ext}\left(\mathcal{N}\right)}\Big\{\mathbf{D}\!\left(\mathcal{N}_{AB\to A'B'}\Vert\mathcal{M}_{AB\to A'B'}\right):\\
     \mathcal{N}_{AB\to A'B'} = \operatorname{Tr}_{B'_2}\circ\mathcal{P}_{AB_1B_2\to A'B'_1B'_2}\circ\mathcal{A}_{B_2}, \\
    \mathcal{M}_{AB\to A'B'} = \operatorname{Tr}_{B'_1}\circ\mathcal{P}_{AB_1B_2\to A'B'_1B'_2}\circ\mathcal{A}_{B_1}\Big\},
\end{multline}
\end{definition}

As is evident from the definition, the minimum value of the generalized unextendible entanglement is achieved for a two-extendible quantum channel. If the underlying divergence is strongly faithful, then the unextendible entanglement of a bipartite quantum channel is equal to zero if and only if the channel is two-extendible. We also find that the action of a two-extendible superchannel on a bipartite quantum channel cannot increase the unextendible entanglement of the said channel. 

\begin{theorem}[Monotonicity]\label{theo:two_ext_monotonic_bip}
    The generalized unextendible entanglement of a bipartite quantum channel does not increase under the action of a two-extendible superchannel. That is, for an arbitrary bipartite quantum channel $\mathcal{N}_{AB\to A'B'}$ and a two-extendible superchannel $\Theta_{(AB\to A'B')\to (CD\to C'D')}$,
    \begin{equation}\label{eq:unext_ent_ch_ge_unext_ent_superch_bip}
         \mathbf{E}^u\!\left(\mathcal{N}_{AB \to A'B'}\right) \ge \mathbf{E}^u\!\left(\Theta\left(\mathcal{N}_{AB \to A'B'}\right)\right).
     \end{equation}
\end{theorem}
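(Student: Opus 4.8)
The plan is to mirror the proof of the point-to-point monotonicity statement (Theorem~\ref{theo:two_ext_monotonic_p2p}), replacing the role of the nonsignaling/marginal identity for point-to-point superchannels (Proposition~\ref{prop:two_ext_supch_ch_marginal}) with its bipartite counterpart (Proposition~\ref{prop:bip_such_ch_marg_theo}). First I would fix a two-extendible superchannel $\Theta_{(AB\to A'B')\to (CD\to C'D')}$ together with a two-extension $\Upsilon_{(AB_1B_2\to A'B'_1B'_2)\to (CD_1D_2\to C'D'_1D'_2)}$, and take an arbitrary extension $\mathcal{P}_{AB_1B_2\to A'B'_1B'_2}\in\operatorname{Ext}(\mathcal{N})$ of $\mathcal{N}_{AB\to A'B'}$, with the two associated marginals $\mathcal{N} = \operatorname{Tr}_{B'_2}\circ\mathcal{P}\circ\mathcal{A}_{B_2}$ and $\mathcal{M} \coloneqq \operatorname{Tr}_{B'_1}\circ\mathcal{P}\circ\mathcal{A}_{B_1}$.

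Next I would invoke the contraction of the generalized channel divergence under a superchannel (Theorem~\ref{theo:gen_div_channel_data_proc}) to obtain $\mathbf{D}(\mathcal{N}\Vert\mathcal{M}) \ge \mathbf{D}(\Theta(\mathcal{N})\Vert\Theta(\mathcal{M}))$; this holds for the composite systems $AB,A'B'$ since the contraction theorem applies to channels on systems of arbitrary size and to an arbitrary superchannel, so semicausality of $\mathcal{M}$ is not needed here. Applying Proposition~\ref{prop:bip_such_ch_marg_theo} with $k=2$ to $\Upsilon$ acting on $\mathcal{P}$ then identifies the two marginals of $\Upsilon(\mathcal{P})$:
\begin{align}
\operatorname{Tr}_{D'_2}\circ\Upsilon(\mathcal{P}) &= \Theta(\mathcal{N})\otimes\operatorname{Tr}_{D_2}, \\
\operatorname{Tr}_{D'_1}\circ\Upsilon(\mathcal{P}) &= \Theta(\mathcal{M})\otimes\operatorname{Tr}_{D_1}.
\end{align}
Composing the first line with $\mathcal{A}_{D_2}$ recovers $\Theta(\mathcal{N}) = \operatorname{Tr}_{D'_2}\circ\Upsilon(\mathcal{P})\circ\mathcal{A}_{D_2}$, exhibiting $\Theta(\mathcal{N})$ as the first marginal of $\Upsilon(\mathcal{P})$ and $\Theta(\mathcal{M}) = \operatorname{Tr}_{D'_1}\circ\Upsilon(\mathcal{P})\circ\mathcal{A}_{D_1}$ as its second marginal. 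The full chain then reads
\begin{align}
\mathbf{D}(\mathcal{N}\Vert\mathcal{M}) &\ge \mathbf{D}(\Theta(\mathcal{N})\Vert\Theta(\mathcal{M})) \\
&\ge 2\,\mathbf{E}^u(\Theta(\mathcal{N})),
\end{align}
the last step using that $\Upsilon(\mathcal{P})$ is a valid extension of $\Theta(\mathcal{N})$ whose marginals are $\Theta(\mathcal{N})$ and $\Theta(\mathcal{M})$. Since $\mathcal{P}$ was arbitrary, taking the infimum over $\mathcal{P}\in\operatorname{Ext}(\mathcal{N})$ and dividing by two yields the claim $\mathbf{E}^u(\mathcal{N})\ge\mathbf{E}^u(\Theta(\mathcal{N}))$.

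The step requiring the most care is confirming that $\Upsilon(\mathcal{P})\in\operatorname{Ext}(\Theta(\mathcal{N}))$ in the precise sense of~\eqref{eq:ext_set_bip_ch}, so that the final inequality is licensed by the definition of $\mathbf{E}^u$ of a bipartite channel. The marginal-recovery condition is exactly the first displayed identity above, and the nonsignaling condition $\operatorname{Tr}_{D'_1}\circ\Upsilon(\mathcal{P}) = \operatorname{Tr}_{D'_1}\circ\Upsilon(\mathcal{P})\circ\mathcal{R}_{D_1}$ follows automatically once $\operatorname{Tr}_{D'_1}\circ\Upsilon(\mathcal{P})$ is written as $\Theta(\mathcal{M})\otimes\operatorname{Tr}_{D_1}$, since $\mathcal{R}_{D_1}$ acts only on the $D_1$ input that is already discarded and $\operatorname{Tr}_{D_1}\circ\mathcal{R}_{D_1}=\operatorname{Tr}_{D_1}$. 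Thus all of the genuine semicausal/nonsignaling bookkeeping is carried by Proposition~\ref{prop:bip_such_ch_marg_theo}; once that marginal identity is established, the remainder is the same short data-processing argument as in the point-to-point case, with the bipartite extension set in place of the point-to-point one.
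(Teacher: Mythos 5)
Your proposal is correct and follows essentially the same route as the paper's proof in Appendix~\ref{app:two_ext_monotonic_bip}: data processing of the channel divergence under $\Theta$, followed by Proposition~\ref{prop:bip_such_ch_marg_theo} to identify $\operatorname{Tr}_{D'_2}\circ\Upsilon(\mathcal{P})\circ\mathcal{A}_{D_2}$ and $\operatorname{Tr}_{D'_1}\circ\Upsilon(\mathcal{P})\circ\mathcal{A}_{D_1}$ with $\Theta(\mathcal{N})$ and $\Theta(\mathcal{M})$, and then an infimum over $\mathcal{P}\in\operatorname{Ext}(\mathcal{N})$. Your explicit check that $\Upsilon(\mathcal{P})$ satisfies both conditions of~\eqref{eq:ext_set_bip_ch}, including the nonsignaling condition via $\operatorname{Tr}_{D_1}\circ\mathcal{R}_{D_1}=\operatorname{Tr}_{D_1}$, is a detail the paper leaves implicit and is correct.
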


\begin{proof}
	Monotonicity of generalized unextendible entanglement of bipartite channels under two-extendible superchannels follows from similar arguments as given in the proof of Theorem~\ref{theo:two_ext_monotonic_p2p}. See Appendix~\ref{app:two_ext_monotonic_bip} for a complete proof.
\end{proof}

\medskip

Proposition~\ref{prop:bip_such_ch_marg_theo} allows us to generalize the statement of Theorem~\ref{theo:unext_ent_state_le_unext_ent_ch_gen} to bipartite semicausal channels, which we state as Theorem~\ref{theo:unext_ent_state_le_unext_ent_ch_bip}.

\begin{theorem}\label{theo:unext_ent_state_le_unext_ent_ch_bip}
    The unextendible entanglement of a quantum state $\sigma_{R_CC'C''R_DD'D''}$, with respect to the partition $R_CC'C'':R_DD'D''$, that can be established between two parties using a bipartite semicausal channel $\mathcal{N}_{AB\to A'B'}$, a bipartite two-extendible superchannel $\Theta_{(AB\to A'B')\to (CD\to C'C''D'D'')}$, and any two-extendible state $\rho_{R_CC'R_DD'}$, is not greater than the unextendible entanglement of the quantum channel $\mathcal{N}_{AB\to A'B'}$; that is, 
    \begin{equation}\label{eq:unext_ent_state_le_unext_ent_ch_bip}
        \sup_{\rho\in \operatorname{2-EXT}\!\left(R_CC:R_DD\right)}\mathbf{E}^u\!\left(\sigma_{R_CC'C'':R_DD'D''}\right) \le  \mathbf{E}^u\!\left(\mathcal{N}_{AB\to A'B'}\right), 
    \end{equation}
    where $\operatorname{2-EXT}\!\left(S_A:S_B\right)$ is the set of two-extendible states with respect to systems $S_A$ and $S_B$, and
    \begin{equation}
        \sigma_{R_CC'C''R_DD'D''} \coloneqq \left(\Theta\!\left(\mathcal{N}_{AB\to A'B'}\right)\right)\left(\rho_{R_CCR_DD}\right).
    \end{equation}
\end{theorem}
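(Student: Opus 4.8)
The plan is to mirror the proof of Theorem~\ref{theo:unext_ent_state_le_unext_ent_ch_gen} for the point-to-point case, substituting the bipartite marginal identity of Proposition~\ref{prop:bip_such_ch_marg_theo} for its point-to-point counterpart (Proposition~\ref{prop:two_ext_supch_ch_marginal}), and exploiting the two-extendibility of the input state to supply the extra input copy that the extended superchannel demands. First I would fix a two-extendible superchannel $\Theta$ with two-extension $\Upsilon$, an arbitrary extension $\mathcal{P}_{AB_1B_2\to A'B'_1B'_2}\in\operatorname{Ext}(\mathcal{N})$ from~\eqref{eq:ext_set_bip_ch} with complementary marginal $\mathcal{M}\coloneqq\operatorname{Tr}_{B'_1}\circ\mathcal{P}\circ\mathcal{A}_{B_1}$, and a two-extendible input state $\rho_{R_CCR_DD}$. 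Because $\rho$ is two-extendible with respect to $R_DD$, there is an extension $\widetilde{\rho}$ on $R_CC$ together with two isomorphic copies of $R_DD$ whose marginal onto either copy equals $\rho$; the semicausality of $\mathcal{N}$ is what guarantees that $\operatorname{Ext}(\mathcal{N})$ is nonempty (cf.\ Figure~\ref{fig:semicausal_channel}) and that $\Upsilon(\mathcal{P})$ admits well-defined marginals.

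Next I would verify that $\Upsilon(\mathcal{P})(\widetilde{\rho})$ is a legitimate extension of $\sigma$ with respect to the cut $R_CC'C'':R_DD'D''$. Tracing out one of Bob's output copies and invoking Proposition~\ref{prop:bip_such_ch_marg_theo} gives $\operatorname{Tr}_{(D'D'')_2}\circ\Upsilon(\mathcal{P})=\Theta(\operatorname{Tr}_{B'_2}\circ\mathcal{P}\circ\mathcal{A}_{B_2})\otimes\operatorname{Tr}_{D_2}=\Theta(\mathcal{N})\otimes\operatorname{Tr}_{D_2}$, since $\mathcal{P}\in\operatorname{Ext}(\mathcal{N})$; applying this to $\widetilde{\rho}$ and discarding the spectator reference of the second copy returns $(\Theta(\mathcal{N}))(\rho)=\sigma$. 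Hence $\Upsilon(\mathcal{P})(\widetilde{\rho})\in\operatorname{Ext}(\sigma)$, and the definition of the unextendible entanglement of a state yields
\[
\mathbf{E}^u(\sigma_{R_CC'C'':R_DD'D''})\le\frac{1}{2}\mathbf{D}\!\left(\sigma\,\middle\Vert\,\operatorname{Tr}_{(R_DD'D'')_1}\!\left[\Upsilon(\mathcal{P})(\widetilde{\rho})\right]\right).
\]
Applying Proposition~\ref{prop:bip_such_ch_marg_theo} to the other copy, together with the fact that the remaining marginal of $\widetilde{\rho}$ is again $\rho$, identifies the second argument as $(\Theta(\mathcal{M}))(\rho)$.

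I would then take the supremum over two-extendible $\rho$ and the infimum over $\mathcal{P}\in\operatorname{Ext}(\mathcal{N})$, and push the estimate through the same chain as in Theorem~\ref{theo:unext_ent_state_le_unext_ent_ch_gen}: the max-min inequality to interchange $\sup_\rho$ and $\inf_{\mathcal{P}}$; the definition of the generalized channel divergence to bound $\sup_{\rho}\mathbf{D}((\Theta(\mathcal{N}))(\rho)\Vert(\Theta(\mathcal{M}))(\rho))$ by $\mathbf{D}(\Theta(\mathcal{N})\Vert\Theta(\mathcal{M}))$ (the optimization over all inputs dominates the one over two-extendible inputs); the contraction of the channel divergence under $\Theta$ from Theorem~\ref{theo:gen_div_channel_data_proc} to bound this by $\mathbf{D}(\mathcal{N}\Vert\mathcal{M})$; and finally the definition~\eqref{eq:gen_unext_ent_bip_def} of $\mathbf{E}^u(\mathcal{N})$, since $\mathcal{M}$ is precisely the complementary marginal of $\mathcal{P}$.

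I expect the main obstacle to be the bookkeeping in the first step: correctly tracking Alice's and Bob's output copies and the spectator references $R_C,R_D$ through $\Upsilon(\mathcal{P})$, and checking that \emph{both} reductions of $\Upsilon(\mathcal{P})(\widetilde{\rho})$ feed the same input $\rho$ into $\Theta(\mathcal{N})$ and $\Theta(\mathcal{M})$. This is exactly where two-extendibility of $\rho$ is indispensable, since a generic input would leave the two $R_DD$-marginals of $\widetilde{\rho}$ unequal and break the closing comparison. Once Proposition~\ref{prop:bip_such_ch_marg_theo} is in hand, the remaining inequalities transcribe the point-to-point argument essentially verbatim.
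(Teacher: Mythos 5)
Your proposal is correct and follows essentially the same route as the paper's proof in Appendix~\ref{app:unext_ent_state_le_unext_ent_ch_bip}: take a two-extension $\tau$ of the two-extendible input $\rho$, use Proposition~\ref{prop:bip_such_ch_marg_theo} to show that $\Upsilon(\mathcal{P})(\tau)$ extends $\sigma$ and that its other reduction is $\Theta(\operatorname{Tr}_{B'_1}\circ\mathcal{P})(\rho)$, then close with the max-min inequality, the relaxation from two-extendible inputs to all inputs in the channel divergence, data processing under $\Theta$, and the definition of $\mathbf{E}^u(\mathcal{N})$. Your observation that the two-extendibility of $\rho$ is precisely what forces both reductions to feed the same input $\rho$ is exactly the crux the paper relies on.
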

\begin{proof}
    The proof follows the same line of reasoning as the proof of Theorem~\ref{theo:unext_ent_state_le_unext_ent_ch_gen}. We present a complete proof in Appendix~\ref{app:unext_ent_state_le_unext_ent_ch_bip}.
\end{proof}

\subsubsection{\texorpdfstring{$\alpha$}{Alpha}-geometric unextendible entanglement of bipartite quantum channels}

We have seen that the $\alpha$-geometric R\'enyi relative entropies offer several desirable properties when used as the underlying divergence for defining unextendible entanglement of point-to-point quantum channels. Here we define the $\alpha$-geometric unextendible entanglement of a bipartite quantum channel $\mathcal{N}_{AB\to A'B'}$ as
\begin{multline}\label{eq:geo_unext_ent_bip_def}
    \widehat{E}^u_{\alpha}\!\left(\mathcal{N}_{AB\to A'B'}\right) \coloneqq\\
    \frac{1}{2}\inf_{\mathcal{P} \in \operatorname{Ext}\left(\mathcal{N}\right)}\sup_{\psi_{RAB}}\left\{
    \begin{array}{c}
    \widehat{D}_{\alpha}\!\left(\mathcal{N}\!\left(\psi_{RAB}\right)\Vert\mathcal{M}\!\left(\psi_{RAB}\right)\right):\\
     \mathcal{N} = \operatorname{Tr}_{B'_2}\circ\mathcal{P}_{AB_1B_2\to A'B'_1B'_2}\circ\mathcal{A}_{B_2}, \\
    \mathcal{M} = \operatorname{Tr}_{B'_1}\circ\mathcal{P}_{AB_1B_2\to A'B'_1B'_2}\circ\mathcal{A}_{B_1}
    \end{array}\right\}, 
\end{multline}
for all $\alpha \in (0,1)\cup(1,2]$.

Several properties of the $\alpha$-geometric unextendible entanglement of point-to-point channels hold for the $\alpha$-geometric unextendible entanglement of bipartite quantum channels as well. Since the input and output systems of  point-to-point channels can be considered to have multiple subsystems, the following properties trivially extend from the $\alpha$-geometric unextendible entanglement of point-to-point quantum channels to the $\alpha$-geometric unextendible entanglement of bipartite quantum channels.
\begin{enumerate}
    \item The $\alpha$-geometric unextendible entanglement of a bipartite channel increases monotonically with $\alpha$.
    \item The smallest quantity in the family of $\alpha$-geometric unextendible entanglement is induced by the $\alpha$-geometric R\'enyi relative entropy when $\alpha\to 0$, and is called the min-geometric unextendible entanglement of the channel,
    \begin{align}
        &\widehat{E}^u_{\min}\!\left(\mathcal{N}_{AB\to A'B'}\right) \notag \\&\coloneqq \lim_{\alpha\to 0^+}\widehat{E}^u_{\alpha}\!\left(\mathcal{N}_{AB\to A'B'}\right)\\
        &= \frac{1}{2}\inf_{\mathcal{P}\in\operatorname{Ext}\left(\mathcal{N}\right)}\lim_{\alpha\to 0^+}\widehat{D}_{\alpha}\!\left(\mathcal{N}\Vert\operatorname{Tr}_{B'_1}\circ\mathcal{P}\circ\mathcal{A}_{B_1}\right)\label{eq:min_geo_unext_ent_bip_def}.
    \end{align}
    \item The $\alpha$-geometric unextendible entanglement converges to the unextendible entanglement induced by the Belavkin--Staszewski relative entropy as $\alpha\to 1$,
    \begin{align}
        &\widehat{E}^u\!\left(\mathcal{N}_{AB\to A'B'}\right) \notag \\&\coloneqq \lim_{\alpha\to 1}\widehat{E}^u_{\alpha}\!\left(\mathcal{N}_{AB\to A'B'}\right)\\
        &= \frac{1}{2}\inf_{\mathcal{P}\in\operatorname{Ext}\left(\mathcal{N}\right)}\lim_{\alpha\to 1}\widehat{D}_{\alpha}\!\left(\mathcal{N}\Vert\operatorname{Tr}_{B'_1}\circ\mathcal{P}\circ\mathcal{A}_{B_1}\right)\label{eq:BS_geo_unext_ent_bip_def}.
    \end{align}
    \item The $\alpha$-geometric unextendible entanglement is subadditive under tensor products of bipartite quantum channels; that is,
    \begin{equation}
        \widehat{E}^u_{\alpha}\!\left(\mathcal{N}\otimes \mathcal{M}\right) \le \widehat{E}^u_{\alpha}\!\left(\mathcal{N}\right) + \widehat{E}^u_{\alpha}\!\left(\mathcal{M}\right).
    \end{equation}
\end{enumerate}

One would expect an unextendible bipartite  channel to have the ability to boost the unextendibility of a bipartite  state. The $\alpha$-geometric R\'enyi relative entropy follows the chain rule $\forall \alpha \in (0,1)\cup(1,2]$~\cite[Proposition 45]{Katariya2021},
\begin{equation}\label{eq:Chain_rule_geo_ent}
    \widehat{D}_{\alpha}\!\left(\mathcal{N}\!\left(\rho\right)\Vert\mathcal{M}\!\left(\sigma\right)\right) \le \widehat{D}_{\alpha}\!\left(\mathcal{N}\Vert\mathcal{M}\right) + \widehat{D}_{\alpha}\!\left(\rho\Vert\sigma\right),
\end{equation}
which allows us to quantify the effect of an unextendible bipartite  channel on the unextendibility of a bipartite  state, as seen from the theorem below.

\begin{theorem}\label{theo:unext_ent_out_le_input}
    A bipartite quantum channel $\mathcal{N}_{AB\to A'B'}$ cannot increase the $\alpha$-geometric unextendible entanglement of a bipartite state $\rho_{AB}$ by more than the $\alpha$-geometric unextendible entanglement of the channel itself; that is, for all $ \alpha \in (0,1)\cup(1,2]$,
    \begin{equation}\label{eq:unext_ent_out_le_input}
        \widehat{E}^u_{\alpha}\!\left(\mathcal{N}_{AB\to A'B'}\!\left(\rho_{AB}\right)\right) \le \widehat{E}^u_{\alpha}\!\left(\rho_{AB}\right) + \widehat{E}^u_{\alpha}\!\left(\mathcal{N}_{AB\to A'B'}\right).
    \end{equation}
\end{theorem}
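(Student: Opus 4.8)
The plan is to combine the chain rule for the geometric R\'enyi relative entropy in~\eqref{eq:Chain_rule_geo_ent} with an explicit construction: given arbitrary extensions of both the state $\rho_{AB}$ and the channel $\mathcal{N}_{AB\to A'B'}$, I would build from them an extension of the output state $\mathcal{N}(\rho_{AB})$ whose two marginals are precisely the objects that appear on the left-hand side of the chain rule. Concretely, fix an arbitrary state extension $\sigma_{AB_1B_2}$ of $\rho_{AB}$ (so that $\operatorname{Tr}_{B_2}[\sigma_{AB_1B_2}]=\rho_{AB}$, and write $\rho_{AB_2}\coloneqq\operatorname{Tr}_{B_1}[\sigma_{AB_1B_2}]$ for the other marginal) together with an arbitrary channel extension $\mathcal{P}_{AB_1B_2\to A'B'_1B'_2}\in\operatorname{Ext}(\mathcal{N})$ as defined in~\eqref{eq:ext_set_bip_ch}, with associated second marginal $\mathcal{M}=\operatorname{Tr}_{B'_1}\circ\mathcal{P}\circ\mathcal{A}_{B_1}$. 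Define the candidate extension $\Xi_{A'B'_1B'_2}\coloneqq\mathcal{P}(\sigma_{AB_1B_2})$.

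The key step is to verify that $\Xi_{A'B'_1B'_2}$ is a legitimate extension of $\mathcal{N}(\rho_{AB})$ with respect to $B'$ and to identify its two marginals. For the marginal over $B'_2$, the marginality condition built into the definition of $\operatorname{Ext}(\mathcal{N})$ (that $\mathcal{N}$ is the genuine $B_1\to B'_1$ marginal of $\mathcal{P}$, i.e.\ that $\operatorname{Tr}_{B'_2}\circ\mathcal{P}=\mathcal{N}\otimes\operatorname{Tr}_{B_2}$, which follows since $\mathcal{A}_{B_2}$ appends an arbitrary state) gives $\operatorname{Tr}_{B'_2}[\Xi_{A'B'_1B'_2}]=\mathcal{N}(\operatorname{Tr}_{B_2}[\sigma_{AB_1B_2}])=\mathcal{N}(\rho_{AB})$, so $\Xi$ indeed extends the output state. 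For the marginal over $B'_1$, the non-signaling condition $\operatorname{Tr}_{B'_1}\circ\mathcal{P}=\operatorname{Tr}_{B'_1}\circ\mathcal{P}\circ\mathcal{R}_{B_1}$ from~\eqref{eq:ext_set_bip_ch} lets me replace the $B_1$-marginal of $\sigma$ by an arbitrary state, yielding $\operatorname{Tr}_{B'_1}[\Xi_{A'B'_1B'_2}]=\mathcal{M}(\rho_{AB_2})$. This is exactly the marginal-preservation phenomenon already exploited for superchannels in Proposition~\ref{prop:bip_such_ch_marg_theo}, now in the simpler setting of a channel acting on a state.

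With both marginals in hand, the definition of the unextendible entanglement of a state (Definition~\ref{def:unext-ent}) applied to $\mathcal{N}(\rho_{AB})$ and the extension $\Xi$ gives $2\widehat{E}^u_\alpha(\mathcal{N}(\rho_{AB}))\le\widehat{D}_\alpha(\mathcal{N}(\rho_{AB})\Vert\mathcal{M}(\rho_{AB_2}))$. Applying the chain rule~\eqref{eq:Chain_rule_geo_ent} with input states $\rho_{AB}$ and $\rho_{AB_2}$ and channels $\mathcal{N}$ and $\mathcal{M}$ bounds the right-hand side by $\widehat{D}_\alpha(\mathcal{N}\Vert\mathcal{M})+\widehat{D}_\alpha(\rho_{AB}\Vert\rho_{AB_2})$. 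Since the first term depends only on the channel extension $\mathcal{P}$ and the second only on the state extension $\sigma$, I can take the two infima independently; by~\eqref{eq:geo_unext_ent_bip_def} and Definition~\ref{def:unext-ent} these reduce to $2\widehat{E}^u_\alpha(\mathcal{N})$ and $2\widehat{E}^u_\alpha(\rho_{AB})$ respectively, and dividing by $2$ yields~\eqref{eq:unext_ent_out_le_input}. I expect the main obstacle to lie not in the chain rule itself but in the marginal verification: keeping the relabeling of the isomorphic systems $B_1,B_2$ (and $B'_1,B'_2$) consistent, and confirming that the appended and replaced states dissolve so that $\operatorname{Tr}_{B'_2}[\Xi]=\mathcal{N}(\rho_{AB})$ holds exactly rather than merely up to the appended state; the restriction $\alpha\in(0,1)\cup(1,2]$ is inherited directly from the chain rule and needs no separate argument.
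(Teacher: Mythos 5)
Your proposal is correct and follows essentially the same route as the paper's own proof: both take arbitrary extensions $\sigma_{AB_1B_2}$ of the state and $\mathcal{P}\in\operatorname{Ext}(\mathcal{N})$ of the channel, observe that $\mathcal{P}(\sigma)$ is an extension of $\mathcal{N}(\rho_{AB})$ whose $B'_1$-marginal equals $\left(\operatorname{Tr}_{B'_1}\circ\mathcal{P}\circ\mathcal{A}_{B_1}\right)\!\left(\operatorname{Tr}_{B_1}[\sigma]\right)$ by the non-signaling condition in~\eqref{eq:ext_set_bip_ch}, apply the chain rule~\eqref{eq:Chain_rule_geo_ent}, and take the two infima independently. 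The only difference is that you verify the marginal conditions explicitly where the paper leaves the $B'_2$-marginal check implicit, which is a harmless (indeed clarifying) elaboration.
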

\begin{proof}
    Consider a bipartite quantum channel $\mathcal{N}_{AB\to A'B'}$ and a quantum state $\rho_{AB}$. Let $\mathcal{P}_{AB_1B_2\to A'B'_1B'_2}$ be an extension of the channel $\mathcal{N}_{AB\to A'B'}$ in the set $\operatorname{Ext}\left(\mathcal{N}\right)$ defined in~\eqref{eq:ext_set_bip_ch}, and let $\sigma_{AB_1B_2}$ be an arbitrary extension of the quantum state $\rho_{AB}$. By the definition of $\alpha$-geometric unextendible entanglement of states,
    \begin{align}
        &\widehat{E}^u_{\alpha}\!\left(\mathcal{N}_{AB\to A'B'}\!\left(\rho_{AB}\right)\right)\notag \\
        &\le \frac{1}{2}\widehat{D}_{\alpha}\!\left(\mathcal{N}\!\left(\rho_{AB}\right)\Vert \operatorname{Tr}_{B'_1}\circ\mathcal{P}\!\left(\sigma\right)\right)\\
        &= \frac{1}{2}\widehat{D}_{\alpha}\!\left(\mathcal{N}\!\left(\rho_{AB}\right)\Vert \operatorname{Tr}_{B'_1}\circ\mathcal{P}\circ\mathcal{A}_{B_1}\!\left(\operatorname{Tr}_{B_1}\!\left[\sigma\right]\right)\right)\\
        &\le \frac{1}{2}\left\{\widehat{D}_{\alpha}\!\left(\rho\middle\Vert\operatorname{Tr}_{B_1}\!\left[\sigma\right]\right) + \widehat{D}_{\alpha}\!\left(\mathcal{N}\middle\Vert\operatorname{Tr}_{B'_1}\circ\mathcal{P}\circ\mathcal{A}_{B_1}\right)\right\},\label{eq:unext_ent_le_div_state_ch}
    \end{align}
    where the equality follows from the definition of $\operatorname{Ext}\!\left(\mathcal{N}\right)$ in~\eqref{eq:ext_set_bip_ch} and the second inequality follows from the chain rule of $\alpha$-geometric R\'enyi relative entropy given in~\eqref{eq:Chain_rule_geo_ent}. Since the inequality in~\eqref{eq:unext_ent_le_div_state_ch} holds for every quantum channel $\mathcal{P}_{AB_1B_2\to A'B'_1B'_2}$ in $\operatorname{Ext}\!\left(\mathcal{N}\right)$ and every extension $\sigma_{AB_1B_2}$ of the state $\rho_{AB}$, we can take an infimum over all such channels $\mathcal{P}_{AB_1B_2\to A'B'_1B'_2}$ and states $\sigma_{AB_1B_2}$, and conclude~\eqref{eq:unext_ent_out_le_input}.
\end{proof}

\section{Applications of the unextendible entanglement of bipartite quantum channels}\label{sec:applications_bip}

In this section, we discuss some applications of the unextendible entanglement of bipartite quantum channels. Bipartite quantum channels can be used to increase the entanglement in a shared bipartite quantum state, hence increasing the distillable entanglement and distillable key of the state. In Section~\ref{sec:dist_ent_st_ch_pair}, we give an upper bound on the expected rate of distilling ebits probabilistically from a bipartite quantum state using a bipartite quantum channel assisted by one-way LOCC or two-extendible superchannels, and in Section~\ref{sec:dist_sec_key_st_ch_pair}, we give an upper bound on the rate of distilling exact secret bits probabilistically from a bipartite quantum state using a bipartite quantum channel assisted by one-way LOCC or two-extendible superchannels.

\subsection{One-way distillable entanglement of a quantum state-channel pair}\label{sec:dist_ent_st_ch_pair}

The distillable entanglement of a bipartite quantum state under a restricted set of operations has been a subject of interest in quantum information theory. In quantum communication theory, the number of ebits that can be distilled from an existing bipartite state under local operations and one-way classical communication as well as two-way classical communication is of special interest due to state-of-the-art classical networks available to us. 

\begin{figure}
    \centering
    \includegraphics[width = \linewidth]{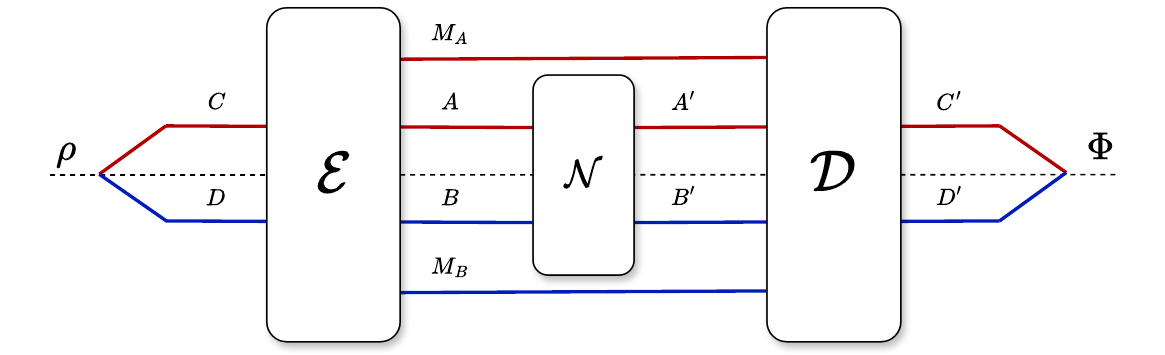}
    \caption{Protocol to distill a maximally entangled state $\Phi_{C'D'}$ from a bipartite quantum state $\rho_{CD}$, a bipartite channel $\mathcal{N}_{AB\to A'B'}$, and one-way LOCC pre-processing and post-processing channels $\mathcal{E}_{CD\to AM_ABM_B}$ and $\mathcal{D}_{A'M_AB'M_B\to C'D'}$, respectively. The dotted line represents the separation between Alice and Bob's labs. All systems above the dotted line are held by Alice and all systems below the dotted line are held by Bob.}
    \label{fig:Ent_dist_bip}
\end{figure}

An upper bound on the probabilistic one-way distillable entanglement of a bipartite quantum state has been established in~\cite{WWW19}. In this section, we consider a more general setting where the two distant parties, Alice and Bob, hold a bipartite quantum state $\rho_{CD}$, and also have access to a semicausal bipartite quantum channel $\mathcal{N}_{AB\to A'B'}$ that is not necessarily simulable by local operations and one-way classical communication. Alice and Bob can encode the available quantum state $\rho_{CD}$ using a one-way LOCC channel $\mathcal{E}_{CD\to AB}$, and then transform the encoded state using the quantum channel $\mathcal{N}_{AB\to A'B'}$ followed by a one-way LOCC decoding channel $\mathcal{D}_{A'B'\to C'D'}$ to distill a maximally entangled state (see Figure~\ref{fig:Ent_dist_bip}). As such, the channel $\mathcal{N}_{AB\to A'B'}$ can be used to boost the amount of entanglement that can be distilled probabilistically, or deterministically, from the shared bipartite state using one-way LOCC.

A simple example of a semicausal channel that is useful for probabilistic distillation of entanglement is the following channel: 
\begin{multline}\label{eq:eras_ch_full_ext}
    \mathcal{N}_{A\to A'B}\!\left(\rho_{RA}\right) = p\rho_{RA'}\otimes |e\rangle\!\langle e|_B \\
    + (1-p)\rho_{RB}\otimes |e\rangle\!\langle e|_{A'},
\end{multline}
where $|e\rangle\!\langle e|_{S}$ is the erasure symbol, which is orthogonal to every state in the Hilbert space of system $S$. This is a channel where Alice and Bob share an erasure channel with erasure probability $p$, and the output of the complementary channel is received by Alice herself; as such, nothing is lost to the environment. Alice can send one share of a maximally entangled state to Bob using the channel $\mathcal{N}_{A\to A'B}$, and then she can measure if the state she received back was erased or not by applying the POVM $\{\Pi_{A'},|e\rangle\!\langle e|_{A'}\}$, where $\Pi_{A'}$ is the projection onto the entire Hilbert space of system $A'$. If Alice measures her system to be erased, she knows that a maximally entangled state has been established between her and Bob, hence, distilling a maximally entangled state with probability $1-p$.

\subsubsection{Probabilistic one-way distillable entanglement}

Let us first consider a probabilistic setting in which an entanglement distillation protocol distills a maximally entangled state of Schmidt rank $d$ with some probability $p$. The expected number of ebits distilled by this protocol is given by $p\log_2 d$. Alice and Bob use one instance of the quantum channel $\mathcal{N}_{AB\to A'B'}$ and a one-way LOCC superchannel $\Theta_{(AB\to A'B')\to (CD\to C'D')}$ to distill a maximally entangled state of Schmidt rank $d$ from a bipartite quantum state $\rho_{CD}$ with probability $p$. The action of this protocol can be mathematically described as follows:
\begin{multline}
     \left(\Theta\left(\mathcal{N}\right)\right)\left(\rho\right) \\= p|1\rangle\!\langle 1|_X \otimes \Phi^d_{C'D'}  + (1-p)|0\rangle\!\langle 0|_X \otimes \sigma_{C'D'} ,
 \end{multline}
where system $X$ is held by Alice. We define the probabilistic one-way distillable entanglement of the quantum state-channel pair $\left(\rho_{CD},\mathcal{N}_{AB\to A'B'}\right)$ as follows:
\begin{multline}
    \widetilde{E}_{D,\operatorname{1WL}}\!\left(\rho_{CD},\mathcal{N}_{AB\to A'B'}\right) \\
    \coloneqq \sup_{\substack{p \in [0,1],d\in \mathbb{N},\\\Theta\in \operatorname{1WL}}}\left\{
    \begin{array}{c}
        p\log_2 d:\\
         \left(\Theta\!\left(\mathcal{N}\right)\right)\!\left(\rho\right) = p|1\rangle\!\langle 1|_X \otimes \Phi^d_{C'D'} \\
         + (1-p)|0\rangle\!\langle 0|_X \otimes \sigma_{C'D'} ,\\
        \sigma\in \mathcal{S}(C'D')
    \end{array}
    \right\}.
\end{multline}

We can also define the probabilistic two-extendible distillable entanglement of a quantum state-channel pair by relaxing the set of allowed superchannels to the set of two-extendible superchannels:
\begin{multline}
    \widetilde{E}_{D,\operatorname{2-EXT}}\!\left(\rho_{CD},\mathcal{N}_{AB\to A'B'}\right) \\
    \coloneqq \sup_{\substack{p \in [0,1],d\in \mathbb{N},\\\Theta\in \operatorname{2-EXT}}}\left\{
    \begin{array}{c}
         p\log_2 d:\\
         \left(\Theta\!\left(\mathcal{N}\right)\right)\!\left(\rho\right) = p|1\rangle\!\langle 1|_X \otimes \Phi^d_{C'D'} \\
         + (1-p)|0\rangle\!\langle 0|_X \otimes \sigma_{C'D'} ,\\
         \sigma\in \mathcal{S}(C'D')
    \end{array}
    \right\}.
\end{multline}
Since the set of one-way LOCC superchannels lies inside the set of two-extendible superchannels, the following inequality holds for every bipartite state $\rho_{CD}$ and bipartite channel $\mathcal{N}_{AB\to A'B'}$:
\begin{equation}
    \widetilde{E}_{D,\operatorname{2-EXT}}\!\left(\rho_{CD},\mathcal{N}_{AB\to A'B'}\right) \ge \widetilde{E}_{D,\operatorname{1WL}}\!\left(\rho_{CD},\mathcal{N}_{AB\to A'B'}\right).
\end{equation}

\begin{proposition}\label{theo:distill_ent_ch_st_ub}
    The expected rate at which ebits can be probabilistically distilled from a bipartite quantum state $\rho_{CD}$ and $n$ instances of a quantum channel $\mathcal{N}_{AB\to A'B'}$, assisted by one-way LOCC superchannels or two-extendible superchannels, is bounded from above as follows:
    \begin{multline}\label{eq:distill_ent_ch_st_ub}
        \frac{1}{n}\widetilde{E}_{D,\operatorname{1WL}}\!\left(\rho,\mathcal{N}^{\otimes n}\right) \le \frac{1}{n}\widetilde{E}_{D,\operatorname{2-EXT}}\!\left(\rho,\mathcal{N}^{\otimes n}\right) \\ \le \frac{1}{n}\widehat{E}^u\!\left(\rho_{CD}\right) + \widehat{E}^u\!\left(\mathcal{N}_{AB\to A'B'}\right),
    \end{multline}
    where $\widehat{E}^u\!\left(\rho_{CD}\right)$ is the unextendible entanglement of the quantum state $\rho_{CD}$ induced by the Belavkin--Staszewski relative entropy (defined in~\eqref{eq:Bel_Stas_induce_unext_ent_st}), and $\widehat{E}^u\!\left(\mathcal{N}_{AB\to A'B'}\right)$ is the unextendible entanglement of the quantum channel $\mathcal{N}_{AB\to A'B'}$ induced by the Belavkin--Staszewski relative entropy (defined in~\eqref{eq:BS_geo_unext_ent_bip_def}).
\end{proposition}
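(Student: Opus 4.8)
The first inequality is immediate from the inclusion $\operatorname{1WL}\subseteq\operatorname{2-EXT}$ established in Proposition~\ref{prop:1WL_supch_k_ext}: the supremum defining $\widetilde{E}_{D,\operatorname{2-EXT}}$ runs over a larger feasible set of superchannels than the one defining $\widetilde{E}_{D,\operatorname{1WL}}$. The content is the second inequality, and the plan is to take an arbitrary feasible two-extendible protocol, sandwich its figure of merit $p\log_2 d$ between a lower bound coming from the unextendible entanglement of the output state and an upper bound coming from a chain-rule argument on the channel, and then pass to the Belavkin--Staszewski quantities by letting $\alpha\to 1^+$. Throughout I would work with the $\alpha$-geometric unextendible entanglement for $\alpha\in(1,2]$, since all the required inequalities are available in that range.

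Fix a two-extendible superchannel $\Theta$, an input state $\rho_{CD}$, a probability $p$, and a Schmidt rank $d$ realizing the output $\omega_{XC'D'} = p\,|1\rangle\!\langle 1|_X\otimes\Phi^d_{C'D'} + (1-p)\,|0\rangle\!\langle 0|_X\otimes\sigma_{C'D'}$, where the flag register $X$ is held by Alice. For the lower bound I would regard $\omega$ as a direct sum over the classical register $X$ and apply the direct-sum inequality~\eqref{eq:geo_unext_direct_sum} with respect to the partition $XC':D'$ (which is legitimate precisely because $X$ sits on Alice's side), obtaining $\widehat{E}^u_{\alpha}(\omega_{XC':D'}) \ge p\,\widehat{E}^u_{\alpha}(\Phi^d_{C'D'}) + (1-p)\,\widehat{E}^u_{\alpha}(\sigma_{C'D'})$. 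Using the identity $\widehat{E}^u_{\alpha}(\Phi^d)=\log_2 d$ from Theorem~\ref{theo:props_from_WWW19} together with faithfulness $\widehat{E}^u_{\alpha}(\sigma_{C'D'})\ge 0$, this gives the clean lower bound $\widehat{E}^u_{\alpha}(\omega_{XC':D'}) \ge p\log_2 d$.

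For the upper bound I would view $\omega = \big(\Theta(\mathcal{N}^{\otimes n})\big)(\rho_{CD})$ as the action of the single composite bipartite channel $\Theta(\mathcal{N}^{\otimes n})$ on the state $\rho_{CD}$. The chain-rule bound of Theorem~\ref{theo:unext_ent_out_le_input} then yields $\widehat{E}^u_{\alpha}(\omega) \le \widehat{E}^u_{\alpha}(\rho_{CD}) + \widehat{E}^u_{\alpha}\big(\Theta(\mathcal{N}^{\otimes n})\big)$; monotonicity under two-extendible superchannels (Theorem~\ref{theo:two_ext_monotonic_bip}) replaces the channel term by $\widehat{E}^u_{\alpha}(\mathcal{N}^{\otimes n})$; and iterated subadditivity (the fourth listed property of the bipartite $\alpha$-geometric unextendible entanglement, following~\eqref{eq:BS_geo_unext_ent_bip_def}) bounds this by $n\,\widehat{E}^u_{\alpha}(\mathcal{N}_{AB\to A'B'})$. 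Combining the two bounds produces $p\log_2 d \le \widehat{E}^u_{\alpha}(\rho_{CD}) + n\,\widehat{E}^u_{\alpha}(\mathcal{N}_{AB\to A'B'})$ for every $\alpha\in(1,2]$.

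Since the left-hand side is independent of $\alpha$ and of the protocol, I would take the supremum over all feasible $(p,d,\Theta)$ and then let $\alpha\to 1^+$, using Proposition~\ref{prop:geo_unext_ent_st_conv_Bel_Stas} for the state term and the bipartite convergence statement following~\eqref{eq:BS_geo_unext_ent_bip_def} for the channel term to identify the limits with $\widehat{E}^u(\rho_{CD})$ and $\widehat{E}^u(\mathcal{N}_{AB\to A'B'})$; dividing by $n$ gives the claim. The step I expect to demand the most care is the bookkeeping of the composite $\Theta(\mathcal{N}^{\otimes n})$: one must check that, because $\mathcal{N}$ is semicausal and the bipartite channel associated with $\Theta$ is semicausal and nonsignaling from $DB'$ to $C'A$, the composite is itself a genuine semicausal bipartite channel from $CD$ to $XC'D'$, so that both the chain rule (Theorem~\ref{theo:unext_ent_out_le_input}) and the channel-monotonicity theorem (Theorem~\ref{theo:two_ext_monotonic_bip}) legitimately apply to it. Everything else is an assembly of already-established monotonicity, subadditivity, direct-sum, and convergence properties.
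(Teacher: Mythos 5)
Your proposal is correct and follows essentially the same route as the paper's proof: lower-bound $p\log_2 d$ via the direct-sum property and $\widehat{E}^u_{\alpha}(\Phi^d)=\log_2 d$, upper-bound the output state's unextendible entanglement via the chain rule of Theorem~\ref{theo:unext_ent_out_le_input}, monotonicity under two-extendible superchannels, and subadditivity, and then pass to $\alpha\to 1$. Your additional remark about verifying that $\Theta(\mathcal{N}^{\otimes n})$ is a legitimate semicausal bipartite channel is a reasonable extra check that the paper leaves implicit.
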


\begin{proof}
    Let $\rho_{CD}$ be a quantum state shared between Alice (holding system $C$) and Bob (holding system $D$). Let $\Theta_{(A^nB^n\to A'^nB'^n)\to (CD\to C'D')}$ be a two-extendible superchannel such that it acts on $n$ instances of a quantum channel $\mathcal{N}_{AB\to A'B'}$, and the resultant channel consumes the quantum state $\rho_{AB}$ to generate a maximally entangled state of Schmidt rank $d$ with probability $p$. This process can be mathematically described as,
    \begin{multline}\label{eq:prob_distill_ent_ch_st}
        \left(\Theta\!\left(\mathcal{N}^{\otimes n}\right)\right)\!\left(\rho_{CD}\right) =\\ p|1\rangle\!\langle 1|_X \otimes\Phi^d_{C'D'}  + (1-p) |0\rangle\!\langle 0|_X \otimes \sigma_{C'D'} ,
    \end{multline}
    where $\sigma_{C'D'}$ is an arbitrary bipartite state and system $X$ is held by Alice.

    Recall that the $\alpha$-geometric uenxtendible entanglement of the maximally entangled state $\Phi^d_{C'D'}$ is equal to $\log_2 d$ as mentioned in Theorem~\ref{theo:props_from_WWW19}. The direct-sum property in Proposition~\ref{prop:geo_unext_direct_sum} then implies that the $\alpha$-geometric unextendible entanglement of the quantum state described in~\eqref{eq:prob_distill_ent_ch_st} is no less than $p\log_2 d$ for all $\alpha\in (1,2]$. Therefore,
    \begin{align}
        p\log_2 d &\le \widehat{E}^u_{\alpha}\!\left(\Theta\!\left(\mathcal{N}^{\otimes n}\right)\!\left(\rho\right)\right)\\
        &\le \widehat{E}^u_{\alpha}\!\left(\Theta\!\left(\mathcal{N}^{\otimes n}\right)\right) + \widehat{E}^u_{\alpha}\!\left(\rho\right)\\
        &\le n\widehat{E}^u_{\alpha}\!\left(\mathcal{N}_{AB\to A'B'}\right) + \widehat{E}^u_{\alpha}\!\left(\rho_{CD}\right),
    \end{align}
    where the second inequality follows from Theorem~\ref{theo:unext_ent_out_le_input}, and the final inequality follows from the monotonicity of unextendible entanglement under the action of two-extendible superchannels (Theorem~\ref{theo:two_ext_monotonic_bip}) and subadditivity of $\alpha$-geometric unextendible entanglement under tensor product of quantum channels (Proposition~\ref{prop:unext_ent_subadditive}). We can take the limit $\alpha\to 1$ to get the tightest upper bound using this technique.
    Since the above inequality is true for all values of $p$, every dimension $d$, and every two-extendible superchannel $\Theta$, we can take a supremum over all of these quantities and conclude~\eqref{eq:distill_ent_ch_st_ub}.
\end{proof}

\subsubsection{Exact one-way distillable entanglement}

We can consider a special case of the probabilistic one-way distillable entanglement by demanding that the maximally entangled state is distilled deterministically. This means that Alice and Bob use a quantum state $\rho_{CD}$ and $n$ instances of a quantum channel $\mathcal{N}_{AB\to A'B'}$ along with local operations and unbounded forward classical communication from Alice to Bob to deterministically distill the maximum number of ebits possible. We define the exact one-way distillable entanglement of a state-channel pair $\left(\rho_{CD},\mathcal{N}_{AB\to A'B'}\right)$ as follows:
\begin{multline}
    \widetilde{E}^{e}_{D,\operatorname{1WL}}\!\left(\rho_{CD},\mathcal{N}_{AB\to A'B'}\right) \\
    \coloneqq \sup_{d\in \mathbb{N}, \Theta\in \operatorname{1WL}}\left\{
    \begin{array}{c}
        \log_2 d:\\
         \left(\Theta\!\left(\mathcal{N}\right)\right)\!\left(\rho\right) = \Phi^d_{C'D'}
    \end{array}
    \right\}.
\end{multline}

We can relax the set of allowed superchannels to the set of two-extendible superchannels and define the exact two-extendible distillable entanglement of a state-channel pair as follows:
\begin{multline}
    \widetilde{E}^{e}_{D,\operatorname{2-EXT}}\!\left(\rho_{CD},\mathcal{N}_{AB\to A'B'}\right) \\
    \coloneqq \sup_{d\in \mathbb{N}, \Theta\in \operatorname{2-EXT}}\left\{
    \begin{array}{c}
        \log_2 d:\\
         \left(\Theta\!\left(\mathcal{N}\right)\right)\!\left(\rho\right) = \Phi^d_{C'D'}
    \end{array}
    \right\}.
\end{multline}
\begin{proposition}\label{theo:distill_ent_ch_st_ub_ex}
    The rate at which ebits can be deterministically distilled from a bipartite quantum state $\rho_{CD}$ and $n$ instances of a quantum channel $\mathcal{N}_{AB\to A'B'}$, assisted by one-way LOCC or two-extendible superchannels, is bounded from above by the following quantity:
    \begin{align}
        &\frac{1}{n}\widetilde{E}^{e}_{D,\operatorname{1WL}}\!\left(\rho,\mathcal{N}^{\otimes n}\right)\notag\\
        &\le \frac{1}{n}\widetilde{E}^{e}_{D,\operatorname{2-EXT}}\!\left(\rho,\mathcal{N}^{\otimes n}\right) \\ 
        &\le \frac{1}{n}\widehat{E}^u_{\min}\!\left(\rho_{CD}\right) + \widehat{E}^u_{\min}\!\left(\mathcal{N}_{AB\to A'B'}\right),\label{eq:distill_ent_ch_st_ub_ex}
    \end{align}
    where $\widehat{E}^u_{\min}\!\left(\rho_{CD}\right)$ is the min-unextendible entanglement of the quantum state $\rho_{CD}$ (defined in~\eqref{eq:min_geo_unext_ent_st_def}) and $\widehat{E}^u_{\min}\!\left(\mathcal{N}_{AB\to A'B'}\right)$ is the min-geometric unextendible entanglement of the quantum channel $\mathcal{N}_{AB\to A'B'}$ (defined in~\eqref{eq:min_geo_unext_ent_bip_def}).
\end{proposition}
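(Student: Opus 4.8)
The plan is to mirror the proof of Proposition~\ref{theo:distill_ent_ch_st_ub}, but to exploit the fact that the deterministic setting produces the maximally entangled state exactly (effectively $p=1$, with no flag register), which removes the need for the direct-sum inequality~\eqref{eq:geo_unext_direct_sum}. This is precisely what will let us push the R\'enyi parameter down to $\alpha\to 0^+$ and obtain a bound in terms of the \emph{min}-geometric unextendible entanglement rather than the Belavkin--Staszewski version.

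First I would fix an arbitrary two-extendible superchannel $\Theta_{(A^nB^n\to A'^nB'^n)\to(CD\to C'D')}$ and a dimension $d$ realizing a deterministic protocol, so that $\left(\Theta(\mathcal{N}^{\otimes n})\right)(\rho_{CD}) = \Phi^d_{C'D'}$. For any fixed $\alpha\in(0,1)$ I would then assemble the chain
\begin{align}
    \log_2 d
    &= \widehat{E}^u_{\alpha}\!\left(\Phi^d_{C'D'}\right) \notag \\
    &= \widehat{E}^u_{\alpha}\!\left(\left(\Theta(\mathcal{N}^{\otimes n})\right)(\rho_{CD})\right) \notag \\
    &\le \widehat{E}^u_{\alpha}\!\left(\rho_{CD}\right) + \widehat{E}^u_{\alpha}\!\left(\Theta(\mathcal{N}^{\otimes n})\right) \notag \\
    &\le \widehat{E}^u_{\alpha}\!\left(\rho_{CD}\right) + n\,\widehat{E}^u_{\alpha}\!\left(\mathcal{N}_{AB\to A'B'}\right),
\end{align}
where the first equality is the evaluation $\widehat{E}^u_{\alpha}(\Phi^d)=\log_2 d$ from Theorem~\ref{theo:props_from_WWW19} (valid on all of $\alpha\in(0,2]$, crucially including $\alpha<1$), the first inequality is the chain-rule bound of Theorem~\ref{theo:unext_ent_out_le_input} applied to the bipartite channel $\Theta(\mathcal{N}^{\otimes n})$ acting on $\rho_{CD}$, and the last inequality combines monotonicity of the unextendible entanglement under two-extendible superchannels (Theorem~\ref{theo:two_ext_monotonic_bip}) with its subadditivity under tensor products (Proposition~\ref{prop:unext_ent_subadditive}).

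Next I would take the limit $\alpha\to 0^+$ on both sides. Since each $\alpha$-geometric unextendible entanglement is nondecreasing in $\alpha$ (Proposition~\ref{prop:geo_unext_ent_monotonic} for channels and the corresponding statement for states), both terms on the right converge to their min-geometric counterparts $\widehat{E}^u_{\min}(\rho_{CD})$ and $\widehat{E}^u_{\min}(\mathcal{N})$ by the definitions in~\eqref{eq:min_geo_unext_ent_st_def} and~\eqref{eq:min_geo_unext_ent_bip_def}, and the limit of the sum is the sum of the limits. This yields $\log_2 d \le \widehat{E}^u_{\min}(\rho_{CD}) + n\,\widehat{E}^u_{\min}(\mathcal{N})$; dividing by $n$ and taking the supremum over all admissible $\Theta$ and $d$ gives the stated bound for the two-extendible rate, and the one-way LOCC rate is no larger because $\operatorname{1WL}\subseteq\operatorname{2-EXT}$.

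The main thing to check carefully is that every ingredient of the displayed chain genuinely holds on the open interval $\alpha\in(0,1)$ rather than only for $\alpha\in(1,2]$: the evaluation on $\Phi^d$ covers $(0,2]$, the chain rule of Theorem~\ref{theo:unext_ent_out_le_input} is stated for $\alpha\in(0,1)\cup(1,2]$, and subadditivity holds for $\alpha\in(0,2]$, so the interval $(0,1)$ is fully covered and no separate argument is needed. I expect no genuine obstacle beyond this bookkeeping; the conceptual point worth flagging is simply that the deterministic output lets us bypass the direct-sum step that forced $\alpha>1$ in Proposition~\ref{theo:distill_ent_ch_st_ub}, which is exactly why the exact rate is governed by the min-geometric measure.
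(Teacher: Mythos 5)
Your proposal is correct and follows essentially the same route as the paper: the paper's proof simply notes that the argument of Proposition~\ref{theo:distill_ent_ch_st_ub} goes through without the direct-sum step (since the output is exactly $\Phi^d$ and $\widehat{E}^u_{\alpha}(\Phi^d)=\log_2 d$ holds for all $\alpha\in(0,2]$), which is precisely the observation you make, allowing the limit $\alpha\to 0^+$. Your careful bookkeeping that each ingredient (Theorem~\ref{theo:unext_ent_out_le_input}, Theorem~\ref{theo:two_ext_monotonic_bip}, Proposition~\ref{prop:unext_ent_subadditive}) is valid on $(0,1)$ is exactly the point the paper leaves implicit.
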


\begin{proof}
    The proof follows from the same line of reasoning as the proof of Proposition~\ref{theo:distill_ent_ch_st_ub}. Since we do not need the direct-sum property of the $\alpha$-geometric unextendible entanglement of states in the determinstic distillation case, we can take the limit $\alpha\to 0^+$ and arrive at~\eqref{eq:distill_ent_ch_st_ub_ex}.
\end{proof}

\begin{remark}
    In the asymptotic limit, the expected rate at which perfect ebits can be distilled from a quantum state $\rho_{CD}$ and an arbitrarily large number of instances of the quantum channel $\mathcal{N}_{AB\to A'B'}$, assisted by one-way LOCC or two-extendible superchannels, is bounded from above by the unextendible entanglement of the quantum channel induced by the Belavkin--Staszewski relative entropy, and the rate of exact one-way distillable entanglement is upper bounded by the min-geometric unextendible entanglement of the channel; i.e.,
    \begin{align}
        \liminf_{n\to \infty}\frac{1}{n}\widetilde{E}_{d,\operatorname{1WL}}\!\left(\rho,\mathcal{N}^{\otimes n}\right) &\le \widehat{E}^u\!\left(\mathcal{N}_{AB\to A'B'}\right) , \\
        \liminf_{n\to \infty}\frac{1}{n}\widetilde{E}_{d,e,\operatorname{1WL}}\!\left(\rho,\mathcal{N}^{\otimes n}\right) &\le \widehat{E}^u_{\operatorname{min}}\!\left(\mathcal{N}_{AB\to A'B'}\right).
    \end{align}
    The above inequalities follow from the fact that the $\alpha$-geometric unextendible entanglement of a quantum state is finite. As we take the limit $n\to \infty$, the quantities $\frac{1}{n}\widehat{E}^u\!\left(\rho_{CD}\right)$ and $\frac{1}{n}\widehat{E}^u_{\min}\!\left(\rho_{CD}\right)$ approach zero for every state $\rho_{CD}$, yielding the above inequalities.
\end{remark}

\subsection{Distillable key of a quantum state-channel pair}\label{sec:dist_sec_key_st_ch_pair}

In this section we consider the task of distilling secret keys from a bipartite quantum state and multiple instances of an unextendible quantum channel. It has been shown in~\cite[Corollary 22]{WWW19} that the $\alpha$-geometric unextendible entanglement of a bipartite quantum state serves as an upper bound on the number of secret bits that can be distilled from the quantum state using one-way LOCC or two-extendible channels for all $\alpha \in (0,2]$. A no-go theorem for probabilistic secret-key distillation from a bipartite state was given in~\cite{SW24} using the unextendible entanglement of the state induced by the min-relative entropy~\cite{Dat09}. We extend these results to give an upper bound on the number of exact secret key bits that can be distilled from a bipartite quantum state and an unextendible quantum channel, assisted by local operations and one-way classical communication.

\subsubsection{Probabilistic one-way distillable key}

Let us first look at the probabilistic setting in which a secret key is established between Alice and Bob using a quantum state $\rho_{CD}$ and a quantum channel $\mathcal{N}_{AB\to A'B'}$ assisted by local operations and one-way classical communication from Alice to Bob. Similar to the formalism in Section~\ref{sec:prob_distill_key}, we utilize the fact that distilling a secret key of $\log_2 K$ bits is equivalent to distilling a bipartite private state $\gamma^K_{C'D'C''D''}$ holding $\log_2 K$ bits of secrecy when local operations are allowed for free~\cite{Horodecki_2005,K_Horodecki_2009} (see Section~\ref{sec:prob_distill_key} for details). We define the probabilistic one-way distillable key of a quantum state-channel pair $\left(\rho_{CD},\mathcal{N}_{AB\to A'B'}\right)$ as
\begin{multline}
    \widetilde{K}_{\operatorname{1WL}}\!\left(\rho_{CD},\mathcal{N}_{AB\to A'B'}\right)  \coloneqq \\
    \sup_{\substack{ \gamma^K_{C'D'C''D''}\in \mathcal{K}, \\ p \in [0,1], \Theta\in \operatorname{1WL}}}\left\{
    \begin{array}{c}
         p\log_2 K:\\
         \left(\Theta\!\left(\mathcal{N}\right)\right)\!\left(\rho\right) = p |1\rangle\!\langle 1|_X \otimes \gamma^K_{C'D'C''D''}\\
         + (1-p)|0\rangle\!\langle 0|_X \otimes \sigma_{C'D'C''D''} ,\\
         \sigma\in \mathcal{S}(C'D'C''D'')
    \end{array}
    \right\}.
\end{multline}

Once again, we can relax the set of allowed superchannels to the set of two-extendible superchannels and define the probabilistic two-extendible distillable key of a quantum state-channel pair as follows:
\begin{multline}
    \widetilde{K}_{\operatorname{2-EXT}}\!\left(\rho_{CD},\mathcal{N}_{AB\to A'B'}\right) \coloneqq \\
     \sup_{\substack{\gamma^K_{C'D'C''D''}\in \mathcal{K} ,\\ p \in [0,1],\Theta\in \operatorname{2-EXT}}}\left\{
    \begin{array}{c}
         p\log_2 K:\\
         \left(\Theta\!\left(\mathcal{N}\right)\right)\!\left(\rho\right) = p|1\rangle\!\langle 1|_X \otimes \gamma^K_{C'D'C''D''} \\
         + (1-p)|0\rangle\!\langle 0|_X \otimes \sigma_{C'D'C''D''} ,\\
         \sigma\in \mathcal{S}(C'D'C''D'')
    \end{array}
    \right\}.
\end{multline}
\begin{proposition}\label{theo:distill_key_ch_st_ub}
    The expected rate at which secret bits can be probablistically distilled between Alice and Bob from a bipartite state $\rho_{CD}$ and $n$ instances of a bipartite channel $\mathcal{N}_{AB\to A'B'}$, assisted by one-way LOCC or two-extendible superchannels, is bounded from above as follows:
    \begin{align}
        \frac{1}{n}\widetilde{K}_{\operatorname{1WL}}\!\left(\rho,\mathcal{N}^{\otimes n}\right) &\le \frac{1}{n}\widetilde{K}_{\operatorname{2-EXT}}\!\left(\rho,\mathcal{N}^{\otimes n}\right)\notag \\
        &\le \frac{1}{n}\widehat{E}^u\!\left(\rho_{CD}\right) + \widehat{E}^u\!\left(\mathcal{N}_{AB\to A'B'}\right).
    \end{align}
\end{proposition}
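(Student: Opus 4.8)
The plan is to mirror the proof of Proposition~\ref{theo:distill_ent_ch_st_ub} almost verbatim, replacing the maximally entangled target state with a bipartite private state. First I would fix an arbitrary two-extendible superchannel $\Theta$ together with a probability $p$, a secret-key dimension $K$, and the state $\rho_{CD}$ realizing the protocol, so that
\begin{multline}
    \left(\Theta\!\left(\mathcal{N}^{\otimes n}\right)\right)\!\left(\rho\right) = p|1\rangle\!\langle 1|_X \otimes \gamma^K_{C'D'C''D''} \\ + (1-p)|0\rangle\!\langle 0|_X \otimes \sigma_{C'D'C''D''},
\end{multline}
with the flag register $X$ held by Alice.

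The single new ingredient, relative to the entanglement case, is the lower bound $\widehat{E}^u_{\alpha}(\gamma^K) \ge \log_2 K$ for a private state holding $\log_2 K$ bits of secrecy, which holds for all $\alpha \in (0,2]$ by~\cite[Corollary~22]{WWW19} (this is exactly the fact already invoked in the proof of Proposition~\ref{prop:distill_key_ub_fin}). I would then apply the direct-sum inequality of Proposition~\ref{prop:geo_unext_direct_sum} to the classical--quantum state above, with the classical register $X$ grouped on Alice's side, together with faithfulness $\widehat{E}^u_{\alpha}(\sigma)\ge 0$, to obtain $p\log_2 K \le \widehat{E}^u_{\alpha}\!\left(\Theta(\mathcal{N}^{\otimes n})(\rho)\right)$ for all $\alpha \in (1,2]$.

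From here the argument is identical to the entanglement case. Chaining Theorem~\ref{theo:unext_ent_out_le_input} (which bounds the increase of unextendible entanglement of a state under a bipartite channel by the unextendible entanglement of that channel), then the monotonicity of unextendible entanglement under two-extendible superchannels (Theorem~\ref{theo:two_ext_monotonic_bip}), and finally subadditivity under tensor products (Proposition~\ref{prop:unext_ent_subadditive}), gives
\begin{align}
    p\log_2 K
    &\le \widehat{E}^u_{\alpha}\!\left(\Theta(\mathcal{N}^{\otimes n})\right) + \widehat{E}^u_{\alpha}\!\left(\rho\right) \notag \\
    &\le n\,\widehat{E}^u_{\alpha}\!\left(\mathcal{N}_{AB\to A'B'}\right) + \widehat{E}^u_{\alpha}\!\left(\rho_{CD}\right).
\end{align}
Dividing by $n$, taking the limit $\alpha\to 1$ to reach the Belavkin--Staszewski-induced quantities, and taking the supremum over all admissible $p$, $K$, and two-extendible $\Theta$ would yield the claimed bound; the one-way LOCC inequality follows because every one-way LOCC superchannel is two-extendible.

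I expect no genuinely hard obstacle, since every analytic input is already established earlier in the paper. The only point demanding care is the correct placement of the classical flag register $X$ on Alice's side when invoking the direct-sum property, and ensuring that the private-state lower bound from~\cite{WWW19} is applied with respect to the same bipartition $(C'C'':D'D'')$ used in the definition of the unextendible entanglement of the output state.
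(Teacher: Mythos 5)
Your proposal is correct and follows essentially the same route as the paper, which likewise reduces the claim to the argument of Proposition~\ref{theo:distill_ent_ch_st_ub} with the single substitution of the private-state lower bound $\widehat{E}^u_{\alpha}(\gamma^K)\ge\log_2 K$ from~\cite[Corollary~22]{WWW19} in place of the maximally entangled state identity. The chain of direct-sum, Theorem~\ref{theo:unext_ent_out_le_input}, monotonicity, subadditivity, and the limit $\alpha\to 1$ is exactly what the paper intends.
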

\begin{proof}
    The proof is similar to the proof of Proposition~\ref{theo:distill_ent_ch_st_ub}. The only difference is that we use the fact that the $\alpha$-geometric unextendible entanglement of a bipartite private state is no less than the number of secret bits held by the private state, for all $\alpha \in (0,2]$~\cite[Corollary 22]{WWW19}.
\end{proof}

\subsubsection{Exact one-way distillable key }

We now look at a deterministic protocol to distill a secret key from a bipartite state using  a quantum channel assisted by local operations and forward classical communication. Alice and Bob use $n$ instances of a bipartite quantum channel $\mathcal{N}_{AB\to A'B'}$ assisted by a one-way LOCC superchannel $\Theta_{(A^nB^n\to A'^nB'^n)\to (CD\to C'D'C''D'')}$ to transform the existing bipartite quantum state $\rho_{CD}$ to a bipartite private state $\gamma^K_{C'D'C''D''}$ holding $\log_2 K$ secret bits:
\begin{equation}
    \left(\Theta\!\left(\mathcal{N}^{\otimes n}_{AB\to A'B'}\right)\right)\!\left(\rho_{CD}\right) = \gamma^K_{C'D'C''D''}.
\end{equation}

The exact one-way distillable key of a quantum state-channel pair $\left(\rho_{CD},\mathcal{N}_{AB\to A'B'}\right)$ is defined as
\begin{multline}
    \widetilde{K}^{e}_{\operatorname{1WL}}\!\left(\rho_{CD},\mathcal{N}_{AB\to A'B'}\right) \\
    \coloneqq \sup_{\substack{\gamma^K_{C'D'C''D''}\in \mathcal{K},\\ \Theta\in \operatorname{1WL}}}\left\{
    \begin{array}{c}
         \log_2 K:\\
         \left(\Theta\!\left(\mathcal{N}\right)\right)\!\left(\rho\right) = \gamma^K_{C'D'C''D''}
    \end{array}
    \right\}.
\end{multline}
Relaxing the set of allowed superchannels to the set of two-extendible superchannels, we define the the exact two-extendible distillable key of a quantum state-channel pair as follows:
\begin{multline}
    \widetilde{K}^{e}_{\operatorname{2-EXT}}\!\left(\rho_{CD},\mathcal{N}_{AB\to A'B'}\right) \\
    \coloneqq \sup_{\substack{\gamma^K_{C'D'C''D''} \in \mathcal{K},\\ \Theta\in \operatorname{2-EXT}}}\left\{
    \begin{array}{c}
         \log_2 K:\\
         \left(\Theta\!\left(\mathcal{N}\right)\right)\!\left(\rho\right) = \gamma^K_{C'D'C''D''}
    \end{array}
    \right\}.
\end{multline}
\begin{proposition}\label{theo:distill_key_ch_st_ub_ex}
    The rate at which exact secret bits can be exactly distilled from a quantum state-channel pair, assisted by one-way LOCC or two-extendible superchannels, is bounded from above by the following quantity:
    \begin{align}
        &\frac{1}{n}\widetilde{K}^{e}_{\operatorname{1WL}}\!\left(\rho,\mathcal{N}^{\otimes n}\right)\notag\\
        &\le \frac{1}{n}\widetilde{K}^{e}_{\operatorname{2-EXT}}\!\left(\rho,\mathcal{N}^{\otimes n}\right) \\
        &\le \frac{1}{n}\widehat{E}^u_{\min}\!\left(\rho_{CD}\right) + \widehat{E}^u_{\min}\!\left(\mathcal{N}_{AB\to A'B'}\right).
    \end{align}
\end{proposition}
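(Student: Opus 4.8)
The plan is to mirror the proof of Proposition~\ref{theo:distill_key_ch_st_ub}, the probabilistic version, while exploiting the structural simplification that exact distillation removes the need for the direct-sum property and thereby frees up the full parameter range $\alpha \in (0,2]$. First I would fix an arbitrary two-extendible superchannel $\Theta$ together with a secret-key dimension $K$ realizing the exact protocol, so that $\left(\Theta(\mathcal{N}^{\otimes n})\right)(\rho_{CD}) = \gamma^K_{C'D'C''D''}$, a bipartite private state holding $\log_2 K$ secret bits. The crucial difference from the probabilistic setting is that the output here is genuinely a private state rather than a probabilistic mixture of a private state with an arbitrary state.

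The core of the argument is a single chain of inequalities valid for every $\alpha \in (0,2]$. By \cite[Corollary 22]{WWW19}, the $\alpha$-geometric unextendible entanglement of any bipartite private state is at least the number of secret bits it holds, so $\log_2 K \le \widehat{E}^u_{\alpha}(\gamma^K_{C'D'C''D''})$. Rewriting $\gamma^K = \Theta(\mathcal{N}^{\otimes n})(\rho)$ and invoking the chain-rule bound of Theorem~\ref{theo:unext_ent_out_le_input} gives $\widehat{E}^u_{\alpha}(\Theta(\mathcal{N}^{\otimes n})(\rho)) \le \widehat{E}^u_{\alpha}(\Theta(\mathcal{N}^{\otimes n})) + \widehat{E}^u_{\alpha}(\rho)$. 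Then monotonicity under two-extendible superchannels (Theorem~\ref{theo:two_ext_monotonic_bip}) together with subadditivity under tensor products (Proposition~\ref{prop:unext_ent_subadditive}) bounds the first term by $n\widehat{E}^u_{\alpha}(\mathcal{N})$, yielding $\log_2 K \le n\widehat{E}^u_{\alpha}(\mathcal{N}) + \widehat{E}^u_{\alpha}(\rho)$ for every $\alpha \in (0,2]$.

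The final step is to push $\alpha \to 0^+$. In the probabilistic proof of Proposition~\ref{theo:distill_key_ch_st_ub}, the lower bound on the output relied on the direct-sum property of Proposition~\ref{prop:geo_unext_direct_sum}, which holds only for $\alpha > 1$ and hence forces the weaker Belavkin--Staszewski ($\alpha \to 1$) limit; here no such convex decomposition enters, so the inequality survives down to $\alpha \to 0^+$, and by monotonicity of the family in $\alpha$ each term converges to its min-geometric counterpart, giving $\log_2 K \le n\widehat{E}^u_{\min}(\mathcal{N}_{AB\to A'B'}) + \widehat{E}^u_{\min}(\rho_{CD})$. Dividing by $n$ and taking a supremum over all $K$ and all two-extendible $\Theta$ yields the stated bound on $\frac{1}{n}\widetilde{K}^{e}_{\operatorname{2-EXT}}(\rho,\mathcal{N}^{\otimes n})$, while the first inequality $\widetilde{K}^{e}_{\operatorname{1WL}} \le \widetilde{K}^{e}_{\operatorname{2-EXT}}$ is immediate from $\operatorname{1WL} \subseteq \operatorname{2-EXT}$. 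The argument is largely mechanical given the earlier machinery; the one point I would flag as the main subtlety is justifying the availability of the full range $\alpha \in (0,2]$ and the passage to the limit $\alpha \to 0^+$ through the chain of bounds, since this is precisely the step where exactness, as opposed to probabilistic success, is what licenses replacing the Belavkin--Staszewski bound by the tighter min-geometric one.
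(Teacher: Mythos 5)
Your proposal is correct and follows essentially the same route as the paper: the paper's proof likewise reduces to the chain used for Proposition~\ref{theo:distill_ent_ch_st_ub}, invokes \cite[Corollary 22]{WWW19} for the private state at all $\alpha \in (0,2]$, and takes the limit $\alpha \to 0^+$ precisely because the direct-sum property is not needed in the exact (deterministic) case. Your flagged subtlety about why exactness licenses the min-geometric limit is exactly the point the paper emphasizes.
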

\begin{proof}
    The proof follows from the same arguments used in the proof of Proposition~\ref{theo:distill_ent_ch_st_ub_ex}, and using the fact that the $\alpha$-geometric unextendible entanglement of a bipartite state holding $\log_2 K$ secret bits is no less than $\log_2 K$~\cite[Corollary 22]{WWW19} for all $\alpha \in (0,2]$.
\end{proof}

\begin{remark}
    Since the $\alpha$-geometric unextendible entanglement of a quantum state is finite, the expected rate of distilling secret key bits from a quantum state $\rho_{CD}$ probabilistically, using an arbitrarily large number of instances of a quantum channel $\mathcal{N}_{AB\to A'B'}$, assisted by one-way LOCC or two-extendible superchannels, is bounded from above by the unextendible entanglement of the quantum channel $\mathcal{N}_{AB\to A'B'}$ induced by the Belavkin--Staszewski relative entropy. That is,
    \begin{equation}
        \liminf_{n\to \infty}\frac{1}{n}\widetilde{K}_{\operatorname{2-EXT}}\!\left(\rho_{CD},\mathcal{N}^{\otimes n}\right) \le \widehat{E}^u\!\left(\mathcal{N}_{AB\to A'B'}\right).
    \end{equation}
    Similarly, the rate of distilling secret bits exactly from a quantum state $\rho_{CD}$, using an arbitrarily large number of instances of a quantum channel $\mathcal{N}_{AB\to A'B'}$, assisted by one-way LOCC or two-extendible superchannels, is bounded from above by the min-geometric unextendible entanglement of the quantum channel $\mathcal{N}_{AB\to A'B'}$. That is,
    \begin{equation}
        \liminf_{n\to \infty}\frac{1}{n}\widetilde{K}^{e}_{\operatorname{2-EXT}}\!\left(\rho_{CD},\mathcal{N}^{\otimes n}\right) \le \widehat{E}^u_{\operatorname{min}}\!\left(\mathcal{N}_{AB\to A'B'}\right).
    \end{equation}
\end{remark}

\section{Numerical calculations}\label{sec:numerical_calc}

In this section we present some calculations for the upper bounds proposed in Sections~\ref{sec:applications} and~\ref{sec:applications_bip}. It has been shown in~\cite{Fang_2021} that the $\alpha$-geometric R\'enyi relative entropy of channels can be calculated using a semidefinite program for $\alpha = 1 + 2^{-\ell}$ with $\ell \in \mathbb{N}$. This allows us to compute an upper bound on the unextendible entanglement of a quantum channel induced by the Belavkin--Staszewski relative entropy since the optimization is over a set of channels expressible by semidefinite constraints.

While a semidefinite program is not known for min-geometric unextendible entanglement of channels, the quantity can be calculated for some special channels. The $\alpha$-geometric R\'enyi relative entropy between two quantum channels $\mathcal{N}_{A\to B}$ and $\mathcal{M}_{A\to B}$, with Choi operators $\Gamma^{\mathcal{N}}_{AB}$ and $\Gamma^{\mathcal{M}}_{AB}$, respectively, can be calculated for all $\alpha \in (0,1)$ using the following equality from~\cite[Proposition 44]{Katariya2021}: 
\begin{multline}\label{eq:geo_rel_ent_ch_choi_op_(0,1)}
    \widehat{D}_{\alpha}\!\left(\mathcal{N}_{A\to B}\Vert\mathcal{M}_{A\to B}\right) \\= \frac{1}{\alpha-1}\log_2 \lambda_{\min}\!\left(\operatorname{Tr}_{B}\!\left[G_{\alpha}\!\left(\Gamma^{\mathcal{M}}_{AB},\widetilde{\Gamma^{\mathcal{N}}_{AB}}\right)\right]\right),
\end{multline}
where $\lambda_{\min}$ denotes the minimum eigenvalue of its argument,
\begin{align}
    G_{\alpha}\!\left(X,Y\right) &\coloneqq X^{\frac{1}{2}}\left(X^{-\frac{1}{2}}YX^{-\frac{1}{2}}\right)^{\alpha}X^{\frac{1}{2}},\\
    \widetilde{\Gamma^{\mathcal{N}}_{AB}} &\coloneqq \Gamma^{\mathcal{N}}_{0,0} - \Gamma^{\mathcal{N}}_{0,1}\left(\Gamma^{\mathcal{N}}_{1,1}\right)^{-1}\left(\Gamma^{\mathcal{N}}_{0,1}\right)^{\dagger},\label{eq:N_tilde_defn}\\
    \Gamma^{\mathcal{N}}_{0,0} &\coloneqq \Pi_{\Gamma^{\mathcal{M}}}\Gamma^{\mathcal{N}}\Pi_{\Gamma^{\mathcal{M}}}, \\ \Gamma^{\mathcal{N}}_{0,1} & \coloneqq \Pi_{\Gamma^{\mathcal{M}}}\Gamma^{\mathcal{N}}\Pi^{\perp}_{\Gamma^{\mathcal{M}}}, \\ \Gamma^{\mathcal{N}}_{1,1}  & \coloneqq \Pi^{\perp}_{\Gamma^{\mathcal{M}}}\Gamma^{\mathcal{N}}\Pi^{\perp}_{\Gamma^{\mathcal{M}}}, 
\end{align}
$\Pi_{\Gamma^{\mathcal{M}}}$ is the projection onto the support of the $\Gamma^{\mathcal{M}}$, $\Pi^{\perp}_{\Gamma^{\mathcal{M}}}$ is the projection onto its kernel, and all inverses are taken on the respective support. The min-geometric R\'enyi relative entropy of channels can then be calculated as follows:
\begin{align}
    \widehat{D}_0\!\left(\mathcal{N}\Vert \mathcal{M}\right) &\coloneqq \sup_{\psi_{RA}} \lim_{\alpha \to 0^+} \widehat{D}_{\alpha}\!\left(\mathcal{N}\!\left(\psi_{RA}\right)\Vert \mathcal{M}\!\left(\psi_{RA}\right)\right)\\
    &= \sup_{\psi_{RA}} \inf_{\alpha \in (0,1)} \widehat{D}_{\alpha}\!\left(\mathcal{N}\!\left(\psi_{RA}\right)\Vert \mathcal{M}\!\left(\psi_{RA}\right)\right)\\
    &= \inf_{\alpha \in (0,1)} \sup_{\psi_{RA}} \widehat{D}_{\alpha}\!\left(\mathcal{N}\!\left(\psi_{RA}\right)\Vert \mathcal{M}\!\left(\psi_{RA}\right)\right)\\
    &= \inf_{\alpha \in (0,1)} \widehat{D}_{\alpha}\!\left(\mathcal{N}\Vert \mathcal{M}\right)\\
    &=  \lim_{\alpha \to 0^+} \widehat{D}_{\alpha}\!\left(\mathcal{N}\Vert \mathcal{M}\right),
\end{align}
where the first equality is simply the definition of the min-geometric R\'enyi relative entropy between channels, the second equality follows from the monotonicity of the $\alpha$-geometric R\'enyi relative entropy in $\alpha$, the third equality follows from the Mosonyi--Hiai minimax theorem given in~\cite[Corollary~A.2]{MH11}, and the final equality follows by using the monotonicity of the $\alpha$-geometric R\'enyi relative entropy in $\alpha$ once again. Now using the expression of $\alpha$-geometric R\'enyi relative entropy from~\eqref{eq:geo_rel_ent_ch_choi_op_(0,1)}, we conclude the following equality:
\begin{equation}
    \widehat{D}_{0}\!\left(\mathcal{N}\Vert\mathcal{M}\right) = -\log_2 \lambda_{\min}\!\left(\operatorname{Tr}_B\left[\Gamma^{\mathcal{M}}\Pi_{\zeta}\right]\right),
\end{equation}
where $\Pi_{\zeta}$ is the projection onto the support of $\left(\Gamma^{\mathcal{M}}\right)^{-\frac{1}{2}}\widetilde{\Gamma^{\mathcal{N}}}\!\left(\Gamma^{\mathcal{M}}\right)^{-\frac{1}{2}}$ and $\widetilde{\Gamma^{\mathcal{N}}}$ is defined in~\eqref{eq:N_tilde_defn}.
\begin{proposition}
    The min-geometric unextendible entanglement of quantum channels with full rank Choi operators is equal to zero.
\end{proposition}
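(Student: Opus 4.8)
The plan is to prove the statement by exhibiting one explicit extension of $\mathcal{N}_{A\to B}$ whose complementary marginal reduces the underlying divergence to zero, and then to invoke non-negativity of $\widehat{D}_0$ to conclude. Recall from Proposition~\ref{prop:limit-zero-min-geo} that $\widehat{E}^u_{\min}(\mathcal{N}_{A\to B}) = \inf_{\mathcal{P}\in\operatorname{Ext}(\mathcal{N})}\widehat{D}_0(\mathcal{N}_{A\to B}\Vert\operatorname{Tr}_{B_1}\circ\mathcal{P})$, after recognizing the supremum over input states $\psi_{RA}$ as the channel divergence. Since $\widehat{D}_0\ge 0$, this quantity is non-negative, so it remains only to produce a single feasible extension attaining the value zero.

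For the upper bound I would take the extension $\mathcal{P}_{A\to B_1B_2}(\cdot)\coloneqq \mathcal{N}_{A\to B_1}(\cdot)\otimes\pi_{B_2}$, where $\pi_{B_2}=I_{B_2}/d_B$ is the maximally mixed state. This map is manifestly completely positive and trace preserving, and $\operatorname{Tr}_{B_2}\circ\mathcal{P}=\mathcal{N}$ because $\pi_{B_2}$ is normalized; hence $\mathcal{P}\in\operatorname{Ext}(\mathcal{N})$ as defined in~\eqref{eq:ext_set_p2p_ch}. Its complementary marginal $\mathcal{M}\coloneqq\operatorname{Tr}_{B_1}\circ\mathcal{P}$ is the replacer channel $\mathcal{M}(\cdot)=\operatorname{Tr}[\cdot]\,\pi$, whose Choi operator is $\Gamma^{\mathcal{M}}=I_A\otimes\pi_B=\tfrac{1}{d_B}I_{AB}$. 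The essential gain from this choice is that $\Gamma^{\mathcal{M}}$ is full rank.

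Next I would evaluate $\widehat{D}_0(\mathcal{N}\Vert\mathcal{M})$ with the closed form derived at the end of the present section, namely $\widehat{D}_0(\mathcal{N}\Vert\mathcal{M})=-\log_2\lambda_{\min}(\operatorname{Tr}_B[\Gamma^{\mathcal{M}}\Pi_{\zeta}])$, where $\zeta=(\Gamma^{\mathcal{M}})^{-\frac12}\widetilde{\Gamma^{\mathcal{N}}}(\Gamma^{\mathcal{M}})^{-\frac12}$. Because $\Gamma^{\mathcal{N}}$ is full rank by hypothesis and $\Gamma^{\mathcal{M}}$ is full rank by construction, the kernel projection $\Pi^{\perp}_{\Gamma^{\mathcal{M}}}$ vanishes, so that $\widetilde{\Gamma^{\mathcal{N}}}=\Gamma^{\mathcal{N}}$ and $\zeta=(\Gamma^{\mathcal{M}})^{-\frac12}\Gamma^{\mathcal{N}}(\Gamma^{\mathcal{M}})^{-\frac12}$ is full rank; therefore $\Pi_{\zeta}=I$. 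Using trace preservation of $\mathcal{M}$, which gives $\operatorname{Tr}_B[\Gamma^{\mathcal{M}}]=I_A$, one obtains $\operatorname{Tr}_B[\Gamma^{\mathcal{M}}\Pi_{\zeta}]=I_A$, whose minimum eigenvalue is $1$, and hence $\widehat{D}_0(\mathcal{N}\Vert\mathcal{M})=-\log_2 1=0$. Combined with the non-negativity observed above, this forces $\widehat{E}^u_{\min}(\mathcal{N}_{A\to B})=0$.

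The step requiring the most care is the support bookkeeping in the evaluation of $\widehat{D}_0$: I must confirm that full-rankness of both Choi operators collapses $\widetilde{\Gamma^{\mathcal{N}}}$ to $\Gamma^{\mathcal{N}}$ and forces $\Pi_{\zeta}=I$, so that the $\lambda_{\min}$ term reduces to that of $I_A$. This is genuinely the crux of the result, since it reflects the fact that $\widehat{D}_0$ is \emph{not} strongly faithful: a channel with full-rank Choi operator can be matched by a full-rank reference marginal at zero divergence cost, and the conclusion therefore holds even though such a channel need not itself be two-extendible.
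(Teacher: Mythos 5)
Your proof is correct and follows essentially the same route as the paper's: both use the extension $\mathcal{N}_{A\to B_1}\otimes\mathcal{A}^{\pi}_{B_2}$, observe that the complementary marginal is the replacer channel with full-rank Choi operator $I_{AB}/d_B$, and then use full-rankness of both Choi operators to collapse $\widetilde{\Gamma^{\mathcal{N}}}$ to $\Gamma^{\mathcal{N}}$ and force $\Pi_{\zeta}=I$, so that $\widehat{D}_0(\mathcal{N}\Vert\mathcal{M})=-\log_2\lambda_{\min}(I_A)=0$. Your explicit appeal to Proposition~\ref{prop:limit-zero-min-geo} and to non-negativity at the outset is a slightly cleaner framing of the same argument.
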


\begin{proof}
    Consider a quantum channel $\mathcal{N}_{A\to B}$ that has a full-rank Choi operator $\Gamma^{\mathcal{N}}_{AB}$. The quantum channel $\mathcal{N}_{A\to B_1}\otimes\mathcal{A}^{\pi}_{B_2}$ is a valid extension of the channel that lies in the set $\operatorname{Ext}\!\left(\mathcal{N}_{A\to B}\right)$, where $\mathcal{A}^{\pi}_{B_2}$ is a quantum channel that appends the maximally mixed state on the system $B_2$, and the system $B_2$ is isomorphic to the system $B$. The Choi operator of this channel is $\Gamma^{\mathcal{N}}_{AB_1}\otimes \pi_{B_2}$, where $\pi$ is the maximally mixed state. 
    
    By definition, the min-geometric unextendible entanglement of the channel $\mathcal{N}_{A\to B}$ is bounded from above as follows:
    \begin{equation}\label{eq:unext_full_rank_le_rel_ent}
        \widehat{E}^u_{\min}\!\left(\mathcal{N}_{A\to B}\right) 
        \le \frac{1}{2}\widehat{D}_{0}(\mathcal{N}_{A\to B}\Vert\operatorname{Tr}_{B_1}\circ\mathcal{N}_{A\to B_1}\otimes \mathcal{A}^{\pi}_{B_2}).
    \end{equation}
    The quantum channel $\operatorname{Tr}_{B_1}\circ\mathcal{N}_{A\to B_1}\otimes \mathcal{A}^{\pi}_{B_2}$ is essentially the replacer channel $\mathcal{R}^{\pi}_{A\to B_2}$ that traces out the input and replaces with the maximally mixed state. The Choi operator of this channel is
    \begin{equation}\label{eq:replacer_full_rank_choi}
        \Gamma^{\mathcal{R}}_{AB_2} = \frac{I_{AB_2}}{\left|B\right|},
    \end{equation}
    where $\left|B\right|$ is the dimension of the system $B$.
    Since $\Gamma^{\mathcal{R}}_{AB}$ is also full rank, $\operatorname{supp}\!\left(\Gamma^{\mathcal{N}}\right)\subseteq\operatorname{supp}\!\left(\Gamma^{\mathcal{M}}\right)$, which further implies that $\widetilde{\Gamma^{\mathcal{N}}} = \Gamma^{\mathcal{N}}$. The positive semidefinite operators $\Gamma^{\mathcal{N}}_{AB}$ and $\Gamma^{\mathcal{R}}_{AB}$ are both full-rank operators. Therefore, $\left(\Gamma^{\mathcal{R}}\right)^{-\frac{1}{2}}\Gamma^{\mathcal{N}}\left(\Gamma^{\mathcal{R}}\right)^{-\frac{1}{2}}$ is also a full-rank operator, and the projection onto the support of $\left(\Gamma^{\mathcal{R}}\right)^{-\frac{1}{2}}\Gamma^{\mathcal{N}}\left(\Gamma^{\mathcal{R}}\right)^{-\frac{1}{2}}$ is the identity operator. That is,
    \begin{equation}\label{eq:geo_proj_full_rank}
        \Pi_{\left(\Gamma^{\mathcal{R}}\right)^{-\frac{1}{2}}\Gamma^{\mathcal{N}}\left(\Gamma^{\mathcal{R}}\right)^{-\frac{1}{2}}} = I_{AB}.
    \end{equation}
    
    The min-geometric R\'enyi relative entropy between $\mathcal{N}_{A\to B}$ and $\mathcal{R}_{A\to B}$ can be evaluated as follows:
    \begin{align}
        &\frac{1}{2}\widehat{D}_{0}\!\left(\mathcal{N}\Vert\mathcal{R}^{\pi}\right)\notag\\ 
        &= -\log_2 \lambda_{\min}\!\left(\operatorname{Tr}_B\!\left[\Gamma^{\mathcal{R}}_{AB}\Pi_{\left(\Gamma^{\mathcal{R}}\right)^{-\frac{1}{2}}\Gamma^{\mathcal{N}}\left(\Gamma^{\mathcal{R}}\right)^{-\frac{1}{2}}}\right]\right)\\
        &= -\log_2 \lambda_{\min}\!\left(\operatorname{Tr}_{B}\!\left[\frac{I_{AB}}{\left|B\right|}I_{AB}\right]\right)\\
        &= -\log_2 \lambda_{\min}\!\left(I_A\right)\\
        &= 0.\label{eq:min_geo_rel_ent_le_0_full_rank}
    \end{align}
    where the second equality follows from~\eqref{eq:replacer_full_rank_choi} and~\eqref{eq:geo_proj_full_rank}. The non-negativity of the min-geometric unextendible entanglement combined with~\eqref{eq:unext_full_rank_le_rel_ent} and~\eqref{eq:min_geo_rel_ent_le_0_full_rank} implies that the min-geometric unextendible entanglement of a quantum channel that has a full-rank Choi operator is equal to zero.
\end{proof}

\medskip

Since the min-geometric unextendible entanglement of a quantum channel serves as an upper bound on the zero-error quantum capacity (Corollary~\ref{cor:0_err_cap_ub}) and the zero-error private capacity of the channel (Corollary~\ref{cor:0_err_priv_cap_ub}), we arrive at the following statement:
\begin{corollary}\label{cor:q_cap_p_cap_0_full_rank}
    The zero-error quantum capacity and the zero-error private capacity of a quantum channel with a full-rank Choi operator is equal to zero in the one-shot as well as the asymptotic setting.
\end{corollary}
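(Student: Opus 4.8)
The plan is to combine the preceding proposition with the upper bounds already established in Corollaries~\ref{cor:0_err_cap_ub} and~\ref{cor:0_err_priv_cap_ub}, so that essentially no new work is required. First I would invoke the preceding proposition: since the channel $\mathcal{N}_{A\to B}$ has a full-rank Choi operator, its min-geometric unextendible entanglement vanishes, i.e., $\widehat{E}^u_{\min}\!\left(\mathcal{N}_{A\to B}\right) = 0$. This is the single substantive input, and it has just been proved.

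Next, for the asymptotic statement, I would apply Corollary~\ref{cor:0_err_cap_ub} together with the vanishing of the min-geometric unextendible entanglement to obtain the chain
\begin{equation}
0 \le Q_{0,\operatorname{1WL}}(\mathcal{N}) \le Q_{0,\operatorname{2-EXT}}(\mathcal{N}) \le \widehat{E}^u_{\operatorname{min}}\!\left(\mathcal{N}\right) = 0,
\end{equation}
where the leftmost inequality is just the nonnegativity of the capacity, which forces each quantity to equal zero. The identical argument applied to Corollary~\ref{cor:0_err_priv_cap_ub} yields $P_{0,\operatorname{1WL}}(\mathcal{N}) = P_{0,\operatorname{2-EXT}}(\mathcal{N}) = 0$, completing the asymptotic case for both the quantum and private capacities.

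For the one-shot statement I would not appeal to the asymptotic corollaries but instead use their one-shot precursors: Proposition~\ref{prop:one-shot-zero-error-bound-min-unext} bounds $\tfrac{1}{n}Q^{(1)}_{0,\operatorname{1WL}}(\mathcal{N}^{\otimes n})$ and $\tfrac{1}{n}Q^{(1)}_{0,\operatorname{2-EXT}}(\mathcal{N}^{\otimes n})$ from above by $\widehat{E}^u_{\min}\!\left(\mathcal{N}\right)$ for all $n$, and Proposition~\ref{prop:priv_bits_le_min_geo_unext_ent} does the same for the one-shot private capacities. Specializing to $n=1$ and again using $\widehat{E}^u_{\min}\!\left(\mathcal{N}\right)=0$ and nonnegativity of the capacities gives $Q^{(1)}_{0,\operatorname{1WL}}(\mathcal{N}) = Q^{(1)}_{0,\operatorname{2-EXT}}(\mathcal{N}) = P^{(1)}_{0,\operatorname{1WL}}(\mathcal{N}) = P^{(1)}_{0,\operatorname{2-EXT}}(\mathcal{N}) = 0$.

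I do not anticipate any genuine obstacle, since all the heavy lifting resides in the preceding proposition that evaluates the min-geometric unextendible entanglement of a full-rank-Choi channel to zero. The only points requiring care are bookkeeping ones: citing the one-shot bounds (Propositions~\ref{prop:one-shot-zero-error-bound-min-unext} and~\ref{prop:priv_bits_le_min_geo_unext_ent}) rather than the asymptotic corollaries when handling the one-shot case, and explicitly noting the nonnegativity of each capacity so that the upper bound of zero pins the value exactly at zero.
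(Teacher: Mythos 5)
Your proposal is correct and follows essentially the same route as the paper, which derives the corollary immediately from the vanishing of $\widehat{E}^u_{\min}$ for full-rank-Choi channels combined with the upper bounds in Corollaries~\ref{cor:0_err_cap_ub} and~\ref{cor:0_err_priv_cap_ub}. Your extra care in citing the one-shot bounds (Propositions~\ref{prop:one-shot-zero-error-bound-min-unext} and~\ref{prop:priv_bits_le_min_geo_unext_ent}) for the one-shot part of the claim is a sensible refinement of the paper's more terse justification, but not a different argument.
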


Let us now look at the $\alpha$-geometric unextendible entanglement of a channel for $\alpha > 1$, which can be computed using a semidefinite program.

\subsection{Semidefinite program for \texorpdfstring{$\alpha$}{alpha}-geometric unextendible entanglement of point-to-point channels}

In Section~\ref{sec:applications} we showed that the min-geometric unextendible entanglement of a channel is an upper bound on several operationally relevant quantities corresponding to the channel. However, a semidefinite program to compute the min-geometric unextendible entanglement of an arbitrary channel is not known. The $\alpha$-geometric unextendible entanglement of the channel serves as a weaker upper bound on all the quantities mentioned in Section~\ref{sec:applications} due to the monotonicity of the $\alpha$-geometric unextendible entanglement of channels in $\alpha$ (see Proposition~\ref{prop:geo_unext_ent_monotonic}). The $\alpha$-geometric unextendible entanglement of a channel can be calculated for some values of $\alpha\in (1,2]$ using a semidefinite program, which we describe in this section. 

We make use of the semidefinite program given in~\cite[Lemma 9]{Fang_2021} that calculates the following quantity:
\begin{equation}
	f_{\mathcal{V}}\!\left(\mathcal{N}\right) \coloneqq \min_{\mathcal{M}\in \mathcal{V}}\widehat{D}_{\alpha}\!\left(\mathcal{N}\Vert\mathcal{M}\right), 
\end{equation}
where $\mathcal{V}$ is a set of channels described by semidefinite constraints. Recall that lower semi-continuity of $\alpha$-geometric R\'enyi relative entropy allows us to replace the infimum with a minimum, providing a way to calculate the $\alpha$-geometric unextendible entanglement of a channel.

We first propose a semidefinite program for calculating the $\alpha$-geometric unextendible entanglement of a point-to-point quantum channel.

\begin{proposition}\label{prop:SDP_for_geo_unext_ent}
	The $\alpha$-geometric unextendible entanglement of a channel $\mathcal{N}_{A\to B}$ can be calculated, for $\alpha = 1 + 2^{-\ell}$ where $\ell\in \mathbb{N}$, using the following semidefinite program:
	\begin{equation}
		\widehat{E}^u_{\alpha}\!\left(\mathcal{N}_{A\to B}\right) = 2^\ell \min_{\substack{y\in \mathbb{R}, \Gamma^{\mathcal{P}}_{AB_1B_2} \ge 0\\M_{AB}, \left\{N^i_{AB}\right\}_{i=0}^{\ell},\in \operatorname{Herm}}} \log_2 y,
	\end{equation}
	subject to the constraints,
	\begin{align}
		\operatorname{Tr}_{B_2}\!\left[\Gamma^{\mathcal{P}}_{AB_1B_2}\right] &= \Gamma^{\mathcal{N}}_{AB}\label{eq:non_sig_SDP_geo_unext_ent},\\
		\operatorname{Tr}_{B} \left[M_{AB}\right] &\le yI_A\label{eq:geo_unext_ent_cond_1},\\
		\operatorname{Tr}_{B_1} \left[\Gamma^{\mathcal{P}}_{AB_1B_2}\right] &= N^0_{AB}, \label{eq:geo_unext_ent_cond_2}\\
		\begin{bmatrix}
			M_{AB} & \Gamma^{\mathcal{N}}_{AB}\\
			\Gamma^{\mathcal{N}}_{AB} & N^{\ell}_{AB}
		\end{bmatrix}
		&\ge 0,\label{eq:geo_unext_ent_cond_3}\\
		\begin{bmatrix}
			\Gamma^{\mathcal{N}}_{AB} & N^i_{AB}\\
			N^i_{AB} & N^{i-1}_{AB}
		\end{bmatrix}
		&\ge 0 \quad \forall i\in \{1,2,\ldots,\ell\},\label{eq:geo_unext_ent_cond_4}
	\end{align}
	where $\Gamma^{\mathcal{N}}_{AB}$ is the Choi operator of the channel $\mathcal{N}_{A\to B}$, and $B$, $B_1$, and $B_2$ are isomorphic systems.
\end{proposition}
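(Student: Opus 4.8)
The plan is to recognize the stated program as the composition of two semidefinite encodings: the Fang--Fawzi semidefinite program for the $\alpha$-geometric R\'enyi relative entropy between two \emph{fixed} channels (invoked in item~3 of Lemma~\ref{lemma:geo_div} and established in \cite[Lemma~9]{Fang_2021}), together with the Choi-level description of the extension constraint $\mathcal{P}\in\operatorname{Ext}(\mathcal{N})$ from \eqref{eq:ext_set_p2p_ch}. Starting from the definition \eqref{eq:geo_unext_ent_ch_def}, I would write $\widehat{E}^u_\alpha(\mathcal{N}) = \inf_{\mathcal{P}}\widehat{D}_\alpha(\mathcal{N}\Vert\mathcal{M}_{\mathcal P})$, where $\mathcal{M}_{\mathcal P} \coloneqq \operatorname{Tr}_{B_1}\circ\mathcal{P}_{A\to B_1B_2}$ ranges over the second marginal of any extension $\mathcal{P}$ with $\operatorname{Tr}_{B_2}\circ\mathcal P = \mathcal N$. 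The idea is then to merge the optimization variable $\mathcal P$ with the internal variables of the divergence program into a single semidefinite program.

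First I would fix an extension $\mathcal P$ and express $\widehat{D}_\alpha(\mathcal N\Vert\mathcal M_{\mathcal P})$ at the Choi level. For $\alpha\in(1,2]$ one has the closed form $\widehat{D}_\alpha(\mathcal N\Vert\mathcal M) = \frac{1}{\alpha-1}\log_2 \lambda_{\max}\!\big(\operatorname{Tr}_B[G_\alpha(\Gamma^{\mathcal M},\Gamma^{\mathcal N})]\big)$, the $\alpha>1$ counterpart of \eqref{eq:geo_rel_ent_ch_choi_op_(0,1)}, valid because data processing holds on $(1,2]$ so the channel divergence optimizes over pure $\psi_{RA}$ with $R\cong A$. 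Setting $N^0 \coloneqq \operatorname{Tr}_{B_1}[\Gamma^{\mathcal P}] = \Gamma^{\mathcal M_{\mathcal P}}$ makes \eqref{eq:geo_unext_ent_cond_2} precisely this identification, while $\Gamma^{\mathcal P}\ge 0$ and \eqref{eq:non_sig_SDP_geo_unext_ent} enforce respectively complete positivity of $\mathcal P$ and the marginal constraint $\operatorname{Tr}_{B_2}\circ\mathcal P = \mathcal N$ (trace preservation of $\mathcal P$ being automatic since $\operatorname{Tr}_B[\Gamma^{\mathcal N}] = I_A$). Thus the feasible $N^0$ are exactly the Choi operators of second marginals of extensions of $\mathcal N$.

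The crux is to show that the remaining constraints \eqref{eq:geo_unext_ent_cond_1}, \eqref{eq:geo_unext_ent_cond_3}, \eqref{eq:geo_unext_ent_cond_4} compute $\lambda_{\max}(\operatorname{Tr}_B[G_\alpha(N^0,\Gamma^{\mathcal N})])$ for $\alpha = 1+2^{-\ell}$. Here I would use the Schur-complement/operator-geometric-mean dictionary: each block inequality \eqref{eq:geo_unext_ent_cond_4} is equivalent to $N^{i-1}\ge N^i(\Gamma^{\mathcal N})^{-1}N^i$, so the largest feasible $N^i$ is the operator geometric mean $\Gamma^{\mathcal N}\# N^{i-1}$; iterating from $N^0$ gives, at optimality, $N^i = (\Gamma^{\mathcal N})^{1/2}P^{2^{-i}}(\Gamma^{\mathcal N})^{1/2}$ with $P \coloneqq (\Gamma^{\mathcal N})^{-1/2}N^0(\Gamma^{\mathcal N})^{-1/2}$, so that the $\ell$-fold tower realizes the fractional power $2^{-\ell}$. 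The final block \eqref{eq:geo_unext_ent_cond_3} is the Schur complement $M\ge \Gamma^{\mathcal N}(N^\ell)^{-1}\Gamma^{\mathcal N}$, whose minimal value equals $(\Gamma^{\mathcal N})^{1/2}P^{-2^{-\ell}}(\Gamma^{\mathcal N})^{1/2} = G_\alpha(N^0,\Gamma^{\mathcal N})$ by the transpose symmetry $X\#_s Y = Y\#_{1-s}X$ of the weighted geometric mean (here with $1-\alpha = -2^{-\ell}$). Constraint \eqref{eq:geo_unext_ent_cond_1} with $y$ minimized then reads $y = \lambda_{\max}(\operatorname{Tr}_B[G_\alpha(N^0,\Gamma^{\mathcal N})])$, and since $\frac{1}{\alpha-1} = 2^\ell$, minimizing $2^\ell\log_2 y$ jointly over $N^0$ (equivalently over $\mathcal P$) and the auxiliary variables reproduces $\widehat{E}^u_\alpha(\mathcal N)$.

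I expect the main obstacle to be the rigorous verification that the nested block constraints saturate to the weighted geometric mean with exponent $2^{-\ell}$ --- i.e., that minimizing $y$ forces $N^i = \Gamma^{\mathcal N}\# N^{i-1}$ and $M = G_\alpha(N^0,\Gamma^{\mathcal N})$ rather than some strictly smaller feasible point, and that this persists through the boundary cases where $\Gamma^{\mathcal N}$ or $N^0$ is rank-deficient. This monotonicity-of-the-optimum argument, together with the equivalence between the closed-form divergence and its semidefinite representation, is precisely the content of \cite[Lemma~9]{Fang_2021} and \cite[Proposition~44]{Katariya2021}; I would invoke those results for the single-channel divergence and then only need to check that appending the extra semidefinite variable $\Gamma^{\mathcal P}$ subject to the linear constraints \eqref{eq:non_sig_SDP_geo_unext_ent} and \eqref{eq:geo_unext_ent_cond_2} keeps the program jointly semidefinite, which is immediate.
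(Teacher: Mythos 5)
Your proposal is correct and takes essentially the same route as the paper: the paper's own justification is precisely to observe that \eqref{eq:non_sig_SDP_geo_unext_ent} and \eqref{eq:geo_unext_ent_cond_2} encode membership of $\operatorname{Tr}_{B_1}\circ\mathcal{P}$ in the semidefinitely-described set of second marginals of extensions of $\mathcal{N}$, and then to invoke \cite[Lemma~9]{Fang_2021} for the constraints \eqref{eq:geo_unext_ent_cond_1}--\eqref{eq:geo_unext_ent_cond_4} computing $\min_{\mathcal{M}\in\mathcal{V}}\widehat{D}_{\alpha}(\mathcal{N}\Vert\mathcal{M})$ over such a set. Your additional unpacking of the geometric-mean tower inside the Fang--Fawzi program is consistent with, but goes beyond, what the paper writes, and you correctly flag that the saturation and rank-deficient cases are exactly the content of the cited lemmas rather than something to re-derive.
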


The constraints in~\eqref{eq:non_sig_SDP_geo_unext_ent} and~\eqref{eq:geo_unext_ent_cond_2} ensure that $\Gamma^{\mathcal{N}}_{AB}$ and $N^0_{AB}$ are the Choi operators of the marginal channels $\operatorname{Tr}_{B_2}\circ\mathcal{P}_{A\to B_1B_2}$ and $\operatorname{Tr}_{B_1}\circ\mathcal{P}_{A\to B_1B_2}$. The constraints in~\eqref{eq:geo_unext_ent_cond_1}--\eqref{eq:geo_unext_ent_cond_4} are the semidefinite constraints to calculate the quantity $\min_{\mathcal{P}_{A\to B_1B_2}\in \mathcal{V}}\widehat{D}_{\alpha}\!\left(\mathcal{N}_{A\to B}\Vert\operatorname{Tr}_{B_2}\circ\mathcal{P}_{A\to B_1B_2}\right)$ for a set of channels $\mathcal{V}$ defined by semidefinite conditions, given in~\cite[Lemma 9]{Fang_2021}. 

Since $\ell$ dictates the number of variables in the semidefinite program, we cannot practically use this SDP to calculate $\widehat{E}^u_{\alpha}$ for $\alpha$ arbitrarily close to one. We settle for $\ell=10$ in our calculations to get an upper bound on the unextendible entanglement of a channel induced by the Belavkin--Staszewski relative entropy.

\subsection{Semidefinite program for \texorpdfstring{$\alpha$}{alpha}-geometric unextendible entanglement of bipartite channels} 

The semidefinite program for the $\alpha$-geometric unextendible entanglement of a channel in Proposition~\ref{prop:SDP_for_geo_unext_ent} can be generalized for bipartite channels $\mathcal{N}_{AB\to A'B'}$ by making the identifications $AB\leftrightarrow A$ and $A'B'\leftrightarrow B$, and modifying the constraints for marginal channels.

\begin{proposition}
	The $\alpha$-geometric unextendible entanglement of a channel $\mathcal{N}_{A\to B}$ can be calculated, for $\alpha = 1 + 2^{-\ell}$ where $\ell\in \mathbb{N}$, using the following semidefinite program:
	\begin{equation}
		\widehat{E}^u_{\alpha}\!\left(\mathcal{N}_{AB\to A'B'}\right) = 2^{\ell} \min_{\substack{y\in \mathbb{R}, \Gamma^{\mathcal{P}}_{AB_1B_2A'B'_1B'_2} \ge 0\\M_{ABA'B}, \left\{N^i_{ABA'B'}\right\}_{i=0}^{\ell},\in \operatorname{Herm}}} \log_2 y,
	\end{equation}
	subject to the constraints,
	\begin{align}
		\operatorname{Tr}_{B'_2}\!\left[\Gamma^{\mathcal{P}}_{AB_1B_2A'B'_1B'_2}\right] &= \Gamma^{\mathcal{N}}_{ABA'B'}\otimes I_{B_2}, \label{eq:non_sig_SDP_geo_unext_ent_bip}\\
		\operatorname{Tr}_{A'B'} \left[M_{ABA'B'}\right] &\le yI_{AB} \label{eq:geo_unext_ent_cond_bip_1},\\
		\operatorname{Tr}_{B_1B'_1}\!\left[\Gamma^{\mathcal{P}}_{AB_1B_2A'B'_1B'_2}\right] &= N^0_{ABA'B'}, \label{eq:geo_unext_ent_cond_bip_2}\\
		\begin{bmatrix}
			M_{ABA'B'} & \Gamma^{\mathcal{N}}_{ABA'B'}\\
			\Gamma^{\mathcal{N}}_{ABA'B'} & N^{\ell}_{ABA'B'}
		\end{bmatrix}
		&\ge 0,\label{eq:geo_unext_ent_cond_bip_3}\\
		\begin{bmatrix}
			\Gamma^{\mathcal{N}}_{ABA'B'} & N^i_{ABA'B'}\\
			N^i_{ABA'B'} & N^{i-1}_{ABA'B'}
		\end{bmatrix}
		&\ge 0 \quad\forall i\in \{1,2,\ldots,\ell\},\label{eq:geo_unext_ent_cond_bip_4}
	\end{align}
	where $\Gamma^{\mathcal{N}}_{ABA'B'}$ is the Choi operator of the channel $\mathcal{N}_{A\to B}$ and $d_B$ is the dimension of system $B$. Systems $B_1$ and $B_2$ are isomorphic to the system $B$, and systems $B'_1$ and $B'_2$ are isomorphic to the system $B'$.
\end{proposition}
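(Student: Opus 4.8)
The plan is to mirror the proof of Proposition~\ref{prop:SDP_for_geo_unext_ent} almost verbatim, performing the identifications $AB\leftrightarrow A$ and $A'B'\leftrightarrow B$ for the composite input and output systems, and changing only the two constraints that encode membership of the extension channel in the bipartite set $\operatorname{Ext}(\mathcal{N})$ of~\eqref{eq:ext_set_bip_ch}. First I would record that, by the definition in~\eqref{eq:geo_unext_ent_bip_def}, computing $\widehat{E}^u_{\alpha}(\mathcal{N}_{AB\to A'B'})$ amounts (up to the normalization fixed in Proposition~\ref{prop:SDP_for_geo_unext_ent}) to minimizing the channel divergence $\widehat{D}_{\alpha}(\mathcal{N}_{AB\to A'B'}\Vert\mathcal{M}_{AB\to A'B'})$ over all second marginals
\begin{equation}
    \mathcal{M}_{AB\to A'B'} = \operatorname{Tr}_{B'_1}\circ\mathcal{P}_{AB_1B_2\to A'B'_1B'_2}\circ\mathcal{A}_{B_1}
\end{equation}
of extensions $\mathcal{P}\in\operatorname{Ext}(\mathcal{N})$, where the supremum over input states $\psi_{RAB}$ is the one already built into the channel divergence. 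Since $\operatorname{Ext}(\mathcal{N})$ is describable by semidefinite constraints, this is precisely the situation addressed by the semidefinite program of~\cite[Lemma~9]{Fang_2021} for $\min_{\mathcal{M}\in\mathcal{V}}\widehat{D}_{\alpha}(\mathcal{N}\Vert\mathcal{M})$ when $\alpha = 1+2^{-\ell}$.

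Next I would translate the two defining conditions of $\operatorname{Ext}(\mathcal{N})$ into constraints on the Choi operator $\Gamma^{\mathcal{P}}_{AB_1B_2A'B'_1B'_2}$. Complete positivity of $\mathcal{P}$ gives $\Gamma^{\mathcal{P}}\ge 0$. The requirement that the first marginal equals $\mathcal{N}$, namely $\mathcal{N} = \operatorname{Tr}_{B'_2}\circ\mathcal{P}\circ\mathcal{A}_{B_2}$, which simultaneously encodes nonsignaling from the input $B_2$ to the outputs $A'B'_1$, collapses to the single Choi-level identity $\operatorname{Tr}_{B'_2}[\Gamma^{\mathcal{P}}] = \Gamma^{\mathcal{N}}_{ABA'B'}\otimes I_{B_2}$, which is~\eqref{eq:non_sig_SDP_geo_unext_ent_bip}; the tensor factor $I_{B_2}$ on the input side is the fingerprint of the bipartite marginal condition (cf.~\eqref{eq:no_signal_bip_choi}) and is the only structural change relative to the point-to-point constraint~\eqref{eq:non_sig_SDP_geo_unext_ent}. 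The second (replacer) nonsignaling condition $\operatorname{Tr}_{B'_1}\circ\mathcal{P} = \operatorname{Tr}_{B'_1}\circ\mathcal{P}\circ\mathcal{R}_{B_1}$ guarantees that $\mathcal{M}$ is well defined, and its Choi operator is identified with the auxiliary variable $N^0$ through $\operatorname{Tr}_{B_1B'_1}[\Gamma^{\mathcal{P}}] = N^0_{ABA'B'}$, i.e.~\eqref{eq:geo_unext_ent_cond_bip_2}; tracing over both the input $B_1$ and the output $B'_1$ is what realizes the composition $\operatorname{Tr}_{B'_1}\circ\mathcal{P}\circ\mathcal{A}_{B_1}$ at the level of Choi operators.

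With $N^0$ fixed as the Choi operator of $\mathcal{M}$, the remaining constraints~\eqref{eq:geo_unext_ent_cond_bip_1}, \eqref{eq:geo_unext_ent_cond_bip_3}, and~\eqref{eq:geo_unext_ent_cond_bip_4} are a direct transcription of~\cite[Lemma~9]{Fang_2021}: the chain of $2\times 2$ block-positivity conditions on $\{N^i\}_{i=0}^{\ell}$ implements the $\ell$-fold matrix-geometric-mean recursion computing the geometric R\'enyi quasi-entropy $\widehat{Q}_{\alpha}$ for $\alpha = 1+2^{-\ell}$, the constraint $\operatorname{Tr}_{A'B'}[M_{ABA'B'}]\le yI_{AB}$ enforces the supremum over input states that defines the channel divergence, and the objective $2^{\ell}\log_2 y$ reproduces $\widehat{D}_{\alpha} = \tfrac{1}{\alpha-1}\log_2\widehat{Q}_{\alpha}$ since $\tfrac{1}{\alpha-1} = 2^{\ell}$. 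Merging the minimization over $\Gamma^{\mathcal{P}}$ with the minimization over the auxiliary variables $y$, $M$, and $\{N^i\}$ then yields the single semidefinite program in the statement, where lower semicontinuity of $\widehat{D}_{\alpha}$ (invoked exactly as in the point-to-point case) justifies replacing the infimum over $\mathcal{P}$ by a minimum.

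I expect the main obstacle to be the bookkeeping in the second step: verifying that the append-and-trace operations $\mathcal{A}_{B_2}$ and $\mathcal{A}_{B_1}$, together with the two nonsignaling requirements, collapse to exactly the two Choi-operator identities~\eqref{eq:non_sig_SDP_geo_unext_ent_bip} and~\eqref{eq:geo_unext_ent_cond_bip_2}, with the correct identity tensor factors and no spurious dimensional prefactors, and that the resulting family of feasible second marginals $\mathcal{M}$ is genuinely a semidefinitely-described set $\mathcal{V}$ so that~\cite[Lemma~9]{Fang_2021} applies unchanged. Once this transcription is confirmed, no new analytic input is required beyond the point-to-point argument.
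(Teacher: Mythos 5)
Your proposal matches the paper's treatment exactly: the paper offers no separate proof for the bipartite case, stating only that the point-to-point SDP of Proposition~\ref{prop:SDP_for_geo_unext_ent} "can be generalized... by making the identifications $AB\leftrightarrow A$ and $A'B'\leftrightarrow B$, and modifying the constraints for marginal channels," which is precisely the argument you lay out (translate the two defining conditions of $\operatorname{Ext}(\mathcal{N})$ in~\eqref{eq:ext_set_bip_ch} into the Choi-level constraints~\eqref{eq:non_sig_SDP_geo_unext_ent_bip} and~\eqref{eq:geo_unext_ent_cond_bip_2}, then invoke the SDP of~\cite[Lemma~9]{Fang_2021} for $\alpha = 1+2^{-\ell}$). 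The bookkeeping point you flag—whether $\operatorname{Tr}_{B_1B'_1}[\Gamma^{\mathcal{P}}]$ equals the Choi operator of $\operatorname{Tr}_{B'_1}\circ\mathcal{P}\circ\mathcal{A}_{B_1}$ or carries a factor of $d_B$ because $B_1$ is an input rather than an output system—is indeed the one genuine subtlety absent from the point-to-point case, and is worth verifying explicitly, but it does not change the overall approach.
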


\begin{remark}
    We have used the subadditivity of the $\alpha$-geometric unextendible entanglement (Proposition~\ref{prop:unext_ent_subadditive}) to obtain a single-letter upper bound on all the operational quantities of a quantum channel discussed in Sections~\ref{sec:applications} and \ref{sec:applications_bip}, either in terms of the min-geometric unextendible entanglement of the channel or the unextendible entanglement of the channel induced by the Belavkin--Staszewski relative entropy. However, one can obtain a tighter bound by using the regularized unextendible entanglement of the channel because, for all $n\in \mathbb{N}$,
    \begin{align}
        \widehat{E}^u\!\left(\mathcal{N}_{A\to B}\right) &\ge \frac{1}{n}\widehat{E}^u\!\left(\mathcal{N}^{\otimes n}_{A\to B}\right),\\
        \widehat{E}^u_{\min}\!\left(\mathcal{N}_{A\to B}\right) &\ge \frac{1}{n}\widehat{E}^u_{\min}\!\left(\mathcal{N}^{\otimes n}_{A\to B}\right).
    \end{align}
    In practice, direct implementations of the optimizations for these quantities are much harder to calculate with increasing $n$ for arbitrary quantum channels, as the dimensions of the corresponding semidefinite programs increase exponentially with $n$ (however, it could be the case that the approach from~\cite{FST22}, which incorporates permutational symmetry, could make the computational difficulty of these optimizations grow only polynomially with $n$). By taking the asymptotic limit of these quantities, an arbitrarily large number of channel uses might be required to estimate it, restricting its usefulness from a practical perspective. Nonetheless, we remark that the regularized unextendible entanglement quantities give a generally tighter bound on the respective operational quantities than the single-letter upper bounds discussed in this section.
\end{remark}

\subsection{Semicausal channel for probabilistic distillation of resource}

The channel mentioned in~\eqref{eq:eras_ch_full_ext} is a specific case of a semicausal channel where Bob's input system is a trivial system. Alice sends some quantum information to Bob through an erasure channel, but instead of the quantum state being lost to the environment upon erasure, the state is returned to Alice. If the state is successfully transmitted to Bob, Alice receives back the erasure symbol instead. 

There exists a simple protocol to probabilistically distill entanglement from this channel using one-way LOCC. Alice sends one share of a locally prepared maximally entangled state to Bob using the channel. If the state is successfully transmitted to Bob, Alice receives an erasure symbol. Alice can determine if her state is erased or not by performing the POVM $\{\Pi,|e\rangle\!\langle e|\}$ on the system she received, where $\Pi$ is the projection onto the entire Hilbert space of her system. If she measures her system to be erased, she knows that a maximally entangled state has been established between herself and Bob, and she can indicate the results of her measurement to Bob, hence, distilling a maximally entangled state probabilistically. If the probability of erasure is $p$ and the channel allows Alice to send a $d$-dimensional state to Bob, then the probabilistic one-way distillable entanglement of the channel is no less than $(1-p)\log_2 d$. We find that the unextendible entanglement of this channel induced by the Belavkin--Staszewski relative entropy, which is an upper bound on the probabilitic distillable entanglement of the channel, is equal to $(1-p)\log_2 d$ as well, as we state formally in Proposition~\ref{prop:full_eras_unext_ent}.

\begin{proposition}\label{prop:full_eras_unext_ent}
    Consider a quantum channel that acts on an arbitrary state $\rho_{RA}$ as follows:
    \begin{multline}
        \mathcal{N}_{A\to A'B}\!\left(\rho_{RA}\right) = p\rho_{RA'}\otimes |e\rangle\!\langle e|_B \\+ (1-p)\rho_{RB}\otimes |e\rangle\!\langle e|_{A'}.
    \end{multline}
The unextendible entanglement of the channel mentioned above, induced by the Belavkin--Staszewski relative entropy, is equal to $(1-p)\log_2 d$.
\end{proposition}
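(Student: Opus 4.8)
The plan is to establish the two bounds $\widehat{E}^u(\mathcal{N})\ge(1-p)\log_2 d$ and $\widehat{E}^u(\mathcal{N})\le(1-p)\log_2 d$ separately, where $\widehat{E}^u$ is the Belavkin--Staszewski unextendible entanglement of~\eqref{eq:BS_geo_unext_ent_bip_def}. For the lower bound I would use the operational argument sketched just before the statement: Alice inputs one share of $\Phi^d$ into system $A$ and measures her output $A'$ with the POVM $\{\Pi_{A'},|e\rangle\!\langle e|_{A'}\}$; this is a one-way LOCC superchannel, and conditioned on the erasure outcome (probability $1-p$) it heralds a shared maximally entangled state $\Phi^d$. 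Hence the probabilistic one-way distillable entanglement of the channel is at least $(1-p)\log_2 d$, and applying Proposition~\ref{theo:distill_ent_ch_st_ub} with a trivial input state (so that the state term vanishes) bounds this quantity from above by $\widehat{E}^u(\mathcal{N})$, giving the lower bound. Equivalently, the heralded flagged state has Belavkin--Staszewski unextendible entanglement exactly $(1-p)\log_2 d$ by the direct-sum identity of Proposition~\ref{prop:geo_unext_direct_sum} together with $\widehat{E}^u(\Phi^d)=\log_2 d$, and Theorem~\ref{theo:unext_ent_state_le_unext_ent_ch_bip} bounds it by $\widehat{E}^u(\mathcal{N})$.

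For the upper bound I would exhibit one extension $\mathcal{P}\in\operatorname{Ext}(\mathcal{N})$, as defined in~\eqref{eq:ext_set_bip_ch}, and evaluate the divergence between its two marginals. The natural choice is
\[
 \mathcal{P}_{A\to A'B'_1B'_2}(\rho_{RA}) = p\,\rho_{RA'}\otimes|e\rangle\!\langle e|_{B'_1}\otimes|e\rangle\!\langle e|_{B'_2} + (1-p)\,\rho_{RB'_1}\otimes|e\rangle\!\langle e|_{A'}\otimes\pi_{B'_2},
\]
where $\pi_{B'_2}$ is the maximally mixed state on the $d$-dimensional non-erasure subspace. One checks directly that $\mathcal{P}$ is a channel and that $\operatorname{Tr}_{B'_2}\circ\mathcal{P}=\mathcal{N}$ (Bob's input is trivial here, so the conditions in~\eqref{eq:ext_set_bip_ch} reduce to this single marginal constraint). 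The second marginal is $\mathcal{M}=\operatorname{Tr}_{B'_1}\circ\mathcal{P}$, which acts as $p\,\rho_{RA'}\otimes|e\rangle\!\langle e|_{B'_2} + (1-p)\,\rho_R\otimes|e\rangle\!\langle e|_{A'}\otimes\pi_{B'_2}$.

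The computation then proceeds as follows. For any pure input $\psi_{RA}$, both $\mathcal{N}(\psi)$ and $\mathcal{M}(\psi)$ are block diagonal with respect to the erasure flag (``state on $A'$'' versus ``state on $B'$''), with the \emph{same} block weights $p$ and $1-p$. The direct-sum property of the Belavkin--Staszewski relative entropy then splits the divergence into the two blocks. The first block is identical in $\mathcal{N}$ and $\mathcal{M}$ and contributes $0$, while the second contributes $(1-p)\,\widehat{D}(\psi_{RB}\Vert\psi_R\otimes\pi_B)$. Since $\psi_{RB}$ is pure, the closed form $\widehat{D}(\omega\Vert\tau)=\operatorname{Tr}[\omega\log_2(\omega^{1/2}\tau^{-1}\omega^{1/2})]$ evaluates to $\log_2(d\,r)$, where $r$ is the Schmidt rank of $\psi$; taking the supremum over inputs gives $r=d$, so $\widehat{D}(\mathcal{N}\Vert\mathcal{M})=2(1-p)\log_2 d$ and thus $\widehat{E}^u(\mathcal{N})\le\tfrac12\cdot 2(1-p)\log_2 d=(1-p)\log_2 d$. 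Combining with the lower bound yields equality.

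I expect the main obstacle to be the support condition. The obvious extension that simply erases $B'_2$ in the successful branch fails, because then $\operatorname{supp}(\Gamma^{\mathcal{N}})\not\subseteq\operatorname{supp}(\Gamma^{\mathcal{M}})$ and the geometric/Belavkin--Staszewski divergence is $+\infty$. Replacing the erasure on $B'_2$ by a full-rank (maximally mixed) state in that branch is precisely what restores $\operatorname{supp}(\Gamma^{\mathcal{N}})\subseteq\operatorname{supp}(\Gamma^{\mathcal{M}})$ while keeping the second-block divergence finite and exactly computable; verifying this support inclusion for every input $\psi$ (equivalently, at the level of Choi operators) is the step that requires the most care.
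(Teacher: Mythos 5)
Your proposal is correct, and the lower bound is essentially identical to the paper's: the paper also uses the POVM $\{\Pi_{A'},|e\rangle\!\langle e|_{A'}\}$ to herald the flagged state $p\,\Phi^d_{RA'}\otimes\pi_B\otimes|0\rangle\!\langle 0| + (1-p)\,\Phi^d_{RB}\otimes\pi_{A'}\otimes|1\rangle\!\langle 1|$, applies the direct-sum identity~\eqref{eq:BS-unext-ent-avg-identity} together with $\widehat{E}^u(\Phi^d)=\log_2 d$, and invokes Theorem~\ref{theo:unext_ent_state_le_unext_ent_ch_bip}; your primary route through Proposition~\ref{theo:distill_ent_ch_st_ub} is a mild repackaging of the same argument. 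Where you genuinely diverge is the upper bound. The paper uses the product extension $\mathcal{P}=\mathcal{N}_{A\to A'B_1}\otimes\mathcal{A}^{\pi}_{B_2}$ (the same appended state $\pi_{B_2}$ in both branches) and evaluates $\widehat{D}(\mathcal{N}\Vert\mathcal{M})$ via the closed-form Choi-operator expression $\bigl\Vert\operatorname{Tr}_{A'B}[(\Gamma^{\mathcal{N}})^{1/2}(\log_2 Q)(\Gamma^{\mathcal{N}})^{1/2}]\bigr\Vert_{\infty}$, finding $Q = d^2\,\Phi^d_{AB}\otimes|e\rangle\!\langle e|_{A'}$. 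Your branch-correlated extension, which places $|e\rangle\!\langle e|_{B'_2}$ in the retained branch and $\pi_{B'_2}$ in the transmitted branch, makes the first block of the two marginals literally identical and yields the same value $2(1-p)\log_2 d$ via the direct-sum decomposition at the level of output states. The trade-off is worth noting: with the paper's product extension the first block of $\Gamma^{\mathcal{N}}$ carries $|e\rangle\!\langle e|_{B}$ while the corresponding block of $\Gamma^{\mathcal{M}}$ carries $\pi_B$ supported on the orthogonal non-erasure subspace, so $\operatorname{supp}(\Gamma^{\mathcal{N}})\not\subseteq\operatorname{supp}(\Gamma^{\mathcal{M}})$ and that block's vanishing contribution rests on the generalized-inverse/projection convention built into the Choi-operator formula; your extension satisfies the support inclusion outright, which makes the application of $\widehat{D}(\omega\Vert\tau)=\operatorname{Tr}[\omega\log_2(\omega^{1/2}\tau^{-1}\omega^{1/2})]$ unambiguous. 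Your evaluation $\widehat{D}(\psi_{RB}\Vert\psi_R\otimes\pi_B)=\log_2(dr)$ with $r$ the Schmidt rank, supremized at $r=d$, is also correct. Both routes land on $(1-p)\log_2 d$; yours is arguably the cleaner of the two on the support issue you flagged.
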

\begin{proof}
    See Appendix~\ref{app:full_eras_unext_ent}.
\end{proof}

\medskip

One can consider a less idealistic channel, where the state that Alice sends to Bob is truly erased but Alice receives some information about the erasure process. Let us consider a channel that acts on an arbitrary state $\rho_{RA}$ as follows:
\begin{multline}
    \mathcal{N}_{A\to A'B}\!\left(\rho_{RA}\right) \coloneqq (1-p)\rho_{RB}\otimes\sigma_{A'} \\+ p\pi_R\otimes |e\rangle\!\langle e|_B\otimes \tau_{A'}, 
\end{multline}
where $\pi$ is a maximally mixed state, and $\sigma_{A'}$ and $\tau_{A'}$ are some quantum states. 

The probabilistic one-way distillable entanglement of this channel depends on Alice's ability to distinguish between the states $\sigma$ and $\tau$. Naturally, if $\sigma$ and $\tau$ are orthogonal, then Alice can perfectly distinguish between the two states, and the probabilistic one-way distillable entanglement of the channel will be equal to $(1-p)\log_2 d$. However, if the states are not orthogonal Alice will not be able to perfectly distinguish between the two states, and the number of ebits that can be probabilistically distilled from the channel using one-way LOCC would be smaller than $(1-p)\log_2 d$. 

We consider a simple case where Alice receives a single classical bit indicating if the state she sent to Bob was erased or not, where the classical bit also undergoes depolarizing noise. As such,
\begin{align}
    \sigma_{A'} &= \mathcal{D}_q\!\left(|1\rangle\!\langle 1|_{A'}\right),\\
    \tau_{A'} &= \mathcal{D}_q\!\left(|0\rangle\!\langle 0|_{A'}\right),
\end{align}
where
\begin{equation}
    \mathcal{D}_q\!\left(\rho_{A'}\right) = (1-q)\rho_{A'} + q\pi_{A'}.
\end{equation}
In Figure~\ref{fig:spec_ch_unext_ent}, we plot the $\alpha$-geometric unextendible entanglement of the following channel:
\begin{multline}\label{eq:Spec_eras_ch}
    \widetilde{\mathcal{E}}^{p,q}_{A\to A'B}\!\left(\rho_{RA}\right) \coloneqq (1-p)\rho_{RB}\otimes\mathcal{D}_q\!\left(|1\rangle\!\langle 1|_{A'}\right) \\+ p\pi_R\otimes |e\rangle\!\langle e|_B\otimes \mathcal{D}_q\!\left(|0\rangle\!\langle 0|_{A'}\right), 
\end{multline}
for different values of $p$ and $q$, and $\alpha = 1+2^{-10}$. The code for generating the figures in this paper is available with the arXiv posting.

\begin{figure}
    \centering
    \includegraphics[width = \linewidth]{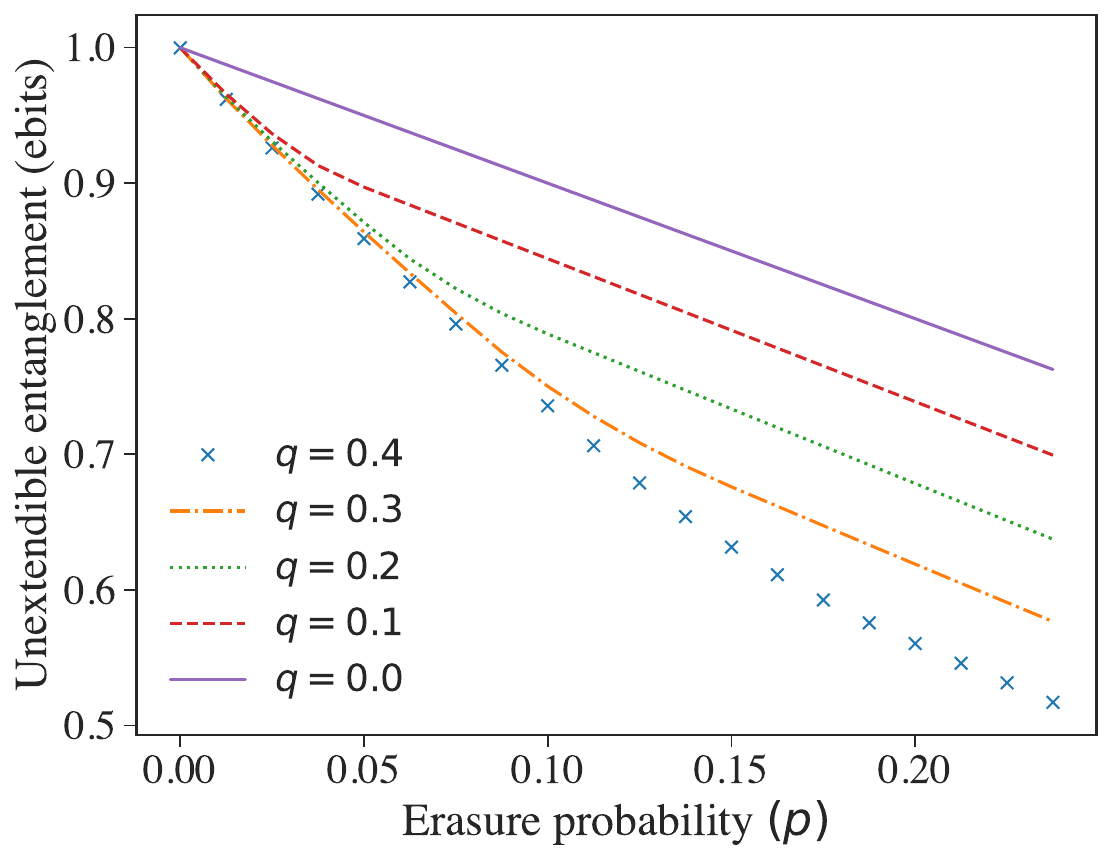}
    \caption{Here we plot the $\alpha$-geometric unextendible entanglement of the channel mentioned in~\eqref{eq:Spec_eras_ch}, for $\alpha = 1+2^{-10}$. The channel takes a two-dimensional state as input. The $\alpha$-geometric unextendible entanglement of a channel is an upper bound on the probabilistic one-way distillable entanglement as well as the probabilitic one-way distillable key of the channel for all $\alpha \in (1,2]$.}
    \label{fig:spec_ch_unext_ent}
\end{figure}

In the examples considered above, Alice can deduce if the state she sent to Bob was erased by analyzing the state she received from the semicausal channel, which allows Alice and Bob to probabilistically distill resource from such channels using only one-way LOCC. While the erasure channel and depolarizing channels cannot be used for exact or probabilistic distillation of entanglement or secret keys using only one-way LOCC, it is unclear if these channels cannot be used to boost the probabilistic one-way distillable entanglement or probabilistic one-way distillable key of a bipartite state in the presence of one-way LOCC (see Section~\ref{sec:applications_bip}). Nonetheless, we present an analytical upper bound on the $\alpha$-geometric unextendible entanglement of the erasure channel for all $\alpha \in (1,2]$, and an analytical expression for the unextendible entanglement of the depolarizing channel induced by the Belavkin--Staszewski relative entropy, in Appendices~\ref{app:unext_ent_eras_ch} and~\ref{app:proof_unext_ent_dep}, respectively.

\section{Conclusion}

\subsection{Summary}

In this work we defined a class of entanglement measures for quantum channels called generalized unextendible entanglement of quantum channels, based on the resource theory of unextendibility. We showed that this quantity does not increase under two-extendible superchannels, and consequently, decreases monotonically under one-way LOCC superchannels as well. This makes the unextendible entanglement of quantum channels a useful quantity for analyzing information-processing tasks that involve transformations of quantum channels assisted by local operations and one-way classical communication.

We found some immediate applications of the unextendible entanglement of quantum channels. The unextendible entanglement of a point-to-point quantum channel, induced by the $\alpha$-geometric R\'enyi relative entropy as $\alpha\to 0$, was shown to be an upper bound on the one-way distillable key, one-way distillable entanglement, forward-assisted zero-error quantum capacity, and the forward-assisted zero-error private capacity of the quantum channel. We found this quantity, which we call the min-geometric unextendible entanglement of a channel, to be equal to zero for several important channels such as the erasure channel and all channels with a full-rank Choi operator, indicating that these channels are useless for zero-error quantum and private communication. 

The formalism of $k$-extendibility was extended to bipartite superchannels, and we defined the unextendible entanglement of bipartite semicausal quantum channels. This quantity allowed us to bound the change in unextendibility of a bipartite quantum state when acted upon by an arbitrary bipartite semicausal quantum channel. Using this formalism we gave upper bounds on the probabilistic one-way distillable entanglement and probabilistic one-way distillable key of a bipartite quantum state when a bipartite quantum channel, not necessarily simulable by local operations and one-way classical communication, is also available.

Finally, we gave a semidefinite program to calculate the unextendible entanglement of a quantum channel induced by the $\alpha$-geometric R\'enyi relative entropy for $\alpha = 1+2^{-\ell}$, where $\ell$ is a positive integer, providing a computationally feasible method to calculate the upper bounds on the probabilistic one-way distillable entanglement and probabilistic one-way distillable key of a state-channel pair. We showed some example calculations for this technique, evaluating the $\alpha$-geometric unextendible entanglement of special erasure channels where one party sends quantum data to a distant party through an erasure channel but also receives some information about the erasure process.

\subsection{Future directions}

There are several future directions to be explored. As was the case in~\cite{WWW19}, we have restricted all of our measures of unextendibility and applications to two-extendibility.  An extension of the measures to $k$-unextendibility could possibly give tighter bounds on several quantities of interest when using channels assisted by one-way LOCC superchannels. Our formalism also restricts us to the case of zero-error capacities. It will be an interesting direction to allow for arbitrary error in our formalism in order to explore a more general and practical setting. 

We have obtained several bounds in terms of the min-geometric R\'enyi relative entropy of quantum channels. To the best of our knowledge, no prior work has used this measure as an upper bound on information-theoretic quantities. We believe our approach with min-geometric R\'enyi relative entropy can be extended to other dynamical resource theories to obtain tighter bounds on several quantities of interest. Given that we have shown the relevance of the min-geometric R\'enyi relative entropy in our work, we think this motivates developing efficient methods to optimize this quantity with respect to semidefinite constraints.

When multiple uses of a quantum channel are allowed for resource distillation, we restricted the discussion to the independent and identically distributed scenario in which all the quantum channels are used in parallel. A more general case of sequential distillation protocols can be considered. Similar ideas have been pursued in~\cite{kaur2017amortized,berta2018amortization,GS20,Fang_2021} when positive partial transpose (PPT) channels are allowed for free, and a general treatment for arbitrary resource theories has been considered in~\cite{GS20_resources}.

Extending the formalism of unextendibility to bipartite semicausal channels, we gave upper bounds on the probabilistic distillable entanglement and secret key from a bipartite state using one-way LOCC and an unextendible bipartite semicausal channel. It is known that an erasure channel can be used to boost the approximate distillable entanglement of shared bipartite state~\cite{WH10}; however, it is still unclear if an erasure channel could boost the one-way exact distillable entanglement or the probabilitic one-way distillable entanglement of the state. In general, it would be interesting to know if channels that cannot be used for zero-error private communication can be used to boost the probabilistic distillable key of a shared bipartite state.

\begin{acknowledgments}
We thank Nilanjana Datta, Tharon Holdsworth, Kaiyuan Ji, Hemant Mishra, Theshani Nuradha, Dhrumil Patel, Aby Philip, and Xin Wang for insightful discussions. We also acknowledge support from the School of Electrical and Computer Engineering at Cornell University and the National Science Foundation under Grant No.~2315398.
\end{acknowledgments}

\bibliography{Ref}

\appendix

\section{Proof of Proposition~\ref{prop:geo_unext_ent_st_conv_Bel_Stas}}\label{app:geo_unext_ent_st_conv_Bel_Stas}

In this appendix, we show that the $\alpha$-geometric unextendible entanglement of quantum states converges to the unextendible entanglement of quantum states induced by the Belavkin--Staszewski relative entropy as $\alpha \to 1$.

Let us first evaluate the $\alpha$-geometric unextendible entanglement of states when $\alpha$ approaches 1 from above. The $\alpha$-geometric unextendible entanglement is monotonic in $\alpha$ as is evident from the monotonicity of the underlying divergence, the $\alpha$-geometric R\'enyi relative entropy, in $\alpha$. Therefore,
\begin{align}
    & \lim_{\alpha\to 1^+}\widehat{E}^u_{\alpha}\!\left(\rho_{AB}\right) \notag  \\
    &= \inf_{\alpha\in (1,2]}\widehat{E}^u_{\alpha}\!\left(\rho_{AB}\right)\\
    &= \inf_{\alpha\in (1,2]} \frac{1}{2}\inf_{\sigma\in\operatorname{Ext}\left(\rho\right)}\widehat{D}_{\alpha}\!\left(\rho\Vert\operatorname{Tr}_{B_1}\!\left[\sigma_{AB_1B_2}\right]\right)\\
    &=\frac{1}{2}\inf_{\sigma\in\operatorname{Ext}\left(\rho\right)}\inf_{\alpha\in (1,2]}\widehat{D}_{\alpha}\!\left(\rho\Vert\operatorname{Tr}_{B_1}\!\left[\sigma_{AB_1B_2}\right]\right)\\
    &= \frac{1}{2}\inf_{\sigma\in\operatorname{Ext}\left(\rho\right)}\widehat{D}_{1}\!\left(\rho\Vert\operatorname{Tr}_{B_1}\!\left[\sigma_{AB_1B_2}\right]\right)\\
    &= \widehat{E}^u\!\left(\rho_{AB}\right)\label{eq:geo_unext_ent_st_conv_Bel_stas_up} .
\end{align}

Now let us evaluate the $\alpha$-geometric unextendible entanglement when $\alpha$ approaches 1 from below. Once again, using the monotonicity of $\alpha$-geometric unextendible entanglement in $\alpha$, we find the following equality:
\begin{align}
    & \lim_{\alpha\to 1^-}\widehat{E}^u_{\alpha}\!\left(\rho_{AB}\right) \notag \\ &= \sup_{\alpha\in (0,1)}\widehat{E}^u_{\alpha}\!\left(\rho_{AB}\right)\\
    &= \sup_{\alpha\in (0,1)}\frac{1}{2}\inf_{\sigma\in\operatorname{Ext}\left(\rho\right)}\widehat{D}_{\alpha}\!\left(\rho\Vert\operatorname{Tr}_{B_1}\!\left[\sigma_{AB_1B_2}\right]\right).
\end{align}
Since the $\alpha$-geometric R\'enyi relative entropy is lower semi-continuous in $\left(\rho,\sigma\right)$~\cite[Lemma~A.3]{Fawzi2021}, and it increases monotonically in $\alpha$ in the range $(0,2]$, we can employ the Mosonyi--Hiai minimax theorem from~\cite[Corollary~A.2]{MH11} to switch the order of supremum and infimum, obtaining the following equality:
\begin{align}
    & \lim_{\alpha\to 1^-}\widehat{E}^u_{\alpha}\!\left(\rho_{AB}\right) \notag \\ 
    &=\frac{1}{2}\inf_{\sigma\in\operatorname{Ext}\left(\rho\right)}\sup_{\alpha\in (0,1)}\widehat{D}_{\alpha}\!\left(\rho\Vert\operatorname{Tr}_{B_1}\!\left[\sigma_{AB_1B_2}\right]\right)\\
    &= \frac{1}{2}\inf_{\sigma\in\operatorname{Ext}\left(\rho\right)}\widehat{D}_{1}\!\left(\rho\Vert\operatorname{Tr}_{B_1}\!\left[\sigma_{AB_1B_2}\right]\right)\\
    &= \widehat{E}^u\left(\rho_{AB}\right).\label{eq:geo_unext_ent_st_conv_Bel_stas_low}
\end{align}
Combining~\eqref{eq:geo_unext_ent_st_conv_Bel_stas_up} and~\eqref{eq:geo_unext_ent_st_conv_Bel_stas_low}, we conclude~\eqref{eq:geo_unext_ent_st_conv_Bel_Stas}.

\section{Proof of Proposition~\ref{prop:geo_unext_ent_converge_Bel_Stas}}\label{app:geo_unext_ent_converg_Bel_Stas}

In this appendix, we show that the $\alpha$-geometric unextendible entanglement of quantum channels converges to the unextendible entanglement of point-to-point quantum channels induced by the Belavkin--Staszewski relative entropy as $\alpha \to 1$.

Let us first evaluate the limit when $\alpha$ approaches 1 from above. For a given quantum channel $\mathcal{N}_{A\to B}$, let us define the following set of quantum channels:
 \begin{equation}
        \operatorname{Ext}\!\left(\mathcal{N}\right) \coloneqq \left\{\mathcal{P}_{A\to B_1B_2}: \operatorname{Tr}_{B_2}\circ\mathcal{P}_{A\to B_1B_2} = \mathcal{N}_{A\to B}\right\}.
 \end{equation}
 We know that the unextendible entanglement of channels induced by the $\alpha$-geometric R\'enyi relative entropy increases monotonically with $\alpha$ for $\alpha > 0$~\cite{Katariya2021}. Therefore, we can write,
 \begin{align}
     &\lim_{\alpha\to 1^+} \widehat{E}^u_{\alpha}\!\left(\mathcal{N}_{A\to B}\right) \notag\\
     &= \inf_{\alpha\in (1,2]}\widehat{E}^u_{\alpha}\!\left(\mathcal{N}_{A\to B}\right)\\
     &= \inf_{\alpha\in (1,2]}\inf_{\mathcal{P}_{A\to B_1B_2} \in \operatorname{Ext}\left(\mathcal{N}\right)} \widehat{D}_{\alpha}\!\left(\mathcal{N}_{A\to B}\Vert\operatorname{Tr}_{B_1}\circ\mathcal{P}_{A\to B_1B_2}\right)\\
     &= \inf_{\mathcal{P}_{A\to B_1B_2} \in \operatorname{Ext}\left(\mathcal{N}\right)} \inf_{\alpha \in (1,2]} \widehat{D}_{\alpha}\!\left(\mathcal{N}_{A\to B}\Vert \operatorname{Tr}_{B_1}\circ\mathcal{P}_{A\to B_1B_2}\right)\\
     &= \inf_{\mathcal{P}_{A\to B_1B_2} \in \operatorname{Ext}\left(\mathcal{N}\right)} \widehat{D}\!\left(\mathcal{N}_{A\to B}\Vert \operatorname{Tr}_{B_1}\circ\mathcal{P}_{A\to B_1B_2}\right)\\
     &= \widehat{E}^u\!\left(\mathcal{N}_{A\to B}\right), \label{eq:geo_unext_ent_up_lim}
 \end{align}
 where the first equality is a consequence of monotonicity of $\alpha$-geometric unextendible entanglement when $\alpha \in (1,2]$ (Proposition \ref{prop:geo_unext_ent_monotonic}) and the penultimate equality is a consequence of the fact that the $\alpha$-geometric R\'enyi relative entropy of channels converges to the Belavkin--Staszewski relative entropy as $\alpha \to 1$~\cite[Lemma~35]{DKQSWW23}.

 Now let us evaluate the limit when $\alpha$ approaches 1 from below. By Proposition~\ref{prop:geo_unext_ent_monotonic}, we know that the $\alpha$-geometric unextendible entanglement increases monotonically for $\alpha\in (0,1)$. Therefore,
 \begin{align}
     &\lim_{\alpha \to 1^-}\widehat{E}^u_{\alpha}\!\left(\mathcal{N}_{A\to B}\right) \notag\\
     &= \sup_{\alpha \in (0,1)}\widehat{E}^u_{\alpha}\!\left(\mathcal{N}_{A\to B}\right)\\
     &= \sup_{\alpha \in (0,1)}\inf_{\mathcal{P}_{A\to B_1B_2}\in \operatorname{Ext}\left(\mathcal{N}\right)} \widehat{D}_{\alpha}\!\left(\mathcal{N}_{A\to B}\Vert \operatorname{Tr}_{B_1}\circ\mathcal{P}_{A\to B_1B_2}\right). \label{eq:geo_unext_ent_low_lim_1}
 \end{align}
Since the $\alpha$-geometric R\'enyi relative entropy of channels $\widehat{D}_{\alpha}\!\left(\mathcal{N}_{A\to B}\Vert\mathcal{M}_{A\to B}\right)$ is lower semi-continuous in $\mathcal{M}_{A\to B}$~\cite[Lemma~37]{DKQSWW23} and increases monotonically in $\alpha$  in the range $(0,1)$, we can employ the Mosonyi--Hiai minimax theorem from~\cite[Corollary~A.2]{MH11} and establish that
\begin{align}
 &\sup_{\alpha \in (0,1)}\inf_{\mathcal{P}_{A\to B_1B_2}\in \operatorname{Ext}\left(\mathcal{N}\right)} \widehat{D}_{\alpha}\!\left(\mathcal{N}_{A\to B}\Vert \operatorname{Tr}_{B_1}\circ\mathcal{P}_{A\to B_1B_2}\right)\notag\\
 &= \inf_{\mathcal{P}_{A\to B_1B_2}\in \operatorname{Ext}\left(\mathcal{N}\right)}\sup_{\alpha \in (0,1)} \widehat{D}_{\alpha}\!\left(\mathcal{N}_{A\to B}\Vert \operatorname{Tr}_{B_1}\circ\mathcal{P}_{A\to B_1B_2}\right)\\
 &= \inf_{\mathcal{P}_{A\to B_1B_2}\in \operatorname{Ext}\left(\mathcal{N}\right)} \widehat{D}\!\left(\mathcal{N}_{A\to B}\Vert\operatorname{Tr}_{B_1}\circ\mathcal{P}_{A\to B_1B_2}\right)\\
 &= \widehat{E}^u\!\left(\mathcal{N}_{A\to B}\right), \label{eq:geo_unext_ent_low_lim_2}
\end{align}
where the first equality follows from the Mosonyi--Hiai minimax theorem in~\cite[Corollary~A.2]{MH11}, the second equality follows from the fact that $\alpha$-geometric R\'enyi relative entropy converges to the Belavkin--Staszewski relative entropy as $\alpha\to 1$, and the final equality follows from the definition of unextendibe entanglement induced by the Belavkin--Staszewski relative entropy. Hence, combining~\eqref{eq:geo_unext_ent_up_lim},~\eqref{eq:geo_unext_ent_low_lim_1}, and~\eqref{eq:geo_unext_ent_low_lim_2}, we conclude~\eqref{eq:geo_unext_ent_converge_Bel_Stas}.

\section{Proof of Corollary~\ref{cor:0_err_cap_ub}}\label{app:0_err_cap_ub_proof}

In this appendix, we give an alternate proof of Corollary~\ref{cor:0_err_cap_ub}. Let us begin by evaluating the $\alpha$-geometric unextendible entanglement of a $d$-dimensional identity channel.

\begin{proposition}\label{prop:id_geo_unext_ent}
    The $\alpha$-geometric unextendible entanglement of a $d$-dimensional identity channel is equal to $\log_2 d$ for all $\alpha \in (0,2]$. That is,
    \begin{equation}
        \widehat{E}^u_{\alpha}\!\left(\operatorname{id}^d_{A\to B}\right) = \log_2 d \qquad \forall \alpha \in (1,2].
    \end{equation}
\end{proposition}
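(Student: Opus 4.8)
The plan is to establish the equality by proving matching upper and lower bounds on $\widehat{E}^u_{\alpha}(\operatorname{id}^d_{A\to B})$, both equal to $\log_2 d$.

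For the upper bound, I would exhibit a single convenient extension of the identity channel and evaluate the objective in~\eqref{eq:geo_unext_ent_ch_def} on it. The natural choice is $\mathcal{P}_{A\to B_1B_2} \coloneqq \operatorname{id}_{A\to B_1}\otimes\mathcal{A}^{\pi}_{B_2}$, which routes the input to $B_1$ via the identity and appends the maximally mixed state $\pi$ on $B_2$. This manifestly lies in $\operatorname{Ext}(\operatorname{id}^d)$ since $\operatorname{Tr}_{B_2}\circ\mathcal{P} = \operatorname{id}^d$, and its complementary marginal $\operatorname{Tr}_{B_1}\circ\mathcal{P}$ is exactly the replacer channel $\mathcal{R}^{\pi}$ that discards the input and outputs $\pi$. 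It then remains to compute the channel divergence $\widehat{D}_{\alpha}(\operatorname{id}^d\Vert\mathcal{R}^{\pi}) = \sup_{\psi_{RA}}\widehat{D}_{\alpha}(\psi_{RB}\Vert\psi_R\otimes\pi_B)$. Using the rank-one structure of $\psi_{RB}$, the geometric R\'enyi quasi-entropy reduces to $\widehat{Q}_{\alpha}(\psi_{RB}\Vert\psi_R\otimes\pi_B) = (\langle\psi|(\psi_R^{-1}\otimes d I_B)|\psi\rangle)^{\alpha-1} = (dr)^{\alpha-1}$, where $r$ is the Schmidt rank of $\psi_{RA}$; hence $\widehat{D}_{\alpha} = \log_2(dr)$, which is maximized at $r = d$, giving $\widehat{D}_{\alpha}(\operatorname{id}^d\Vert\mathcal{R}^{\pi}) = 2\log_2 d$ and therefore $\widehat{E}^u_{\alpha}(\operatorname{id}^d)\le \frac{1}{2}\cdot 2\log_2 d = \log_2 d$.

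For the lower bound, I would test every admissible extension against the maximally entangled input $\Phi^d_{RA}$ (with $R\cong A$). For any $\mathcal{P}\in\operatorname{Ext}(\operatorname{id}^d)$, write $\omega_{RB_1B_2}\coloneqq\mathcal{P}(\Phi^d_{RA})$; since $\operatorname{Tr}_{B_2}\circ\mathcal{P}=\operatorname{id}^d$, the marginal $\operatorname{Tr}_{B_2}[\omega_{RB_1B_2}] = \Phi^d_{RB_1}$ is pure, which is exactly the structural rigidity I want to exploit: a global state with a pure marginal must factorize, so $\omega_{RB_1B_2} = \Phi^d_{RB_1}\otimes\xi_{B_2}$ for some state $\xi_{B_2}$. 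Consequently $\operatorname{Tr}_{B_1}\circ\mathcal{P}(\Phi^d_{RA}) = \pi_R\otimes\xi_{B_2}$, and the states appearing in the objective, as $\mathcal{P}$ ranges over channel extensions, form a subset of all state extensions of $\Phi^d_{RB}$ with respect to the $B$ system. Since $\operatorname{id}^d(\Phi^d_{RA}) = \Phi^d_{RB}$ exactly, this lets me lower bound the inner supremum by $\widehat{D}_{\alpha}(\Phi^d_{RB}\Vert\operatorname{Tr}_{B_1}\circ\mathcal{P}(\Phi^d_{RA}))$ and then recognize, after taking the infimum over $\mathcal{P}$, the resulting quantity as being no smaller than the $\alpha$-geometric unextendible entanglement of the state $\Phi^d_{RB}$, which equals $\log_2 d$ by Theorem~\ref{theo:props_from_WWW19}. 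Alternatively, one can avoid invoking the state result by directly computing $\widehat{D}_{\alpha}(\Phi^d_{RB}\Vert\pi_R\otimes\xi_B) = \log_2\operatorname{Tr}[\xi^{-1}]\ge 2\log_2 d$, where the last bound follows from the Cauchy--Schwarz inequality $\operatorname{Tr}[\xi]\operatorname{Tr}[\xi^{-1}]\ge d^2$. Combining the two bounds yields the claimed equality, and with the projected form of the quasi-entropy the argument also covers $\alpha\in(0,1)$, giving the full range $\alpha\in(0,2]$.

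The main obstacle I anticipate is the lower bound, and specifically the rigidity step: correctly arguing that every channel extension, evaluated on the maximally entangled state, produces a tripartite state that factorizes because of its pure marginal, thereby pinning the second argument of the divergence to the form $\pi_R\otimes\xi_B$. The remaining work---the rank-one evaluation of the geometric quasi-entropy and the elementary $\operatorname{Tr}[\xi^{-1}]\ge d^2$ estimate (or, alternatively, the reduction to Theorem~\ref{theo:props_from_WWW19})---is routine once this structural observation is in place.
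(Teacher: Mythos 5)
Your proof is correct and follows essentially the same route as the paper: the key step in both is the rigidity observation that any extension of the identity channel must have the form $\operatorname{id}^d_{A\to B_1}\otimes\mathcal{A}^{\sigma}_{B_2}$ (the paper phrases this via the Choi operator having the pure marginal $\Gamma_{AB_1}$, you phrase it via the output on $\Phi^d_{RA}$ having a pure marginal --- these are the same fact), after which both arguments reduce to the estimate $\operatorname{Tr}[\sigma^{-1}]\ge d^2$ with equality at $\sigma=\pi$. The only difference is organizational: you split into an explicit upper bound and a lower bound and carry out the rank-one quasi-entropy computation by hand, whereas the paper characterizes all extensions at once and cites~\cite{WWW19} for the evaluation of $\widehat{D}_{\alpha}(\psi_{RB}\Vert\psi_R\otimes\sigma)$.
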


\begin{proof}
    Let $\mathcal{P}_{A \to B_1B_2}$ be an extension of the $d$-dimensional identity channel, i.e.,
\begin{equation}\label{eqn:id_extension}
	\operatorname{Tr}_{B_2}\circ\mathcal{P}_{A\to B_1B_2} = \operatorname{id}^d_{A\to B}.
\end{equation}
Let $\Gamma^{\mathcal{P}}_{AB_1B_2}$ be the Choi operator of the quantum channel $\mathcal{P}_{A\to B_1B_2}$. Since the Choi operator of the identity channel is the unnormalized maximally entangled state, the Choi operator of the channel $\mathcal{P}_{A\to B_1B_2}$ has the following form:
\begin{equation}
    \Gamma^{\mathcal{P}}_{AB_1B_2} = \Gamma_{AB_1}\otimes\sigma_{B_2},
\end{equation}
where $\Gamma_{AB_1}$ is the unnormalized maximally entangled state, and $\sigma_{B_2}$ is an arbitrary quantum state. Thus, an arbitrary extension, $\mathcal{P}_{A\to B_1B_2}$, of the identity channel can be expressed as
\begin{equation}
	\mathcal{P}_{A\to B_1B_2} = \operatorname{id}^d_{A\to B_1}\otimes\mathcal{A}^{\sigma}_{B_2},
\end{equation}
where $\mathcal{A}^{\sigma}_{B_2}$ is a channel that prepares the state $\sigma_{B_2}$.

The two marginals of $\mathcal{P}_{A\to B_1B_2}$ act on an arbitrary quantum state $\psi_{RA}$ as
\begin{equation}
    \operatorname{Tr}_{B_2}\circ\mathcal{P}_{A\to B_1B_2}\!\left(\psi_{RA}\right) = \operatorname{id}^d_{A\to B_1}\!\left(\psi_{RA}\right) = \psi_{RB_1} ,
\end{equation}
and
\begin{align}
    \operatorname{Tr}_{B_1}\circ\mathcal{P}_{A\to B_1B_2}\!\left(\psi_{RA}\right) & = \operatorname{Tr}_A\otimes\mathcal{A}^{\sigma}_{B_2}\!\left(\psi_{RA}\right)\notag\\
    & = \psi_R\otimes \sigma_{B_2}.
\end{align}
Now consider that
\begin{align}
    &\widehat{E}^u_{\alpha}\!\left(\operatorname{id}^d_{A\to B}\right)\notag \\
    & = \frac{1}{2} \inf_{\mathcal{P}_{A\to B_1 B_2} \in \operatorname{Ext}(\operatorname{id})} \widehat{D}_{\alpha}(\operatorname{id} \Vert \operatorname{Tr}_{B_1} \circ \mathcal{P}_{A\to B_1 B_2}) \\
    & = \frac{1}{2} \inf_{\sigma_{B_2}} \sup_{\psi_{RA}} \widehat{D}_\alpha\left(\psi_{RB_1}\Vert \psi_{R}\otimes\sigma_{B_2}\right)\\
 & = \frac{1}{2} \inf_{\sigma_{B_2} \in \mathcal{S}_+} \sup_{\psi_{RA}} \widehat{D}_\alpha\left(\psi_{RB_1}\Vert \psi_{R}\otimes\sigma_{B_2}\right),
\end{align}
where $\mathcal{S}_+$ denotes the set of positive definite states, and here we have used the fact that the geometric R\'enyi relative entropy is lower semi-continuous in the last equality.
As shown in~\cite[Appendix B]{WWW19}, 
\begin{equation}\label{eq:geo_min_sigma_inv}
	\inf_{\sigma_{B_2}\in \mathcal{S}_+} \widehat{D}_\alpha(\psi_{RB_1}\Vert \psi_{R}\otimes\sigma_{B_2}) = \inf_{\sigma_{B_2}\in \mathcal{S}_+} \log_2\operatorname{Tr}[\sigma_{B_2}^{-1}].
\end{equation}
The right hand side of the~\eqref{eq:geo_min_sigma_inv} is independent of $\psi_{RA}$. Therefore,
\begin{equation}
	\widehat{E}^u_{\alpha}\!\left(\operatorname{id}^d_{A\to B}\right) = \frac{1}{2}\inf_{\sigma_{B_2}\in \mathcal{S}_+} \log_2\operatorname{Tr}[\sigma_{B_2}^{-1}].
\end{equation}

Let $\{\lambda_i\}_i$ be the eigenvalues of $\sigma_{B_2}$, so that this state can be written as
\begin{equation}
	\sigma_{B_2} = \sum_{i=1}^d \lambda_i |i\rangle\!\langle i|,
\end{equation}
where $\{|i\rangle\}_{i=1}^d$ is an eigenbasis of  $\sigma_{B_2}$. The inverse of this state is
\begin{equation}
	\sigma_{B_2}^{-1} = \sum_{i=1}^d \lambda_i^{-1}|i\rangle\!\langle i|.
\end{equation}
The $\alpha$-geometric unextendible entanglement of the identity channel becomes
\begin{equation}
	\widehat{E}^u_{\alpha}\!\left(\operatorname{id}^d_{A\to B}\right) = \frac{1}{2}\log_2\inf_{\substack{\{\lambda_i\}_i\\ \sum_i \lambda_i = 1}} \sum_{i=1}^d \lambda^{-1}_i,
\end{equation}
where the infimum is over every probability distribution $\{\lambda_i\}_i$ with full support.
Using the well known arithmetic mean-harmonic mean inequality
\begin{equation}
	\frac{d}{\sum_{i}\lambda_i^{-1}} \le \frac{\sum_i \lambda_i}{d}
	= \frac{1}{d}.
\end{equation}
Rearranging this inequality then implies that
\begin{equation}
    \frac{1}{2}\log_2\inf_{\substack{\{\lambda_i\}_i\\ \sum_i \lambda_i = 1}} \sum_{i=1}^d \lambda^{-1}_i \geq \frac{1}{2}\log_2 d^2.
\end{equation}
The inequality above is saturated when all $\lambda_i$ are equal. Therefore,
\begin{equation}
    \widehat{E}^u_{\alpha}\!\left(\operatorname{id}^d_{A\to B}\right) = \frac{1}{2}\log_2 d^2 = \log_2 d.
\end{equation}
This concludes the proof of Proposition~\ref{prop:id_geo_unext_ent}.
\end{proof}

\medskip

Now we show that the zero-error quantum capacity of a channel, assisted by one-way LOCC superchannels or two-extendible superchannels, is bounded from above by the min-geometric unextendible entanglement of the channel.

\medskip

\begin{proof}[Proof of Corollary~\ref{cor:0_err_cap_ub}]
    Consider a two-extendible superchannel $\Theta_{(A^n\to B^n)\to (C\to D)}$ that acts on $n$ instances of the channel $\mathcal{N}_{A\to B}$ to exactly simulate the $d$-dimensional identity channel:
\begin{equation}
    \operatorname{id}^d_{C\to D} = \Theta_{(A^n\to B^n)\to (C\to D)}\!\left(\mathcal{N}_{A\to B}^{\otimes n}\right).
\end{equation}
Equating the $\alpha$-geometric unextendible entanglement of the identity channel and the simulated channel, the following inequality holds for all $\alpha \in (0,2]$:
\begin{align}
    \widehat{E}^u_{\alpha}\!\left(\operatorname{id}^d_{C\to D}\right) &= \widehat{E}^u_{\alpha}\!\left(\Theta_{(A^n\to B^n)\to (C\to D)}\!\left(\mathcal{N}_{A\to B}^{\otimes n}\right)\right)\\
    &\le \widehat{E}^u_{\alpha}\!\left(\mathcal{N}^{\otimes n}_{A\to B}\right)\\
    &\le n\widehat{E}^u_{\alpha}\!\left(\mathcal{N}_{A\to B}\right),\label{eq:unext_ent_id_le_n_unext_ent_ch}
\end{align}
where the first inequality comes from the monotonicity of the unextendible entanglement under the action of two-extendible superchannels (Theorem~\ref{theo:two_ext_monotonic_p2p}) and the second inequality comes from the subadditivity of the $\alpha$-geometric unextendible entanglement (Proposition~\ref{prop:unext_ent_subadditive}).

Using~\eqref{eq:unext_ent_id_le_n_unext_ent_ch} and Proposition~\ref{prop:id_geo_unext_ent}, we arrive at the following inequality: 
\begin{equation}\label{eq:rate_le_alpha_unext_ent_ch}
    \widehat{E}^u_{\alpha}\!\left(\mathcal{N}_{A\to B}\right) \ge \frac{\log_2d}{n} = \frac{1}{n}Q^{(1)}_{0,\operatorname{2-EXT}}\!\left(\mathcal{N}^{\otimes n}_{A\to B}\right),
\end{equation}
where we arrive at the equality after recalling the definition of $Q^{(1)}_{0,\operatorname{2-EXT}}\!\left(\mathcal{N}_{A\to B}\right)$ from~\eqref{eq:1shot_q_cap_2_ext_defn}.
Since the inequality holds for all $\alpha \in (0,2]$, we can take the limit $\alpha \to 0$ to get the tightest inequality in~\eqref{eq:rate_le_alpha_unext_ent_ch}, owing to the monotonicity of the $\alpha$-geometric unextendible entanglement in $\alpha$. Moreover,~\eqref{eq:rate_le_alpha_unext_ent_ch} holds for every positive integer $n$, which leads to the following inequality:
\begin{equation}\label{eq:0_err_cap_non_asymp_ub}
    Q_{0,\operatorname{2-EXT}}\!\left(\mathcal{N}_{A\to B}\right) = \frac{1}{n}Q^{(1)}_{0,\operatorname{2-EXT}}\!\left(\mathcal{N}^{\otimes n}_{A\to B}\right) \le \widehat{E}^u_{\operatorname{min}}\!\left(\mathcal{N}_{A\to B}\right).
\end{equation}
The zero-error quantum capacity of a channel assisted by two-extendible superchannels is never less than the zero-error quantum capacity of the channel assisted by one-way LOCC superchannels, which concludes the proof.
\end{proof}

\section{Proof of Proposition~\ref{prop:bip_such_ch_marg_theo}}\label{app:bip_supch_ch_marg_theo_proof}

Consider an arbitrary quantum channel $\mathcal{P}_{AB_{[k]}\to A'B'_{[k]}}$ where all systems in the set $\left\{B_i\right\}_{i=1}^k$ are isomorphic to each other, and all systems in the set $\left\{B'_i\right\}_{i=1}^{k}$ are isomorphic to each other. Let $\Theta_{(AB\to A'B')\to (CD\to C'D')}$ be a $k$-extendible superchannel with the $k$-extension $\Upsilon_{(AB_{[k]}\to A'B'_{[k]})\to (CD_{[k]}\to C'D'_{[k]})}$. Let $\mathcal{N}^i_{AB_i\to A'B'_i}$ be a marginal of the channel $\mathcal{P}_{AB_{[k]}\to A'B'_{[k]}}$ such that,
    \begin{equation}
        \mathcal{N}^i_{AB_i\to A'B'_i} = \operatorname{Tr}_{B'_{[k]\setminus i}}\circ\mathcal{P}_{AB_{[k]}\to A'B'_{[k]}}\circ\mathcal{A}_{B_{[k]\setminus i}},
    \end{equation}
    
    The Choi operator of the channel $\Upsilon\left(\mathcal{P}\right)$, using the propagation rule stated in~\eqref{eq:prop_rule}, is,
    \begin{equation}
        \Gamma^{\Upsilon\left(\mathcal{P}\right)} = \operatorname{Tr}_{AB_{[k]}A'B'_{[k]}}\!\left[\left(\Gamma^{\mathcal{P}}\right)^T\Gamma^{\Upsilon}\right],
    \end{equation}
    where $\Gamma^{\mathcal{P}}_{AB_{[k]}A'B'_{[k]}}$ is the Choi operator of the quantum channel $\mathcal{P}_{AB_{[k]}\to A'B'_{[k]}}$, and $\Gamma^{\Upsilon}_{AB_{[k]}A'B'_{[k]}CD_{[k]}C'D'_{[k]}}$ is the Choi operator of the superchannel $\Upsilon_{(AB_{[k]}\to A'B'_{[k]})\to (CD_{[k]}\to C'D'_{[k]})}$. The Choi operator of the channel $\mathcal{N}^i_{AB_i\to A'B'_i}$ is related to the Choi operator of the channel $\mathcal{P}_{AB_{[k]}\to A'B'_{[k]}}$ as
    \begin{equation}\label{eq:ch_marg_choi_proof}
        \Gamma^{\mathcal{N}_i}_{AB_iA'B'_i}\otimes I_{B_{[k]\setminus i}} = \operatorname{Tr}_{B'_{[k]\setminus i}}\!\left[\Gamma^{\mathcal{P}}_{AB_{[k]}A'B'_{[k]}}\right], 
    \end{equation}
    since the former is a marginal of the latter.
    Now consider the Choi operator of the channel $\operatorname{Tr}_{D'_{[k]\setminus i}}\circ\left(\Upsilon\left(\mathcal{P}\right)\right)$,
    \begin{align}
        &\operatorname{Tr}_{D'_{[k]\setminus i}}\!\left[\Gamma^{\Upsilon\left(\mathcal{P}\right)}\right]\notag\\
        &= \operatorname{Tr}_{AB_{[k]}A'B'_{[k]}D'_{[k]\setminus i}}\!\left[\left(\Gamma^{\mathcal{P}}\right)^T\Gamma^{\Upsilon}\right]\\
        &= \operatorname{Tr}_{AB_{[k]}A'B'_{[k]}}\!\left[\left(\Gamma^{\mathcal{P}}\right)^T\operatorname{Tr}_{D'_{[k]\setminus i}}\!\left[\Gamma^{\Upsilon}\right]\right]\\
        &= \frac{1}{\left|B'\right|^{k-1}}\operatorname{Tr}_{AB_{[k]}A'B'_{[k]}}\!\left[\left(\Gamma^{\mathcal{P}}\right)^T\operatorname{Tr}_{D'_{[k]\setminus i}B'_{[k]\setminus i}}\!\left[\Gamma^{\Upsilon}\right] \otimes I_{B'_{[k]\setminus i}} \right]\\
        &= \frac{1}{\left|B'\right|^{k-1}}\operatorname{Tr}_{AB_{[k]}A'B'_i}\!\left[\left(\operatorname{Tr}_{B'_{[k]\setminus i}}\!\left[\Gamma^{\mathcal{P}}\right]\right)^T\operatorname{Tr}_{D'_{[k]\setminus i}B'_{[k]\setminus i}}\!\left[\Gamma^{\Upsilon}\right]\right]\\
        &= \frac{1}{\left|B'\right|^{k-1}}\operatorname{Tr}_{AB_{[k]}A'B'_i}\!\left[\left(\Gamma^{\mathcal{N}_i}\otimes I_{B_{[k]\setminus i}}\right)^T\operatorname{Tr}_{D'_{[k]\setminus i}B'_{[k]\setminus i}}\!\left[\Gamma^{\Upsilon}\right]\right]\\
        &= \frac{1}{\left|B'\right|^{k-1}}\operatorname{Tr}_{AB_{i}A'B'_i}\!\left[\left(\Gamma^{\mathcal{N}_i}\right)^T\operatorname{Tr}_{B_{[k]\setminus i}D'_{[k]\setminus i}B'_{[k]\setminus i}}\!\left[\Gamma^{\Upsilon}\right]\right]\\
        &= \frac{1}{\left|B'\right|^{k-1}}\operatorname{Tr}_{AB_iA'B'_i}\!\left[\left(\Gamma^{\mathcal{N}_i}\right)^T\operatorname{Tr}_{B'_{[k]\setminus i}}\!\left[\Gamma^{\Theta}\otimes I_{B'_{[k]\setminus i}D_{[k]\setminus i}}\right]\right]\\
        &= \operatorname{Tr}_{AB_iA'B'_i}\!\left[\left(\Gamma^{\mathcal{N}_i}\right)^T\Gamma^{\Theta}\right]\otimes I_{D_{[k]\setminus i}}\\
        &= \Gamma^{\Theta\left(\mathcal{N}_i\right)}\otimes I_{D_{[k]\setminus i}},
    \end{align}
    where the third equality is a consequence of~\eqref{eq:k_ext_bip_supch_non_sig_choi}, the fifth equality is a consequence of~\eqref{eq:ch_marg_choi_proof}, the seventh equality is a consequence of~\eqref{eq:k_ext_bip_supch_marg_choi} and the final equality is arrived at by using the propagation rule again. Since the above equalities are true for all $i \in [k]$, we conclude~\eqref{eq:bip_supch_ch_marg_theo}.

\section{Proof of Proposition~\ref{prop:bip_supch_preserve_ext}}\label{app:bip_supch_preserve_ext}

In this appendix, we show that the action of a bipartite $k$-extendible superchannel on a bipartite $k$-extendible quantum channel results in a $k$-extendible quantum channel. 

Let $\mathcal{N}_{AB\to A'B'}$ be a $k$-extendible quantum channel with a $k$-extension $\mathcal{P}_{AB_{[k]}\to A'B'_{[k]}}$, and $\Theta_{(AB\to A'B')\to (CD\to C'D')}$ be a $k$-extendible superchannel with a $k$-extension $\Upsilon_{(AB_{[k]}\to A'B'_{[k]})\to (CD_{[k]}\to C'D'_{[k]})}$. Let $\Gamma^{\mathcal{N}}_{ABA'B'}$ and $\Gamma^{\mathcal{P}}_{AB_{[k]}A'B'_{[k]}}$ be the respective Choi operators of the channels $\mathcal{N}_{AB\to A'B'}$ and $\mathcal{P}_{AB_{[k]}\to A'B'_{[k]}}$, and let $\Gamma^{\Theta}_{ABA'B'CDC'D'}$ and $\Gamma^{\Upsilon}_{AB_{[k]}A'B'_{[k]}CD_{[k]}C'D'_{[k]}}$ be the respective Choi operators of the superchannels $\Theta_{(AB\to A'B')\to (CD\to C'D')}$ and $\Upsilon_{(AB_{[k]}\to A'B'_{[k]})\to (CD_{[k]}\to C'D'_{[k]})}$.

    The Choi operator of the channel $\Theta\left(\mathcal{N}\right)$ can be evaluated using the propagation rule, stated in~\eqref{eq:prop_rule}, as
    \begin{equation}
        \Gamma^{\Theta\left(\mathcal{N}\right)}_{CDC'D'} = \operatorname{Tr}_{ABA'B'}\!\left[\left(\Gamma^{\mathcal{N}}_{ABA'B'}\right)^T\Gamma^{\Theta}_{ABA'B'CDC'D'}\right].
    \end{equation}
    Consider the Choi operator of the quantum channel $\Upsilon\left(\mathcal{P}\right)$,
    \begin{equation}
        \Gamma^{\Upsilon\left(\mathcal{P}\right)}_{CD_{[k]}C'D'_{[k]}} = \operatorname{Tr}_{AB_{[k]}A'B'_{[k]}}\!\left[\left(\Gamma^{\mathcal{P}}\right)^T\Gamma^{\Upsilon}\right].
    \end{equation}
    Let us first show that $\Theta\left(\mathcal{N}\right)$ is a marginal of the channel $\Upsilon\left(\mathcal{P}\right)$. The non-signaling condition for $\mathcal{P}_{AB_{[k]}\to A'B'_{[k]}}$ and $\Upsilon_{(AB_{[k]}\to A'B'_{[k]})\to (CD_{[k]}\to C'D'_{[k]})}$ from~\eqref{eq:no_signal_bip_choi} and~\eqref{eq:k_ext_bip_supch_non_sig_choi}, respectively, implies
    \begin{align}
        &\operatorname{Tr}_{D'_{[k]\setminus 1}}\!\left[\Gamma^{\Upsilon\left(\mathcal{P}\right)}\right]\notag \\
        &= \operatorname{Tr}_{AB_{[k]}A'B'_{[k]}D'_{[k]\setminus 1}}\!\left[\left(\Gamma^{\mathcal{P}}\right)^T\Gamma^{\Upsilon}\right]\\
        &= \operatorname{Tr}_{AB_{[k]}A'B'_{[k]}}\!\left[\left(\Gamma^{\mathcal{P}}\right)^T\operatorname{Tr}_{D'_{[k]\setminus 1}}\!\left[\Gamma^{\Upsilon}\right]\right]\\
        &= \operatorname{Tr}_{AB_{[k]}A'B'_{[k]}}\!\left[\left(\Gamma^{\mathcal{P}}\right)^T\left(\operatorname{Tr}_{D'_{[k]\setminus 1}B'_{[k]\setminus 1}}\!\left[\Gamma^{\Upsilon}\right]\otimes\frac{I_{B'_{[k]\setminus 1}}}{\left|B'\right|^{k-1}}\right) \right]\\
        &= \frac{1}{\left|B'\right|^{k-1}}\operatorname{Tr}_{AB_{[k]}A'B'_{1}}\!\left[\left(\operatorname{Tr}_{B'_{[k]\setminus 1}}\Gamma^{\mathcal{P}}\right)^T\operatorname{Tr}_{D'_{[k]\setminus 1}B'_{[k]\setminus 1}}\!\left[\Gamma^{\Upsilon}\right]\right]\\
        &= \frac{1}{\left|B'\right|^{k-1}}\operatorname{Tr}_{AB_{[k]}A'B'_{1}}\!\left[\left(\Gamma^{\mathcal{N}}\otimes I_{B_{[k]\setminus 1}}\right)^T\operatorname{Tr}_{D'_{[k]\setminus 1}B'_{[k]\setminus 1}}\!\left[\Gamma^{\Upsilon}\right]\right]\\
        &= \frac{1}{\left|B'\right|^{k-1}}\operatorname{Tr}_{AB_{1}A'B'_{1}}\!\left[\left(\Gamma^{\mathcal{N}}\right)^T\operatorname{Tr}_{D'_{[k]\setminus 1}B'_{[k]\setminus 1}B_{[k]\setminus 1}}\!\left[\Gamma^{\Upsilon}\right]\right]\\
        &= \operatorname{Tr}_{AB_{1}A'B'_{1}}\!\left[\left(\Gamma^{\mathcal{N}}\right)^T\left(\Gamma^{\Theta}\otimes I_{D_{[k]\setminus 1}}\right)\right]\\
        &= \Gamma^{\Theta\left(\mathcal{N}\right)}_{CD_1C'D'_1}\otimes I_{D_{[k]\setminus 1}},\label{eq:mod_k_ext_ch_marg_eq}
    \end{align}
    where the first equality follows from the propagation rule stated in~\eqref{eq:prop_rule}, the third equality follows from the non-signaling condition for $k$-extensions of sueprchannels given in~\eqref{eq:k_ext_bip_supch_non_sig_choi}, the fifth equality follows from the non-signaling condition for $k$-extensions of quantum channels given in~\eqref{eq:no_signal_bip_choi}, the penultimate equality follows from the marginality condition for bipartite $k$-extendible superchannels given in~\eqref{eq:k_ext_bip_supch_marg_choi}, and the final equality follows once again by using the propagation rule. Equation~\eqref{eq:mod_k_ext_ch_marg_eq} implies that $\Theta\left(\mathcal{N}\right)$ is a marginal of the channel $\Upsilon\left(\mathcal{P}\right)$ which follows the non-signaling condition given in~\eqref{eq:no_signal_bip_choi}.

    Now let us test the quantum channel $\Upsilon\left(\mathcal{P}\right)$ for permutation covariance. Since $\mathcal{P}_{AB_{[k]}A'B'_{[k]}}$ is a $k$-extension of the quantum channel $\mathcal{N}_{AB\to A'B'}$, it obeys the permutation covariance condition given in~\eqref{eq:perm_cov_bipartite}; that is,
    \begin{equation}
        \left(W^{\pi}_{B_{[k]}}\otimes W^{\pi}_{B'_{[k]}}\right)\Gamma^{\mathcal{P}}\!\left(W^{\pi\dagger}_{B_{[k]}}\otimes W^{\pi\dagger}_{B'_{[k]}}\right) = \Gamma^{\mathcal{P}},
    \end{equation}
    where $W^{\pi}$ is the unitary corresponding to the permutation $\pi$ in the symmetric group $S_k$. Let us define the following unitary:
    \begin{equation}
        U^{\pi}_{B_{[k]}B'_{[k]}} \coloneqq W^{\pi}_{B_{[k]}}\otimes W^{\pi}_{B'_{[k]}},     
    \end{equation}
    so that the permutation condition for $\mathcal{P}_{AB_{[k]}\to A'B'_{[k]}}$ can be written as,
    \begin{equation}
        \Gamma^{\mathcal{P}} = U^{\pi}_{B_{[k]}B'_{[k]}}\Gamma^{\mathcal{P}}U^{\pi\dagger}_{B_{[k]}B'_{[k]}}.
    \end{equation}
    Let us also define a unitary $V^{\pi}$ in a similar fashion,
    \begin{equation}
        V^{\pi}_{D'_{[k]}B_{[k]}D_{[k]}B'_{[k]}} \coloneqq W^{\pi}_{D'_{[k]}}\otimes W^{\pi}_{B_{[k]}}\otimes W^{\pi}_{D_{[k]}}\otimes W^{\pi}_{B'_{[k]}}.
    \end{equation}
    The permutation covariance condition for $k$-extendible superchannels given in~\eqref{eq:perm_cov_bip_supch_choi} can then be written as,
    \begin{equation}
        \Gamma^{\Upsilon} = V^{\pi}_{D'_{[k]}B_{[k]}D_{[k]}B'_{[k]}}\Gamma^{\Upsilon}V^{\pi\dagger}_{D'_{[k]}B_{[k]}D_{[k]}B'_{[k]}}.
    \end{equation}
Note that $\left(W^{\pi}\right)^T = W^{\pi\dagger}$ since the permutation unitary is completely real. This implies the following equality:
\begin{align}
    \left(\Gamma^{\mathcal{P}}\right)^T\Gamma^{\Upsilon} &=  \left(U^{\pi}_{B_{[k]}B'_{[k]}}\Gamma^{\mathcal{P}}U^{\pi\dagger}_{B_{[k]}B'_{[k]}}\right)^TV^{\pi}\Gamma^{\Upsilon}V^{\pi\dagger}\\
    &= U^{\pi}_{B_{[k]}B'_{[k]}}\!\left(\Gamma^{\mathcal{P}}\right)^TU^{\pi\dagger}_{B_{[k]}B'_{[k]}}V^{\pi}\Gamma^{\Upsilon}V^{\pi\dagger}\\
    &= U^{\pi}_{B_{[k]}B'_{[k]}}\!\left(\Gamma^{\mathcal{P}}\right)^TU^{\pi\dagger}_{D_{[k]}D'_{[k]}}\Gamma^{\Upsilon}V^{\pi\dagger}.
\end{align}
Using the above equality in the expression for the Choi operator of the channel $\Upsilon\left(\mathcal{P}\right)$,
\begin{align}
    &\Gamma^{\Upsilon\left(\mathcal{N}\right)}_{CD_{[k]}C'D'_{[k]}}\notag\\
    &= \operatorname{Tr}_{AB_{[k]}A'B'_{[k]}}\!\left[\left(\Gamma^{\mathcal{P}}\right)^T\Gamma^{\Upsilon}\right] \\
    &= \operatorname{Tr}_{AB_{[k]}A'B'_{[k]}}\!\left[U^{\pi}_{B_{[k]}B'_{[k]}}\!\left(\Gamma^{\mathcal{P}}\right)^TU^{\pi\dagger}_{D_{[k]}D'_{[k]}}\Gamma^{\Upsilon}V^{\pi\dagger}\right]\\
    &= \operatorname{Tr}_{AB_{[k]}A'B'_{[k]}}\!\left[\left(\Gamma^{\mathcal{P}}\right)^TU^{\pi\dagger}_{D_{[k]}D'_{[k]}}\Gamma^{\Upsilon}V^{\pi\dagger}U^{\pi}_{B_{[k]}B'_{[k]}}\right]\\
    &= \operatorname{Tr}_{AB_{[k]}A'B'_{[k]}}\!\left[\left(\Gamma^{\mathcal{P}}\right)^TU^{\pi\dagger}_{D_{[k]}D'_{[k]}}\Gamma^{\Upsilon}U^{\pi}_{D_{[k]}D'_{[k]}}\right]\\
    &= U^{\pi\dagger}_{D_{[k]}D'_{[k]}}\!\left(\operatorname{Tr}_{AB_{[k]}A'B'_{[k]}}\!\left[\left(\Gamma^{\mathcal{P}}\right)^T\Gamma^{\Upsilon}\right]\right)U^{\pi}_{D_{[k]}D'_{[k]}}\\
    &= U^{\pi\dagger}_{D_{[k]}D'_{[k]}}\!\left(\Gamma^{\Upsilon\left(\mathcal{N}\right)}_{CD_{[k]}C'D'_{[k]}}\right)U^{\pi}_{D_{[k]}D'_{[k]}},
\end{align}
where we have used the cyclicity of trace to arrive at the third equality. Thus, the quantum channel $\Upsilon\!\left(\mathcal{P}\right)$ follows the permutation covariance condition given in~\eqref{eq:perm_cov_bipartite}.

Since $\Upsilon\!\left(\mathcal{P}\right)$ is an extension of the quantum channel $\Theta\!\left(\mathcal{N}\right)$ that follows the permutation covariance condition given in~\eqref{eq:perm_cov_bipartite}, and the non-signaling condition given in~\eqref{eq:no_signal_bipartite}, we conclude that $\Theta\!\left(\mathcal{N}\right)$ is a $k$-extendible channel.

\section{Proof of Proposition~\ref{prop:1WL_supch_k_ext}}\label{app:1WL_supch_k_ext}

In this section, we show that every bipartite one-way LOCC superchannel is $k$-extendible for all $k\ge 2$ by constructing an explicit extension of an arbitrary bipartite one-way LOCC superchannel.

In a bipartite one-way LOCC superchannel $\Theta_{(AB\to A'B')\to (CD\to C'D')}$, both the pre-processing and post-processing channels are bipartite one-way LOCC channels. Therefore, we can write the pre-processing channel as follows:
    \begin{equation}
        \mathcal{E}^{\Theta}_{CD\to AM_ABM_B} \coloneqq \sum_x \mathcal{E}^{A,x}_{C\to AM_A}\otimes \mathcal{E}^{B,x}_{D\to BM_B},
    \end{equation}
    where systems $C$, $A$ and $M_A$ are held by Alice, and systems $D$, $B$ and $M_B$ are held by Bob. In the above, $\left\{\mathcal{E}^{A,x}_{C\to AM_A}\right\}_x$ is a set of CP maps and $\left\{\mathcal{E}^{B,x}_{D\to BM_B}\right\}_x$ is a set of quantum channels such that $\mathcal{E}^{\Theta}_{CD\to AM_ABM_B}$ is a quantum channel. Similarly, we can write the post-processing channel as follows:
    \begin{equation}
        \mathcal{D}^{\Theta}_{A'M_AB'M_B\to C'D'} \coloneqq \sum_y \mathcal{D}^{A,y}_{A'M_A\to C'}\otimes \mathcal{D}^{B,y}_{B'M_B\to D'},
    \end{equation}
    where systems $A'$, $C'$ and $M_A$ are held by Alice, and systems $D'$, $B'$ and $M_B$ are held by Bob. The set $\left\{\mathcal{D}^{A,y}_{A'M_A\to C'}\right\}_y$ is a set of CP maps and $\left\{\mathcal{D}^{B,y}_{B'M_B\to D'}\right\}_y$ is a set of quantum channels such that $\mathcal{D}^{\Theta}_{A'M_AB'M_B\to C'D'}$ is a quantum channel.

    One can define a superchannel $\Upsilon_{(AB_{[k]}\to A'B'_{[k]})\to (CD_{[k]}\to C'D'_{[k]})}$ with the following pre-processing channel:
    \begin{multline}
        \mathcal{E}^{\Upsilon}_{CD_{[k]}\to AM_AB_{[k]}M_{B_{[k]}}} \coloneqq\\ \sum_x \mathcal{E}^{A,x}_{C\to AM_A}\otimes \mathcal{E}^{B,x}_{D_1\to B_1M_{B_1}}\otimes\cdots\otimes\mathcal{E}^{B,x}_{D_k\to B_kM_{B_k}}.
    \end{multline}
    The post-processing channel associated with the superchannel $\Upsilon$ can be defined as follows:
    \begin{multline}
        \mathcal{D}^{\Upsilon}_{A'M_AB'_{[k]}M_{B_{[k]}}\to C'D'_{[k]}} \coloneqq\\ \sum_y \mathcal{D}^{A,y}_{A'M_A\to C'}\otimes \mathcal{D}^{B,y}_{B'_1M_{B_1}\to D'_1}\otimes\cdots\otimes\mathcal{D}^{B,y}_{B'_kM_{B_k}\to D'_k}.
    \end{multline}
    It is straightforward to verify that the superchannel $\Theta$ is a marginal of the superchannel $\Upsilon$, and the quantum channels $\mathcal{Q}^{\Theta}$ and $\mathcal{Q}^{\Upsilon}$ unique to the superchannels $\Theta$ and $\Upsilon$ respectively, follow the conditions given in~\eqref{eq:two_ext_bip_supch_perm_cov},~\eqref{eq:k_ext_bip_supch_non_sig}, and~\eqref{eq:k_ext_bip_supch_marg_cond}. Therefore, $\Upsilon$ is a valid $k$-extension of the superchannel $\Theta$. Since such a two-extension can be constructed for every one-way LOCC superchannel $\Theta$, we conclude that every one-way LOCC superchannel is $k$-extendible for all $k\ge 2$.

\section{Proof of Theorem~\ref{theo:two_ext_monotonic_bip}}\label{app:two_ext_monotonic_bip}

In this appendix, we show that the unextendible entanglement of bipartite channels decreases under the action of two-extendible superchannels.

Let $\Theta_{(AB\to A'B')\to (CD\to C'D')}$ be an arbitrary two-extendible superchannel. Let $\mathcal{P}_{AB_1B_2\to A'B'_1B'_2}$ be an arbitrary extension of the channel $\mathcal{N}_{AB\to A'B'}$ such that
\begin{equation}
	\mathcal{N}_{AB\to A'B'} = \operatorname{Tr}_{B'_2}\circ\mathcal{P}_{AB_1B_2\to A'B'_1B'_2}\circ\mathcal{A}_{B_2},
\end{equation}
where $\mathcal{A}_{B_2}$ is a quantum channel that appends an arbitrary quantum state to the system $B_2$. The generalized divergernce between the channels $\mathcal{N}_{AB\to A'B'}$ and $\operatorname{Tr}_{B'_1}\circ\mathcal{P}_{AB_1B_2\to A'B'_1B'_2}\circ\mathcal{A}_{B_1}$ obeys the following inequality:
\begin{align}
    &\mathbf{D}\!\left(\mathcal{N}\big\Vert\operatorname{Tr}_{B'_1}\circ\mathcal{P}\circ\mathcal{A}_{B_1}\right)\notag \\
    &= \mathbf{D}\!\left(\operatorname{Tr}_{B'_2}\circ\mathcal{P}\circ\mathcal{A}_{B_2}\big\Vert\operatorname{Tr}_{B'_1}\circ\mathcal{P}\circ\mathcal{A}_{B_1}\right)\\
    &\ge \mathbf{D}\!\left(\Theta\left(\operatorname{Tr}_{B'_2}\circ\mathcal{P}\circ\mathcal{A}_{B_2}\right)\big\Vert\Theta\!\left(\operatorname{Tr}_{B'_1}\circ\mathcal{P}\circ\mathcal{A}_{B_1}\right)\right)\\
    &= \mathbf{D}\!\left(\operatorname{Tr}_{D'_2}\circ\left(\Upsilon\!\left(\mathcal{P}\right)\right)\circ\mathcal{A}_{D_2}\big\Vert\operatorname{Tr}_{D'_1}\circ\left(\Upsilon\!\left(\mathcal{P}\right)\right)\circ\mathcal{A}_{D_1}\right)\\
    &\ge 2\mathbf{E}^u\!\left(\operatorname{Tr}_{D'_2}\circ\left(\Upsilon\!\left(\mathcal{P}\right)\right)\circ\mathcal{A}_{D_2}\right)\\
    &= 2\mathbf{E}^u\!\left(\Theta\left(\operatorname{Tr}_{B'_2}\circ\mathcal{P}\circ\mathcal{A}_{B_2}\right)\right)\\
    &= 2\mathbf{E}^u\!\left(\Theta\left(\mathcal{N}\right)\right),
\end{align}
where the first inequality follows from the data-processing inequality for generalized channel divergence of quantum channels (Theorem~\ref{theo:gen_div_channel_data_proc}), the second equality follows from Proposition~\ref{prop:bip_such_ch_marg_theo}, and the second inequality follows from the definition of the generalized unextendible entanglement of quantum channels. Since the above inequality holds for all quantum channels $\mathcal{P}_{AB_1B_2\to A'B'_1B'_2}$ that lie in the set $\operatorname{Ext}\!\left(\mathcal{N}\right)$ defined in~\eqref{eq:ext_set_bip_ch}, we conclude the statement of Theorem~\ref{theo:two_ext_monotonic_bip}.

\section{Proof of Theorem~\ref{theo:unext_ent_state_le_unext_ent_ch_bip}}\label{app:unext_ent_state_le_unext_ent_ch_bip}

In this appendix we present the proof of Theorem~\ref{theo:unext_ent_state_le_unext_ent_ch_bip}. 

Let $\rho_{R_CCR_DD}$ be an arbitrary two-extendible state. This means that there exists an extension $\tau_{R_CCR_{D_1}D_1R_{D_2}D_2}$ of the state $\rho_{R_CCR_DD}$ with the following marginals:
\begin{align}
    \operatorname{Tr}_{R_{D_1}D_1}\!\left[\tau\right] = \operatorname{Tr}_{R_{D_2}D_2}\!\left[\tau\right]
    = \rho_{R_CCR_DD},
\end{align}
where system $R_{D_1}$ is isomorphic to $R_{D_2}$ and system $D_1$ is isomorphic to $D_2$.

Let $\Theta_{(AB\to A'B')\to (CD\to C'C''D'D'')}$ be a two-extendible superchannel, and let $\mathcal{N}_{AB\to A'B'}$ be an arbitrary semicausal channel. Let us define the following quantum state:
\begin{equation}
    \sigma_{R_CC'C''R_DD'D''} \coloneqq \Theta\!\left(\mathcal{N}_{AB\to A'B'}\right)\!\left(\rho_{R_CCR_DD}\right).
\end{equation}
Let $\mathcal{P}_{AB_1B_2\to A'B'_1B'_2}$ be an arbitrary extension of the channel $\mathcal{N}_{AB\to A'B'}$; that is,
\begin{equation}
    \operatorname{Tr}_{B'_2}\circ\mathcal{P}_{AB_1B_2\to A'B'_1B'_2} = \mathcal{N}_{AB_1\to A'B'_1}\otimes\operatorname{Tr}_{B_2}.
\end{equation}
Proposition~\ref{prop:bip_such_ch_marg_theo} implies that there exists a superchannel $\Upsilon_{(AB_1B_2\to A'B'_1B'_2)\to (CD_1D_2\to C'D'_1D'_2)}$ such that the following equalities hold:
\begin{align}
    \operatorname{Tr}_{D'_2}\circ\left(\Upsilon\!\left(\mathcal{P}\right)\right) &= \Theta\!\left(\mathcal{N}\right)\otimes\operatorname{Tr}_{D_2},\\
    \operatorname{Tr}_{D'_1}\circ\left(\Upsilon\!\left(\mathcal{P}\right)\right) &= \Theta\!\left(\operatorname{Tr}_{B'_1}\otimes\mathcal{P}\right)\circ\operatorname{Tr}_{D_1}.\label{eq:marg_ups_P_to_theta_marg_P}
\end{align}

Consider the following state:
\begin{align}
    \operatorname{Tr}_{R_{D_2}D'_2}\circ\left(\Upsilon\!\left(\mathcal{P}\right)\right)\left(\tau\right) &= \left(\Theta\!\left(\mathcal{N}\right)\otimes\operatorname{Tr}_{R_{D_2}D_2}\right)\!\left(\tau\right)\\
    &= \Theta\!\left(\mathcal{N}\right)\!\left(\rho_{R_CCR_DD}\right)\\
    &= \sigma_{R_CC'C''R_DD'D''},
\end{align}
where the first equality follows from Proposition~\ref{prop:bip_such_ch_marg_theo}, the second equality follows from the fact that $\tau$ is a two-extension of $\rho$, and the final equality follows from the definition of the state~$\sigma$. The above equality implies that $\Upsilon\!\left(\mathcal{P}\right)\!\left(\tau\right)$ is an extension of the state $\sigma$. By definition of the unextendible entanglement of states, the following inequality holds:
\begin{equation}
    \mathbf{E}^u\!\left(\sigma\right) \le \inf_{\mathcal{P} \in \operatorname{Ext}\left(\mathcal{N}\right)} \frac{1}{2}\mathbf{D}\!\left(\sigma\big\Vert \operatorname{Tr}_{R_{D_1}D'_1}\circ\left(\Upsilon\left(\mathcal{P}\right)\right)\!\left(\tau\right)\right),
\end{equation}
where $\operatorname{Ext}\!\left(\mathcal{N}\right)$ is the set of all extensions of the channel $\mathcal{N}$ as defined in~\eqref{eq:ext_set_bip_ch}. Using~\eqref{eq:marg_ups_P_to_theta_marg_P} we arrive at the following equality:
\begin{align}
    \operatorname{Tr}_{R_{D_1}D'_1}\circ\left(\Upsilon\left(\mathcal{P}\right)\right)\!\left(\tau\right) &= \Theta\!\left(\operatorname{Tr}_{B'_1}\circ\mathcal{P}\right)\left(\operatorname{Tr}_{R_{D_1}D_1}\!\left[\tau\right]\right)\\
    &= \Theta\!\left(\operatorname{Tr}_{B'_1}\circ\mathcal{P}\right)\!\Big(\rho_{R_CCR_DD}\Big),
\end{align}
where the second equality follows from the fact that $\tau$ is a two-extension of $\rho$. As such, the following inequality holds for all two-extendible states $\rho_{R_CCR_DD}$:
\begin{equation}
    \mathbf{E}^u\!\left(\sigma\right) \le \inf_{\mathcal{P}\in \operatorname{Ext}(\mathcal{N})}\frac{1}{2}\mathbf{D}\!\left(\Theta\!\left(\mathcal{N}\right)\!\left(\rho\right)\Big\Vert \Theta\!\left(\operatorname{Tr}_{B'_1}\circ\mathcal{P}\right)\!\Big(\rho\Big)\right).
\end{equation}

For brevity, let us denote the set of two-extendible states with respect to the partition $R_CC:R_DD$ as $\operatorname{2-EXT_{RCD}}$ and the set of all states on systems $R_CCR_DD$ as $\mathcal{S}_{\operatorname{RCD}}$. Supremizing over all two-extendible states in $\operatorname{2-EXT_{RCD}}$,
\begin{align}
    &\sup_{\rho \in \operatorname{2-EXT_{RCD}}}\mathbf{E}^u\!\left(\sigma_{R_CC'C'':R_DD'D''}\right) \notag \\
    &\le \sup_{\rho \in \operatorname{2-EXT_{RCD}}}\inf_{\mathcal{P}\in \operatorname{Ext}(\mathcal{N})}\frac{1}{2}\mathbf{D}\!\left(\Theta\!\left(\mathcal{N}\right)\!\left(\rho\right)\Big\Vert \Theta\!\left(\operatorname{Tr}_{B'_1}\circ\mathcal{P}\right)\!\Big(\rho\Big)\right)\\
    &\le \sup_{\rho \in \mathcal{S}_{\operatorname{RCD}}}\inf_{\mathcal{P}\in \operatorname{Ext}(\mathcal{N})}\frac{1}{2}\mathbf{D}\!\left(\Theta\!\left(\mathcal{N}\right)\!\left(\rho\right)\Big\Vert \Theta\!\left(\operatorname{Tr}_{B'_1}\circ\mathcal{P}\right)\!\Big(\rho\Big)\right)\\
    &\le \inf_{\mathcal{P}\in \operatorname{Ext}(\mathcal{N})}\sup_{\rho \in \mathcal{S}_{\operatorname{RCD}}}\frac{1}{2}\mathbf{D}\!\left(\Theta\!\left(\mathcal{N}\right)\!\left(\rho\right)\Big\Vert \Theta\!\left(\operatorname{Tr}_{B'_1}\circ\mathcal{P}\right)\!\Big(\rho\Big)\right)\\
    &= \inf_{\mathcal{P}\in \operatorname{Ext}(\mathcal{N})} \frac{1}{2}\mathbf{D}\!\left(\Theta\!\left(\mathcal{N}\right) \big\Vert \Theta\!\left(\operatorname{Tr}_{B'_1}\circ\mathcal{P}\right)\right)\\
    &\le \inf_{\mathcal{P}\in \operatorname{Ext}(\mathcal{N})} \frac{1}{2}\mathbf{D}\!\left(\mathcal{N} \big\Vert \operatorname{Tr}_{B'_1}\circ\mathcal{P}\right)\\
    &= \mathbf{E}^u\!\left(\mathcal{N}\right),
\end{align}
where the second inequality follows from the fact that the set $\operatorname{2-EXT_{RCD}}$ is contained inside the set $\mathcal{S}_{\operatorname{RCD}}$. The third inequality is a consequence of the max-min inequality, the first equality follows from the definition of generalized divergence of channels, the last inequality follows from the data-processing inequality for generalized divergence of channels, and the last equality follows from the definition of the generalized unextendible entanglement of channels.

Therefore, we conclude the statement of Theorem~\ref{theo:unext_ent_state_le_unext_ent_ch_bip}.

\section{Proof of Proposition~\ref{prop:full_eras_unext_ent}}\label{app:full_eras_unext_ent}

In this appendix, we compute the Belavkin--Staszewski induced unextendible entanglement of the channel whose action on an arbitrary state $\rho_{RA}$ is defined as follows:
\begin{equation}\label{eq:full_eras_ch_output_st}
    \mathcal{N}_{A\to A'B}\!\left(\rho_{RA}\right) = p\rho_{RA'}\otimes |e\rangle\!\langle e|_B + (1-p)\rho_{RB}\otimes |e\rangle \!\langle e|_{A'},
\end{equation}
for some $p\in [0,1]$.

We first establish a lower bound on the unextendible entanglement of the channel induced by the Belavkin--Staszewski relative entropy.

Note that Alice can find out if she received the erased state or not by performing the POVM $\{\Pi_{A'},|e\rangle\!\langle e|_{A'}\}$ on her system. She can convey the results of the measurement to Bob via one-way classical communication, which is assumed available for free. Alice and Bob can then replace the erased state with a maximally mixed state, and Alice holds the flag indicating the result from the POVM. As such, Alice and Bob can transform the output state given in~\eqref{eq:full_eras_ch_output_st} into the following state using a one-way LOCC $\mathcal{L}^{\to}_{A'B\to A'BX_A}$:
\begin{multline}
    (\mathcal{L}^{\to}_{A'B\to A'BX_A}\circ\mathcal{N})\left(\rho_{RA}\right)
    = p\rho_{RA'}\otimes \pi_{B}\otimes |0\rangle\!\langle 0|_{X_A}\\ + (1-p)\rho_{RB}\otimes \pi_{A'} \otimes |1\rangle\!\langle 1|_{X_A},
\end{multline}
where $\pi$ is the maximally mixed state.

Let us choose the input state $\rho_{RA}$ to be $\Phi^d_{RA}$, a maximally entangled state of Schmidt rank $d$. Here we assume that system $R$ is held by Alice. Using Proposition~\ref{prop:geo_unext_direct_sum}, we have the following equality:
\begin{multline}
    \widehat{E}^u\!\left((\mathcal{L}^{\to}\circ\mathcal{N})\left(\Phi^d_{RA}\right)\right) = p\widehat{E}^u\!\left(\Phi^d_{RA'}\otimes\pi_{B}\right)\\ + (1-p)\widehat{E}^u\!\left(\Phi^d_{RB}\otimes \pi_{A}\right).
\end{multline}
The state $\Phi^d_{RA'}\otimes \pi_{B}$ is a separable state with respect to the partition $RA':B$; therefore, 
\begin{equation}
    \widehat{E}^u\!\left(\Phi^d_{RA'}\otimes\pi_{B}\right) = 0.
\end{equation}
It is easy to see that the unextendible entanglement induced by the Belavkin--Staszewski relative entropy of the state $\Phi^d_{RB}\otimes \pi_{A}$ is equal to $\log_2 d$, by means of the following reasoning:
\begin{align}
    \log_2 d &= \widehat{E}^u\!\left(\Phi^d_{R:B}\right)\\
    &= \widehat{E}^u\!\left(\operatorname{Tr}_{A'}\!\left[\Phi^d_{RB}\otimes \pi_{A'}\right]\right)\\
    &\le \widehat{E}^u\!\left(\Phi^d_{RB}\otimes \pi_{A'}\right)\\
    &\le \widehat{E}^u\!\left(\Phi^d_{RB}\right) + \widehat{E}^u\!\left(\pi_{A'}\right)\\
    &= \widehat{E}^u\!\left(\Phi^d_{R:B}\right),
\end{align}
where the first equality follows from~\eqref{eq:geo_unext_ent_edit_logd}, the first follows from the monotonicity of unextendible entanglement under local operations, the second inequality follows from the subadditivity of the uenxtendible entanglement induced by the Belavkin--Staszewski relative entropy (see~\eqref{eq:alpha_geo_unext_ent_subadditive}), and the last equality follows from the fact that the unextendible entanglement induced by the Belavkin--Staszewski relative entropy is equal to zero for two-extendible states. Therefore, 
\begin{equation}
    \widehat{E}^u\!\left((\mathcal{L}^{\to}\circ\mathcal{N})\left(\Phi^d_{RA}\right)\right) = (1-p)\log_2 d.
\end{equation}
Since $\mathcal{L}^{\to}_{A'B\to A'BX_A}$ is an instance of a two-extendible superchannel, Theorem~\ref{theo:unext_ent_state_le_unext_ent_ch_bip} implies the following inequality:
\begin{equation}\label{eq:spec_ch_unext_ent_lb}
    \widehat{E}^u\!\left(\mathcal{N}_{A\to A'B}\right) \ge (1-p)\log_2 d.
\end{equation}

Now let us establish an upper bound on the unextendible entanglement of the channel induced by the Belavkin--Staszewski relative entropy of the channel.

Consider the following extension of the channel $\mathcal{N}_{A\to A'B}$:
\begin{equation}
    \mathcal{P}_{A\to A'B_1B_2} = \mathcal{N}_{A\to A'B_1}\otimes\mathcal{A}^{\pi}_{B_2},
\end{equation}
where $\mathcal{A}^{\pi}_{B_2}$ is a channel that appends a maximally mixed state on system $B_2$, and systems $B_1$ and $B_2$ are isomorphic to each other. Let us define the following channel:
\begin{equation}
    \mathcal{M}_{A\to A'B_2} = \operatorname{Tr}_{B_1}\circ\mathcal{P}_{A\to A'B_1B_2}.
\end{equation}
The Choi operators of the two relevant marginals of the channel $\mathcal{P}_{A\to A'B_1B_2}$ are as follows:
\begin{align}
    \Gamma^{\mathcal{N}}_{AA'B_1} &= p\Gamma_{AA'}\otimes |e\rangle\!\langle e|_{B_1} + (1-p)\Gamma_{AB_1}\otimes |e\rangle\!\langle e|_{A'},\\
    \Gamma^{\mathcal{M}}_{AA'B_2} &= p\Gamma_{AA'}\otimes \pi_{B_2} + (1-p)I_A\otimes\pi_{B_2}\otimes |e\rangle\!\langle e|_{A'},
\end{align}
where $\Gamma$ is the unnormalized maximally entangled operator. Note that the erasure symbol is orthogonal to every quantum state in the Hilbert space of the system. Hence, the identity operator, $I_A$, acts on the subspace orthogonal to $|e\rangle\!\langle e|_A$. 

The Belavkin--Staszewski relative entropy between the channels $\mathcal{N}_{A\to A'B}$ and $\mathcal{M}_{A\to A'B}$ can be calculated using their Choi operators as follows~\cite{Fang_2021}:
\begin{multline}\label{eq:BS_entropy_spec_ch}
    \widehat{D}\!\left(\mathcal{N}_{A\to A'B} \Vert\mathcal{M}_{A\to A'B}\right) \\= \left\Vert\operatorname{Tr}_{A'B}\!\left[\left(\Gamma^{\mathcal{N}}_{AA'B}\right)^{1/2} \left(\log_2 Q_{AA'B}\right) \!\left(\Gamma^{\mathcal{N}}_{AA'B}\right)^{1/2}\right]\right\Vert_{\infty},
\end{multline}
where
\begin{equation}
    Q_{AA'B} = \left(\Gamma^{\mathcal{N}}_{AA'B}\right)^{1/2}\!\left(\Gamma^{\mathcal{M}}_{AA'B}\right)^{-1}\!\left(\Gamma^{\mathcal{N}}_{AA'B}\right)^{1/2}.
\end{equation}
We can write $\Gamma^{\mathcal{M}}_{AA'B}$ as a linear combination of orthogonal projectors as follows:
\begin{equation}
    \Gamma^{\mathcal{M}}_{AA'B} = p\Phi^d_{RA'}\otimes I_{B} + \frac{1-p}{d} I_{RB}\otimes |e\rangle\!\langle e|_{A'}.
\end{equation}
Therefore,
\begin{equation}
    \left(\Gamma^{\mathcal{M}}_{AA'B}\right)^{-1} = \frac{1}{p}\Phi^d_{RA'}\otimes I_{B} + \frac{d}{1-p} I_{RB}\otimes |e\rangle\!\langle e|_{A'}. 
\end{equation}
Similarly, $\Gamma^{\mathcal{N}}_{AA'B}$ can be written as a linear combination of orthogonal projectors as follows:
\begin{equation}
    \Gamma^{\mathcal{N}}_{AA'B_1} = pd\Phi^d_{RA'}\otimes |e\rangle\!\langle e|_{B} + (1-p)d\Phi^d_{RB}\otimes |e\rangle\!\langle e|_{A'}.
\end{equation}
Following simple linear algebra, we find that
\begin{equation}
    Q_{AA'B} = d^2 \Phi^d_{RB}\otimes |e\rangle\!\langle e|_{A'},
\end{equation}
and consequently,
\begin{equation}
    \log_2 Q_{AA'B} = 2\log_2 d ~\Phi^d_{RB}\otimes |e\rangle\!\langle e|_{A'}.
\end{equation}
Substituting this value into~\eqref{eq:BS_entropy_spec_ch}, we evaluate the Belavkin--Staszewski relative entropy between $\mathcal{N}_{A\to A'B}$ and $\mathcal{M}_{A\to A'B}$ to be equal to $(1-p)\log_2 d$; that is,
\begin{equation}
    \widehat{D}\!\left(\mathcal{N}_{A\to A'B}\Vert \mathcal{M}_{A\to A'B}\right) = 2(1-p)\log_2 d.
\end{equation}
By the definition of the unextendible entanglement of channels,
\begin{align}
    \widehat{E}^u\!\left(\mathcal{N}_{A\to A'B}\right) &\le \frac{1}{2}\widehat{D}\!\left(\mathcal{N}_{A\to A'B}\Vert \mathcal{M}_{A\to A'B}\right)\\
    &= (1-p)\log_2 d.\label{eq:spec_ch_unext_ent_ub}
\end{align}
Combining~\eqref{eq:spec_ch_unext_ent_lb} and~\eqref{eq:spec_ch_unext_ent_ub}, we conclude that
\begin{equation}
    \widehat{E}^u\!\left(\mathcal{N}_{A\to A'B}\right) = (1-p)\log_2 d.
\end{equation}

\section{Unextendible entanglement of erasure channels}\label{app:unext_ent_eras_ch}

In this section we find analytical and numerical upper bounds on the unextendible entanglement of the erasure channel. An erasure channel erases the input state with some probability $p$ and is defined as follows~\cite{GBP97}:
\begin{equation}
    \mathcal{E}^p_{A\to B}\!\left(Y\right) = (1-p)Y + p|e\rangle\!\langle e| \operatorname{Tr}[Y],
\end{equation}
where $|e\rangle\!\langle e|$ is the erasure symbol, orthogonal to all input states. The Choi operator of this channel is as follows:
\begin{equation}
    \Gamma^{\mathcal{E}}_{AB} = \left(1-p\right)\Gamma_{AB} + p I_A\otimes|e\rangle\!\langle e|_{B}.
\end{equation}

Erasure channels are of special interest in the context of quantum communication because there exists a well known protocol to distill a maximally entangled state using this channel, assisted by local operations and two-way classical communication. Alice sends one share of a locally prepared maximally entangled state of Schmidt rank $d$ to Bob through an erasure channel. Bob performs the projective measurement $\left\{\sum_{i=0}^{d-1}|i\rangle\!\langle i|, |e\rangle\!\langle e|\right\}$ on the state he received, thus, finding out if the quantum state sent by Alice was erased or not. Bob can convey this classical information back to Alice using a classical channel, hence, establishing a maximally entangled state between the two parties which can be used for quantum communication or private communication tasks.

\begin{figure}
    \centering
    \includegraphics[width=\linewidth]{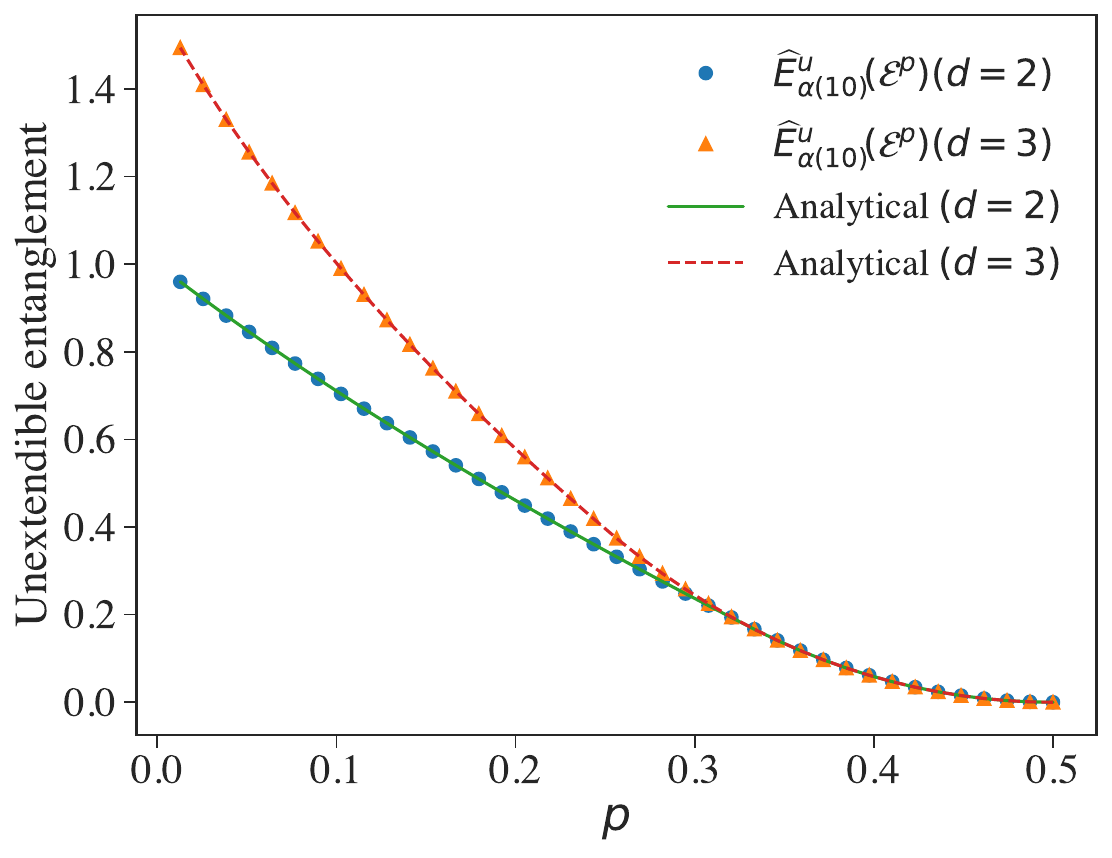}
    \caption{Here we plot the upper bounds on the unextendible entanglement of the two-dimensional and the three-dimensional erasure channel induced by the Belavkin--Staszewski relative entropy using the analytical expression given in Proposition~\ref{theo:eras_unext_ent_ub}. We also plot the numerical values of the $\alpha$-geometric unextendible entanglement calculated for $\alpha = 1+2^{-10}$ using the semidefinite program given in Proposition~\ref{prop:SDP_for_geo_unext_ent}.}
    \label{fig:eras_distill_ent}
\end{figure}

The protocol mentioned above can be used to probabilistically distill $\log_2 d$ ebits from a $d$-dimensional erasure channel with probability $1-p$, where $p$ is the erasure probability of the channel. However, it requires Bob to send back classical data to Alice. We will see that one cannot distill any entanglement from the erasure channel with the assistance of one-way LOCC superchannels only, as the min-geometric unextendible entanglement of the erasure channel is equal to zero (Proposition~\ref{prop:proof_unext_eras_alpha}). See also~\cite{LLS09} for a further study of the assisted quantum capacities of the erasure channel. 

The proposition below provides an upper bound on the unextendible entanglement of the erasure channel induced by the Belavkin--Staszewski relative entropy.

\begin{proposition}\label{theo:eras_unext_ent_ub}
    The unextendible entanglement of a $d$-dimensional erasure channel with erasure probability $p\le~1/2$, induced by the Belavkin--Staszewski relative entropy, is bounded from above by
    \begin{multline}\label{eq:eras_unext_ent_ub}
        \widehat{E}^u\!\left(\mathcal{E}^p_{A\to B}\right) 
            \le \left(1-p\right)\log_2 d- \frac{1}{2}\log_2\!\left((d^2-1)p+1\right)
    \end{multline}
    for all $p\in \left[0,\frac{1}{d+1}\right]$,
    and by
    \begin{multline}
        \widehat{E}^u\!\left(\mathcal{E}^p_{A\to B}\right) \le \frac{1}{2}\!\left(1-p\right)\log_2\!\left(\frac{1-p}{p}\right) + \frac{1}{2}p\log_2\!\left(\frac{p}{1-p}\right)
    \end{multline}
    for all $p \in \left(\frac{1}{d+1},\frac{1}{2}\right]$.
\end{proposition}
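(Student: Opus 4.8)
The plan is to bound $\widehat{E}^u(\mathcal{E}^p_{A\to B})$ from above by exhibiting, in each of the two parameter regimes, an explicit extension $\mathcal{P}_{A\to B_1B_2}$ of the erasure channel and evaluating the resulting Belavkin--Staszewski channel divergence. Since $\widehat{E}^u$ is the infimum of $\tfrac12\widehat{D}(\mathcal{E}^p\Vert\operatorname{Tr}_{B_1}\circ\mathcal{P})$ over all $\mathcal{P}\in\operatorname{Ext}(\mathcal{E}^p)$ (see~\eqref{eq:BS-rel-ent-induces-unext-ent}), any single feasible extension yields a valid upper bound. First I would use the $U\otimes\overline{U}$ covariance of the erasure channel to argue that it suffices to consider covariant extensions: by symmetrizing, the second marginal $\mathcal{M}\coloneqq\operatorname{Tr}_{B_1}\circ\mathcal{P}$ can be taken covariant, so its Choi operator $\Gamma^{\mathcal{M}}$ is block-diagonal on the three invariant subspaces $\operatorname{span}\{|\Phi\rangle\}$, its orthogonal complement inside the $d$-dimensional sector of $B$ (dimension $d^2-1$), and the erasure sector $A\otimes|e\rangle\langle e|$, with eigenvalues $\lambda_\Phi,\lambda_s,\lambda_e$. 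Trace preservation then reads $\tfrac{\lambda_\Phi}{d}+\tfrac{(d^2-1)\lambda_s}{d}+\lambda_e=1$.

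Next I would evaluate the divergence in closed form. Because $\Gamma^{\mathcal{E}}_{AB}=(1-p)d\,\Phi^d_{AB}+p\,I_A\otimes|e\rangle\langle e|_B$ has support inside that of $\Gamma^{\mathcal{M}}$ whenever $\lambda_\Phi,\lambda_e>0$, the operator $Q=(\Gamma^{\mathcal{E}})^{1/2}(\Gamma^{\mathcal{M}})^{-1}(\Gamma^{\mathcal{E}})^{1/2}$ acts as the scalar $\tfrac{(1-p)d}{\lambda_\Phi}$ on $|\Phi\rangle$ and as $\tfrac{p}{\lambda_e}$ on $A\otimes|e\rangle$. Feeding this into the Choi-operator formula~\eqref{eq:BS_entropy_spec_ch} and using $\operatorname{Tr}_B[\Phi^d]=\tfrac1d I_A$ and $\operatorname{Tr}_B[I_A\otimes|e\rangle\langle e|]=I_A$, the argument of the $\infty$-norm is proportional to $I_A$, giving the clean expression $\widehat{D}(\mathcal{E}^p\Vert\mathcal{M})=(1-p)\log_2\tfrac{(1-p)d}{\lambda_\Phi}+p\log_2\tfrac{p}{\lambda_e}$. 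Everything then reduces to choosing the best feasible pair $(\lambda_\Phi,\lambda_e)$.

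For the large-$p$ regime I would take the ``swapped-erasure'' extension $\mathcal{P}(Y)=(1-p)\,Y_{B_1}\otimes|e\rangle\langle e|_{B_2}+p\,|e\rangle\langle e|_{B_1}\otimes Y_{B_2}$, whose first marginal is $\mathcal{E}^p$ and whose second marginal is $\mathcal{E}^{1-p}$ (so $\lambda_\Phi=pd$, $\lambda_e=1-p$, $\lambda_s=0$); substituting gives $\tfrac12\widehat{D}=\tfrac12(1-2p)\log_2\tfrac{1-p}{p}$, the second stated bound. For the small-$p$ regime I would construct a covariant ``noisy-broadcast with correlated erasure flags'' extension whose second marginal realizes $\lambda_\Phi=\tfrac{(1-p)K}{d}$, $\lambda_e=pK$, and, forced by trace preservation, $\lambda_s=\tfrac{d^2-K^2}{d(d^2-1)}$, where $K\coloneqq(d^2-1)p+1$. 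Substituting these into the divergence formula and simplifying yields $2(1-p)\log_2 d-\log_2 K$, which after the factor of $\tfrac12$ is exactly the first stated bound. The condition $\lambda_s\ge0$ is equivalent to $K\le d$, i.e.\ $p\le\tfrac1{d+1}$, which is precisely the boundary between the regimes; at $p=\tfrac1{d+1}$ one has $K=d$ and $\lambda_s=0$, and the two expressions coincide, so the bounds join continuously.

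The main obstacle is the small-$p$ construction: I must exhibit an honest completely positive, trace-preserving $\mathcal{P}_{A\to B_1B_2}$ with first marginal $\mathcal{E}^p$ and the prescribed second marginal, and verify $\Gamma^{\mathcal{P}}\ge0$. This is genuinely constrained because no-cloning forbids the second marginal from carrying too much $|\Phi\rangle$-weight; quantitatively, the monogamy bound $\Phi^d_{AB_1}\otimes I_{B_2}+I_{B_1}\otimes\Phi^d_{AB_2}\le(1+\tfrac1d)I$ caps $\lambda_\Phi$ given that the first marginal already carries $\Phi$-weight $(1-p)d$. I would therefore support $\Gamma^{\mathcal{P}}$ on the symmetric (top-eigenvalue) subspace of this monogamy operator for the transmitted part, add correlated erasure terms on the remaining sectors, and tune the mixing weights so that both marginals come out correct. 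Checking positivity of this explicit operator, and confirming that feasibility ($\lambda_s\ge0$) fails exactly when $p>\tfrac1{d+1}$, is the crux of the argument.
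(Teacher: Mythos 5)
Your overall strategy matches the paper's: restrict to an explicit family of extensions, evaluate the Belavkin--Staszewski channel divergence via the Choi-operator formula, and optimize. Your divergence formula $(1-p)\log_2\frac{(1-p)d}{\lambda_\Phi}+p\log_2\frac{p}{\lambda_e}$ is correct, your large-$p$ ``swapped-erasure'' extension is exactly the paper's choice in that regime, and your target spectrum $\lambda_\Phi=\frac{(1-p)K}{d}$, $\lambda_e=pK$, $\lambda_s=\frac{d^2-K^2}{d(d^2-1)}$ for the small-$p$ regime is precisely the optimizer the paper finds (your $\lambda_s$ equals the paper's parameter $x=\frac{(1-p)^2-p^2d^2}{d}$), including the correct feasibility boundary $p=\frac{1}{d+1}$ and the continuous matching of the two bounds there.

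The gap is the one you flag yourself: in the regime $p\le\frac{1}{d+1}$ you never exhibit a CPTP extension realizing that second marginal, and the ``noisy-broadcast with correlated erasure flags supported on the symmetric subspace of the monogamy operator'' you sketch is neither written down nor checked for positivity. The construction is far easier than you anticipate: no symmetric-subspace or monogamy argument is needed. The paper takes the manifestly positive classical mixture
\begin{equation*}
\Gamma^{\mathcal{P}}_{AB_1B_2}=p\,\Gamma_{AB_2}\otimes|e\rangle\!\langle e|_{B_1}+(1-p-dx)\,\Gamma_{AB_1}\otimes|e\rangle\!\langle e|_{B_2}+x\,\Gamma_{AB_1}\otimes\Pi_{B_2},
\end{equation*}
i.e.\ ``send to $B_2$ and erase $B_1$,'' ``send to $B_1$ and erase $B_2$,'' or ``send to $B_1$ and append a maximally mixed state on $B_2$,'' which is positive semidefinite and trace-preserving for all $x\in[0,(1-p)/d]$, has first marginal $\mathcal{E}^p$, and has second marginal with exactly your eigenvalues ($\lambda_\Phi=pd+x$, $\lambda_s=x$, $\lambda_e=1-p-dx$). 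The monogamy tension you worry about never materializes because the second marginal carries only an order-$1/d^2$ fraction of the first marginal's $\Phi$-weight. Plugging in $x=\frac{(1-p)^2-p^2d^2}{d}$, which is nonnegative exactly when $p\le\frac{1}{d+1}$, finishes the argument; until you write down such an operator and verify $\Gamma^{\mathcal{P}}\ge0$, the first stated bound is not established.
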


\begin{proof}
    First, we note that an erasure channel is two-extendible if the erasure probability is greater than $1/2$. Hence, the unextendible entanglement of such erasure channels, induced by the Belavkin--Staszewski relative entropy, is equal to zero.
    
Now consider the extension $\mathcal{P}_{A\to B_1B_2}$ of the erasure channel $\mathcal{E}^p_{A\to B}$ with the Choi operator,
\begin{multline}
    \Gamma^{\mathcal{P}}_{AB_1B_2} \coloneqq p\Gamma_{AB_2}\otimes|e\rangle\!\langle e|_{B_1} \\+ \left(1-p-dx\right)\Gamma_{AB_1}\otimes|e\rangle\!\langle e|_{B_2} + x\Gamma_{AB_1}\otimes \Pi_{B_2},
\end{multline}
where
\begin{equation}
\Pi\coloneqq |0\rangle\!\langle 0| + \cdots +  |d-1\rangle\!\langle d-1|    
\end{equation}
is the projection onto all possible states of the input space.
This operator is positive semidefinite for all $ x\in\left[0,(1-p)/d\right]$. The two marginals of this channel are described by the Choi operators,
\begin{equation}
    \Gamma^{\mathcal{N}}_{AB} = \operatorname{Tr}_{B_2}\!\left[\Gamma^{\mathcal{P}}_{AB_1B_2}\right] = \left(1-p\right)\Gamma_{AB} + p \Pi_A\otimes |e\rangle\!\langle e|_B,
    \label{eq:N-marginal-eras-app}
\end{equation}
and
\begin{align}
    \Gamma^{\mathcal{M}}_{AB} & = \operatorname{Tr}_{B_1}\!\left[\Gamma^{\mathcal{P}}_{AB_1B_2}\right] \\
    & = p\Gamma_{AB} + \left(1-p-dx\right) \Pi_A\otimes |e\rangle\!\langle e|_B \notag \\
    & \qquad + x\Pi_{A} \otimes \Pi_B.
    \label{eq:M-marginal-eras-app}
\end{align}
Note that $\Gamma^{\mathcal{N}}_{AB}$ is the Choi operator of the erasure channel with erasure probability $p$, justifying the claim that $\mathcal{P}_{A\to B_1B_2}$ is an extension of the said erasure channel.

By definition, 
\begin{align}
    \widehat{E}^u\!\left(\mathcal{E}^p_{A\to B}\right) \le \frac{1}{2}\widehat{D}\!\left(\mathcal{E}^p_{A\to B}\big\Vert \operatorname{Tr}_{B_2}\circ\mathcal{P}_{A\to B_1B_2}\right),
\end{align}
where $\widehat{D}\!\left(\cdot\Vert\cdot\right)$ is the Belavkin--Staszewski relative entropy between channels. This quantity has a closed-form expression in terms of the Choi operators of the two channels~\cite{Fang_2021},
\begin{multline}\label{eq:Bel_Stas_eras_ext}
    \widehat{D}\!\left(\mathcal{N}_{A\to B} \Vert\mathcal{M}_{A\to B}\right) \\= \left\Vert\operatorname{Tr}_B\!\left[\left(\Gamma^{\mathcal{N}}_{AB}\right)^{1/2} \left(\log_2 Q_{AB}\right) \!\left(\Gamma^{\mathcal{N}}_{AB}\right)^{1/2}\right]\right\Vert_{\infty},
\end{multline}
where
\begin{equation}
    Q_{AB} = \left(\Gamma^{\mathcal{N}}_{AB}\right)^{1/2}\!\left(\Gamma^{\mathcal{M}}_{AB}\right)^{-1}\!\left(\Gamma^{\mathcal{N}}_{AB}\right)^{1/2}.
\end{equation}
The Choi operator $\Gamma^{\mathcal{M}}_{AB}$ can be written as
\begin{multline}
    \Gamma^{\mathcal{M}}_{AB} =  \left(pd + x\right)\Phi^d_{AB} + x\left(\Pi_{A} \otimes \Pi_B - \Phi^d_{AB}\right) \\
    + \left(1-p-dx\right)\Pi_A\otimes |e\rangle\!\langle e|_{B},
\end{multline}
where $\Phi^d_{AB}$ is the maximally entangled state of Schmidt rank~$d$. Since we have written $\Gamma^{\mathcal{M}}_{AB}$ as a linear combination of orthogonal projections, we can conclude
\begin{multline}
    \left(\Gamma^{\mathcal{M}}_{AB}\right)^{-1} = \frac{1}{pd + x}\Phi^d_{AB} + \frac{1}{x}\!\left(\Pi_{A} \otimes \Pi_B - \Phi^d_{AB}\right) \\
     + \frac{1}{1-p-dx}\Pi_A\otimes |e\rangle\!\langle e|_{B}.
\end{multline}
From the above, we conclude that
\begin{multline}
    \left(\Gamma^{\mathcal{N}}_{AB}\right)^{1/2} \left(\log_2 Q_{AB}\right) \!\left(\Gamma^{\mathcal{N}}_{AB}\right)^{1/2} = \\
    \left(1-p\right)d\log_2\!\left(\frac{d\left(1-p\right)}  {pd+x}\right)\Phi_{AB}^d
\\         + p\log_2\!\left(\frac{p}{1-p-dx}\right) \Pi_A \otimes |e\rangle\!\langle e|_B.
\end{multline}
This allows us to evaluate the quantity in~\eqref{eq:Bel_Stas_eras_ext} to be
\begin{multline}\label{eq:geo_div_eras_in_x}
    \widehat{D}\!\left(\mathcal{N}_{A\to B}\Vert\mathcal{M}_{A\to B}\right) = \left(1-p\right)\log_2\!\left(\frac{d\left(1-p\right)}{pd+x}\right)
\\         + p\log_2\!\left(\frac{p}{1-p-dx}\right).
\end{multline}
This quantity is minimized for
\begin{equation}\label{eq:x_min_val}
    x = \frac{\left(1-p\right)^2-p^2d^2}{d}.
\end{equation}
The Choi operator $\Gamma^{\mathcal{M}}_{AB}$ is required to be a positive semidefinite operator. This in turn requires $x$ to be non-negative. Therefore, we choose
\begin{equation}
    x = \frac{\left(1-p\right)^2-p^2d^2}{d},
\end{equation}
if $p\le \frac{1}{d+1}$, and $x=0$ otherwise. Using these values of $x$ in~\eqref{eq:geo_div_eras_in_x} and rearranging, we arrive at the upper bound given in Proposition~\ref{theo:eras_unext_ent_ub}.
\end{proof}

\medskip 

In Figure~\ref{fig:eras_distill_ent}, we plot the $\alpha$-geometric unextendible entanglement of the channel, for $\alpha = 1+2^{-10}$, against the analytical upper bound on the unextendible entanglement of the channel induced by the Belavkin--Staszewski relative entropy.

We also evaluate an upper bound on the $\alpha$-geometric unextendible entanglement of the erasure channel and find an analytical expression given in the proposition below. In the limit $\alpha \to 0^+$, this quantity is equal to zero. This finding, combined with Corollaries~\ref{cor:0_err_cap_ub} and \ref{cor:0_err_priv_cap_ub}, implies that both the zero-error quantum and private capacities of the erasure channel are equal to zero. 

\begin{proposition}
\label{prop:proof_unext_eras_alpha}
    For all $\alpha \in (0,1)\cup(1,2]$, the $\alpha$-geometric unextendible entanglement of a $d$-dimensional erasure channel with erasure probability $p\le~1/2$ is bounded from above by
    \begin{multline}
    \label{eq:eras_unext_ent_ub_alpha}
        \widehat{E}^u_{\alpha}\!\left(\mathcal{E}^p_{A\to B}\right) 
            \le \\
            \frac{1}{\alpha-1}\log_{2}\left(
\begin{array}
[c]{c}%
\frac{1}{d}\left[  d\left(  1-p\right)  \right]  ^{\alpha}\left(  pd+x\right)
^{1-\alpha}\\
+p^{\alpha}\left(  1-p-dx\right)  ^{1-\alpha}%
\end{array}
\right)  .
    \end{multline}
    for all $p\in \left[0,\frac{1}{d^{1/\alpha}+1}\right]$,
    and where
    \begin{align}
    x & =\frac{1-p-pdk}{k+d} , \\
    k & =\frac{d^{2/\alpha}p}{d\left(  1-p\right)  }.
    \end{align}
For all $p \in \left(\frac{1}{d^{1/\alpha}+1},\frac{1}{2}\right]$,
    \begin{multline}
    \label{eq:eras_unext_ent_ub_alpha_other}
        \widehat{E}_{\alpha}^u\!\left(\mathcal{E}^p_{A\to B}\right) \le \\
        \frac{1}{\alpha-1}\log_{2}\left(  \left(  1-p\right)  ^{\alpha}%
p^{1-\alpha}+p^{\alpha}\left(  1-p\right)  ^{1-\alpha}\right).
    \end{multline}
    As such, for all $p\in (0,1/2]$,
    \begin{equation}
         \widehat{E}^u_{\min}\!\left(\mathcal{E}^p_{A\to B}\right) = 0.
    \end{equation}
\end{proposition}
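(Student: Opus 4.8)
The plan is to mirror the proof of Proposition~\ref{theo:eras_unext_ent_ub} (the Belavkin--Staszewski case), replacing the underlying divergence by the $\alpha$-geometric R\'enyi relative entropy of channels and using its closed form in terms of Choi operators. First I would record that an erasure channel with $p>1/2$ is two-extendible, so its unextendible entanglement vanishes there; this lets me focus on $p\in(0,1/2]$, which together with the $\alpha\to 0^+$ computation at the end will deliver the final equality $\widehat{E}^u_{\min}(\mathcal{E}^p_{A\to B})=0$. For the two bounds I would reuse verbatim the one-parameter family of extensions $\mathcal{P}_{A\to B_1B_2}$ (indexed by $x\in[0,(1-p)/d]$) from the proof of Proposition~\ref{theo:eras_unext_ent_ub}, whose two relevant marginals have Choi operators $\Gamma^{\mathcal{N}}_{AB}$ in~\eqref{eq:N-marginal-eras-app} and $\Gamma^{\mathcal{M}}_{AB}$ in~\eqref{eq:M-marginal-eras-app}. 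By the definition~\eqref{eq:geo_unext_ent_ch_def}, for every admissible $x$ the quantity $\widehat{E}^u_\alpha(\mathcal{E}^p_{A\to B})$ is upper bounded by $\widehat{D}_{\alpha}(\mathcal{N}\Vert\mathcal{M})$ on this family; I would compute it and then minimize over $x$.

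The computational heart is evaluating $\widehat{D}_{\alpha}(\mathcal{N}\Vert\mathcal{M})$ via the Choi formula~\eqref{eq:geo_rel_ent_ch_choi_op_(0,1)} (for $\alpha<1$) and its $\alpha>1$ analogue. The decisive observation is that $\Gamma^{\mathcal{N}}_{AB}$ and $\Gamma^{\mathcal{M}}_{AB}$ are simultaneously diagonal in the three mutually orthogonal projectors $\Phi^d_{AB}$, $\Pi_A\otimes\Pi_B-\Phi^d_{AB}$, and $\Pi_A\otimes|e\rangle\!\langle e|_B$, and that $\Gamma^{\mathcal{N}}$ has no support on the middle projector. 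Hence $\operatorname{supp}(\Gamma^{\mathcal{N}})\subseteq\operatorname{supp}(\Gamma^{\mathcal{M}})$, so $\widetilde{\Gamma^{\mathcal{N}}}=\Gamma^{\mathcal{N}}$ and the $\alpha>1$ case is finite. On each block $G_{\alpha}(X,Y)$ reduces to the scalar $x_0^{1-\alpha}y_0^{\alpha}$ times the projector, giving $[d(1-p)]^{\alpha}(pd+x)^{1-\alpha}$ on $\Phi^d_{AB}$, zero on the middle projector, and $p^{\alpha}(1-p-dx)^{1-\alpha}$ on $\Pi_A\otimes|e\rangle\!\langle e|_B$. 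Taking $\operatorname{Tr}_B$ with $\operatorname{Tr}_B[\Phi^d_{AB}]=I_A/d$ and $\operatorname{Tr}_B[\Pi_A\otimes|e\rangle\!\langle e|_B]=I_A$ yields a scalar multiple of $I_A$, so $\lambda_{\min}$ and $\|\cdot\|_{\infty}$ coincide and both the $\alpha<1$ and $\alpha>1$ cases collapse to the single expression appearing in~\eqref{eq:eras_unext_ent_ub_alpha}.

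Next I would minimize the log-argument over $x$: differentiating and setting the derivative to zero gives the stated critical point $x=(1-p-pdk)/(k+d)$ with $k=d^{2/\alpha}p/(d(1-p))$, and the feasibility requirement $x\ge 0$ is equivalent to $(1-p)\ge d^{1/\alpha}p$, i.e.\ $p\le 1/(d^{1/\alpha}+1)$, which fixes the first regime. When $p$ exceeds this threshold the unconstrained optimum is infeasible, so the constrained minimum sits at the boundary $x=0$; substituting $pd+x=pd$ and $1-p-dx=1-p$ then collapses the expression to~\eqref{eq:eras_unext_ent_ub_alpha_other}. Finally, to obtain $\widehat{E}^u_{\min}(\mathcal{E}^p_{A\to B})=0$ I would send $\alpha\to 0^+$: since $1/(d^{1/\alpha}+1)\to 0$, every fixed $p\in(0,1/2]$ eventually lies in the second regime, where the log-argument in~\eqref{eq:eras_unext_ent_ub_alpha_other} tends to $p+(1-p)=1$ while $1/(\alpha-1)\to-1$, so the bound tends to $0$. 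Combined with the non-negativity of $\widehat{E}^u_\alpha$ and the definition of $\widehat{E}^u_{\min}$ in Remark~\ref{rem:def-min-unext}, this forces $\widehat{E}^u_{\min}(\mathcal{E}^p_{A\to B})=0$ for all $p\in(0,1/2]$.

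The step I expect to be the main obstacle is the blockwise evaluation of $G_{\alpha}$ together with the careful unification of the two cases $\alpha\in(0,1)$ and $\alpha\in(1,2]$: one must confirm both that the support condition holds (so that $\widetilde{\Gamma^{\mathcal{N}}}=\Gamma^{\mathcal{N}}$ and the $\alpha>1$ formula is finite) and that $\operatorname{Tr}_B[G_\alpha]$ is genuinely proportional to $I_A$, which is precisely what makes the minimum-eigenvalue formula for $\alpha<1$ and the maximum-eigenvalue formula for $\alpha>1$ agree. Once this is established, the $x$-minimization is routine single-variable calculus and the $\alpha\to 0^+$ limit is a brief continuity argument.
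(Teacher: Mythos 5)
Your proposal is correct and follows essentially the same route as the paper's proof: the same one-parameter family of extensions indexed by $x$, the same blockwise evaluation of the geometric R\'enyi channel divergence via the Choi operators (yielding an operator proportional to $I_A$, so the $\alpha<1$ and $\alpha>1$ formulas agree), the same optimization over $x$ with the feasibility threshold $p\le 1/(d^{1/\alpha}+1)$ determining the two regimes, and the same $\alpha\to 0^+$ limit at $x=0$. Your explicit remark that $\operatorname{supp}(\Gamma^{\mathcal{N}})\subseteq\operatorname{supp}(\Gamma^{\mathcal{M}})$ and that $\lambda_{\min}$ and $\Vert\cdot\Vert_\infty$ coincide is a slightly more careful articulation of a point the paper passes over quickly, but it is not a different argument.
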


\begin{proof}
Here we follow the same approach used in the proof of Proposition~\ref{theo:eras_unext_ent_ub}. Let us first recall
from~\cite[Proposition~44]{Katariya2021} that the geometric R\'enyi relative entropy of channels can be written
explicitly as 
\begin{equation}
\widehat{D}_{\alpha}(\mathcal{N}\Vert\mathcal{M})=\frac{1}{\alpha-1}\log
_{2}\widehat{Q}_{\alpha}(\mathcal{N}\Vert\mathcal{M}),
\end{equation}
where
\begin{multline}
\widehat{Q}_{\alpha}(\mathcal{N}\Vert\mathcal{M})\coloneqq \\
\left\Vert \operatorname{Tr}_{B}\!\left[  \left(  \Gamma^{\mathcal{M}}\right)
^{1/2}\left[  \left(  \Gamma^{\mathcal{M}}\right)  ^{-1/2}\Gamma^{\mathcal{N}%
}\left(  \Gamma^{\mathcal{M}}\right)  ^{-1/2}\right]  ^{\alpha}\left(
\Gamma^{\mathcal{M}}\right)  ^{1/2}\right]  \right\Vert _{\infty}%
\end{multline}
when $\alpha \in (1,2]$ and 
\begin{multline}
\widehat{Q}_{\alpha}(\mathcal{N}\Vert\mathcal{M})\coloneqq \\
\lambda_{\min}\!\left( \operatorname{Tr}_{B}\!\left[  \left(  \Gamma^{\mathcal{M}}\right)
^{1/2}\left[  \left(  \Gamma^{\mathcal{M}}\right)  ^{-1/2}\Gamma^{\mathcal{N}%
}\left(  \Gamma^{\mathcal{M}}\right)  ^{-1/2}\right]  ^{\alpha}\left(
\Gamma^{\mathcal{M}}\right)  ^{1/2}\right]  \right)%
\end{multline}
when $\alpha \in (0,1)$, with both expressions above holding under the assumption that supp$(\Gamma^{\mathcal{N}}) \subseteq $supp$(\Gamma^{\mathcal{M}})$.
Now using the expressions in~\eqref{eq:N-marginal-eras-app} and~\eqref{eq:M-marginal-eras-app}, we find that
\begin{multline}
\left(  \Gamma^{\mathcal{M}}\right)  ^{1/2}\left[  \left(  \Gamma
^{\mathcal{M}}\right)  ^{-1/2}\Gamma^{\mathcal{N}}\left(  \Gamma^{\mathcal{M}%
}\right)  ^{-1/2}\right]  ^{\alpha}\left(  \Gamma^{\mathcal{M}}\right)
^{1/2}\\
=\left[  d\left(  1-p\right)  \right]  ^{\alpha}\left(  pd+x\right)
^{1-\alpha}\Phi_{AB}^{d}\\
+p^{\alpha}\left(  1-p-dx\right)  ^{1-\alpha}\Pi_{A}\otimes|e\rangle\!\langle
e|_{B}.
\end{multline}
This in turn implies that%
\begin{multline}
\widehat{D}_{\alpha}(\mathcal{N}\Vert\mathcal{M})=\\
\frac{1}{\alpha-1}\log_{2}\left(
\begin{array}
[c]{c}%
\frac{1}{d}\left[  d\left(  1-p\right)  \right]  ^{\alpha}\left(  pd+x\right)
^{1-\alpha}\\
+p^{\alpha}\left(  1-p-dx\right)  ^{1-\alpha}%
\end{array}
\right)  .
\end{multline}
This quantity is minimized for the choice%
\begin{equation}
x=\frac{1-p-pdk}{k+d},
\end{equation}
where $k=\frac{d^{2/\alpha}p}{d\left(  1-p\right)  }$. In order for
$\Gamma^{\mathcal{M}}$ to be positive semidefinite, it is required that
$x\geq0$, which is the same as $p\leq\frac{1}{d^{1/\alpha}+1}$. So when this
condition holds, we choose $x$ as above, and otherwise choose $x=0$. In the
latter case, we find that%
\begin{align}
&  \frac{1}{\alpha-1}\log_{2}\left(
\begin{array}
[c]{c}%
\frac{1}{d}\left[  d\left(  1-p\right)  \right]  ^{\alpha}\left(  pd+x\right)
^{1-\alpha}\\
+p^{\alpha}\left(  1-p-dx\right)  ^{1-\alpha}%
\end{array}
\right)  \nonumber\\
&  =\frac{1}{\alpha-1}\log_{2}\left(  \frac{1}{d}\left[  d\left(  1-p\right)
\right]  ^{\alpha}\left(  pd\right)  ^{1-\alpha}+p^{\alpha}\left(  1-p\right)
^{1-\alpha}\right)  \\
&  =\frac{1}{\alpha-1}\log_{2}\left(  \left(  1-p\right)  ^{\alpha}%
p^{1-\alpha}+p^{\alpha}\left(  1-p\right)  ^{1-\alpha}\right)  .
\end{align}
This leads to the inequalities in~\eqref{eq:eras_unext_ent_ub_alpha} and~\eqref{eq:eras_unext_ent_ub_alpha_other}.

To establish the limit when $\alpha\rightarrow0^{+}$, we simply set $x=0$ and
then take the limit as $\alpha\rightarrow0^{+}$, leading to%
\begin{align}
&  \lim_{\alpha\rightarrow0^{+}}\frac{1}{\alpha-1}\log_{2}\left(  \left(
1-p\right)  ^{\alpha}p^{1-\alpha}+p^{\alpha}\left(  1-p\right)  ^{1-\alpha
}\right) \nonumber\\
&  =-\log_{2}\left(  \left(  1-p\right)  ^{0}p^{1}+p^{0}\left(  1-p\right)
^{1}\right) \\
&  =0.
\end{align}
This completes the proof.
\end{proof}

\section{Unextendible entanglement of depolarizing channels}\label{app:proof_unext_ent_dep}

Depolarizing channels are commonly used to model noise in quantum circuits. The $d$-dimensional depolarizing channel~$\mathcal{D}_p$ is a completely positive trace-preserving map when the parameter $p\in \left[0,\frac{d^2}{d^2-1}\right]$, and it acts on a quantum state $\rho$ as
\begin{equation}\label{eq:dep_ch_defn}
	\mathcal{D}_p\!\left(\rho\right) = (1-p)~\rho + p\pi,
\end{equation} 
where $\pi\coloneqq I/d$ is the $d$-dimensional maximally mixed state. The Choi operator of a depolarizing channel $\mathcal{D}_p$ is,
\begin{equation}
	\Gamma^{\mathcal{D}_p}_{AB} = (1-p)\Gamma_{AB} + p I_A\otimes \pi_B.
\end{equation}
The Choi operator can be written as a linear combination of orthogonal projectors as
\begin{equation}
    \Gamma^{\mathcal{D}_p}_{AB} = d\left(F\Phi^d_{AB} + (1-F)\frac{I_{AB}-\Phi^d_{AB}}{d^2-1}\right),
\end{equation}
where
\begin{equation}
    F \coloneqq  1-p + \frac{p}{d^2}.
\end{equation}

\begin{figure}
	\centering
		\includegraphics[width=\linewidth]{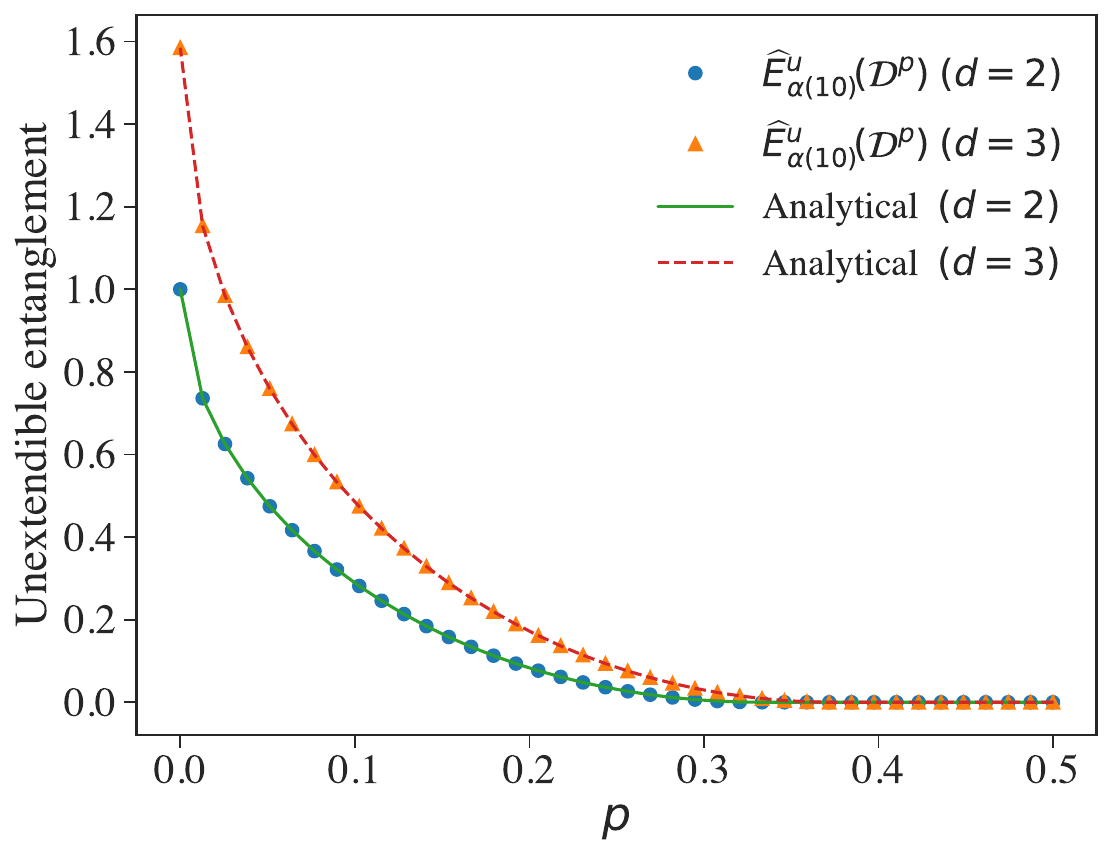}
		\caption{Here we plot the unextendible entanglement of the two-dimensional and the three-dimensional depolarizing channel induced by the Belavkin--Staszewski relative entropy using the analytical expression given in Proposition~\ref{theo:unext_ent_dep}. We also plot the numerical values of the $\alpha$-geometric unextendible entanglement calculated for $\alpha = 1+2^{-10}$ using the semidefinite program given in Proposition~\ref{prop:SDP_for_geo_unext_ent}.}
		\label{fig:Dep_2d}
\end{figure}

Since the Choi operator of the depolarizing channel is a full-rank operator for $p>0$, the min-geometric unextendible entanglement of this channel is equal to zero, which implies that the zero-error quantum capacity and the zero-error private capacity of this channel, assisted by one-way LOCC or two-extendible superchannels, are also equal to zero (see Corollary~\ref{cor:q_cap_p_cap_0_full_rank}). 

The extendibility of isotropic states has been studied in~\cite{Johnson_2013}, and since the Choi operator of the depolarizing channel is a scaled isotropic state, we can find an analytical expression for the $\alpha$-geometric unextendible entanglement of the channel. Let us first look at the extremities. Since a point-to-point quantum channel is two-extendible if and only if its Choi state is two-extendible, the $d$-dimensional depolarizing channel is two-extendible for $p\ge \frac{d}{2(d+1)}$~\cite[Theorem III.8]{Johnson_2013} (also see~\cite[Lemma 3]{Kaur_2021}). This implies that the $\alpha$-geometric unextendible entanglement of a depolarizing channel with $p\ge \frac{d}{2(d+1)}$ is equal to zero. For $p=0$, the depolarizing channel is the same as the identity channel; hence, the $\alpha$-geometric unextendible entanglement in this case is equal to one.
\begin{proposition}\label{theo:unext_ent_dep}
    The unextendible entanglement of the $d$-dimensional depolarizing channel, with parameter $p < \frac{d}{2(d+1)}$, induced by the Belavkin--Staszewski relative entropy is
    \begin{equation}
        \widehat{E}^u\!\left(\mathcal{D}^p\right)= \frac{1}{2}\!\left[F\log_2\!\left(\frac{F}{F'}\right) + (1-F)\log_2\!\left(\frac{1-F}{1-F'}\right)\right],
    \end{equation}
    where $F = 1-p + \frac{p}{d^2}$ 
    and 
    \begin{equation}
        F' \coloneqq  \max\left\{\frac{2 F-1}{d^2}+\frac{2  \sqrt{\left(d^2-1\right) (1-F) F}}{d^2}-F+1,F\right\}.
    \end{equation}
\end{proposition}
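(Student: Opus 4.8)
The plan is to reduce the infimum over extensions defining $\widehat{E}^u(\mathcal{D}^p)$ to a scalar optimization over the fidelity of the second marginal, using the unitary covariance of the depolarizing channel, and then to evaluate the Belavkin--Staszewski divergence in closed form because every Choi operator that arises is isotropic. Throughout I write $f \coloneqq \langle\Phi^d|\rho|\Phi^d\rangle$ for the singlet fraction of an isotropic operator $\rho$, so that the normalized Choi operator $\Gamma^{\mathcal{N}}_{AB}/d$ of $\mathcal{N} = \mathcal{D}^p$ has fidelity $F$.

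First I would establish that the optimization can be restricted to \emph{covariant} extensions. Since $\mathcal{D}^p$ obeys $\mathcal{D}^p\circ\mathcal{U} = \mathcal{U}\circ\mathcal{D}^p$ for the conjugation channel $\mathcal{U}(\cdot) = U(\cdot)U^\dagger$ and every unitary $U$, given any $\mathcal{P}\in\operatorname{Ext}(\mathcal{N})$ I would form its twirl $\mathcal{P}^{\mathrm{tw}} \coloneqq \int dU\,(\mathcal{U}^\dagger_{B_1}\otimes\mathcal{U}^\dagger_{B_2})\circ\mathcal{P}\circ\mathcal{U}_A$ with $dU$ the Haar measure. Because $\operatorname{Tr}_{B_2}$ annihilates the $B_2$ conjugation and commutes with the $B_1$ conjugation, $\operatorname{Tr}_{B_2}\circ\mathcal{P}^{\mathrm{tw}} = \mathcal{D}^p$, so $\mathcal{P}^{\mathrm{tw}}$ is again an extension; its second marginal $\mathcal{M}^{\mathrm{tw}}$ is then forced to be a depolarizing-type channel, whose Choi operator is isotropic with some fidelity $F'$. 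Invoking data processing (equivalently, monotonicity under the twirling channel) for the Belavkin--Staszewski divergence, which is the $\alpha\to1$ limit of the divergences already shown to be monotone in the paper, I would conclude $\widehat{D}(\mathcal{D}^p\Vert\mathcal{M}^{\mathrm{tw}})\le\widehat{D}(\mathcal{D}^p\Vert\operatorname{Tr}_{B_1}\circ\mathcal{P})$, so restricting to isotropic second marginals loses nothing.

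Next I would evaluate the divergence for isotropic marginals. Writing $\Gamma^{\mathcal{N}}_{AB} = d\,(F\Phi^d_{AB}+(1-F)\tfrac{I_{AB}-\Phi^d_{AB}}{d^2-1})$ and the analogous expression for $\Gamma^{\mathcal{M}}_{AB}$ with fidelity $F'$, both operators are diagonal in the common eigenbasis $\{\Phi^d_{AB},\,I_{AB}-\Phi^d_{AB}\}$, so $Q_{AB} = (\Gamma^{\mathcal{N}})^{1/2}(\Gamma^{\mathcal{M}})^{-1}(\Gamma^{\mathcal{N}})^{1/2}$ and $\log_2 Q_{AB}$ are simultaneously diagonal. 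Feeding this into the closed-form expression for the Belavkin--Staszewski channel divergence from~\cite{Fang_2021} and computing $\operatorname{Tr}_B[(\Gamma^{\mathcal{N}})^{1/2}(\log_2 Q_{AB})(\Gamma^{\mathcal{N}})^{1/2}]$, which turns out to be a multiple of $I_A$, I would obtain $\widehat{D}(\mathcal{D}^p\Vert\mathcal{M}) = F\log_2(F/F')+(1-F)\log_2((1-F)/(1-F'))$, precisely the binary relative-entropy form in the statement. Halving this gives the claimed objective as a function of $F'$.

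Finally I would minimize over the feasible values of $F'$. The objective is a relative entropy in $(F',1-F')$ against $(F,1-F)$, hence convex with a unique minimum at $F'=F$; the optimal feasible $F'$ is therefore the point of the feasible interval lying closest to $F$, which sits on the boundary of the positive-semidefinite cone of covariant extensions. Since $\Gamma^{\mathcal{N}}$ is a scaled isotropic state, the set of achievable pairs $(F,F')$ is exactly the two-extendibility question for isotropic states solved in~\cite{Johnson_2013}; reducing the positivity of the covariant $\Gamma^{\mathcal{P}}_{AB_1B_2}$ to a $2\times2$ block (whose vanishing determinant yields the square root) pins down the extremal fidelity, and combining it with $F$ through the outer optimization encodes that the divergence vanishes exactly in the two-extendible regime $p\ge\frac{d}{2(d+1)}$. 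I expect the main obstacle to be this last step: carrying out the positivity analysis of the covariant extension in the fidelity parametrization, verifying that the extremal extension is the minimizer of the divergence, and matching the resulting boundary value to the closed form $F'$ in the statement.
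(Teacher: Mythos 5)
Your proposal is correct and follows essentially the same route as the paper's proof: twirl the extension to reduce to isotropic (depolarizing-type) marginals via data processing, evaluate the Belavkin--Staszewski channel divergence in closed form for the commuting isotropic Choi operators, and then minimize the resulting binary relative entropy over the feasible fidelities $F'$ characterized by the ellipse condition of~\cite{Johnson_2013}, with the optimum at the feasible point closest to $F$. The only step you flag as uncertain --- deriving the extremal boundary fidelity --- is handled in the paper simply by citing Corollary III.4 of~\cite{Johnson_2013} and solving the resulting quadratic for the larger root, rather than by a direct positivity analysis of the covariant extension.
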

\begin{proof}
    Consider a $d$-dimensional depolarizing channel $\mathcal{D}^p_{A\to B}$ with parameter $p$ as defined in~\eqref{eq:dep_ch_defn}. The Choi operator of this channel is
\begin{equation}\label{eq:dep_ch_choi_op_F}
    \Gamma^{\mathcal{D}^p}_{AB} = d\left(F\Phi^d_{AB} + (1-F)\frac{I_{AB}-\Phi^d_{AB}}{d^2-1}\right),
\end{equation}
where
\begin{equation}
    F = 1-p + \frac{p}{d^2}.
\end{equation}
The depolarizing channel does not change under the action of a twirling superchannel; that is,
\begin{equation}
    \mathcal{T}_{AB}\!\left(\mathcal{D}^p_{A\to B}\right) = \int dU ~\mathcal{U}_{B}\circ\mathcal{D}^p_{A\to B}\circ\mathcal{U}^{\dagger}_A = \mathcal{D}^p_{A\to B},
\end{equation}
where $\mathcal{U}$ is the unitary channel corresponding to the unitary $U$ and acts as $\mathcal{U}\!\left(\cdot\right) = U\left(\cdot\right)U^{\dagger}$ and the integration is taken over the Haar measure. Twirling an arbitrary point-to-point quantum channel results in a depolarizing channel~\cite{HHH99,Nielsen_2002}.

Let $\mathcal{P}_{A\to B_1B_2}$ be a extension of $\mathcal{D}^p_{A\to B}$ lying in the set $\operatorname{Ext}\!\left(\mathcal{D}^p\right)$. Consider the following tripartite twirling superchannel:
\begin{multline}
    \mathcal{T}_{AB_1B_2}\!\left(\mathcal{P}_{A\to B_1B_2}\right) \\
    \coloneqq \int dU ~\mathcal{U}_{B_1}\circ\mathcal{U}_{B_2}\circ\mathcal{P}_{A\to B_1B_2}\circ\mathcal{U}^{\dagger}_A.
\end{multline}
The quantum channel $\mathcal{T}_{AB_1B_2}\!\left(\mathcal{P}_{A\to B_1B_2}\right)$ also lies in the set $\operatorname{Ext}\!\left(\mathcal{D}^p\right)$ since
\begin{equation}
    \operatorname{Tr}_{B_2}\circ\mathcal{T}_{AB_1B_2}\!\left(\mathcal{P}\right) = \mathcal{T}_{AB_1}\!\left(\operatorname{Tr}_{B_2}\circ\mathcal{P}\right) = \mathcal{D}^p_{A\to B_1},
\end{equation}
which follows from the trace-preserving nature of the channel~$\mathcal{U}_{B_2}$.
Moreover, the other marginal of this channel is also a depolarizing channel, with some parameter $p'$ as shown below:
\begin{equation}
    \operatorname{Tr}_{B_1}\circ\mathcal{T}_{AB_1B_2}\!\left(\mathcal{P}\right) = \mathcal{T}_{AB_2}\!\left(\operatorname{Tr}_{B_1}\circ\mathcal{P}\right) = \mathcal{D}^{p'}_{A\to B_2}.
\end{equation}

As stated above, twirling a quantum channel is a valid superchannel, and the generalized divergence between two quantum channels decreases upon twirling due to the data-processing inequality. This implies that
\begin{align}
    &\mathbf{D}\!\left(\operatorname{Tr}_{B_2}\circ\mathcal{T}\!\left(\mathcal{P}_{A\to B_1B_2}\right)\Vert\operatorname{Tr}_{B_1}\circ\mathcal{T}\!\left(\mathcal{P}_{A\to B_1B_2}\right)\right)\notag \\
    &= \mathbf{D}\!\left(\mathcal{T}\!\left(\mathcal{D}^p_{A\to B}\right)\Vert\mathcal{T}\!\left(\operatorname{Tr}_{B_1}\circ\mathcal{P}_{A\to B_1B_2}\right)\right) \\
    &\le  \mathbf{D}\!\left(\mathcal{D}^p_{A\to B}\Vert\operatorname{Tr}_{B_1}\circ\mathcal{P}_{A\to B_1B_2}\right).
\end{align}
This further implies that
\begin{multline}
    \inf_{\mathcal{P}_{A\to B_1B_2}\in \operatorname{Ext}\left(\mathcal{D}^p\right)}\mathbf{D}\!\left(\mathcal{D}^p_{A\to B}\Vert\operatorname{Tr}_{B_1}\circ\mathcal{T}\!\left(\mathcal{P}_{A\to B_1B_2}\right)\right) \le \\ \inf_{\mathcal{P}_{A\to B_1B_2}\in \operatorname{Ext}\left(\mathcal{D}^p\right)}\mathbf{D}\!\left(\mathcal{D}^p_{A\to B}\Vert\operatorname{Tr}_{B_1}\circ\mathcal{P}_{A\to B_1B_2}\right),
\end{multline}
and hence, we only need to consider the extensions $\mathcal{P}_{A\to B_1B_2}\in \operatorname{Ext}\!\left(\mathcal{D}^p\right)$ that are invariant under the tripartite twirl $\mathcal{T}_{AB_1B_2}$ when computing the unextendible entanglement of the channel. Since the marginals of such channels are always depolarizing channels, we can write,
\begin{multline}\label{eq:unext_ent_dep_twirl_opt}
    \mathbf{E}^u\!\left(\mathcal{D}^p_{A\to B}\right) \\= \inf_{\mathcal{P}\in \operatorname{Ext}\left(\mathcal{N}\right)}\frac{1}{2}\left\{\begin{array}{c}
         \mathbf{D}\!\left(\mathcal{D}^p_{A\to B}\Vert\mathcal{D}^{p'}_{A\to B}\right):\\
         \mathcal{P}_{A\to B_1B_2} = \mathcal{T}\!\left(\mathcal{P}_{A\to B_1B_2}\right),\\
         \mathcal{D}^{p'}_{A\to B} = \operatorname{Tr}_{B_1}\circ\mathcal{P}_{A\to B_1B_2}
    \end{array}\right\}.
\end{multline}
The Belavkin--Staszewski relative entropy between two depolarizing channels $\mathcal{D}^p_{A\to B}$ and $\mathcal{D}^{p'}_{A\to B}$, using the analytical expression given in~\cite[Theorem 3]{Fang_2021}, evaluates to the following quantity: 
\begin{equation}\label{eq:Bel_Stas_ch_div_dep}
    \widehat{D}\!\left(\mathcal{D}^p\Vert\mathcal{D}^{p'}\right) = F\log_2\!\left(\frac{F}{F'}\right) + (1-F)\log_2\!\left(\frac{1-F}{1-F'}\right),
\end{equation}
where
\begin{align}
    F &= 1-p + \frac{p}{d^2},\label{eq:F_defn}\\
    F' &= 1-p' + \frac{p'}{d^2}.\label{eq:F'_defn}
\end{align}

Let $\zeta^F_{AB}$ denote an isotropic state with parameter $F$:
\begin{equation}
   \zeta^F_{AB} \coloneqq F\Phi^d_{AB} + (1-F)\frac{I_{AB}-\Phi^d_{AB}}{d^2-1}. 
\end{equation}
Since the Choi operator of the depolarizing channel is an isotropic state with a scaling factor (see~\eqref{eq:dep_ch_choi_op_F}), the following two statements are equivalent:
\begin{enumerate}
    \item There exists a quantum channel $\mathcal{P}_{A\to B_1B_2}$ with the depolarizing channels, $\mathcal{D}^p_{A\to B_1}$ and $\mathcal{D}^{p'}_{A\to B_2}$, as its marginals.
    \item There exists a quantum state $\tau_{AB_1B_2}$ with the isotropic states, $\zeta^F_{AB_1}$ and $\zeta^{F'}_{AB_2}$, as its marginals, where $F$ and $F'$ are given in~\eqref{eq:F_defn} and~\eqref{eq:F'_defn}, respectively.
\end{enumerate} 
Therefore, we can compute the unextendible entanglement of a depolarizing channel, using the measure induced by the Belavkin--Staszewski relative entropy as follows:
\begin{multline}\label{eq:unext_ent_Bel_Stas_dep_red}
    \widehat{E}^u\!\left(\mathcal{D}^p_{A\to B}\right) = \\ \inf_{F'\in [0,1]}
    \left\{
    \begin{array}{c}
         \widehat{D}\!\left(\mathcal{D}^p\Vert\mathcal{D}^{p'}\right):  \\
         \tau_{AB_1B_2} \in \mathcal{S}\left(AB_1B_2\right)\\
         \operatorname{Tr}_{B_2}\!\left[\tau\right] = \zeta^F_{AB_1},
         \operatorname{Tr}_{B_1}\!\left[\tau\right] = \zeta^{F'}_{AB_2},\\
         F = 1 - p + \frac{p}{d^2},
         F' = 1 - p' + \frac{p'}{d^2}
    \end{array}
    \right\}.
\end{multline}
Moreover, the infimum can be replaced with a minimum due to the lower-semicontinuity of the Belavkin--Staszewski relative entropy.

It has been shown in~\cite[Corollary III.4]{Johnson_2013} that there exists a quantum state $\tau_{AB_1B_2}$ with marginals $\zeta^F_{AB}$ and $\zeta^{F'}_{AB}$ if and only if $F$ and $F'$ lie in the convex hull of the ellipse
\begin{equation}
    \frac{\left(F+F'-1\right)^2}{1/d^2} + \frac{\left(F-F'\right)^2}{\left(d^2-1\right)/d^2} = 1,
\end{equation}
and the point $(F,F') = (0,0)$.
Rewriting the equation of ellipse in the form of a quadratic equation in $F'$,
\begin{align}
    &\left(F+F'-1\right)^2 + \frac{(F-F')^2}{d^2-1} = \frac{1}{d^2}\\
    \Rightarrow \quad &(F')^2\left(1+\frac{1}{d^2-1}\right) + F'\left(2(F-1) - \frac{2F}{d^2-1}\right) \notag \\
    &\quad + (F-1)^2 + \frac{F^2}{d^2-1} - \frac{1}{d^2} = 0.\label{eq:ellipse_eq_quad_F'}
\end{align}
Let $F'_{\operatorname{hi}}$ and $F'_{\operatorname{lo}}$ be the two solutions of this quadratic equation such that $F'_{\operatorname{hi}} \ge F'_{\operatorname{lo}}$. Since $(F,F')$ can reside anywhere in the convex hull of the ellipse and the point $(0,0)$, the largest value $F'$ can take for a fixed value of $F$ lies on the boundary of the ellipse, and hence, is the larger of the two solutions of the quadratic equation in~\eqref{eq:ellipse_eq_quad_F'} which is $F'_{\operatorname{hi}}$. Solving~\eqref{eq:ellipse_eq_quad_F'}, we find that
\begin{equation}
    F'_{\operatorname{hi}} =\frac{2 F-1}{d^2}+\frac{2  \sqrt{\left(d^2-1\right) (1-F) F}}{d^2}-F+1.
\end{equation}

The Belavkin--Staszewski relative entropy between two depolarizing channels as given in~\eqref{eq:Bel_Stas_ch_div_dep} is minimized when $F'$ is the closest to $F$. Therefore, the optimal value of $F'$ is achieved by $\max\!\left\{F'_{\operatorname{hi}},F\right\}$. Hence, we can substitute this optimal value of $F'$ in~\eqref{eq:Bel_Stas_ch_div_dep} and~\eqref{eq:unext_ent_Bel_Stas_dep_red} to arrive at the following analytical expression for the unextendible entanglement of the depolarizing channel, using the measure induced by the Belavkin--Staszewski relative entropy:
\begin{multline}
    \widehat{E}^u\!\left(\mathcal{D}^p_{A\to B}\right) \\= \frac{1}{2}\left\{\begin{array}{c}
         F\log_2\!\left(\frac{F}{F'}\right) + (1-F)\log_2\!\left(\frac{1-F}{1-F'}\right):\\
         F' = \max\left\{\frac{2 F-1}{d^2}+\frac{2  \sqrt{\left(d^2-1\right) (1-F) F}}{d^2}-F+1,F\right\},\\
         F = 1-p + \frac{p}{d^2}
    \end{array}\right\}.
\end{multline}
This completes the proof.
\end{proof}

\medskip
 In Figure~\ref{fig:Dep_2d} we plot the $\alpha$-geometric unextendible entanglement of the two-dimensional and three-dimensional depolarizing channels for $\alpha = 1+ 2^{-10}$, with respect to the prameter $p$. We also plot the analytical expression for the unextendible entanglement of the depolarizing channel induced by the Belavkin--Staszewski relative entropy given in Proposition~\ref{theo:unext_ent_dep}.

\end{document}